\documentclass[11pt,letterpaper]{article}

\usepackage{url}
\usepackage{fullpage}

\usepackage[subtle,mathspacing=normal]{savetrees}
\setlength{\textheight}{8.7in}

\usepackage{amsmath}
\usepackage{algorithm}
\usepackage[noend]{algpseudocode}
\usepackage{amssymb}
\usepackage{amsthm}
\usepackage{color}
\usepackage{array}
\usepackage{xy}
\usepackage{setspace}
\usepackage{varwidth}% http://ctan.org/pkg/varwidth
\usepackage{multicol}
\usepackage{algorithm}
\usepackage{hyperref}
%%\doublespacing
%\usepackage{natbib}
%% \usepackage{cite}
%% \usepackage[margin=1in,letterpaper]{geometry}
%% \setlength{\topmargin}{0 in}
%% \setlength{\textheight}{9 in}
\usepackage{todonotes}
\renewcommand{\todo}[1]{}

%% \documentclass[12pt,letterpaper]{amsart}
%% \setlength{\oddsidemargin}{0pt}
%% \setlength{\evensidemargin}{0pt}
%% \setlength{\textwidth}{6.5in}
%%\setlength{\topmargin}{1in}
%% \setlength{\textheight}{9in}
%% %\usepackage{refcheck}
%% \usepackage{pslatex} % use times roman
%% \usepackage{amsmath}
%% \usepackage{amssymb}
%% \usepackage{amsthm}
%% \usepackage{color}
%% \usepackage{array}
%% \usepackage{xy}
%% \usepackage{setspace}

%\usepackage[left=1.5in,right=1.5in, top=1.5in,bottom=1.5in,]{geometry}

\newcommand{\poly}{\operatorname{poly}}

\newcommand{\polylog}{\operatorname{polylog}}

 \newcommand{\E}{\mathbb{E}}

\newcommand{\free}{\operatorname{free}}
\newcommand{\maxx}{\widehat{\max}}

%----------------MACROS---------------%
%% \newtheoremstyle{slanted}% <name>
%% {3pt}% <Space above>
%% {3pt}% <Space below>
%% {\slshape}% <Body font>
%% {}% <Indent amount>
%% {\bfseries}% <Theorem head font>
%% {.}% <Punctuation after theorem head>
%% {.5em}% <Space after theorem heading>
%% {}% <Theorem head spec (can be left empty, meaning `normal')>

\newtheorem{thm}{Theorem}[section]

\theoremstyle{remark}

\newtheorem{theorem}{Theorem}[section]

\newtheorem{lemma}[thm]{Lemma}
\newtheorem{proposition}[thm]{Proposition}

\newtheorem{corollary}[thm]{Corollary}

\theoremstyle{remark}

\newcommand{\defn}[1]{\textbf{\emph{#1}}}
\renewcommand{\paragraph}[1]{\vspace{.2 cm} \noindent \textbf{#1}}

\newcommand{\coord}{\text{coord}}
\newcommand{\trans}{\text{trans}}
\renewcommand{\frame}{\text{frame}}
\newcommand{\surplus}{\text{surplus}}
\newcommand{\round}{\text{round}}
\newcommand{\gain}{\text{gain}}
\newcommand{\slope}{\text{slope}}

\newcommand{\Pois}{\operatorname{Pois}}

\usepackage{thmtools} 
\usepackage{thm-restate}

\usepackage{authblk}

%-----------------Beginning document.
\begin{document}

\title{Tight Analyses of Ordered and Unordered Linear Probing}
\author{Mark Braverman\footnote{Princeton, \texttt{mbraverm@cs.princeton.edu }} \phantom{f}and William Kuszmaul\footnote{CMU, \texttt{kuszmaul@cmu.edu}}}
\date{}

\maketitle

\begin{abstract}
   Linear-probing hash tables have been classically believed to support insertions in time $\Theta(x^2)$, where $1 - 1/x$ is the load factor of the hash table. Recent work by Bender, Kuszmaul, and Kuszmaul (FOCS'21), however, has added a new twist to this story: in some versions of linear probing, if the \emph{maximum} load factor is at most $1 - 1/x$, then the \emph{amortized} expected time per insertion will never exceed $x \polylog x$ (even in workloads that operate continuously at a load factor of $1 - 1/x$).  Determining the exact asymptotic value for the amortized insertion time remains open.

   In this paper, we settle the amortized complexity with matching upper and lower bounds of $\Theta(x \log^{1.5} x)$. Along the way, we also obtain tight bounds for the so-called path surplus problem, a problem in combinatorial geometry that has been shown to be closely related to linear probing. We also show how to extend Bender et al.'s bounds to say something not just about \emph{ordered} linear probing (the version they study) but also about classical linear probing, in the form that is most widely implemented in practice.
\end{abstract}
\thispagestyle{empty}
\newpage

\clearpage
\setcounter{page}{1}

\section{Introduction}

The linear probing hash table is one of the oldest and most widely used data structures in computer science \cite{Sedgewick83,Sedgewick90, Kruse84, LewisDe91, Weiss00, MainSa01, Standish95, DrozdekSi95, TremblaySo84}. The classical perspective on linear probing is that it suffers irredeemably from \emph{clustering effects} that cause it to perform badly at high load factors. Recent work by Bender et al.~\cite{bender2022linear}, however, has shown that the reality is more subtle: in some versions of linear probing, there is actually a hidden \emph{anti-clustering} effect that reduces (amortized) insertion cost far below what was previously thought to be the case.\footnote{This result is for a variation known as \emph{ordered} linear probing. However,  as we shall see later, with some additional ideas, one can also prove an analogous result even for the completely classical version of the data structure.}

Bender et al.'s analysis takes a data structure that for more than half a century was thought to be fully understood, and cracks it wide open once again: they prove that the true amortized expected insertion cost, at load factor $1 - 1/x$, is of the form not $\Theta(x^2)$ (as classically thought) but instead $\Theta(x \cdot f(x))$ for some $f(x)$ between $\poly \log \log x$ and $\poly \log x$. Determining the asymptotic value of $f(x)$ remains open.

In this paper, we establish a tight (and somewhat unexpected) bound of $f(x) = \Theta( \log^{1.5} x)$. Along the way, we demonstrate that the timing characteristics of linear probing (one of the simplest data structures in computer science!) are actually determined by a remarkably intricate (and beautiful) underlying combinatorial structure.

\paragraph{Background on linear probing.} In its most basic form, the linear-probing hash table can be described as follows. At any given moment, the hash table is an array consisting of elements, free slots, and tombstones (i.e., elements marked as deleted). We can insert an item $u$ by computing its hash $h(u) \in [n]$, examining array positions $h(u), h(u) + 1, h(u) + 2, \ldots$ (modulo the array size $n$), and overwriting the first free slot or tombstone that we encounter with $u$. We can query $u$ by examining slots $h(u), h(u) + 1, h(u) + 2, \ldots$ until we either find $u$ or until we encounter a free slot (at which point we conclude that $u$ is not present). And we can delete $u$ by simply replacing it with a tombstone. Finally, to prevent the over-accumulation of tombstones, one should periodically perform rebuilds in which all of the tombstones are cleared out.

A major appeal of linear probing is its data locality: each operation accesses just one small contiguous region of memory. In practice, linear probing hash tables are a go-to choice for settings where cache misses and memory bandwidth are a priority (see, e.g., discussion in \cite{RichterAlDi15}).

However, linear probing also comes with a major drawback: as the hash table fills up, the elements cluster together into \defn{runs} that are longer than one might intuitively expect. If the hash table is filled to $1 - 1 /x$ full, then the expected length of the run containing a given element becomes $\Theta(x^2)$, and the expected time per insertion is therefore also $\Theta(x^2)$ \cite{Knuth63, Knuth98Vol3}. This phenomenon, which was first discovered by Donald Knuth in 1963 \cite{Knuth63} (as well as by Konheim and Weiss in 1966~\cite{KonheimWe66}), is often referred to as \defn{primary clustering} \cite{Sedgewick83,Sedgewick90, Kruse84, LewisDe91, Weiss00, MainSa01, Standish95, DrozdekSi95, TremblaySo84}.

It is worth also mentioning a second phenomenon that Knuth discovered in the same 1963 paper \cite{Knuth63}, namely, an asymmetry between insertions and queries. Intuitively, $\Theta(x^2)$ expected-time insertions would seem to imply $\Theta(x^2)$ expected-time queries (and, indeed, they do, as one can simply query the most recently inserted element). However, if a query targets a \emph{random} element out of those present, then the expected query time drops to a much more reasonable bound of $\Theta(x)$. In practice, this means that clustering is primarily a problem for \emph{insertions}, not queries. 

In theoretical discussions of linear probing, it is often helpful to formalize this asymmetry between queries and insertions by adding the following optimization: modify the hash table so that, within each run, the elements are always stored in sorted order by hash. The advantage of this modification is that, now, it is not just the \emph{average-case} expected query time that is $O(x)$, but actually the \emph{worst-case} expected query time as well. This version of the hash table is known as \defn{ordered linear probing} \cite{AmbleKn74} (or, in some parts of the literature, as \defn{robin-hood hashing} \cite{CelisLaMu85}). In keeping with previous work \cite{bender2022linear}, we will focus most of our discussion on ordered linear probing; however, as we shall see, all of the results in this paper also have natural analogues for a classical linear-probing hash table.

% This summarizes where our understanding of linear-probing hash tables stood for nearly six decades \cite{Sedgewick83,Sedgewick90, Kruse84, LewisDe91, Weiss00, MainSa01, Standish95, DrozdekSi95, CormenLeRi09, McMillan14, GoodrichTa15, TremblaySo84}. Linear probing is a relatively simple scheme, it has excellent data locality, and (assuming the ordering optimization) queries are relatively fast. But, at high load factors, primary clustering causes insertions to become problematically slow. 

\paragraph{A recent twist: anti-clustering.} Consider a sequence of insertions/deletions that keeps the hash table at or below $1 - 1/x$ full at all times. The main result in \cite{bender2022linear} is that, even though \emph{some} insertions take $\Theta(x^2)$ expected time, the \emph{amortized expected time} per insertion never exceeds $O(x \polylog x)$ \cite{bender2022linear}. 

% \footnote{It might seem extraordinary that such a central result would go missed for so long. However, as discussed in \cite{bender2022linear}, there is a natural reason: Knuth's analysis \cite{Knuth63} solved not just for the asymptotic insertion time but also for the exact constants (the expected probe length is $\frac{1}{2} x^2 + \frac{1}{2} + o(1)$). As Sedgewick noted in 1990 \cite{Sedgewick90}, the precision of Knuth's result may have left the impression that there was no follow-up work to be done.} 

\begin{theorem}[Upper bound of \cite{bender2022linear}]
Consider an ordered linear probing hash table, where deletions are implemented with tombstones, and where rebuilds are performed every $n / \polylog x$ insertions/deletions. If the load factor of the hash table stays at or below $1 - 1/x$ at all times, then the amortized expected time per insertion/deletion is $O(x \polylog x)$ and the expected time per query is $O(x)$.
\label{thm:prev}
\end{theorem}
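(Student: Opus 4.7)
The plan is to bound queries and amortized insertion/deletion cost separately.

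For queries, in ordered linear probing a query for $u$ with hash $h(u)$ terminates upon finding $u$, a free slot, or any element with hash strictly greater than $h(u)$. The scan length is therefore at most the length of the initial segment starting at $h(u)$ containing only tombstones and elements with hash $\leq h(u)$. A standard analysis (analogous to the expected-displacement calculation for ordered linear probing) bounds the expected number of elements with hash $\leq h(u)$ displaced into this segment by $O(x)$ at load factor $1 - 1/x$. The $O(n/\polylog x)$ tombstones between rebuilds contribute only $O(1)$ additional positions in expectation, since they are spread over $n$ positions.

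For insertions and deletions, deletions are trivially $O(1)$ (they just mark a tombstone), so the task reduces to bounding the total insertion cost over the $W = n/\polylog x$ operations between rebuilds by $O(W \cdot x \polylog x)$ in expectation. An insertion of $u$ at hash $h(u)$ costs $O(1 + d)$, where $d$ is the distance from $h(u)$ to the first free slot or tombstone reached while scanning rightward. I would encode the hash-table state as a lattice path $\eta_i = |\{v : h(v) \leq i\}| - (i+1)\rho$, where runs correspond to excursions of the path above its running minimum; an insertion at hash $j$ costs at most the height and horizontal extent of the excursion containing position $j$.

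The core insight is an ``anti-clustering'' effect: each deletion deposits a tombstone, and that tombstone is absorbed by the next insertion landing in the same excursion, effectively pairing insertions with deletions. Paired operations cancel their net contributions to the lattice path, so the growth of excursions over $W$ operations is controlled by the ``unpaired'' operations. The main obstacle, and the heart of the argument, is to show that the expected total excursion cost over $W$ random perturbations is only $O(W \cdot x \polylog x)$, beating the naive $\Theta(W x^2)$ bound one would get from the classical fresh-insertion analysis. This is essentially the path-surplus question mentioned in the introduction, and it requires a careful random-walk analysis that tracks how tombstones shorten excursions and how the load-factor constraint $\rho \leq 1 - 1/x$ prevents excessive clustering. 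Amortizing the $O(n)$ rebuild cost across $W = n/\polylog x$ operations contributes only $O(\polylog x)$ per operation, which is subsumed by the $O(x \polylog x)$ bound.
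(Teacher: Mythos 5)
Your proposal correctly identifies the overall architecture --- queries and deletions are cheap, insertions are the hard part, and the insertion analysis should proceed via a path/excursion encoding that reveals an anti-clustering effect. But the core of the theorem is missing: you write that the heart of the argument is to show that the total excursion cost over $W$ operations is $O(W \cdot x\,\polylog x)$ rather than $\Theta(W x^2)$, and then you say this ``is essentially the path-surplus question'' and ``requires a careful random-walk analysis.'' That is precisely the claim that needs to be proven, and you have not proven it. The two missing technical steps are (1) the reduction from amortized insertion cost to the path-surplus quantity --- in the paper this goes through the \emph{crossing numbers} $c_i$ and the identity $c_i = \max_{j<i}\bigl(\surplus([j,i-1]) - \free([j,i-1])\bigr)$, which is nontrivial and not established by a pairing heuristic --- and (2) the actual bound on path surplus, which in the cited work \cite{bender2022linear} is proven by decomposing the region under a monotone path into axis-aligned rectangles with total perimeter $O(m\log m)$ and applying Chernoff bounds to each, giving $O(m\log^2 m)$. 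Your proposal contains neither step.

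A smaller issue: in your query analysis you assert that the $O(n/\polylog x)$ tombstones ``contribute only $O(1)$ additional positions in expectation.'' The correct estimate is that a scan of expected length $O(x)$ meets $O(x/\polylog x)$ tombstones on average (still $o(x)$, so the $O(x)$ query bound survives), but your ``spread over $n$ positions'' reasoning ignores the feedback between tombstones lengthening runs and longer runs meeting more tombstones. The rigorous query bound in \cite{bender2022linear} relies on the ordered invariant (the scan stops at the first element or tombstone with larger hash) and is handled through the same crossing-number machinery, not by a direct density argument.
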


To fully appreciate Theorem \ref{thm:prev}, it is helpful to consider the scenario in which the user first fills the hash table to $1 - 1/x$ full (call this time window $[t_0, t_1]$), and then alternates between insertions and deletions (call this time window $(t_1, t_2]$). Even though the insertion at time $t_1$ takes expected time $\Theta(x^2)$, most of the insertions in $[t_0, t_1]$ were at much lower load factors, so the amortized expected time per insertion in $(t_0, t_1]$ is actually $O(x)$. What is surprising is that, even as we continue into time window $(t_1, t_2]$, where all insertions are at load factor $1 - 1/x$, the amortized expected time per insertion remains $\tilde{O}(x)$. 

The good performance in $(t_1, t_2]$ is due to an \emph{anti-clustering} phenomenon, in which the tombstones created by deletions break runs apart \emph{more efficiently} than the insertions are able to connect new runs together.\footnote{This anti-clustering phenomenon appears to have gone unobserved in both the theory and practical literatures up until the work of \cite{bender2022linear}. Historically, tombstones were viewed as somewhat of an afterthought in the design of linear probing---one of several equally good ways that deletions could be implemented (see historical discussion in \cite{bender2022linear}). It is worth noting, however, that the most widely-used high-performance hash tables in practice (e.g., the hash tables released by Google \cite{Abseil17} and Facebook \cite{BronsonSh19}) do, in fact, use tombstones, indicating that the anti-clustering effect may have implicitly influenced their designs.} This anti-clustering effect also explains why the rebuild window in Theorem \ref{thm:prev} is set to be relatively large (at $R = n / \polylog x$ rather than, say, $R = \Theta(n / x)$). This distinction gives tombstones more time to perform anticlustering before being removed. The optimal choice for $R$ remains open, but it is known to be between $n / \polylog x$ and $n$. 

% Note that the mere existence of an anti-clustering effect is not sufficient for Theorem \ref{thm:prev}. Insertions and deletions are one-for-one in $(t_0, t_1]$, so \emph{a priori}, anti-clustering from deletions could simply cancel with clustering from insertions, causing the insertion time to hover at $\Theta(x^2)$. But this is not what happens: the anti-clustering effect actually \emph{overpowers} the clustering effect, so that the amortized cost is $O(x \polylog x)$.

But what about the $\polylog x$ term in the insertion time? Is it real, or is it an artifact of the analysis? This question remains largely open. The authors of \cite{bender2022linearfull} are able to show, however, that the true insertion time is \emph{some} super-linear function in $x$:

\begin{theorem}[Lower bound of \cite{bender2022linear}]
Consider an ordered linear probing hash table, where deletions are implemented with tombstones, and where rebuilds are performed every $R$ insertions/deletions for some $R$. Consider a workload that fills the hash table to $1 -1/x$ full and then alternates between deleting a random item (out of those present), inserting a new item (never before inserted), and querying a random item (out of those present). No matter the choice of $R$, the amortized expected time per operation (including queries) will be at least $x(\log \log x)^{\Omega(1)}$. 
\label{thm:prev2}
\end{theorem}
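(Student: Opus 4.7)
My approach is a case split on the rebuild interval $R$. Either $R$ is small (so rebuilds themselves dominate the amortized cost) or $R$ is large (so the analysis must pin down the intrinsic cost of insertions and queries in the presence of imperfect anti-clustering). Fix a threshold $R^\star = n/(x (\log\log x)^c)$ for a constant $c$ to be tuned. In the small-$R$ regime $R \le R^\star$, each rebuild touches $\Theta(n)$ slots, so the amortized contribution of rebuilds alone is $\Omega(n/R) \ge \Omega(x (\log\log x)^c)$ per operation, which already exceeds the target bound.

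So assume $R > R^\star$. The plan is to analyze the table state at a moment $t$ well inside a rebuild window, after $\Theta(R^\star)$ delete/insert/query rounds have occurred since the most recent rebuild. At time $t$ the number of live elements is $n(1-1/x)$ and the number of tombstones is $\Theta(R^\star)$, so the global density of occupied-or-tombstoned cells is only slightly above $1-1/x$. A concentration argument over the random hashes and the random deleted elements would show that with constant probability there exists an interval $I \subseteq [n]$ in which the local density of occupied-or-tombstoned cells is high enough that the typical displacement of an element in $I$ exceeds $x^2$ by a $(\log\log x)^{\Omega(1)}$ factor. Since random queries land in $I$ with probability proportional to $|I|/n$ and pay the full local displacement, and since insertions whose hashes fall in or just before $I$ must probe through it, the amortized per-operation cost inherits a factor of $x (\log\log x)^{\Omega(1)}$.

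The main obstacle is accounting faithfully for how much anti-clustering the tombstones actually achieve. Naively, tombstones scattered through a run should break it into many short runs and dramatically lower insertion cost; this is precisely the effect the upper bound in Theorem~\ref{thm:prev} exploits. For the lower bound one must show that the tombstones, although plentiful, are not positioned precisely where they would be most helpful: tombstones formed by deleting random elements land at random offsets, and with constant probability no constant-fraction improvement occurs on a typical run. Formalizing this almost certainly proceeds through the path-surplus combinatorial game alluded to earlier in the paper: each insertion corresponds to extending a monotone path in an appropriate grid, tombstones correspond to shortcut edges, and the total surplus of the path is bounded below by a nontrivial function even when shortcuts are allowed.

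The hardest step is to establish the tight combinatorial lower bound on path surplus with the right $(\log\log x)^{\Omega(1)}$ dependence, and then to argue that the temporal dependencies across the delete/insert/query workload do not wash out the per-step contribution in expectation. A secondary subtlety is that the bound must hold uniformly over all choices of $R$ and all legal rebuild schedules, which I would handle by quantifying first over $t$ and $I$ before taking any worst-case choice of $R$, so that the two regimes join seamlessly at the threshold $R^\star$.
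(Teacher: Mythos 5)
You have the right scaffolding---reduce to the crossing-number/path-surplus connection and dispose of small $R$ through rebuild cost---but the heart of the proof is missing. The decisive technical step is establishing the $\Omega(m\sqrt{\log\log m})$ lower bound on expected maximum path surplus, which you flag as ``the hardest step'' and then do not carry out. Bender et al.'s construction, summarized in Section~\ref{sec:prelims}, is a recursive quadrant argument: start from the trivial path covering the whole grid, test whether the top-left quadrant contributes surplus $\Omega(mT)$ for a threshold $T = \Theta(\sqrt{\log\log m})$, and if not, discard that quadrant and recurse on the two halves with $m/2$ in place of $m$. The threshold is calibrated so that along any depth-$\Theta(\log m)$ chain of nested subproblems at least one succeeds with constant probability; summing the sizes of the successful subproblems gives expected surplus $\Theta(mT)$. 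Nothing resembling this recursion, this threshold calibration, or any other quantitative route to a $(\log\log x)^{\Omega(1)}$ gain appears in your write-up, so the proposal is not yet a proof.

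Your framing in the large-$R$ case also conflicts with the combinatorics quantitatively. You propose to inspect the table at a single moment $t$ after $\Theta(R^\star)$ rounds and locate an interval of elevated occupancy density. But adding $\Theta(R^\star) = \Theta(n/(x(\log\log x)^c))$ tombstones only shifts the effective occupancy density from $1-1/x$ to roughly $1-1/x + 1/(x(\log\log x)^c)$, which inflates expected run length by a $(1+o(1))$ factor---not by a $(\log\log x)^{\Omega(1)}$ factor. Moreover, queries in \emph{ordered} linear probing do not pay the full run length (that is the whole point of the ordering), so even a genuinely long run would not translate directly into query cost. The path-surplus grid is indexed by time on one axis precisely because the lower bound comes from the \emph{accumulated} interaction of insertions and deletions over the rebuild window, not from a density snapshot; tombstones are not ``shortcut edges,'' and insertions do not ``extend a monotone path''---the path is a free monotone curve whose blue-minus-red surplus certifies a crossing number via Equation~\eqref{eq:crosssurplus}. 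That accumulated quantity, not any instantaneous density, is what the recursive quadrant construction lower-bounds.
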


Combined, the upper and lower bounds of \cite{bender2022linear} put us in an interesting situation. Once again, the analysis of (ordered) linear probing hash tables, arguably one of the most basic data structures in computer science, stands as an open question. What is the asymptotic amortized expected cost per insertion/deletion? Is it something like $x \log \log x$ (which, in practice, might be as good as $O(x)$) or $x \log^3 x$ (which, in practice, might be no better than $x^2$)? And, at an algorithmic level, what is the optimal choice of rebuild-window size $R$?

\paragraph{The combinatorial bottleneck: the path surplus problem.} Interestingly, the question of analyzing ordered linear probing actually reduces to a remarkably simple (and, at first glance, seemingly unrelated) problem from combinatorial geometry.

Consider a grid $[0, m] \times [0, m]$ containing $B \sim \Pois(m^2)$ blue dots and $R \sim \Pois(m^2)$ red dots placed uniformly at random. For each monotonic path through the grid, going from the bottom left to the top right, look at the dots \emph{underneath} the path, and define path's \defn{surplus} to be the number of blue dots minus the number of red dots (see, e.g., Figure \ref{fig:surplus}). Then the quantity we care about is the expected value of the \emph{maximum} surplus over all monotonic paths.

This \defn{path surplus problem}, it turns out, is what dictates the behavior of (ordered) linear probing hash tables at high load factors \cite{bender2022linear}. Indeed, the reason that Theorems \ref{thm:prev} and \ref{thm:prev2} hold is because the expected maximum path surplus turns out to be between $m (\log \log m)^{\Omega(1)}$ and $m (\log m)^{O(1)}$.

The main technical bottleneck to resolving the insertion time of ordered linear probing is to obtain tight bounds on the path surplus problem. Note that it is not clear, \emph{a priori}, what these bounds should be. The known arguments \cite{bender2022linear}, although both simple and elegant (we will summarize them in Section \ref{sec:prelims}), do not seem to offer any hint at the final answer. Any solution that establishes tight bounds would appear to require a significantly new approach.

\paragraph{This paper: tight bounds on path surplus and insertion time.}
The main technical contribution of this paper is a new (and completely different) analysis of the path surplus problem. Our analysis gives a surprising answer to Bender et al.'s question: the expected maximum surplus over all monotonic paths is $\Theta(m \log^{0.75} m)$. 

Our second contribution is a tighter analysis of the connection between path surpluses and ordered linear probing. From this, we are able to recover exact asymptotic bounds for ordered linear-probing. If the rebuild-window length $R$ is chosen optimally, then worst-case amortized expected time per insertion/deletion becomes
$$\Theta(x \log^{1.5} x),$$
while the expected query time remains $O(x)$. More generally, the relationship between the three quantities can be captured with the following theorem:

\begin{restatable}{theorem}{mainthm}
Let $x$ and $n$ be parameters satisfying $x \le n^{o(1)}$. Consider an ordered linear probing hash table on $n$ slots that performs rebuilds every $R = n / \beta$ insertions, where $1 \le \beta \le x$, and where deletions are implemented with tombstones. Under the condition that the load factor never exceeds $1 - 1/x$, the worst-case amortized expected insertion/deletion time is
$$\Theta\left(x \log^{1.5} x  + \beta x\right)$$
and the worst-case expected query time is
$$\Theta\left(x + \frac{x \log^{1.5} x}{\beta}\right).$$ 
\label{thm:main}
\end{restatable}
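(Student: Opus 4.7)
My plan is to follow the reduction of Bender et al.~\cite{bender2022linear} from ordered linear probing to the path surplus problem, but to sharpen both directions so that the insertion and query bounds are simultaneously tight and correctly capture the full dependence on $\beta$. The new key ingredient is the tight $\Theta(m \log^{0.75} m)$ bound on expected maximum path surplus (established earlier in the paper); the remaining work is to translate this cleanly into hash-table guarantees and to account carefully for the transient ``cold-start'' period immediately after each rebuild.

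\textbf{Insertion cost.} In ordered linear probing with tombstones, the cost of inserting $u$ is the distance from $h(u)$ to the nearest tombstone or free slot at or past the sorted position of $u$. At any moment well inside a rebuild window of length $R = n/\beta$, I would show by Poissonizing the insertion/deletion stream that this expected distance is captured, up to constants, by the maximum path surplus in a space--time grid whose side length $m$ scales as a power of the relevant spatial interval length. Maximizing over interval lengths and plugging in the $\Theta(m \log^{0.75} m)$ surplus bound gives $O(x \log^{1.5} x)$; the upgrade from $\log^{0.75}$ to $\log^{1.5}$ comes from inverting the surplus bound to solve for the maximizing scale, which effectively squares the polylog factor. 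To obtain the additive $\Theta(\beta x)$ term, I would separately analyze the first $\Theta(n/x)$ insertions of each rebuild window, during which the tombstone density is still below $1/x$ and the anti-clustering effect has not yet developed. These insertions behave classically, costing $\Theta(x^2)$ each by Knuth's analysis; their total cost $\Theta(nx)$ amortized over the $R = n/\beta$ operations in the window contributes exactly $\Theta(\beta x)$. The matching lower bound combines the path-surplus lower bound $\Omega(m \log^{0.75} m)$, applied to the alternating insert/delete workload of Theorem~\ref{thm:prev2}, with Knuth's classical cluster-length lower bound applied at the start of each rebuild window.

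\textbf{Query cost.} I would leverage the ordered property: a query for $u$ terminates at the first stored item with hash exceeding $h(u)$ or at the first free slot, so only the prefix of the run up to $h(u)$ is scanned. Knuth's classical successful-search bound gives $O(x)$ for the number of stored items scanned. The additional cost from tombstones in the prefix is bounded by the same kind of path-surplus reasoning restricted to the prefix, combined with the tombstone-density bound $O(1/\beta)$, yielding $O(x + (x \log^{1.5} x)/\beta)$. The matching lower bound comes from the same workload, tracking only the prefix of the run.

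\textbf{Main obstacle.} The most delicate step is the scale-optimization that converts $\log^{0.75}$ to $\log^{1.5}$: the upper bound must union-bound over all candidate grid scales while showing the critical one dominates, and the matching lower bound must exhibit a single deterministically long run at the critical scale. Making these bounds tight uniformly over the entire range $\beta \in [1, x]$, and ensuring that the transition between the cold-start and anti-clustering regimes is clean, is the core technical challenge.
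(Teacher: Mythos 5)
Your high-level plan aligns with the paper's in several respects---both use the tight $\Theta(m\log^{0.75}m)$ path-surplus bound, both maximize over spatial scales (interval lengths), and your observation that inverting the surplus relation ``squares'' the polylog factor is exactly the mechanism used in Lemma \ref{lem:surplusupper-e}, where solving $m_t\log^{0.75}m_t = \Theta(t/x)$ gives $x/\beta = \Theta(m_t/\log^{0.75}m_t)$ and hence $m_t\log^{0.75}m_t=\Theta((x/\beta)\log^{1.5}x)$. Your lower-bound plan (hovering distinct-key workload for $\Omega(Rx\log^{1.5}x)$ plus a near-full fill for $\Omega(nx)$) also matches Lemma \ref{lem:crossinga}.

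However, there is a concrete gap: you never account for the free slots present at the start of the rebuild window, and this is essential to the upper bound. The paper's formalization works through crossing numbers, with the exact identity $c_s = \max_{i<s}\bigl(\surplus(i,s-1) - \free(i,s-1)\bigr)$ (Lemma \ref{lem:crossingnumbertosurplus}). The scale optimization you describe only terminates because the free-slot count grows roughly linearly in the interval length at rate $\Theta(1/x)$ (Lemma \ref{lem:freeslots}), and this linear counterweight is what cuts off the surplus at the critical scale. Surplus alone grows without bound in $t$, so ``maximizing over interval lengths and plugging in $\Theta(m\log^{0.75}m)$'' does not yield a finite answer unless you subtract the free-slot term; your mention of ``tombstone density $O(1/\beta)$'' is a different quantity and does not play this role. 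Relatedly, your derivation of the additive $\Theta(\beta x)$ upper-bound term via a ``cold-start'' phase is not how the paper obtains it and is not by itself a valid upper-bound argument---the paper's $nx$ contribution to $\E[\sum_s c_s]$ emerges from the additive $O(x)$ per-index slack in the surplus-minus-free-slot analysis (Lemmas \ref{lem:freeslots} and \ref{lem:surplusupper-e}), not from isolating the first $\Theta(n/x)$ insertions; one cannot cleanly decompose a worst-case workload into a ``cold'' phase and an ``anti-clustered'' phase and apply Knuth's static bound to the former. Finally, your query analysis via ``scan the prefix of the run'' is intuitive but the paper again routes this through crossing numbers (Lemma \ref{lem:timetocrossingnumber} gives expected query time $O(x + \E[c_i])$), so that insertions, deletions, and queries are all controlled by a single quantity $\E[\sum_s c_s]$; your plan would require re-deriving a separate prefix-surplus bound.
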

The trade-off curve reveals $R = \Theta(n / \log^{1.5} x)$ as the unique optimal value for $R$. When $R \ll x / \log^{1.5} x$, insertions become needlessly slow, and when $R \gg x / \log^{1.5} x$, queries become needlessly slow. Only at $R = \Theta(n / \log^{1.5} x)$ do both operations achieve their optimal running times.

\paragraph{Completing the circle: what about classical linear probing?} The ultimate goal of studying linear-probing hash tables is to understand how we should think about the basic (typically unordered) data structure that is commonly used in practice \cite{RichterAlDi15, WikipediaLinearProbing}.\footnote{This is not to say that ordered linear probing is \emph{never} used. It is, for example, widely used to implement filters \cite{BenderFaJo12, PandeyCoDu21, GeilFO18}.}

Recall that the original purpose of the \emph{ordering} optimization \cite{AmbleKn74} was to transform Knuth's \cite{Knuth63} \emph{average-case} analysis of queries into a \emph{worst-case} analysis. This raises the following question: is there an \emph{average-case} version of Theorem \ref{thm:main} that holds for classical unordered linear probing?

Our final result answers this question in the affirmative:

\begin{restatable}{theorem}{thmunordered}
    Let $n, x, \beta$ be parameters such that $x = n^{o(1)}$, such that $c \le \beta \le x$ for some sufficiently large positive constant $c$. Consider an \emph{unordered} linear-probing hash table, implemented using tombstones and with rebuild-window size $R = n / \beta$, and subjected to the following \defn{average-case workload}: The hash table is filled to $1 - 1/x$ full, and then alternates between deleting a random element out of those present and inserting a new (never-before-inserted) element.
    
    Then the amortized expected time per insertion/deletion is 
    $\Theta(x \log^{1.5} x + \beta^{-1} x),$ 
    and the worst-case expected cost of querying a \emph{random} element out of those present is
    $\Theta(x + \beta^{-1} x \log^{1.5} x).$
    \label{thm:unordered}
\end{restatable}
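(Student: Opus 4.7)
The plan is to reduce the analysis of unordered linear probing under the average-case workload to the analysis of ordered linear probing established in Theorem~\ref{thm:main}, by exploiting the classical invariance between the two variants: under the same hash function and operation sequence, ordered and unordered linear probing maintain identical sets of filled slots, empty slots, and tombstone positions, differing only in how elements within a run are matched to positions.

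For the random-query bound, the cost of querying a present element $u$ in unordered LP is its displacement $p(u) - h(u) \bmod n$, so the expected cost of querying a uniformly random present element equals the expected average displacement. Within a run the sum of displacements telescopes to $\sum_i p_i - \sum_i h_i$, a quantity independent of the matching between positions and hashes, so the average displacement is identical in ordered and unordered LP. In ordered LP the query cost of a present element also equals its displacement (sorted order prevents an earlier stop), and by rotational symmetry of the random hash function the expected displacement is the same for every element; hence the expected average displacement coincides with the worst-case-over-$u$ expected query cost, which Theorem~\ref{thm:main} gives as $\Theta(x + x\log^{1.5} x/\beta)$. The matching lower bound is readable from the construction of Theorem~\ref{thm:prev2}, which already uses this average-case workload. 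Deletion of a random element costs the query used to locate it plus $O(1)$, so the same bound governs deletions.

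For insertion, the cost in unordered LP is the distance from $h(u)$ to the first empty-or-tombstone slot, which by the invariance coincides with the ordered-LP insertion cost on the identical operation sequence. The remaining task is to upper-bound this quantity under the specific average-case workload, sharper than Theorem~\ref{thm:main}'s worst-case-over-workloads bound of $x\log^{1.5} x + \beta x$. The structural features of the workload to exploit are that (i) each deletion deposits a tombstone and (ii) deletions target uniformly random present elements, so that between rebuilds of length $R = n/\beta$ roughly $n/(2\beta)$ tombstones sit at essentially uniformly random positions among the occupied slots and provide abundant anti-clustering stopping points for insertion probes.

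The main obstacle is this refinement: re-running the path-surplus argument with the extra resource of $\Theta(n/\beta)$ uniformly random stopping points and showing that the contribution beyond the $x\log^{1.5} x$ floor shrinks from $\beta x$ to $x/\beta$. Intuitively, the $\beta x$ bound captures an adversarial tombstone placement allowed by Theorem~\ref{thm:main}, while $x/\beta$ reflects the benign uniform deposition specific to the average-case workload. The matching $\Omega(x\log^{1.5} x + x/\beta)$ lower bound should follow from a Theorem~\ref{thm:prev2}-style construction, combining the $\Omega(x\log^{1.5} x)$ amortized fill-phase contribution with an $\Omega(x/\beta)$ contribution attributable to the rebuild overhead and to occasional tombstone-free intervals.
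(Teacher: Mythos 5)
Your reduction rests on the claim that, under the same hash function and operation sequence, ordered and unordered linear probing maintain identical sets of occupied/free/tombstone slots. This is Peterson's invariance for tombstone-free linear probing, and indeed the paper cites it (for $F_0(t)$, the slot state right after a rebuild). But it fails once tombstones are present. Concretely: put $A$ (hash $1$) at slot $1$, $B$ (hash $1$) at slot $2$, $C$ (hash $2$) at slot $3$, slot $4$ free --- a valid state for both variants. Delete $C$, leaving a tombstone of hash $2$ at slot $3$. Now insert $D$ with hash $3$. Unordered LP reuses the tombstone and places $D$ at slot $3$; ordered LP must skip the tombstone (its hash $2 < 3$) to preserve sortedness and places $D$ at slot $4$, leaving the tombstone at slot $3$ in place. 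The slot states have diverged, so the premise ``same runs, same element sets per run, only the matching differs'' is false. This simultaneously breaks the telescoping-displacement argument for queries and the insertion-cost transfer.

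A second unsupported step is ``by rotational symmetry of the random hash function the expected displacement is the same for every element,'' used to equate the average-over-elements displacement with the worst-case-over-elements expected query cost. Rotational symmetry permutes positions, not elements; elements inserted at different times during the rebuild window sit in tables with different tombstone populations and do not have exchangeable displacement distributions. The gap between average and worst-case per-element query cost is precisely what distinguishes the unordered random-query regime from the ordered worst-case regime, so this identification needs its own argument and is not a symmetry.

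The paper's actual proof (Section~\ref{sec:unordered}) makes no use of such an invariance. It defines \emph{unordered} crossing numbers and an \emph{unordered} insertion surplus in which each deletion contributes a red dot at the \emph{position} of the deleted element, not its hash. This breaks the independence used throughout Section~\ref{sec:insertionsurplus} (the paper flags exactly this: red dots now correlate positively with the local free-slot density). The bulk of Section~\ref{sec:translate} is devoted to restoring control: deletions are sampled via a random ``failed deletion target'' position with a second-choice fallback, the unordered surplus is coupled to an ordered-style surplus on a synthetic sequence $O$ via the failed/second-choice accounting of Lemma~\ref{lem:phantom}, and the discrepancy is bounded through a geometric-tail argument conditioned on the crossing-number event (Lemma~\ref{lem:hardgeo}) plus a supermartingale/Azuma bound on $Y_{t,\ell}-X_{t,\ell}$ (Lemma~\ref{lem:XlYl}). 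None of that appears in your sketch, and you yourself flag the remaining refinement and the lower bound as open; those are not small residual steps but the content of Lemmas~\ref{lem:unorderedupper}, \ref{lem:unorderedlower1}, and \ref{lem:unorderedlower2}.
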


\begin{restatable}{corollary}{corunordered}
    In the context of Theorem \ref{thm:unordered}, the optimal rebuild-window size $R$ is any $R  = \Theta(n / \log^{1.5} x)$, at which point the amortized expected insertion/deletion time is $O(x \log^{1.5} x)$ and the expected query time is $O(x)$.
    \label{cor:unordered}
\end{restatable}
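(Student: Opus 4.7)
The proof of Corollary~\ref{cor:unordered} is a direct optimization of the two-term trade-off curve established by Theorem~\ref{thm:unordered}. The plan is to set $\beta = \Theta(\log^{1.5} x)$ (equivalently $R = \Theta(n / \log^{1.5} x)$), substitute into both the amortized insertion/deletion bound and the query bound, and verify that both collapse to the asymptotics claimed.

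At this value of $\beta$, the rebuild-amortization term in the insertion/deletion bound matches the $\Theta(x \log^{1.5} x)$ floor coming from the path-surplus lower bound, so the total insertion/deletion cost becomes $\Theta(x \log^{1.5} x)$. Simultaneously, the tombstone-load term of the query bound becomes $\Theta(x)$, which matches the natural $\Omega(x)$ floor for queries at load factor $1 - 1/x$. So both bounds in Theorem~\ref{thm:unordered} are simultaneously tight at $\beta = \Theta(\log^{1.5} x)$, giving exactly the two asymptotics asserted by the corollary.

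To confirm uniqueness of this optimum (up to constants), I would examine the two endpoints of the trade-off. If $\beta = \omega(\log^{1.5} x)$, then rebuilds happen too often and the rebuild-amortization term inflates the insertion cost strictly above $\Theta(x \log^{1.5} x)$; if $\beta = o(\log^{1.5} x)$, then tombstones accumulate and the tombstone-load term inflates the query cost strictly above $\Theta(x)$. Hence $\beta = \Theta(\log^{1.5} x)$ is the unique (up to constant factors) choice that saturates both lower bounds at once, and any other asymptotic scaling of $R$ strictly worsens at least one of insertion or query cost.

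There is no real obstacle to this argument---it is a routine balancing of two terms, and all of the technical content sits inside Theorem~\ref{thm:unordered} itself (the trade-off curve for the unordered, average-case setting). Once that theorem is in hand, the corollary is essentially a one-line calculation together with the uniqueness check above.
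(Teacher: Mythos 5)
Your proposal is correct and is essentially the routine balancing calculation that the paper treats as immediate from Theorem~\ref{thm:unordered}: substituting $\beta = \Theta(\log^{1.5} x)$ equalizes the two terms in each bound, and examining $\beta = \omega(\log^{1.5} x)$ versus $\beta = o(\log^{1.5} x)$ shows this is the unique balance point up to constants. One remark: the printed statement of Theorem~\ref{thm:unordered} has $\beta^{-1}x$ in the insertion/deletion bound, which appears to be a typo for $\beta x$ (the displayed proof computes $nx/R = \beta x$, and Theorem~\ref{thm:main} uses $\beta x$ in the same position); your reasoning about the ``rebuild-amortization term'' inflating when $\beta$ grows implicitly uses the corrected form $\beta x$, which is what makes both the balance and the uniqueness check work.
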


Theorem \ref{thm:unordered} suggests that anti-clustering should be viewed as a phenomenon that takes place not just in ordered linear-probing hash tables, but in all linear-probing hash tables. It offers at least a partial explanation for why linear probing, despite concerns about clustering, remains so widely used in practice.

It should be noted that both the restriction in Theorem \ref{thm:unordered}---that queries/deletions are to \emph{random} elements and that insertions are to \emph{new} elements---are fundamentally necessary for the theorem to hold. If either restriction is relaxed, then one can force $\Theta(x^2)$-time operations by either repeatedly inserting/deleting the same element over and over, or repeatedly querying the first element to be inserted during the rebuild window. 

\paragraph{Paper outline. }The rest of the paper is structured as follows. We begin in Section \ref{sec:prelims} with preliminaries, including a summary of past techniques. Then, in Section \ref{sec:pathsurplus}, we achieve our main technical result, which is a bound of $\Theta(m \log^{0.75} m)$ on path surplus. After this, in Sections \ref{sec:insertionsurplus} and \ref{sec:linprobing}, we tighten the relationship between the path surplus problem and ordered linear-probing hash tables, so that we can transform our tight bounds for the former into tight bounds for the latter. Finally, Section \ref{sec:unordered}, we extend our results to unordered linear probing---as we shall see, this extension itself also requires several new technical ideas.

\section{Preliminaries and Background}\label{sec:prelims}

\paragraph{Ordered linear probing.}
An ordered linear probing hash table with \defn{capacity} $ n $ can be viewed as an array of size $ n $ consisting of free slots, elements (also known as \defn{items} or \defn{keys}), and tombstones (i.e., elements marked as deleted). The hash table also has access to a fully random hash function $h$ mapping elements to random indices $\{1, 2, \ldots, n\}$.\footnote{In general, one does not necessarily need to assume that $ h $ is fully random, as there well-understood techniques for simulating fully random hash functions in this setting (see, e.g., \cite{siegel2004universal, pagh2008uniform, DietzfelbingerWo03, bercea2023locally}). One interesting question is whether $O(1)$-independence might suffice. It \emph{does suffice} for the classical $O(x^2)$ worst-case expected bound \cite{PaghPaRu07}, but it is not known whether it suffices for an amortized expected $\tilde{O}(x)$ bound \cite{bender2022linear}.} For given element $u$, we will refer to $h(u)$ as the \defn{hash} of $u$. Similarly, the \defn{hash} of a tombstone is given by $h(u)$, where $u$ is the element that was deleted.

When discussing the array slots $1, 2, \ldots, n$, one must be a bit careful about \defn{wraparound} (i.e., if we go off the end of the array, we wrap around to the start). Wraparound makes it slightly tricky to talk about relationships $i < j$ for slots $i$ and $j$. Thus, it is helpful to perform a slight abuse of notation, as in \cite{bender2022linear}, and to say that if we reached slot $j$ from $i$ by traveling to the right (resp. left) then $i < j$ (resp. $i > j$).

The hash table supports insertions (i.e., add an element that was not already present), queries (i.e., check whether a given element is present), and deletions (i.e., remove an element that is present): 
\begin{itemize}
\item An insertion of an item $ u$ begins by calculating the smallest index $j \ge h(u)$ such that slot $j$ is either a free slot or contains an element/tombstone with hash at least $j$. This is the position where $u$ will ultimately be placed. Next, we calculate the smallest index $j' \ge j$ such that slot $j'$ is either a free slot or a tombstone. Finally, we slide the elements in positions $j, j + 1, \ldots, j' - 1$ into positions $j + 1, j + 2, \ldots, j'$ (if $j' = j$, then this is a no-op), and we place $ u $ in position $ j $. 
\item A query of an item $u$ examines slots $h(u), h(u) + 1, \ldots$ until it either (1) finds $u$, meaning that $u$ is present; (2) encounters a free slot, meaning that $u$ is not present; or (3) encounters an element/tombstone with hash larger than $h(u)$, meaning once again that $u$ is not present.
\item Finally, a deletion of an item $u$ simply marks the item as deleted, i.e., replaces it with a tombstone. Additionally, for every $ R $ insertions/deletions that occur, a \defn{rebuild} is performed in which the hash table is reconstructed to remove all of the tombstones (the hash function does not change). The time between two rebuilds is referred to as a \defn{rebuild window}, and the parameter $ R $ is referred to as the \defn{rebuild window size}.
\end{itemize}

We say the hash table is filled to $\delta$ \defn{full} (a.k.a., the hash table has \defn{load factor} $\delta$) if it contains $\delta n$ elements (tombstones do not count). We will typically use $1 - 1/x$ to denote the maximum load factor that the hash table ever reaches. 

\paragraph{The path surplus problem.}
The \defn{Path Surplus Problem} is defined as follows. Consider the grid $[0, m] \times [0, m]$ (which we will often refer to simply as $[m] \times [m]$), and suppose that the grid contains $\Pois(m^2)$ blue dots placed uniformly at random, and $\Pois(m^2)$ red dots also placed uniformly at random. In other words, both blue and red dots are placed according to Poisson arrivals with density $1$.

Now, consider the set of \defn{monotonic} paths through the grid, going from $(0, 0)$ to $(m, m)$ (i.e., paths that travel upwards and to the right). As a convention, we will refer to these simply as \defn{paths} (so the fact that they are monotonic, and that they go from $(0, 0)$ to $(m, m)$ will be implicit). Define the \defn{surplus} of a path to be the number of blue dots beneath the path, minus the number of red dots beneath the path. 

The \defn{Path Surplus Problem} is to solve for the expected maximum surplus of any path. That is, we wish to solve for the expected value of the largest surplus of any monotonic path going from $(0, 0)$ to $(m, m)$. An example is given in Figure \ref{fig:surplus}.

\paragraph{The relationship between linear probing and path surplus.}
It is worth taking a moment to briefly describe the relationship between linear probing and the path surplus problem. 

Suppose we wish to analyze the amortized expected insertion cost during a rebuild window spanning time $[t_0, t_1)$. Rather than directly analyzing the cost of each insertion, one can instead analyze the \defn{crossing numbers} $c_1, c_2, \ldots, c_n$, where $c_i$ is the number of times that an insertion with a hash smaller than $ i $ uses either (a) a tombstone left by a key that had hash at least $ i $; or (b) a free slot in a position greater than or equal to $ i $. To a first approximation, $\sum_i c_i$ will be proportional to the total cost of all the insertions in the rebuild window.

The crossing number $ c_i $, in turn, has a surprisingly natural geometric interpretation. Consider a subinterval of the array of the form $[j, i - 1]$ for some $j < i$, and define the \defn{surplus} of the subinterval as follows. Construct a two-dimensional plot of the insertions/deletions with hashes in $[j, i - 1]$, where insertions are represented by blue dots, deletions are represented by red dots, and the position of the dot for an operation on a key $u$ at time $t$ is given by 
$$\left(\frac{h(u) - j}{i - j} , \frac{t - t_0}{t_1 - t_0}\right) \in [0, 1) \times [0, 1).$$
Then the \defn{surplus} of the subinterval $[j, i - 1]$ is simply the maximum surplus of any monotonic path through the grid traveling from $(0, 0)$ to $(1, 1)$. 

With this definition in place, it turns out that the crossing number $c_i $ is \emph{exactly} equal to
\begin{equation}\max_{j < i} \left(\surplus([j, i - 1]) - \free([j, i - 1])\right),
\label{eq:crosssurplus}
\end{equation}
where $\free([j, i - 1])$ denotes the number of free slots in $[j, i - 1]$ at time $t_0$. What is going on intuitively is that, if there is some interval $[j, i - 1]$ with a path $P$ of surplus $s$, and if there are $f$ free slots in the interval at time $t_0$, then the only way for an insertion (i.e., blue dot) beneath $P$ to \emph{avoid} contributing to $c_i$ is if the insertion either (1) makes use of a tombstone created by a deletion (i.e., red dot) beneath $P$, or (2) makes use of one of the $f$ free slots. Thus, the crossing number $c_i$ is guaranteed to be at least $s - f$. So paths with large surplus act as certificates for the crossing number $c_i$, and it turns out that the relationship is tight, hence \eqref{eq:crosssurplus}.

The problem of solving for the surplus of a given interval $[j, i - 1]$ is similar to but not exactly the same as the path surplus problem. Nonetheless, by first solving the path surplus problem, we will be able to obtain a solution that extends to analyzing the surplus of each interval, which will then lead to an amortized analysis of linear probing. 

\paragraph{The techniques known so far.} Finally, we conclude the section with a brief discussion of Bender et al.'s techniques for analyzing path surplus \cite{bender2022linear}. Although our approach will be quite different, the techniques are nonetheless helpful to get a bearing on the problem. 

To simplify the discussion, let us imagine that there are exactly $m^2$ dots, each of which is randomly either blue or red, and that they form a grid-like pattern. Let $\mathcal{P}$ be the set of $\binom{2m}{m} = 2^{\Theta(m)}$ different monotonic paths. The set $\mathcal{P}$ is large enough that we cannot hope to apply a union bound over all paths. We must instead capture the fact that, even though there are many paths, their surpluses are tightly correlated. 

Bender et al.~prove their upper bound as follows. They first consider the $\poly(m)$ axis-aligned \emph{rectangles} in the grid. With a simple Chernoff-bound argument, they show that, with high probability, every rectangle $R$ has at most $O(\operatorname{Perimeter}(R) \cdot \log m)$ more blue dots than red dots in its interior. They then show how to decompose the area underneath any given path $P$ into a collection of rectangles $R_1, R_2, \ldots$ whose perimeters sum to $O(m \log m)$. Summing over the rectangles gives a bound of $O(m \log^2 m)$ on the maximum surplus.

They then prove a lower bound of $\Omega(m \sqrt{\log \log m})$ with a different argument. Consider some threshold $T = \Theta(\sqrt{\log \log m})$. Start with the trivial path that covers all points (i.e., goes from $(0, 0)$ to $(0, m)$ to $(m, m)$). If the top-left \emph{quadrant} of the grid has surplus at least 
$\Omega(m T)$, then we accept this trivial path as our final solution. Otherwise, we remove the quadrant from the path, we break the path in half (one half traveling over the bottom left quadrant, and one half traveling over the top right quadrant) and we recursively apply the same construction to the two halves (using a $m/2$ as the subproblem size in place of $m$). The threshold $T$ is selected so that, within the $O(\log m)$ problems that one encounters on a given recursive path (before hitting a leaf subproblem of size $1$), there is a constant probability that at least one of them will succeed. The expected sum of the sizes of the successful subproblems is therefore $\Theta(m)$, which gives an expected path surplus of $\Theta(m \cdot T) = \Theta(m \sqrt{\log \log m})$. 

The bound of $O(m \log^{0.75} m)$ in the current paper will be achieved through a very different (and somewhat more intricate) set of techniques. Our lower-bound construction, in particular, will reveal a surprising (and beautiful) connection between the structure of the surplus-maximizing path, and a certain type of random walk. We will then be able to prove that this construction is tight through a sequence of potential-function arguments that allow us to directly analyze the ``added value'' that a path could hope to get by deviating from our lower-bound construction.

\section{The Path Surplus Problem}\label{sec:pathsurplus}

In this section, we give tight bounds for the path surplus problem. 
Section \ref{sec:pathsurpluslower} shows that the expected maximum path surplus is $\Omega(m \log^{0.75} m)$; and Section \ref{sec:pathsurplusupper} gives a matching upper bound of $O(m \log^{0.75} m)$.

\begin{figure}
\begin{center}\includegraphics[scale=0.6]{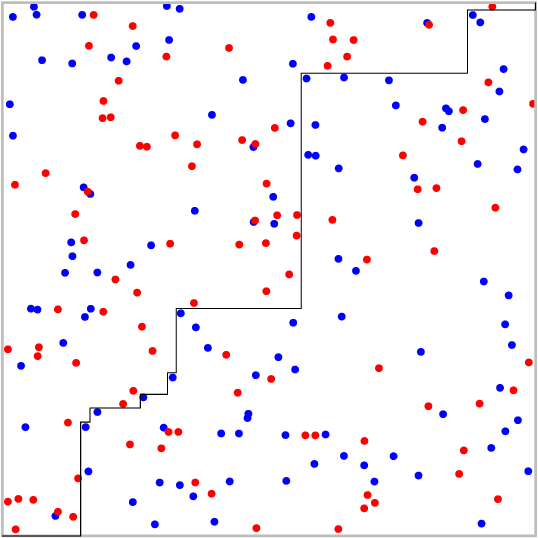}
\end{center}
% See Downloads/Path Drawings/path.ipe
% and Downloads/Path Drawings/print_point.py
\caption{An example with $m^2 = 100$. In this case, there are 110 blue dots and 97 red dots. A surplus-maximizing monotonic path is given, and the surplus of the path is 32.}
\label{fig:surplus}
\end{figure}
\subsection{Lower Bound on Path Surplus}\label{sec:pathsurpluslower}

To describe our lower-bound construction, it will be helpful to rotate the problem by 45 degrees, as in Figure \ref{fig:rotated}. Now, the path travels from $(0, 0)$ to $(m', 0)$, where \(m' = \sqrt{2} m\). The original restriction that the path be monotone now translates to the path having slope in \([-1, 1]\) at all times. Call such a path \defn{slope-legal}. This is the perspective that we will take for the entirety of this subsection.

As notation, for any given line segment $L$, let $p_L = (x_L, y_L)$ denote its midpoint, $w_L$ denote its width, and $q_L = w_L / \sqrt{\log m}$. Finally, let $R_L$ denote the rectangular region  
$$\{(x, y) \mid (x, y + s) \in L \text{ for some } |s| \le q_L / 2, \text{ and } |x - x_L| \le w_L / 16\}$$
consisting of points $(x, y)$ that are within vertical distance $q_L/2$ of $L$ and horizontal distance $w_L/16$ of $p_L$. Again, throughout the subsection, when we discuss vertical and horizontal distances, we are referring to distances in the rotated version of the problem (i.e., what were originally diagonal distances).

Let \(L_0\) be the straight line path from $(0, 0)$ to $(m', 0)$. Our path \(P\) is constructed by calling the recursive function \(\texttt{Path}(L_0, 1)\) (see Algorithm \ref{alg:path}). The function takes as input a line-segment $L$, and moves the midpoint $p_L = (x_L, y_L)$ to $p'_L = (x_L, y_L \pm q_L)$, where the $\pm$ is determined by whether $\surplus(R_L)$ is positive or negative. If $\surplus(R_L)$ is positive, then $p'_L = (x_L, y_L + q_L)$, so that $R_L$ is contained beneath the path, and if $\surplus(R_L)$ is negative, then $p'_L = (x_L, y_L - q_L)$, so that $R_L$ is contained above the path. (If $\surplus(R_L) = 0$, a random decision is made.) The act of moving the midpoint breaks $L$ into two line segments $A_L$ and $B_L$, which we then recurse on. The recursion stops when either we reach depth $(\log m) / 16$ or we get to a line segment $L$ whose slope is very close to $1$ or $-1$ (within less than $1 / \sqrt{\log m}$). An example of what the variables $L, R_L, p_L, A_L, B_L$ might look like for the base case $L_0 = L$ is given in Figure \ref{fig:basecase}.

\begin{algorithm}
\caption{Given a line segment \(L\), and a recursion depth \(d\), we split the line segment into two segments, shift the middle point vertically by some amount, and recurse on the two segments.}
\begin{algorithmic}[1]
\Procedure{Path}{$L, d$}
\If{$d \le (\log m) / 16$ and $-1 + 1/\sqrt{\log m} < \slope(L) < 1 - 1 / \sqrt{\log m}$}
\State Let \(p_L = (x_L, y_L)\) be the midpoint of \(L\).
\State Let \(w_L\) be the width of \(L\), that is, the horizontal distance that \(L\) travels.
\State Let \(q_L = w_L / \sqrt{\log m}\).
\State Let \(R_L = \{(x, y) \mid (x, y + s) \in L \text{ for some } |s| \le q_L / 2, \text{ and } |x - x_L| \le w_L / 16\}\).

\If{$\surplus(R_L) > 0$}
\State Let \(p'_L = (x_L, y_L + q_L)\). 
\EndIf
\If{$\surplus(R_L) < 0$}
\State Let \(p'_L = (x_L, y_L - q_L)\). 
\EndIf
\If{$\surplus(R_L) = 0$}
\State Set \(p'_L = (x_L, y_L \pm q_L)\) for a random choice of \(\pm \in \{+, -\}\). 
\EndIf
\State Let \(A_L\) go from the left end-point of \(L\) to \(p'_L\).
\State Let \(B_L\) go from the right end-point of \(L\) to \(p'_L\).
\State Replace \(L\) with \(A_L, B_L\).
\State Call \(\texttt{Path}(A_L, d + 1)\).
\State Call \(\texttt{Path}(B_L, d + 1)\).
\EndIf
\EndProcedure
\end{algorithmic}
\label{alg:path}
\end{algorithm}

We can capture the behavior of Algorithm \ref{alg:path} with a binary tree \(T\) whose nodes are line segments. The root of the tree is the straight line path \(L_0\). For any node \(L\) in the tree, if the recursion terminates on \(L\) then \(L\) is a leaf, and otherwise \(L\) has children \(A_L\) and \(B_L\) as constructed by the algorithm. 

In order to describe the analysis, it will be helpful to introduce two additional definitions. For a given node $L \in T$, call a point \((x, y) \in \mathbb{R}^2\) \defn{\(L\)-critical} if $L$ is not a leaf, and if $(x, y) \in R_L$; and call a point \((x, y)\) \defn{\(L\)-sensitive} if $L$ is not a leaf, and if it is possible for the point to be in the triangle with edges \(L, A_L, B_L\) (for some outcome of $\text{sign}(\surplus(R_L))$). All \(L\)-critical points are also \(L\)-sensitive, but not vice versa.

A key feature of the algorithm is that $L$-critical points at one node $L$ cannot be $L'$-sensitive for any descendent $L'$ (Lemma \ref{lem:sensitivity}). This has two implications: (1) that the rectangles $R_L$ and $R_{L'}$ are disjoint, meaning that the random bits determining the behaviors of the two subproblems are independent; and (2) that the question of whether $R_L$ is above/below the final path is completely determined by the decision made at the subproblem corresponding to node $L$, and not affected by subsequent decisions made in descendent subproblems.
\begin{lemma}
If a point \((x, y) \in \mathbb{R}^2\) is \(L\)-critical for some node $L \in T$, then it cannot be \(L'\)-sensitive (or \(L'\)-critical) for any node $L' \in T$ that is a descendent of $L$.
\label{lem:sensitivity}
\end{lemma}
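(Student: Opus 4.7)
The plan is a direct geometric computation. By the vertical symmetry of the construction (flipping the sign of the shift at $L$), I may assume without loss of generality that $p'_L$ sits above $L$. The slab $R_L$ has horizontal extent $[x_L - w_L/16, x_L + w_L/16]$ (half-width $w_L/16$) and vertical extent $q_L$ symmetric about $L$, so its half-width is four times smaller than the half-width $w_L/4$ of each of $A_L$ and $B_L$. Any descendent $L' \in T$ of $L$ whose horizontal range is disjoint from $[x_L - w_L/16, x_L + w_L/16]$ has its sensitive region horizontally disjoint from $R_L$ and is handled vacuously; since $R_{L'}$ sits inside the sensitive region of $L'$, the $L'$-critical conclusion follows as a special case of the $L'$-sensitive one.

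Tracing through the halving structure, I would identify the descendents of $A_L$ whose horizontal range overlaps $R_L$'s range as the ``inner-spiral'' sequence $L_0 = A_L$, $L_1 = B_{A_L}$, $L_2 = B_{B_{A_L}}$, $\ldots$ (each of width $w_L/2^{k+1}$ and sharing $p'_L$ as right endpoint), together with the left-child segment $A_{L_k}$ at each level $k \ge 3$ (once $L_k$'s $x$-range falls entirely inside $R_L$'s). A mirror family arises from $B_L$'s subtree. Define $l_k$ to be the height of $L_k$'s left endpoint $p'_{L_{k-1}}$ above the line through $L$, so that $l_0 = 0$, and the algorithm gives the recurrence
\[
  l_{k+1} = \tfrac{l_k + q_L}{2} \pm q_{L_k},
\]
with the sign determined by $\operatorname{sign}(\surplus(R_{L_k}))$. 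The adversarial ``always $-$'' choice of signs admits a closed-form solution $l_k = q_L\bigl(1 - (k+1)/2^k\bigr)$, and this is the worst case: any ``$+$'' choice strictly increases every subsequent $l_j$.

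The main step is then a per-point check that, for every $k$ and every $x$ in $[x_L - w_L/16, x_L] \cap [x_L - w_L/2^{k+1}, x_L]$, the height $L_k(x) - L(x)$ minus the local sensitive-region extent $q_{L_k}\bigl(1 - 2\,|x - x_{L_k}|/w_{L_k}\bigr)$ is at least $q_L/2$, i.e.\ at the top of $R_L$. At the right endpoint this is automatic because the height is $q_L$ and the extent is $0$; for other $x$ one plugs in the closed form for $l_k$ and verifies the inequality by an elementary calculation, using $2^k \ge k+2$ for $k \ge 2$. The analogous computation for the $A_{L_k}$ descendents, whose midpoint height is $(l_k + l_{k+1})/2$ and whose extent is $q_L/2^{k+2}$, gives the same conclusion with room to spare.

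The main obstacle I expect is bookkeeping in one boundary case: the computation for $L_2$ under the all-$-$ sign pattern becomes an equality rather than a strict inequality at the single corner point $(x_L - w_L/16,\, L(x_L - w_L/16) + q_L/2)$ of $R_L$. Since the downstream uses of this lemma are probabilistic over a continuous Poisson point process, this measure-zero boundary overlap is harmless; alternatively, the strict slope-legality check $|\slope(L)| < 1 - 1/\sqrt{\log m}$ in Algorithm \ref{alg:path} provides a small positive slack that promotes the inequalities to strict ones in the non-degenerate regime.
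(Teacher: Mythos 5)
Your approach is genuinely different from the paper's, and in its current form it has a real gap. You enumerate the descendants that can ``touch'' $R_L$ --- the inner-spiral sequence $L_k = B_{\cdots B_{A_L}}$ and the immediate left children $A_{L_k}$ --- and then do a per-point height computation for exactly those nodes. But the lemma must rule out $L'$-sensitivity for \emph{every} descendant $L'$ of $L$, and once $k \ge 3$ the node $A_{L_k}$ has its own subtree (e.g.\ $B_{A_{L_k}}$, $A_{A_{L_k}}$, and everything below them), all of whose $x$-ranges sit inside $[x_L - w_L/16,\, x_L]$ and hence overlap $R_L$. Your per-point check says nothing about those nodes; moreover, their sensitive regions can dip below $A_{L_k}$ itself, so a bound for $A_{L_k}$ alone does not automatically cover them. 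To close the gap you would need to prove, say, that the entire ``region of influence'' of $A_{L_k}$ (the union of sensitive regions over its whole subtree, whose cumulative vertical excursion below $A_{L_k}$ is at most a geometric sum $\sum_{j} q_{A_{L_k}}/2^j$) stays above $R_L$ --- and at that point you would be re-deriving a telescoping estimate rather than a per-node check.

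The paper sidesteps the enumeration entirely with a triangle-inequality argument along the tree path $L = L_1, L_2, \ldots, L_t = L'$: since $x$ sits within $w_{L_1}/16$ of the midpoint of $L_1$, the step $L_1 \to L_2$ produces a jump of size $\ge \tfrac{7}{8} q_{L_1}$, the step $L_2 \to L_3$ is small because $x$ is near an endpoint of $L_2$, and all later steps $L_i \to L_{i+1}$ are bounded by $q_{L_i}$, which decays geometrically. Summing gives $|L_1(x) - L_t(x)| = \Omega(q_{L_1})$ regardless of which child is taken at each subsequent level, which is exactly the uniformity your enumeration lacks. By contrast, your recurrence $l_{k+1} = \tfrac{l_k + q_L}{2} \pm q_{L_k}$ with closed form $l_k = q_L\bigl(1 - (k+1)/2^k\bigr)$ is correct and does give the sharpest possible picture of the inner spiral (your observation that the $k=2$ corner is an exact equality is a genuine geometric fact that the paper glosses over), but as an argument it currently only covers a strict subset of the descendants. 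I would either extend it with a subtree-influence bound as sketched above, or switch to the paper's path-telescoping structure, which is shorter and closes all cases at once.
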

\begin{proof}
Suppose for contradiction that \((x, y)\) is \(L'\)-sensitive. Let \(L_1, L_2, \ldots, L_t\) be the tree path from \(L_1 = L\) to \(L_t = L'\) for some \(t\). 
Because $(x, y)$ is $L_1$-critical, we have that
\begin{equation}|x_{L_1} - x| \le w_{L_1} / 16 = w_{L_2} / 8.
\label{eq:firstgeomin}
\end{equation}
It follows that
\begin{equation}|L_1(x) - L_2(x)| \ge \frac{7}{8} q_L.\label{eq:L2}\end{equation} Recalling that $(x_{L_2}, y_{L_2})$ denotes the midpoint of $L_2$, we have as another consequence of \eqref{eq:firstgeomin} that
$|x_{L_2} - x| \ge \frac{3}{4} w_{L_3}$,
which implies that
\begin{equation}|L_2(x) - L_3(x)| \le \frac{1}{4} q_{L_2}.
\label{eq:L3}
\end{equation}
Combined, \eqref{eq:L2} and \eqref{eq:L3} imply that
\begin{equation}|L_3(x) - L_1(x)| \ge \frac{7}{8}q_{L_1} - \frac{1}{4} q_{L_2} = \frac{7}{8}q_{L_1} - \frac{1}{8} q_{L_1} = \frac{3}{4} q_{L_1}.
\label{eq:L13}
\end{equation}
On the other hand, for all \(i \ge 3\),
$$|L_i(x) - L_{i + 1}(x)| \le q_{L_i} \le \frac{1}{2^i q_L}.$$
It follows that
$$|L_t(x) - L_3(x)| < \sum_{i \ge 3} \frac{1}{2^i q_L} \le \frac{1}{4} q_L.$$
Combining this with \eqref{eq:L13}, we get that
$|L_1(x) - L_t(x)| > \frac{1}{2} q_{L_1}.$
Finally, this implies that $|L_1(x) - L_t(x)| > q_{L_t}$, which contradicts the fact that $x$ is $L_t$-sensitive.
\end{proof}

Having established that each point $(x, y)$ is critical to at most one node in the tree $T$, we can now reason about the probability distribution for $\surplus(R_L)$ at a given node $L$ in the tree. The quantity that we care about, in particular, is $\surplus(R_L) \cdot \mathbb{I}_{R_L \ge 0}$, since this is the surplus that $R_L$ will contribute to our final path (recall that if $\surplus(R_L) < 0$ then it will not be contained beneath the final path). 
\begin{lemma}
    Consider a non-leaf node \(L\) at depth \(d\) in \(T\). Fix outcomes for \(\surplus(R_{L'})\) for every node \(L'\) at depths less than \(d\), and let \(\mathcal{O}\) denote these outcomes. Then
    $$\E[\surplus(R_L) \cdot \mathbb{I}_{R_L \ge 0} \mid \mathcal{O}] \ge \Omega\left(\frac{m}{2^d \log^{0.25} m}\right).$$
    \label{lem:rectanglesurplus}
\end{lemma}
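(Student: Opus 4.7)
The plan is to show that, conditional on $\mathcal{O}$, the variable $\surplus(R_L)$ has the same distribution as it does unconditionally, namely a centered Skellam $\Pois(\lambda) - \Pois(\lambda)$ with $\lambda$ equal to the area of $R_L$. Once this is established, standard estimates give $\E[X \cdot \mathbb{I}_{X \ge 0}] = \Omega(\sqrt{\lambda})$, and substituting the area yields the target bound.

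First I would note that conditional on $\mathcal{O}$, the line segment $L$ and the rectangle $R_L$ are completely determined, since the sign of each ancestor's surplus dictates the direction of the vertical shift at that ancestor and thereby the whole trajectory of the algorithm down to depth $d$. The main step is then to show that $R_L$ is disjoint from $R_{L'}$ for every node $L'$ at depth less than $d$, which I would split into two cases. If $L'$ is an ancestor of $L$, I would invoke Lemma \ref{lem:sensitivity}: every point of $R_L$ is $L$-sensitive, because it lies within vertical distance $q_L/2$ of $L$ in the central $w_L/8$-band, and on whichever side of $L$ it lies, it is contained in the triangle one obtains by shifting the midpoint in that direction; hence the $L'$-critical points of $R_{L'}$ cannot lie in $R_L$. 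If $L'$ is not an ancestor, let $\hat L$ be their lowest common ancestor; the two children of $\hat L$ partition its horizontal range, so $L$ and $L'$ sit in disjoint $x$-ranges, which forces $R_L$ and $R_{L'}$ to be horizontally disjoint. The main technical subtlety I expect is in the ancestor case, where one must carefully check that \emph{both} the above-$L$ and below-$L$ halves of $R_L$ are $L$-sensitive, so that Lemma \ref{lem:sensitivity} excludes $R_{L'}$ from the entire rectangle rather than only one half.

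With disjointness established, the Poisson-process independence of counts on disjoint regions implies that dots in $R_L$ are independent of dots in $\bigcup_{L'} R_{L'}$, and since $\mathcal{O}$ is a function of the latter together with independent coin flips used at ties, conditioning on $\mathcal{O}$ leaves the distribution of $\surplus(R_L)$ unchanged. The area of $R_L$ is $w_L^2/(8\sqrt{\log m})$; since $L$ is at depth $d \le (\log m)/16$ with width $w_L = m'/2^d = \Theta(m/2^d)$, we obtain $\lambda = \Theta(m^2/(2^{2d}\sqrt{\log m}))$, which is polynomially large in $m$. For a centered Skellam of rate $\lambda \ge 1$, the standard bound $\E[X \cdot \mathbb{I}_{X \ge 0}] = \tfrac{1}{2}\E|X| = \Theta(\sqrt{\lambda})$ (by CLT, or by direct computation using the Skellam local limit theorem) then gives $\Omega(m/(2^d \log^{0.25} m))$, as required.
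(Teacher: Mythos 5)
Your proposal is correct and follows essentially the same route as the paper: establish disjointness of $R_L$ from the regions $R_{L'}$ at shallower depths, use Poisson-process independence on disjoint regions to conclude $\surplus(R_L)$ is independent of $\mathcal{O}$, compute the area $\alpha = \Theta(q_L w_L) = \Theta(m^2/(4^d\sqrt{\log m}))$, and apply a concentration lower bound to get $\Omega(\sqrt{\alpha})$. Two minor points of difference: the paper attributes the disjointness entirely to Lemma~\ref{lem:sensitivity}, which formally only handles the ancestor case --- your explicit treatment of the non-ancestor case via disjoint $x$-ranges (LCA argument) fills an implicit gap in the paper's phrasing; and for the final bound the paper uses a direct Poisson-tail estimate (with probability $\Omega(1)$, $A \ge \alpha + \sqrt{\alpha}$ and $B \le \alpha$), whereas you use the symmetry identity $\E[X\cdot\mathbb{I}_{X\ge 0}] = \tfrac12\E|X| = \Theta(\sqrt{\lambda})$ for the centered Skellam --- both are valid and essentially equivalent.
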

\begin{proof}
    By Lemma \ref{lem:sensitivity}, the region \(R_L\) is disjoint from the regions \(R_{L'}\) for other nodes \(L'\). The distribution of \(\surplus(R_L)\) is therefore unaffected by \(\mathcal{O}\).

    We can express \(\surplus(R_L)\) as \(A - B\), where \(A\) and \(B\) are the numbers of blue and red points in \(R_L\), respectively. The random variables \(A\) and \(B\) are each independently distributed according to \(\operatorname{Pois}(\alpha)\), where \(\alpha\) is the area of the geometric region \(R_L\). Observe that
    $$\alpha = \Theta(q_L w_L) = \Theta\left(\frac{m}{2^d} \cdot \frac{m}{2^d \sqrt{\log m}}\right) = \Theta\left(\frac{m^2}{4^d \sqrt{\log m}}\right).$$
    With probability \(\Omega(1)\), we have that \(A \ge \alpha + \sqrt{\alpha}\) and that \(B \le \alpha\), in which case \(A - B \ge \sqrt{\alpha}\). Thus
    $$\E[\surplus(R_L) \cdot \mathbb{I}_{R_L \ge 0} \mid \mathcal{O}] \ge \Omega(\sqrt{\alpha}) = \Omega\left(\frac{m}{2^d \log^{0.25} m}\right).$$
\end{proof}

Lemma \ref{lem:rectanglesurplus} captures the contribution of $R_L$, for a given node $L$, to the final path surplus. However, in order to guarantee that the final path surplus is large, we must also ensure that there are many nodes $L$ in each level of the tree. Recall that even nodes $L$ with relatively small depths can be leaves if $|\slope(L)|$ is very near $1$. We must show that this is relatively rare---that is, that we expect each level $d \le (\log m)/16$ in the tree to be nearly saturated. 
\begin{lemma}
Let \(\tau_d\) be the number of nodes at depth \(i\) in tree \(T\). For any \(d \le (\log m) / 16\),
$\E[|\tau_d|] \ge \Omega(2^d).$
\label{lem:levelsize}
\end{lemma}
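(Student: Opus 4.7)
The plan is to use linearity of expectation over the $2^d$ potential depth-$d$ positions in $T$, each corresponding to a sequence of $d$ left/right choices starting from the root. Fix one such position $v$ with ancestor chain $L_0, L_1, \ldots, L_{d-1}$, and let $X_v$ be the indicator that $v$ actually appears in $T$. Since $\tau_d = \sum_v X_v$, it suffices to show $\E[X_v] = \Omega(1)$ uniformly in $v$.

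Position $v$ belongs to $T$ iff every $L_k$ is a non-leaf. The depth condition is automatically satisfied for $d \le (\log m)/16$, so the only real requirement is that the slope $s_k$ of each $L_k$ lies in $(-1 + 1/\sqrt{\log m},\, 1 - 1/\sqrt{\log m})$. A direct calculation shows that shifting the midpoint of a width-$w$ segment by $\pm w/\sqrt{\log m}$ changes the slope of each half by $\pm 2/\sqrt{\log m}$, so $s_{k+1} = s_k + \epsilon_k \cdot 2/\sqrt{\log m}$, where $\epsilon_k = \pm \sigma_{L_k}$ (the sign depending on whether the step to $L_{k+1}$ goes through the left or right child) and $\sigma_{L_k} \in \{-1,+1\}$ is the sign of $\surplus(R_{L_k})$ (with the algorithm's uniform tiebreak when the surplus is $0$).

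I next claim the $\epsilon_k$'s are i.i.d.\ uniform on $\{-1, +1\}$. For uniformity, $\surplus(R_{L_k}) = A - B$ with $A, B \sim \Pois(\alpha)$ i.i.d.\ is symmetric about $0$, and the tiebreak symmetrizes the atom at $0$. For independence, reveal the ancestor chain and its signs one level at a time: Lemma \ref{lem:sensitivity} guarantees that $R_{L_k}$ is disjoint from each of $R_{L_0}, \ldots, R_{L_{k-1}}$ (which were already revealed), so the Poisson count in $R_{L_k}$ is independent of the earlier signs. Thus $T_k := \sum_{i=0}^{k-1} \epsilon_i$ is a simple symmetric random walk on $\mathbb{Z}$, and $X_v = 1$ iff $|T_k| < B := (\sqrt{\log m} - 1)/2$ for every $k \le d - 1$.

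Finally, I apply the reflection principle together with Hoeffding's inequality. For $d \le (\log m)/16$ one checks $B / \sqrt{d - 1} \ge 2 - 2/\sqrt{\log m}$, so
\[
\Pr\!\left[\max_{k \le d-1} |T_k| \ge B\right] \;\le\; 4\, \Pr[T_{d-1} \ge B] \;\le\; 4 \exp\!\left(-\frac{B^2}{2(d-1)}\right) \;\le\; 4 e^{-2 + o(1)},
\]
which is bounded away from $1$ for $m$ sufficiently large. Hence $\E[X_v] \ge \Omega(1)$ and summing over the $2^d$ positions gives $\E[\tau_d] = \Omega(2^d)$. The most delicate ingredient is the independence step: because $T$ itself is random and determined by the very signs in question, one must argue that conditioning on the ancestor chain existing does not bias future signs---and this is exactly what disjointness of the $R_L$'s, via Lemma \ref{lem:sensitivity}, purchases for us.
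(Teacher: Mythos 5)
Your proof is correct and follows essentially the same route as the paper's: linearity of expectation over the $2^d$ candidate depth-$d$ positions, independence of the slope increments via Lemma~\ref{lem:sensitivity} (so the ancestor slopes form an unbiased random walk), and a maximal inequality for that walk to bound the early-termination probability by a constant less than $1$. Two minor points of divergence: you correctly compute the per-step slope increment as $\pm 2/\sqrt{\log m}$ (the paper's display gives $\pm 1/\sqrt{\log m}$, which appears to be off by a factor of two, though this only affects the unspecified constants), and you use the reflection principle with a Hoeffding tail in place of the paper's Kolmogorov (Chebyshev-for-martingales) inequality; both serve the same purpose and yield an escape probability bounded away from $1$.
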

\begin{proof}
Let \(V = \langle v_1, v_2, \ldots, v_{d - 1}\rangle \in \{1, -1\}^{d - 1}\) be a vector of \(\pm 1\)s. 
Consider the tree path \(Q_V\) that starts at the root, and travels for (up to) \(d - 1\) steps: if, after the \((i - 1)\)-th step the path is not yet at a leaf, then the \(i\)-th step goes to a child of the current node, chosen by \(v_i\) (left child if \(v_i = 1\), and right child if \(v_i = -1\)). Say that \(Q_V\) \defn{terminates early} if it reaches a leaf at depth \( < d\).

We will show that, with probability \(\Omega(1)\), \(Q_V\) does not terminate early. It ten follows that
$$\E[|\tau_d|] \ge \sum_{V \in \{1, -1\}^{d - 1}} \Pr[Q_V \text{ does not terminate early}] \ge \Omega(2^d).$$

To analyze the probability of \(Q_V\) terminating early, consider the slopes \(s_1, s_2, \ldots, s_j\), \(j \le d\), of the line segments \(L_1, L_2, \ldots, L_j\) (i.e., nodes) that \(Q_V\) encounters as it traverses down the tree. By construction, \(s_1 = 0\), and each \(s_{i + 1}\) is randomly one of 
\begin{equation}s_i \pm \frac{1}{\sqrt{\log m}}.\label{eq:pm}\end{equation}
Recall that the choice of \(\pm\) in \eqref{eq:pm} is determined by the number of blue/red points that are \(L_i\)-critical. By Lemma \ref{lem:sensitivity}, the \(L_i\)-critical region is disjoint from the \(L_k\)-critical region for all \(i \neq k\). Thus the \(s_i\)s are performing an unbiased random walk where each step is independent of the previous ones.

If \(Q_V\) terminates early, then at least one of \(s_1, s_2, \ldots, s_j\) must be within \(1/\sqrt{\log m}\) of \(1\) or \(-1\). Or, to use a simpler condition, at least one of \(s_1, s_2, \ldots, s_j\) must leave the interval \([-0.5, 0.5]\). 

Thus, we can bound the probability of \(Q_v\) terminating early by the probability that the random walk \(s_1, s_2, \ldots\) leaves the interval \([-0.5, 0.5]\) in its first \(\kappa = (\log m)/16\) steps. The random walk starts at \(0\) and moves randomly by $$\pm \frac{1}{\sqrt{\log m}} = \frac{1}{4\sqrt{\kappa}}$$ on each step. By Kolmogorov's inequality (i.e., Chebyshev's inequality for martingales), the probability that the random walk leaves the interval \([-0.5, 0.5]\) is at most \(1/4\). It follows that the probability of \(Q_V\) terminating early is at most \(1/4\).
\end{proof}

\begin{corollary}For \(d \le (\log m) / 16 - 1\), the expected number of internal depth-$d$ noes in $T$ is $\Omega(2^d).$
\label{cor:levelsize}
\end{corollary}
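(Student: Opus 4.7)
The corollary should follow almost immediately from Lemma \ref{lem:levelsize} by a simple double-counting argument, so my plan is to reduce it to that lemma applied one level deeper in the tree.

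Here is the plan. Every node at depth $d+1$ in $T$ has a parent at depth $d$, and that parent is necessarily internal (since it has a child). Because $T$ is binary, each internal depth-$d$ node contributes at most $2$ children to level $d+1$. Therefore, if $I_d$ denotes the number of internal depth-$d$ nodes, we have the deterministic inequality
\[
I_d \;\ge\; \tau_{d+1}/2.
\]
Taking expectations and applying Lemma \ref{lem:levelsize} at depth $d+1$ (which is legal precisely because the hypothesis $d \le (\log m)/16 - 1$ gives $d+1 \le (\log m)/16$), we obtain
\[
\E[I_d] \;\ge\; \tfrac{1}{2}\,\E[\tau_{d+1}] \;\ge\; \Omega(2^{d+1}) \;=\; \Omega(2^d),
\]
which is the desired bound.

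There is no real obstacle here; the only thing to double-check is that the range of $d$ permitted by the corollary matches the range for which Lemma \ref{lem:levelsize} can be invoked one level deeper, and that matches by design (the shift by $1$ in the hypothesis is exactly what is needed). So the proof is just the two-line combination above.
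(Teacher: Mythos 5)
Your proof is correct and is exactly the intended derivation: the paper states the corollary without proof immediately after Lemma \ref{lem:levelsize}, and the shift from $(\log m)/16$ in the lemma to $(\log m)/16 - 1$ in the corollary is precisely what makes your parent-counting argument go through.
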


We can now analyze the surplus of the path produced by Algorithm \ref{alg:path}.
\begin{theorem}
   Let \(L_0\) be the straight-line path from \((0, 0)\) to \((\sqrt{2}m, 0)\). Let \(P\) be the path produced by \(\texttt{Path}(L_0, 1)\) (defined in Algorithm \ref{alg:path}). Then, \(P\) is a slope-legal path with expected surplus \(\Omega(m \log^{0.75} m)\). 
\end{theorem}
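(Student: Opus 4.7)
The theorem has two parts. For slope-legality, I note that each recursive step changes a child's slope by $\pm 2/\sqrt{\log m}$ relative to the parent (since the midpoint is shifted vertically by $q_L$ while each child spans horizontal distance $w_L/2$). Combined with the termination condition $|\slope(L)| < 1 - 1/\sqrt{\log m}$, a mild tightening of the constants (e.g., shrinking the shift amount slightly) ensures every segment of $P$ has slope in $[-1,1]$.

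For the surplus bound, I write $\surplus(P) = \surplus(L_0) + \sum_{L \text{ internal}} \Delta_L$, where $\Delta_L$ is the change in surplus caused by replacing $L$ with $A_L \cup B_L$. Since $\E[\surplus(L_0)] = 0$ by the symmetry between blue and red dots, it suffices to lower-bound $\E\bigl[\sum_L \Delta_L\bigr]$. The key ingredient is that for every internal node $L$ at depth $d$ and every fixing $\mathcal{O}$ of ancestor decisions,
\begin{equation*}
\E[\Delta_L \mid \mathcal{O}] \;\ge\; \E\!\bigl[\surplus(R_L)\cdot \mathbb{I}_{\surplus(R_L)>0}\mid \mathcal{O}\bigr] \;=\; \Omega\!\left(\frac{m}{2^d \log^{0.25} m}\right),
\end{equation*}
where the equality is Lemma~\ref{lem:rectanglesurplus}.

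The inequality is established in two sub-steps. First, a short geometric check shows that when the algorithm shifts up, the rectangle $R_L$ lies \emph{entirely} beneath $A_L \cup B_L$: the new path exceeds $L$ by at least $\tfrac{7}{8} q_L$ across the horizontal extent of $R_L$, which is larger than $R_L$'s half-height $q_L/2$. Symmetrically, shifting down places $R_L$ entirely above the new path. Now split $\surplus(R_L) = u + \ell$ into the surpluses of the upper and lower halves of $R_L$; these are independent, each distributed as a difference of two independent Poissons, and hence each symmetric around $0$. The net $R_L$-contribution to $\Delta_L$ is $+u\cdot \mathbb{I}_{u+\ell>0} - \ell\cdot \mathbb{I}_{u+\ell<0}$, and using $(u,\ell) \stackrel{d}{=}(-u,-\ell)$ one computes
\begin{equation*}
\E\!\left[u\cdot \mathbb{I}_{u+\ell>0} - \ell\cdot\mathbb{I}_{u+\ell<0}\right] \;=\; \E\!\left[(u+\ell)\cdot\mathbb{I}_{u+\ell>0}\right] \;=\; \E\!\left[\surplus(R_L)\cdot\mathbb{I}_{\surplus(R_L)>0}\right].
\end{equation*}
Second, the contribution to $\Delta_L$ from the bending triangle minus $R_L$ has conditional expectation $0$: its dots are independent of the shift decision (which depends only on $R_L$) and, by Lemma~\ref{lem:sensitivity}, independent of the ancestor decisions encoded in $\mathcal{O}$.

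Once the per-node bound is in hand, the rest is summation. By Corollary~\ref{cor:levelsize}, the expected number of internal depth-$d$ nodes is $\Omega(2^d)$ for each $d \le (\log m)/16 - 1$, so linearity of expectation gives an expected depth-$d$ contribution of $\Omega(m/\log^{0.25} m)$, and summing over $\Theta(\log m)$ depths yields $\E[\surplus(P)] = \Omega(m \log^{0.75} m)$. The main obstacle is the identity above, which requires simultaneously (i) verifying the geometry so that $R_L$ cleanly ends up on one side of the bent path, (ii) carrying out the symmetrization between the two halves of $R_L$, and (iii) invoking Lemma~\ref{lem:sensitivity} to make the ``triangle minus $R_L$'' contribution vanish in expectation.
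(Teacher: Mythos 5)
Your proof is correct, but it uses a genuinely different decomposition than the paper. The paper decomposes the surplus \emph{spatially}: it writes $\surplus(P) = \surplus'(P) + \surplus(G)$, where $\surplus'(P) = \sum_{R \in \mathcal{R}} \surplus(R)\cdot\mathbb{I}_{\surplus(R)>0}$ sums over the rectangles $R_L$ that (by the geometric fact that $R_L$ ends up entirely on one side of the bent path, plus Lemma~\ref{lem:sensitivity}) are contained beneath the final path $P$, and $G$ is the leftover region beneath $P$ disjoint from all rectangles. Since $G$ avoids every $R_L$, conditioning on the algorithm's branching decisions leaves the dots in $G$ unbiased, giving $\E[\surplus(G)]=0$ in one line; the bound then follows directly from Lemma~\ref{lem:rectanglesurplus} and Corollary~\ref{cor:levelsize}. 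You instead telescope \emph{temporally}, writing $\surplus(P)=\surplus(L_0)+\sum_L\Delta_L$ and lower-bounding each $\E[\Delta_L\mid\mathcal{O}]$. That route requires the extra symmetrization step (splitting $R_L$ into upper and lower halves, applying $(u,\ell)\overset{d}{=}(-u,-\ell)$ to relate the $R_L$-contribution of $\Delta_L$ to $\E[\surplus(R_L)\cdot\mathbb{I}_{\surplus(R_L)>0}]$) and a separate argument that the bending-triangle-minus-$R_L$ part has zero conditional mean. Both approaches rely on the same geometric check ($R_L$ lies a vertical distance $\ge (7/8)q_L > q_L/2$ from the bent path over its horizontal extent) and the same three supporting lemmas, and in fact your $\sum_L\E[\Delta_L]$ collapses to exactly the paper's $\E[\surplus'(P)]$, so the two proofs are identical in substance but the paper's spatial bookkeeping is more economical. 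One small bonus of your write-up: you correctly note that the slope of a child segment changes by $\pm 2/\sqrt{\log m}$ (not $\pm 1/\sqrt{\log m}$), so the termination threshold in Algorithm~\ref{alg:path} and the step size in the random-walk argument of Lemma~\ref{lem:levelsize} need a constant-factor adjustment for slope-legality to hold literally; the paper glosses over this. (Your displayed formula for the $R_L$-contribution omits the tie-breaking case $\surplus(R_L)=0$, but this contributes zero in expectation by the same symmetry, so it does not affect the bound.)
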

\begin{proof}
   The fact that \(P\) is slope-legal follows from the fact that \(\texttt{Path}(L, d)\) only recurses if \(-1 + 1 / \sqrt{\log m} < \slope(L) < 1 - 1 / \sqrt{\log m}\). This means that \(\texttt{Path}(L, d)\) is only ever called on paths \(L\) satisfying \(-1 < \slope(L) < 1\). 

   Let \(\mathcal{R}\) denote the set of rectangles \(R_L\) for internal nodes \(L \in T\). Note that, by Lemma \ref{lem:sensitivity}, any \(R_L\) satisfying \(\surplus(R_L) > 0\) will be contained beneath \(P\), and any \(R_L\) satisfying \(\surplus(R_L) < 0\) will be contained above \(P\). Define the \defn{partial surplus} \(\surplus'(P)\) to be 
   $$\surplus'(P) = \sum_{R \in \mathcal{R}} \surplus(R) \cdot \mathbb{I}_{\surplus(R) > 0}.$$
   The surplus of \(P\) can be expressed as
   $\surplus(P) = \surplus'(P) + \surplus(G),$
   where \(G\) is the region underneath \(P\) consisting of points not in \(\cup_{R \in \mathcal{R}}R\). Since \(\E[\surplus(G)] = 0\), it suffices to prove that \(\E[\surplus'(P)] \ge \Omega(m \log^{0.75} m)\).

   Let \(\tau'_d\) be the number of internal nodes at depth \(d\) in \(T\). By Lemma \ref{lem:rectanglesurplus},
   $$\E[\surplus'(P)] \ge \Omega\left(\E\left[\sum_d \tau'_d \cdot \frac{m}{2^d \log^{0.25} m} \right]\right).$$
    By Corollary \ref{cor:levelsize}, this is at least
   $$\Omega\left(\sum_{d < (\log m)/16 - 1} 2^d \cdot \frac{m}{2^d \log^{0.25} m}\right) = \Omega(m \log^{0.75} m).$$
\end{proof}

\begin{figure}
\begin{center}\includegraphics[scale=0.6]{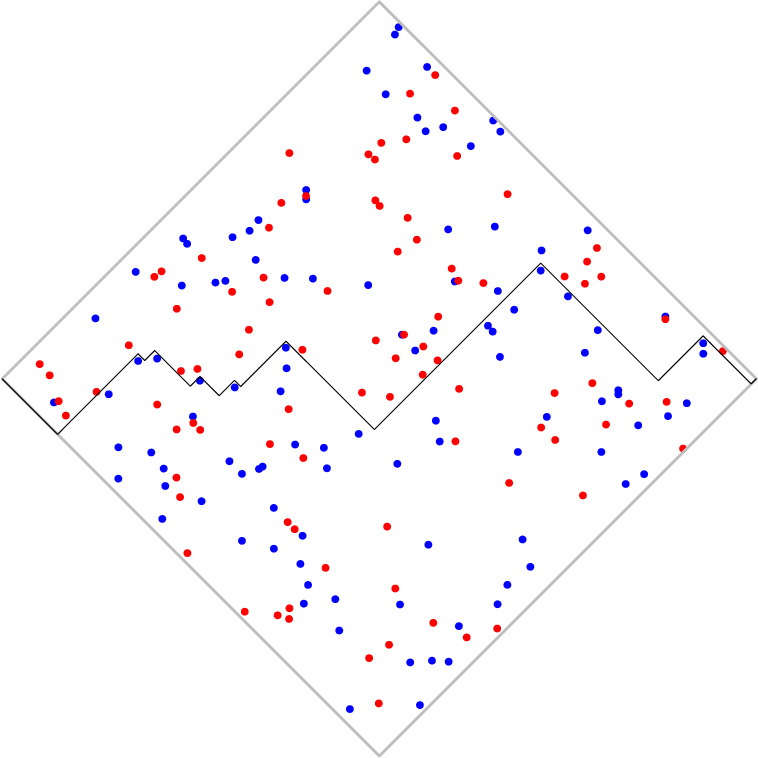}
\end{center}
  \caption{The same path as in Figure \ref{fig:surplus}, but in the rotated version of the problem considered in Section \ref{sec:pathsurpluslower}. The constraint that the path is monotonic now becomes a constraint on slope: the slope of the path must always stay in $[-1, 1]$.}  
  \label{fig:rotated}
\end{figure}

\begin{figure}
\begin{center}\includegraphics[scale=0.6]{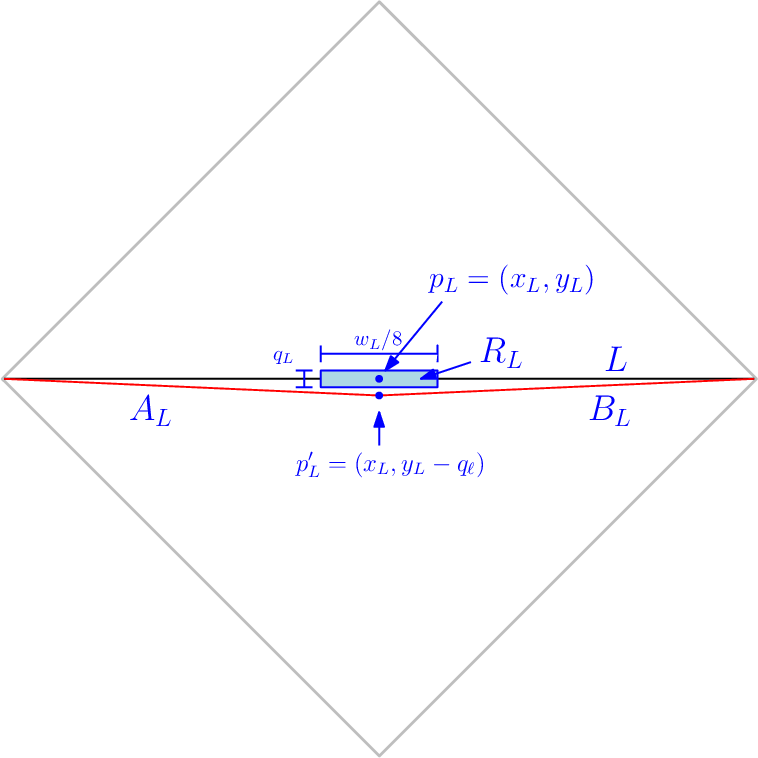}
\end{center}
  \caption{An example of what the base-case subproblem (i.e., $L = L_0$) would look like if $\surplus(R_L) < 0$ (so $p'_L = (x_L, y_L - q_L)$). The algorithm would then recurse on $A_L$ and $B_L$. Note that $A_L$ and $B_L$ have slopes that are very close (within $O(1/ \sqrt{\log m})$) to that of $L$. This will be important for making sure that the recursion is able to (most likely) get to depth $\Theta(\log m)$ before terminating (i.e., before getting to a line with slope close to $1$).}
    \label{fig:basecase}
\end{figure}

\subsection{Upper Bound on Path Surplus}\label{sec:pathsurplusupper}

% Define the \defn{surplus} $\surplus(P)$ of a path $P$ to be the number of blue points beneath the path, minus the number of red points beneath the path.

Next, we turn to the problem of proving an upper bound. We will show that, both in expectation and with good probability, the maximum surplus of any path is $O(m \log^{0.75} m)$.

\paragraph{Three core facts.} We begin by explicitly stating the three `core facts' that we will use in the analysis. These are the only facts that we will use about the placement of blue/red dots. By making the core facts explicit, we will be able to subsequently (in Section \ref{sec:insertionsurpluslower}) extend our analysis to other settings where the blue/red dots arrive via a more complicated combinatorial process.

\begin{lemma}[Core Fact 1: Vertical Independence]
Partition the $[m] \times [m]$ grid into disjoint vertical strips $V_1, V_2, \ldots$, and define $B_i$ and $R_i$ to be the sets of blue and red dots in strip $V_i$, respectively. Then the pairs $(B_1, R_1), (B_2, R_2), \ldots$ are mutually independent random variables.
\label{lem:fact1}
\end{lemma}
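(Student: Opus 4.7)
The plan is to derive this from the standard restriction/independence property of Poisson point processes, applied separately to the blue and red dots, and then combined using the fact that the two color processes are independent by construction.

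First I would recall the setup: the blue dots form a Poisson point process of intensity $1$ on $[0,m]\times[0,m]$ (the construction of taking $\Pois(m^2)$ points uniformly at random is exactly the standard definition of such a process), and the red dots form an independent Poisson point process of intensity $1$ on the same region. I would then invoke the classical fact that if $\Pi$ is a Poisson point process on a space $S$ and $S$ is partitioned into disjoint measurable regions $S_1,S_2,\ldots$, then the restrictions $\Pi\cap S_i$ are mutually independent Poisson point processes on the respective $S_i$. Applying this to the blue process with the partition $V_1,V_2,\ldots$ gives that $B_1,B_2,\ldots$ are mutually independent; applying it to the red process gives that $R_1,R_2,\ldots$ are mutually independent.

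To go from the two separate independence statements to joint independence of the pairs $(B_i,R_i)$, I would use the fact that the entire blue process and the entire red process are independent of each other as random variables taking values in the space of finite point sets in $[m]\times[m]$. Thus the joint distribution of $(B_1,B_2,\ldots,R_1,R_2,\ldots)$ is a product of the distribution of $(B_1,B_2,\ldots)$ and the distribution of $(R_1,R_2,\ldots)$. Combining this product structure with the intra-color independence established above, the full family $\{B_i,R_j\}_{i,j}$ is mutually independent, which in particular implies that the pairs $(B_i,R_i)$ are mutually independent.

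There is essentially no obstacle here; the lemma is a direct corollary of the Poisson restriction theorem plus the independence of the two color processes. The only thing worth being careful about is confirming that the ``$\Pois(m^2)$ points placed uniformly'' description coincides with a Poisson point process of intensity $1$, which is a standard equivalence (conditioning a Poisson process on its total count yields i.i.d.\ uniform points).
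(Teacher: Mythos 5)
Your proof is correct. The paper does not actually supply a proof of Lemma~\ref{lem:fact1} --- it is stated as one of the ``core facts'' and treated as a standard property of Poisson point processes --- and your argument (the Poisson restriction theorem applied per color, combined with the independence of the blue and red processes to upgrade per-color independence to joint independence of the pairs $(B_i,R_i)$) is exactly the standard justification one would give. The one small point worth noting is that the paper's phrasing ``$\Pois(m^2)$ blue dots placed uniformly at random'' does coincide with a rate-$1$ Poisson point process on $[0,m]^2$, which you correctly flagged and handled via the standard conditioning equivalence; this is the only place where a reader might pause, and your proof addresses it.
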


\begin{lemma}[Core Fact 2: Region Surplus]
Consider any geometric region $G$, and let $A$ be the area of $G$. Let $B$ (resp.~$R$) denote the number of blue dots (resp.~red dots) that appear in $G$. Then,
\begin{equation}\Pr[|B - R| \ge \kappa] \le e^{-\Omega(\kappa^2 / A)} + e^{-\Omega(\kappa)}.\end{equation}
\label{lem:fact2}
\end{lemma}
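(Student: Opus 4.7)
The plan is to observe that $B$ and $R$ are independent $\Pois(A)$ random variables (since the blue and red dots arise from independent rate-1 Poisson processes on the plane, and the number of points in a region of area $A$ is Poisson with parameter $A$), so $B - R$ is a Skellam random variable. I would then prove the bound by a standard MGF/Chernoff argument on this Skellam.

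Concretely, I would first compute
\begin{equation*}
\E\bigl[e^{t(B - R)}\bigr] \;=\; \E\bigl[e^{tB}\bigr]\cdot\E\bigl[e^{-tR}\bigr] \;=\; e^{A(e^t - 1)}\cdot e^{A(e^{-t} - 1)} \;=\; e^{A(e^t + e^{-t} - 2)}.
\end{equation*}
For $|t|\le 1$, a Taylor expansion gives $e^t + e^{-t} - 2 \le Ct^2$ for some absolute constant $C$, while for general $t\ge 0$ one simply has $e^t + e^{-t} - 2 \le e^t$. Applying Markov's inequality to $e^{t(B-R)}$ yields
\begin{equation*}
\Pr[B - R \ge \kappa] \;\le\; e^{-t\kappa + A(e^t + e^{-t} - 2)}.
\end{equation*}

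Next I would split into two regimes based on the relative sizes of $\kappa$ and $A$. If $\kappa \le 2CA$, choose $t = \kappa/(2CA) \in [0,1]$, so that the exponent is at most $-t\kappa + CAt^2 = -\kappa^2/(4CA)$, giving a bound of $e^{-\Omega(\kappa^2/A)}$. If instead $\kappa > 2CA$, choose $t = 1$ (or any sufficiently large constant), so that the exponent is at most $A\cdot O(1) - \kappa \le -\Omega(\kappa)$, giving a bound of $e^{-\Omega(\kappa)}$. Either way, the first or second term of the claimed bound suffices; adding them handles both regimes uniformly. A symmetric argument with $t$ replaced by $-t$ bounds $\Pr[R - B \ge \kappa]$, and a union bound then controls $|B - R|$.

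There is really no serious obstacle here: the only mildly delicate point is checking that the two regimes can be glued together into the single expression $e^{-\Omega(\kappa^2/A)} + e^{-\Omega(\kappa)}$, but since both terms in the sum are already valid upper bounds (up to the hidden constant) in their respective regimes, the sum dominates uniformly in $\kappa$ and $A$. No use is made of the \emph{geometry} of $G$; only its area $A$ enters, which is what makes this a clean ``core fact'' usable for arbitrary regions in Section~\ref{sec:insertionsurpluslower}.
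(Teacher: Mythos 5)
Your proposal is correct and essentially the same approach as the paper: both rest on the observation that $B$ and $R$ are independent $\Pois(A)$ variables and then apply a Chernoff-type argument. The paper is slightly more black-box (it union-bounds $\Pr[|B-A|\ge\kappa/2]$ and $\Pr[|R-A|\ge\kappa/2]$, each controlled by the standard Poisson Chernoff bound), whereas you compute the MGF of the Skellam $B-R$ directly and optimize $t$ in two regimes; the two routes are cosmetically different but yield the identical bound via the same underlying mechanism.
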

\begin{proof}
As both $B$ and $R$ are Poisson random variables satisfying $\E[B] = \E[R] = A$, the lemma follows by applying Chernoff bounds to each of $B$ and $R$ to bound the probability that either deviates from its mean by at least $\kappa / 2$.
\end{proof}

\begin{lemma}[Core Fact 3: Restricted-Path Upper Bound]
Consider any geometric region $G$, and let $A$ be the area of $G$. Let $\mathcal{P}_G$ be the set of monotonic paths that stay within $G$ at all times. Define
$$S = \max_{P \in \mathcal{P}_G} \surplus(P) - \min_{P \in \mathcal{P}_G} \surplus(P).$$
For $\alpha \ge 1$,
\begin{equation*}\Pr[S \ge \alpha A] \le e^{-\Omega(\alpha) \cdot A}.
\end{equation*}
\label{lem:fact3}
\end{lemma}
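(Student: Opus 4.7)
The plan is to bound $S$ by combining the per-path tail estimate of Core Fact 2 (Lemma \ref{lem:fact2}) with a union bound over a discretization of $\mathcal{P}_G$. Fix any reference path $P_0 \in \mathcal{P}_G$. For every other $P \in \mathcal{P}_G$, the quantity $\surplus(P) - \surplus(P_0)$ equals a signed sum of dots in the symmetric difference $D_P := (\text{beneath } P) \triangle (\text{beneath } P_0)$, which is contained in $G$ and therefore has area at most $A$. In particular,
$$S \le 2 \max_{P \in \mathcal{P}_G} |\surplus(P) - \surplus(P_0)|,$$
so it suffices to bound this maximum.

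Next, I would discretize $\mathcal{P}_G$ to a finite family $\widehat{\mathcal{P}}_G$ of unit-resolution monotone staircase paths, containing at most $2^{O(\sqrt{A})}$ members. (For a region of bounded aspect ratio this is immediate, since any monotone path in $G$ has total length $O(\sqrt{A})$.) Applying Core Fact 2 to $D_{\hat P} \subseteq G$ gives, for each fixed $\hat P$,
$$\Pr\bigl[|\surplus(\hat P) - \surplus(P_0)| \ge \alpha A / 4\bigr] \le e^{-\Omega(\alpha^2 A)} + e^{-\Omega(\alpha A)},$$
which for $\alpha \ge 1$ collapses to $e^{-\Omega(\alpha A)}$. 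A union bound over the $2^{O(\sqrt{A})}$ staircases then gives $\Pr\bigl[\max_{\hat P} |\surplus(\hat P) - \surplus(P_0)| \ge \alpha A/4\bigr] \le e^{-\Omega(\alpha A) + O(\sqrt{A})}$, which simplifies to $e^{-\Omega(\alpha A)}$ whenever $A$ exceeds an absolute constant (so that $\alpha A$ dominates the $O(\sqrt{A})$ term when $\alpha \ge 1$); for $A$ below that constant the claimed bound is vacuous once the hidden constant in $\Omega$ is chosen small enough.

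The final ingredient is the discretization error $|\surplus(P) - \surplus(\hat P)|$, which is at most the number of dots in the $O(\sqrt{A})$-area sliver between $P$ and its staircase approximation. Applying Core Fact 2 to this sliver and union bounding over $\hat P$ shows that the worst-case error is $O(\sqrt{A})$ with probability $1 - e^{-\Omega(\alpha A)}$, which is negligible next to $\alpha A$ when $\alpha \ge 1$. Combining the three estimates yields $\Pr[S \ge \alpha A] \le e^{-\Omega(\alpha A)}$.

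The main obstacle is making the union-bound counting of monotone staircases behave uniformly in the shape of $G$: for elongated or fractal-looking $G$ the naive unit-resolution staircase count can be as large as $2^{O(\text{perim}(G))}$, potentially exceeding $e^{\alpha A}$. I would address this either by (a) passing to a shape-adaptive discretization whose resolution shrinks where $G$ is thin, so the effective path count stays $2^{O(\sqrt{A})}$, or (b) invoking a Poisson concentration inequality of McDiarmid or Efron--Stein type for the $O(1)$-Lipschitz functional $S$ and bounding $\mathbb{E}[S]$ via a Dudley-style chaining argument whose links ultimately reduce back to Core Fact 2. Getting the constants in either route to produce the exponent $e^{-\Omega(\alpha)\cdot A}$ for \emph{every} region $G$ is the most delicate bookkeeping step.
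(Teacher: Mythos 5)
Your proposal is considerably more complicated than what the statement requires, and, as you acknowledge in your final paragraph, it has an unresolved gap. The paper's proof is a one-liner: since every path in $\mathcal{P}_G$ lies in $G$, the only dots whose membership in ``beneath $P$'' can change as $P$ ranges over $\mathcal{P}_G$ are the dots in $G$ itself, so $S$ is \emph{deterministically} bounded above by the total number of dots in $G$. That count is $\Pois(2A)$, and a Chernoff bound on the Poisson upper tail directly yields $\Pr[S \ge \alpha A] \le e^{-\Omega(\alpha) A}$. There is no reference path, no discretization, no union bound over paths, and no invocation of Core Fact~2 at all.

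The gap in your route is exactly the one you name but do not close. The count $|\widehat{\mathcal{P}}_G| = 2^{O(\sqrt{A})}$ holds only for bounded-aspect-ratio regions, whereas the one place the paper actually invokes Core Fact~3 (Lemma~\ref{lem:finalrefinement}) takes $G$ to be a thin tube of width $O(\sqrt{\log m})$ around a length-$\Theta(m)$ path. There $A = \Theta(m\sqrt{\log m})$, yet the tube contains $2^{\Theta(m)}$ unit-resolution monotone staircases, exponentially more than your assumed $2^{O(\sqrt{A})} = 2^{O(\sqrt{m}\,(\log m)^{1/4})}$. Neither of your proposed repairs --- a shape-adaptive discretization, or McDiarmid/Efron--Stein concentration combined with Dudley chaining --- is actually carried out, and both are far heavier than what is needed. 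The lesson to take away is that Core Fact~3 is not meant to be a chaining-quality oscillation bound; it is a deliberately crude ``count every ball in the region'' bound, and the downstream union bounds in the paper are taken over a much smaller combinatorial family (frames), not over paths.
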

\begin{proof}
    We can upper-bound $S$ by the number of dots in $G$, which is a Poisson random variable with mean $2A$. The result follows from a Chernoff bound.
\end{proof}

\paragraph{Notation and conventions for the section. }In addition to the core facts, let us also establish some notation and conventions to be used throughout the section. Define $d_\ell = m / 2^\ell$ and $q_\ell = m / (2^\ell \sqrt{\log m})$. Let $\overline{\ell} = \log \left(m / \sqrt{\log m}\right)$ be the index at which $q_{\overline{\ell}} = 1$.

For $\ell \in [\overline{\ell}]$, define the \defn{level-$\ell$ diagonals} to be the diagonals $D^{(\ell)}_0, D^{(\ell)}_1, \ldots, D^{(\ell)}_{2^\ell}$ such that $D^{(\ell)}_j$ contains the points $(x, y)$ satisfying $x + y = 2j d_\ell$. (Note that we are no longer using the rotated coordinate system defined in the previous subsection.) Define the \defn{$\ell$-coordinate vector} of a path $P$ to be the vector $\coord_\ell(P) = \langle r_0, r_1, \ldots, r_{2^\ell} \rangle$ such that $P$ intersects $D^{(\ell)}_j$ at point $\langle j d_\ell, j d_\ell \rangle + r_j \langle 1, -1\rangle$. (In particular, $r_0$ and $r_{2^\ell}$ are necessarily $0$ since the path begins at $\langle 0, 0\rangle$ and ends at $\langle 1, 1 \rangle$.) For any $a, b \in \mathbb{R}$, let $\round(a, b)$ be $a$ rounded to the next multiple of $b$. Define the \defn{$\ell$-frame} 
$\frame_\ell(P)$ to be the vector $\langle \round(r_0, q_\ell), \round(r_1, q_\ell), \ldots, \round(r_{2^\ell}, q_\ell) \rangle$, where $r_i$ is the $i$-th coordinate of $\coord_\ell(P)$. Define the \defn{refined $\ell$-frame} to be the vector 
$$\frame'_\ell(P) = \langle \round(r_0, q_{\ell + 1}), \round(r_1, q_{\ell + 1}), \ldots, \round(r_{2^\ell}, q_{\ell + 1}) \rangle,$$ where again $r_i$ is the $i$-th coordinate of $\coord_\ell(P)$. (By definition, the $i$-th coordinate of 
$\frame'_\ell(P)$ is equal to the $2i$-th coordinate of $\frame_{\ell + 1}(P)$.) 

Given $v = \langle r_0, r_1, \ldots, r_{2^\ell}\rangle$ that is either an $\ell$-frame or a refined $\ell$-frame, define the \defn{implied path} $\operatorname{Path}(v)$ of the vector to be the path $P$ satisfying $\coord_\ell(P) = v$ and such that $P$ follows straight lines between the diagonals $D^{(\ell)}_0, D^{(\ell)}_1, \ldots, D^{(\ell)}_{2^\ell}$. Define the \defn{surplus} of an $\ell$-frame (and, similarly, of a refined $\ell$-frame) to be the surplus of the frame's implied path.

Finally, for $\ell < \overline{\ell}$, define the \defn{$\ell$-transition vector} $\trans_\ell(P) = \langle t_0, t_1, \ldots, t_{2^{\ell} - 1} \rangle$ to be given by $$t_i = \Big\lceil \frac{b_{2i + 1} - \round(a_i/2 + a_{i + 1}/2, q_{\ell + 1})}{q_{\ell + 1}} \Big\rceil,$$ where $a_j$ is the $j$-th coordinate of $\frame'_\ell(P)$ and $b_j$ is the $j$-th coordinate of $\frame_{\ell + 1}(P)$. The way to think about the quantity $\round(a_i/2 + a_{i + 1}/2, q_{\ell + 1})$ is that it is what the $(2i + 1)$-th coordinate of $\frame_{\ell + 1}(P)$ would be if $P$ were simply the implied path $\operatorname{Path}(F'_\ell)$; and, therefore, the way to think about $t_i$ is that it measures (in multiples of $q_{\ell + 1}$) the difference between the true $(2i + 1)$-th coordinate of $\frame_{\ell + 1}(P)$ versus the $(2i + 1)$-th coordinate of $\frame_{\ell + 1}(\operatorname{Path}(F'_\ell))$. Given $\frame'_\ell(P)$ and $\trans_\ell(P)$, one can recover $\frame_{\ell + 1}(P)$, and similarly, given $\frame'_\ell(P)$ and $\frame_{\ell + 1}(P)$, one can recover $\trans_\ell(P)$.

We can summarize the relationships between $\ell$-frames, refined $\ell$-frames, and transition vectors as follows. An $\ell$-frame $F$ looks at $2^\ell + 1$ evenly-spaced diagonal lines and keeps track of where the path hits each diagonal line at a granularity of $\sqrt{2} q_\ell = \sqrt{2} m / (2^\ell \sqrt{\log m})$.\todo{Maybe explicitly explain what the $\sqrt{2}$ is doing} A refined $\ell$-frame looks at the same $2^\ell$ evenly-spaced diagonal lines, but keeps track of where the path hits them at a slightly finer granularity of $\sqrt{2} q_{\ell + 1} = \sqrt{2} q_\ell / 2$. Finally, the $\ell$-transition vector tells us what we need to know to get from the refined $\ell$-frame $F_\ell$ to the $(\ell + 1)$-frame $F_{\ell + 1}$ for some path.

We say that a given $\ell$-frame $F$ and refined $\ell$-frame $F'$ are \defn{compatible} if there is a path $P$ whose $\ell$-frame is $F$ and whose refined $\ell$-frame is $F'$. Similarly, we can talk about refined $\ell$-frames being compatible with $(\ell + 1)$-frames, refined $\ell$-frames being compatible with $\ell$-transition vectors, etc.

 Given an $\ell$-frame $F_\ell$ and a refined $\ell$-frame $F'_\ell$, define the \defn{refinement gain} $\gain(F_\ell, F'_\ell) = \surplus(F'_\ell) - \surplus(F_\ell)$. Given a refined $\ell$-frame $F'_\ell$, an $\ell$-transition vector $T_\ell$, and an $(\ell + 1)$-frame $F_{\ell + 1}$ that are all compatible with each other, define the \defn{transitional gain} $\gain(F_\ell, T_\ell) = \surplus(F_{\ell + 1}) - \surplus(F'_\ell)$. As a shorthand, we will sometimes use $\gain(F_\ell, F'_\ell, T_\ell)$ to denote $\gain(F_\ell, F'_\ell) + \gain(F'_\ell, T_\ell)$.
 
 % And, finally, given an $\ell$-frame $F_\ell$ and an $(\ell + 1)$-frame $F_{\ell + 1}$, define the \defn{full-level gain} $\gain(F_\ell, F_{\ell + 1}) =  \surplus(F_{\ell + 1}) - \surplus(F_\ell)$.

\paragraph{Relationship to the construction in Section \ref{sec:pathsurpluslower}.}
It is also worth taking a moment to understand what the $\ell$-transition vectors would look like for a path constructed as in the lower-bound construction from Section \ref{sec:pathsurpluslower}. Recall that the construction starts with a straight-line diagonal path, and then recursively: (1) splits the line segment into two segments $A_L$ and $B_L$; (2) slides the mid-point of the segments by some amount; and (3) recurses on the two segments. In the $i$-th level of recursion, the subproblems of the recursion correspond to line segments $L$ between consecutive level-$\ell$ diagonals. The mid-point between $A_L$ and $B_L$ lies on a level-$(\ell + 1)$-diagonal. The construction slides the midpoint along the diagonal by $\pm \Theta(m / (2^\ell \sqrt{\log m})) = \pm \Theta(q_{\ell + 1})$. This corresponds to using a transition vector with entries $\pm O(1)$. 

In other words, in the context of the lower-bound construction from Section \ref{sec:pathsurpluslower}, all of the transition vectors have entries of the form $\pm O(1)$. Of course, in general, the entries of a given transition vector could be as large as $O(\sqrt{\log m})$. One of the main obstacles in proving the upper bound will be to show that, to a first approximation, if a path wishes to have a large surplus, then there is no asymptotic advantage to ever having super-constant transition-vector entries.

\paragraph{The analysis.}
A natural approach to bounding the maximum surplus of any path would be to first bound the maximum level-$\ell$ transitional gain
\begin{equation}\max_P \gain(\frame_\ell(P), \trans_\ell(P)),
\label{eq:transgain}
\end{equation}
where the maximum is taken over all paths $P$. If we were to only consider paths using transition vectors with $\pm O(1)$ entries, then it turns out we would be able to bound \eqref{eq:transgain} by $O(m / \log^{0.25} m)$. Summing over the $O(\log m)$ levels would bound the sum of the transitional gains across all levels by $O(m \log^{0.75} m)$. The problem with this approach is that, if a transition vector $\trans_\ell(P)$ has super-constant entries, then it can actually cause $\gain(\frame_\ell(P), \trans_\ell(P))$ to be significantly larger than $O(m / \log^{0.25} m)$. The larger the entries of $\trans_\ell(P)$, the larger the amount of area that is under $\operatorname{Path}(\frame_{\ell + 1}(P))$ but not under $\operatorname{Path}(\frame'_\ell(P))$, and thus the larger of an opportunity there is for $\surplus(\frame_{\ell + 1}(P))$ and $\surplus(\frame'_\ell(P))$ to differ. In general, it turns out that \eqref{eq:transgain} could be as large as $O(m)$.

This suggests that we should not be directly bounding $\gain(\frame_\ell(P), \trans_\ell(P))$. Instead, we will examine the quantity
\begin{equation}\gain(\frame_\ell(P), \trans_\ell(P)) - \Phi(\trans_\ell(P)),
\label{eq:adjustedgain}
\end{equation}
where $\Phi$ is a potential function determined by $\trans_\ell(P)$. Define the \defn{transition potential} $\Phi(T)$ of an $\ell$-transition vector $T = \langle t_0, t_1, \ldots, t_{2^{\ell} - 1}\rangle$ to be
$$\Phi(T) = \sum_{i = 0}^{2^\ell - 1} (|t_i| + 1)^2 d_\ell / \log^{0.25} m.$$
The potential function $\Phi(T)$ is larger for transition vectors $T$ with super-constant entries. Thus, intuitively, there should be some hope of obtaining a nice bound for \eqref{eq:adjustedgain}. On the other hand, we can show that for any given path $P$, even though there may be some values of $\ell$ for which $\Phi(\trans_\ell(P))$ is large, the sum of the potentials $\sum_\ell \Phi(\trans_\ell(P))$ is guaranteed to be $O(m \log^{0.75} m)$.
\begin{lemma}
Consider a path $P$, and for each $\ell \in [\overline{\ell} - 1]$, let $T_\ell = \trans_\ell(P)$. Then,
$$\sum_{\ell = 0}^{\overline{\ell} - 1} \Phi(T_\ell)  = O(m \log^{0.75} m).$$
\label{lem:potential1}
\end{lemma}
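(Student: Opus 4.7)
The plan is to reduce the quantity $\sum_\ell \Phi(T_\ell)$ to controlling the squared $\ell^2$-norms of the path's \emph{midpoint deviations} at each scale, and then exploit a clean dyadic identity that telescopes across scales. Let $r_j^{(\ell)}$ denote the $j$-th entry of $\coord_\ell(P)$, so that monotonicity of $P$ forces $|r_{j+1}^{(\ell)} - r_j^{(\ell)}| \le d_\ell$ with $r_0^{(\ell)} = r_{2^\ell}^{(\ell)} = 0$. Define the midpoint deviation $\mu_i^{(\ell)} := r_{2i+1}^{(\ell+1)} - \tfrac12\bigl(r_{2i}^{(\ell+1)} + r_{2i+2}^{(\ell+1)}\bigr)$. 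Unpacking the definitions of $\frame'_\ell(P)$ and $\frame_{\ell+1}(P)$ (each a coordinate-wise rounding to multiples of $q_{\ell+1}$), the expression $b_{2i+1} - \round\bigl((a_i + a_{i+1})/2,\, q_{\ell+1}\bigr)$ appearing inside $t_i$ equals $\mu_i^{(\ell)}$ up to an $O(q_{\ell+1})$ rounding error. Hence $|t_i| \le |\mu_i^{(\ell)}|/q_{\ell+1} + O(1)$, and so $(|t_i|+1)^2 \le O(1) + O\bigl((\mu_i^{(\ell)})^2/q_{\ell+1}^2\bigr)$.

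Splitting $\Phi(T_\ell)$ accordingly, the ``$O(1)$'' piece contributes $O(2^\ell d_\ell / \log^{0.25} m) = O(m / \log^{0.25} m)$ per level, and summing over the $\overline{\ell} = \Theta(\log m)$ levels gives the target bound $O(m \log^{0.75} m)$. What remains is to show
\[
\sum_{\ell = 0}^{\overline{\ell} - 1} \frac{d_\ell}{q_{\ell+1}^2 \log^{0.25} m} \, M_\ell \;=\; O(m \log^{0.75} m), \qquad \text{where } M_\ell := \sum_i (\mu_i^{(\ell)})^2.
\]
Since $d_\ell / q_{\ell+1}^2 = \Theta(2^\ell \log m / m)$, this reduces to the scale-free inequality $\sum_\ell 2^\ell M_\ell = O(m^2)$.

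To establish it, let $S_\ell := \sum_j (r_{j+1}^{(\ell)} - r_j^{(\ell)})^2$. The two level-$(\ell+1)$ differences straddling the midpoint of the $i$-th level-$\ell$ interval are exactly $\tfrac12 \Delta_i^{(\ell)} \pm \mu_i^{(\ell)}$, where $\Delta_i^{(\ell)} = r_{i+1}^{(\ell)} - r_i^{(\ell)}$. Applying the parallelogram identity $(a+b)^2 + (a-b)^2 = 2(a^2+b^2)$ with $a = \Delta_i^{(\ell)}/2$ and $b = \mu_i^{(\ell)}$ and summing over $i$ yields the exact recurrence $S_{\ell+1} = \tfrac12 S_\ell + 2 M_\ell$. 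Multiplying by $2^{\ell-1}$ and summing over $\ell$ telescopes:
\[
\sum_{\ell=0}^{\overline{\ell}-1} 2^\ell M_\ell \;=\; \sum_{\ell=0}^{\overline{\ell}-1} \bigl(2^{\ell-1} S_{\ell+1} - 2^{\ell-2} S_\ell\bigr) \;=\; 2^{\overline{\ell}-2} S_{\overline{\ell}} - \tfrac14 S_0 \;\le\; 2^{\overline{\ell}-2} S_{\overline{\ell}}.
\]
The Lipschitz bound $|\Delta_j^{(\overline{\ell})}| \le d_{\overline{\ell}}$ together with the count $2^{\overline{\ell}}$ of differences gives $S_{\overline{\ell}} \le 2^{\overline{\ell}} d_{\overline{\ell}}^{\,2} = m \cdot d_{\overline{\ell}} = m^2 / 2^{\overline{\ell}}$, so the right-hand side is at most $m^2/4$, completing the bound.

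The main obstacle I expect is identifying the exact recurrence $S_{\ell+1} = \tfrac12 S_\ell + 2M_\ell$: without exact telescoping, plugging the naive per-level bound $M_\ell \le O(m^2/2^\ell)$ in level-by-level loses a $\log m$ factor. The rounding built into the definition of $t_i$ is the other minor wrinkle, but it only inflates $|t_i|$ by an additive $O(1)$, which is absorbed by the ``$+1$'' inside $(|t_i|+1)^2$ already handled by the first piece of the split.
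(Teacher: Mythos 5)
Your proof is correct, and it takes a genuinely different route from the paper's. The paper's argument is entropic: it counts the rigid (integer-grid) paths $Q$ that share the same $\ell$-frame as $P$, and shows via Chernoff bounds that forcing $Q$ to also share the $\ell$-transition vector $T_\ell$ cuts the count down by a factor $2^{\Omega(\Phi(T_\ell)/\log^{0.75}m) - O(m/\log m)}$; telescoping $\log|\mathcal{P}_\ell|$ from $\log|\mathcal{P}_0| = O(m)$ then yields the bound. Your argument is instead a deterministic $\ell^2$-energy argument: you observe (correctly) that $|t_i|$ is, up to an additive $O(1)$ absorbed by the ``$+1$'' in $\Phi$, just the midpoint deviation $\mu_i^{(\ell)}$ scaled by $1/q_{\ell+1}$; you then notice that the consecutive level-$(\ell+1)$ increments $r^{(\ell+1)}_{2i+1}-r^{(\ell+1)}_{2i}$ and $r^{(\ell+1)}_{2i+2}-r^{(\ell+1)}_{2i+1}$ equal $\tfrac12(r^{(\ell)}_{i+1}-r^{(\ell)}_i)\pm\mu_i^{(\ell)}$, so the parallelogram law gives the exact recursion $S_{\ell+1}=\tfrac12 S_\ell + 2M_\ell$, which telescopes cleanly to $\sum_\ell 2^\ell M_\ell \le 2^{\overline{\ell}-2}S_{\overline{\ell}}$; the Lipschitz bound $|r^{(\overline{\ell})}_{j+1}-r^{(\overline{\ell})}_j|\le d_{\overline{\ell}}$ (which follows from monotonicity) closes the argument at $O(m^2)$. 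The arithmetic all checks out: $d_\ell/q_{\ell+1}^2 = \Theta(2^\ell\log m/m)$, so $\sum_\ell d_\ell M_\ell/(q_{\ell+1}^2\log^{0.25}m) = \Theta(\log^{0.75}m/m)\sum_\ell 2^\ell M_\ell = O(m\log^{0.75}m)$, and the $O(1)$-per-coordinate remainder sums to $O(\overline{\ell}\cdot m/\log^{0.25}m) = O(m\log^{0.75}m)$. Your approach is more elementary — no probabilistic path counting, no Chernoff bounds, just algebra and one telescoping identity — and it exposes the underlying reason the bound is sharp: the quantity $2^{\ell-2}S_\ell$ is a monotone potential for the midpoint-deviation energy. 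The paper's counting argument is somewhat heavier machinery here, though it has the stylistic virtue of matching the entropy-encoding flavor used for $\Psi$ later in the section (Lemma \ref{lem:maxgainbound}). One small nitpick: your use of the symbol $\Delta_i^{(\ell)}$ for a first difference collides with the paper's use of $\Delta_i(F)$ for a normalized second difference (in the definition of the compressed form), so you would want to rename it if the proof were inserted verbatim.
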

\begin{proof}
Let $m'$ be $m$ rounded up to the next power of two. Call a path \defn{rigid} if it travels along the edges of an $m' \times m'$ grid. Define $\mathcal{P}$ to be the set of all $\binom{2m'}{m'} \le 4^{m'}$ rigid paths. For $\ell \in [\overline{\ell}]$, define $\mathcal{P}_\ell$ to be the set of rigid paths $Q \in \mathcal{P}$ satisfying $\frame_\ell(Q) = \frame_\ell(P)$. The key to proving the lemma is to show that, for $\ell < \overline{\ell}$,
\begin{equation}
\log |\mathcal{P}_{\ell + 1}| \le \log |\mathcal{P}_{\ell}| - \Omega(\Phi(T_\ell) / \log^{0.75} m) + O(m / \log m).
\label{eq:logrigid}
\end{equation}
It follows that
\begin{align*}
\sum_{\ell = 0}^{\overline{\ell} - 1} \Phi(T_\ell) & \le O\left(\log^{0.75} m \cdot  \sum_{i = 0}^{\overline{\ell} - 1} \left(\log |\mathcal{P}_{\ell}| - \log |\mathcal{P}_{\ell + 1}|\right) + \sum_{i = 1}^{\overline{\ell} - 1} m / \log^{0.25} m\right) \\
                                           & \le O\left(\log^{0.75} m \cdot \log |\mathcal{P}_0| + m \log^{0.75} m\right) \\
                                           & \le O\left(m \log^{0.75} m\right),
\end{align*}
as desired.

It remains to prove \eqref{eq:logrigid}. Fix $\ell < \overline{\ell}$. Let $\mathcal{C}$ denote the set of $\ell$-coordinate vectors $C$ such that $C = \coord_\ell(P)$ for some rigid path $Q \in \mathcal{P}$ satisfying $\frame_\ell(Q) = \frame_\ell(P)$. For each $C \in \mathcal{C}$, let $\mathcal{P}(C)$ denote the set of rigid paths $Q \in \mathcal{P}$ satisfying $\coord_\ell(Q) = C$. The sets $\{\mathcal{P}(C) \mid C \in \mathcal{C}\}$ form a disjoint partition of $\mathcal{P}_\ell$.

For each rigid path $Q \in \mathcal{P}(C)$, let $X_Q$ be the event that $\trans_\ell(Q) = \trans_\ell(P)$, or equivalently, that $Q \in \mathcal{P}_{\ell + 1}$. We will prove that, for each $C \in \mathcal{C}$, and for a random $Q \in \mathcal{P}(C)$,
\begin{equation}
\Pr[X_Q] \le 2^{-\Omega(\Phi(T_\ell) / \log^{0.75} m) + O(m / \log m)}.
\label{eq:Xq}
\end{equation}
Since $\{\mathcal{P}(C) \mid C \in \mathcal{C}\}$ forms a partition of $\mathcal{P}_\ell$, the inequality \eqref{eq:Xq} also holds for a random $Q \in \mathcal{P}_\ell$ (rather than a random $Q \in \mathcal{P}(C)$). Since $\Pr[X_Q] = \Pr[Q \in \mathcal{P}_{\ell + 1}]$, this implies that $|\mathcal{P}_{\ell + 1}| \le |\mathcal{P}_\ell| \cdot  2^{-\Omega(\Phi(T_\ell) / \log^{0.75} m) + O(m / \log m)}$, which gives \eqref{eq:logrigid}.

Thus the lemma reduces to proving \eqref{eq:Xq}. Let $\langle t_0, t_1, \ldots, t_{2^{\ell} - 1}\rangle = \trans_\ell(P)$, and let $\langle t_0', t_1', \ldots, t'_{2^{\ell} - 1} \rangle = \trans_\ell(Q)$. Whereas the $t_i$s are fixed, the $t'_i$s are random variables. Because $Q$ is a random rigid path with $\ell$-coordinate vector $C$, we can think of $Q$ as consisting of independent rigid sub-paths $Q_1, Q_2, \ldots, Q_{2^\ell}$, where $Q_i$ travels from diagonal $D^{\ell}_{i - 1}$ to diagonal $D^{\ell}_i$ (and where the points on the diagonals are determined by $C$).

Now focus on some $Q_i$ (which, in turn, determines $t'_i$), and consider the probability that $t'_i = t_i$. Let $Y$ be the number of vertical steps in the first half of $Q_i$ and $Z$ be the number of horizontal steps. In order for $|t'_i|$ to be at least $k$ for some $k$, we need that
$$|(Y - Z) - \E[Y - Z]| \ge (k - 1) q_\ell.$$
As $Y - Z$ is governed by a Chernoff bound for negatively associated random variables (see, e.g., \cite{wajc2017negative}), we have that
$$\Pr[|t'_i| \ge k] \le 2^{-\Omega( ((k - 1) q_\ell)^2 / d_\ell)} \le 2^{-\Omega((k - 1)^2 d_\ell / \log m)}.$$
Because the $t'_i$s are independent, it follows that
\begin{align*}
\Pr[X_Q] & = \Pr[\trans_\ell(Q) = \trans_\ell(P)] \\    
         & \le \prod_{i = 1}^{2^\ell} \Pr[|t'_i| \ge |t_i|] \\
         & \le \prod_{i = 1}^{2^\ell}  2^{-\Omega((t_i - 1)^2 d_\ell / \log m)} \\
         & \le \prod_{i = 1}^{2^\ell}  2^{-\Omega((t_i + 1)^2 d_\ell / \log m) + O(d_\ell / \log m)} \\
         & \le 2^{-\Omega(\Phi(T)/\log^{0.75} m) + O(m / \log m)}. \\
\end{align*}
This completes the proof of \eqref{eq:Xq}, and therefore the proof of the lemma.    
\end{proof}

With the next two lemmas, we will prove a bound on $\max_T (\gain(F, T) - \Phi(T))$, where $F$ is a fixed refined $\ell$-frame and $T$ ranges over all $\ell$-transition vectors compatible with $F$. Notice that we are not yet taking a maximum over all possible paths, just over all possible transition vectors $T$ for a \emph{given} $\ell$-frame $F$. (Extending this to get a maximum over all paths will actually require quite a few more ideas.)

We begin by isolating the effect of the $j$-th coordinate of the $\ell$-transition vector.
\begin{lemma}
Let $j \in [0, 2^\ell - 1]$. Consider a refined $\ell$-frame $F$, and let $\mathcal{T}$ be the set of $\ell$-transition vectors that are compatible with $F$ and that are $0$ in coordinates $[0, 2^{\ell} - 1] \setminus \{j\}$. Then,
$$\Pr\left[\max_{T \in \mathcal{T}} \left( \gain(F, T) - \Phi(T) \right) \ge \alpha d_\ell / \log^{0.25} m \right] \le 2^{-\Omega(\alpha)}.$$ 
\label{lem:gaintransitionpre}
\end{lemma}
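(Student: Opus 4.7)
The plan is to parametrize $\mathcal{T}$ by its single free coordinate $t = t_j$, bound the probability of a bad event separately for each feasible $t$ via Core Fact 2, and then union bound. Since only the $j$-th entry of $T_t$ is nonzero, the implied path of the $(\ell+1)$-frame built from $(F, T_t)$ differs from that of $F$ only inside the cell bounded by the diagonals $D^{(\ell)}_j$ and $D^{(\ell)}_{j+1}$: the coordinate on $D^{(\ell+1)}_{2j+1}$ shifts along that diagonal by exactly $t\cdot q_{\ell+1}$. The symmetric difference $\Delta_t$ of the two implied paths is therefore a wedge-shaped region of area
\[
|\Delta_t| \;=\; O\bigl((|t|+1)\cdot d_\ell\cdot q_{\ell+1}\bigr) \;=\; O\bigl((|t|+1)\cdot d_\ell^{\,2}/\sqrt{\log m}\bigr),
\]
where the ``$+1$'' absorbs the rounding sliver that is present even at $t=0$ (the refined frame records the midpoint only at granularity $q_{\ell+1}$, not exactly).

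Split $\Delta_t$ into the subregion $A_t$ where the new path is above $F$'s and the subregion $B_t$ where it is below, so that $\gain(F, T_t) = \surplus(A_t) - \surplus(B_t)$. Setting
\[
\kappa_t \;:=\; \alpha\, d_\ell/\log^{0.25} m + \Phi(T_t) \;\ge\; \bigl(\alpha + (|t|+1)^2\bigr)\,d_\ell/\log^{0.25} m
\]
(using only that the lone nonzero coordinate of $T_t$ already contributes $(|t|+1)^2 d_\ell/\log^{0.25} m$ to $\Phi(T_t)$), a bad outcome forces either $\surplus(A_t) \ge \kappa_t/2$ or $-\surplus(B_t) \ge \kappa_t/2$. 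Applying Core Fact 2 to each of $A_t, B_t$ (noting $|A_t|,|B_t|\le |\Delta_t|$) and combining yields
\[
\Pr[\gain(F, T_t) \ge \kappa_t] \;\le\; O\!\left(e^{-\Omega(\kappa_t^2/|\Delta_t|)} + e^{-\Omega(\kappa_t)}\right).
\]
A short calculation using $(\alpha + (|t|+1)^2)^2 \ge 2\alpha(|t|+1)^2 + (|t|+1)^4$ and then dividing by $|t|+1$ gives $\kappa_t^2/|\Delta_t| = \Omega(\alpha + (|t|+1)^3)$, while $\kappa_t \ge \alpha$ (since $d_\ell \ge \sqrt{\log m} \ge \log^{0.25} m$ throughout $\ell \le \overline{\ell}$). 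Thus the per-$t$ failure probability is at most $2^{-\Omega(\alpha + (|t|+1)^3)}$.

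Finally, monotonicity of the implied path restricts $|t|$ to $O(\sqrt{\log m})$, so the union bound over feasible $t \in \mathbb{Z}$ costs only a constant factor $\sum_{t} 2^{-\Omega((|t|+1)^3)} = O(1)$, leaving the claimed $2^{-\Omega(\alpha)}$ bound. The main step that requires care is the area bound on $\Delta_t$: one must verify geometrically that altering only the $(2j+1)$-th coordinate of the $(\ell+1)$-frame changes the implied path solely inside one cell, and account for the $O(d_\ell q_{\ell+1})$-sized rounding sliver at $t=0$ (the source of the ``$+1$'' in the area formula). Once that bookkeeping is in hand, the rest is routine Poisson/Chernoff tail estimation, and notably does \emph{not} require exploiting the large ``baseline'' contribution that the $2^\ell-1$ zero coordinates add to $\Phi(T_t)$ --- the single nonzero-coordinate term suffices here.
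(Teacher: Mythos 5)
Your proof follows essentially the same approach as the paper's: parametrize by the single nonzero transition entry $t$, bound the area of the region where the implied paths differ by $\Theta((|t|+1)\,d_\ell q_{\ell+1})$, apply Core Fact~2, and union-bound over $t$ using the quadratic growth of $\Phi(T_t)$ in $t$. One small slip: the linear (Poisson) tail from Core Fact~2 gives $e^{-\Omega(\kappa_t)}$ with $\kappa_t\ge (\alpha+(|t|+1)^2)\,d_\ell/\log^{0.25}m\ge \alpha+(|t|+1)^2$, but this is \emph{not} $\Omega(\alpha+(|t|+1)^3)$ for $|t|$ near its maximum $\Theta(\sqrt{\log m})$ (you would need $d_\ell \gtrsim (|t|+1)\log^{0.25}m$, which can fail); so the honest per-$t$ bound after combining the two tails is $2^{-\Omega(\alpha+(|t|+1)^2)}$, not $(|t|+1)^3$ in the exponent --- this still sums to $2^{-\Omega(\alpha)}$, so the conclusion is unaffected, but the exponent should be corrected.
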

\begin{proof}
For integer $r$ satisfying $|r| = O(\sqrt{\log m})$, let $T_r$ denote the $2^{\ell}$-dimensional vector that is $0$ in coordinates $[0, 2^{\ell} - 1] \setminus \{j\}$ and $r$ in coordinate $j$. Let $R$ be the set of $r$ for which $T_r$ is a valid $\ell$-transition vector that is compatible with $F$. 

For any $r \in R$, we have that $|\gain(F, T_r)|$ is given by $|a - b|$ where $a$ and $b$ are the number of blue and red points, respectively, in some geometric region with area $\Theta(rq_\ell d_\ell)$. Thus, by Lemma \ref{lem:fact2}, we have for any $\beta \ge r^{1.5}$ that, setting $\gamma = \beta / r^{1.5}$,
\begin{align*}
& \Pr[\gain(F, T_r) \ge  \beta \sqrt{q_\ell d_\ell}] \\
& = \Pr[\gain(F, T_r) \ge \gamma r \sqrt{r q_\ell d_\ell}] \\
& \le 2^{-\Omega(\gamma^2 r^2)} + 2^{-\Omega(\gamma r \sqrt{r q_\ell d_\ell})} \\
& \le 2^{-\Omega(\gamma^2 r^2)} + 2^{-\Omega(\gamma r^{1.5})} \tag{since $q_\ell d_\ell \ge 1$}\\
& \le 2^{-\Omega(\gamma r^{1.5})} \tag{since $\gamma \ge 1$}  \\
& = 2^{-\Omega(\beta)}.
\end{align*}

Thus
\begin{align*}
    & \Pr\left[\max_{r \in R} (\gain(F, T_r) - \Phi(T_r)) \ge \alpha d_\ell / \log^{0.25} m \right] \\
    & = \Pr\left[\max_{r \in R} (\gain(F, T_r) - \Phi(T_r)) \ge \alpha \sqrt{q_\ell d_\ell}\right] \\
    &  \le \sum_{r \in R} \Pr\left[\gain(F, T_r) \ge \Phi(T_r) + \alpha \sqrt{q_\ell d_\ell}\right] \\
    &  \le \sum_{r \in R} \Pr\left[\gain(F, T_r) \ge |r|^2 d_\ell / \log^{0.25} m + \alpha \sqrt{q_\ell d_\ell}\right] \\
    &  \le \sum_{r \in R} \Pr\left[\gain(F, T_r) \ge (|r|^2 + \alpha) \sqrt{q_\ell d_\ell}\right] \\
    & \le \sum_{r \in \mathbb{Z}} \exp(-\Omega(|r|^{2} + \alpha)) \\
    & \le 2^{-\Omega(\alpha)}.
\end{align*}

\end{proof}

Using vertical independence (Core Fact 1, stated in Lemma \ref{lem:fact1}), we can obtain a bound on $\max_T (\gain(F, T) - \Phi(T))$, where $T$ ranges over all $\ell$-transition vectors compatible with $F$.
\begin{lemma}
Consider a refined $\ell$-frame $F$, and let $\mathcal{T}$ be the set of $\ell$-transition vectors that are compatible with $F$. Then, for any $\alpha > 0$ that is at least a sufficiently large positive constant, we have
$$\Pr\left[\max_{T \in \mathcal{T}} \left( \gain(F, T) - \Phi(T) \right) \ge \alpha m / \log^{0.25} m\right] \le 2^{-\Omega(\alpha 2^\ell)}.$$ 
\label{lem:gaintransition}
\end{lemma}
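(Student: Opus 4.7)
The plan is to reduce this lemma to the single-coordinate bound (Lemma \ref{lem:gaintransitionpre}) via a coordinate-wise decomposition, and then combine the $2^\ell$ independent contributions with a Chernoff-style argument.

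First I would observe that both $\gain(F, T)$ and $\Phi(T)$ decompose additively over the $2^\ell$ coordinates of the transition vector $T = \langle t_0, \ldots, t_{2^\ell - 1}\rangle$. For each $j$, let $T^{(j)}$ denote the transition vector that agrees with $T$ at coordinate $j$ and is zero elsewhere. The geometric region controlled by $t_j$---the area between $\operatorname{Path}(F)$ and $\operatorname{Path}(\frame_{\ell+1})$ lying between diagonals $D^{(\ell)}_j$ and $D^{(\ell)}_{j+1}$---lies strictly inside the vertical strip bounded by these two diagonals, and these strips for distinct $j$ are disjoint. By Core Fact 1 (Lemma \ref{lem:fact1}) the blue/red dot counts in these strips are mutually independent, and by the additivity of surplus over disjoint regions we get
\[
\gain(F, T) = \sum_{j=0}^{2^\ell - 1} \gain(F, T^{(j)}), \qquad \Phi(T) = \sum_{j=0}^{2^\ell - 1} \Phi(T^{(j)}).
\]
Consequently
\[
\max_{T \in \mathcal{T}} \bigl(\gain(F, T) - \Phi(T)\bigr) = \sum_{j=0}^{2^\ell - 1} X_j, \quad \text{where } X_j := \max_{T^{(j)}} \bigl(\gain(F, T^{(j)}) - \Phi(T^{(j)})\bigr).
\]
Each $X_j \ge 0$ (take $T^{(j)} = 0$), and by Lemma \ref{lem:gaintransitionpre},
\[
\Pr\bigl[X_j \ge \alpha d_\ell / \log^{0.25} m\bigr] \le 2^{-\Omega(\alpha)} \quad \text{for all } \alpha \ge 1,
\]
and by vertical independence the $X_j$'s are mutually independent.

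Next I would convert this uniform subexponential tail into a moment generating function bound. Let $\lambda = d_\ell / \log^{0.25} m$. Integrating the tail bound gives that for a small enough constant $s_0 > 0$ and $s = s_0/\lambda$,
\[
\E\bigl[e^{s X_j}\bigr] = 1 + \int_0^\infty s e^{st} \Pr[X_j \ge t]\, dt \le C
\]
for some absolute constant $C$. By independence,
\[
\E\Bigl[\exp\Bigl(s \sum_j X_j\Bigr)\Bigr] \le C^{2^\ell}.
\]
Applying Markov's inequality with threshold $\alpha m / \log^{0.25} m = \alpha \cdot 2^\ell \lambda$,
\[
\Pr\Bigl[\sum_j X_j \ge \alpha \cdot 2^\ell \lambda\Bigr] \le C^{2^\ell} \cdot e^{-s \alpha \cdot 2^\ell \lambda} = \exp\bigl(2^\ell (\ln C - s_0 \alpha)\bigr).
\]
Choosing $\alpha$ to be at least a sufficiently large constant makes $s_0 \alpha - \ln C = \Omega(\alpha)$, yielding $2^{-\Omega(\alpha 2^\ell)}$, as desired. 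Since $m = 2^\ell d_\ell$, the threshold $\alpha \cdot 2^\ell \lambda$ equals $\alpha m / \log^{0.25} m$, matching the lemma statement.

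The main obstacle is the MGF step: we need to verify that the per-coordinate tail bound from Lemma \ref{lem:gaintransitionpre} is strong enough to yield bounded exponential moments at the right scale $1/\lambda$, and that the union-over-$T^{(j)}$ already absorbed in the definition of $X_j$ does not degrade this. The vertical-independence reduction itself is clean once one checks that the region whose dots determine $\gain(F, T^{(j)})$ genuinely lies in the $j$-th vertical strip between consecutive level-$\ell$ diagonals, so that Core Fact 1 applies verbatim.
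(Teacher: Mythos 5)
Your proposal is correct and follows essentially the same route as the paper: decompose $\gain(F,T)-\Phi(T)$ coordinate-wise into the $X_j$'s, use vertical independence (Core Fact 1) to get mutual independence of the $X_j$'s, apply Lemma \ref{lem:gaintransitionpre} for the per-coordinate subexponential tail, and conclude with a Chernoff-type bound for a sum of independent (geometric-tailed) random variables. The only difference is that you spell out the moment-generating-function calculation explicitly, whereas the paper invokes ``a Chernoff bound for independent geometric random variables'' without detail; the substance is identical.
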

\begin{proof}
For a given $\ell$-transition vector $T = \langle t_0, \ldots, t_{2^\ell - 1}\rangle$, let $T^{(i)}$ denote the $\ell$-transition vector that equals $0$ in coordinates $[0, 2^\ell - 1] \setminus \{i\}$ and that equals $t_i$ in coordinate $i$. Then, 
\begin{equation}
\max_{T \in \mathcal{T}} \left(\gain(F, T) - \Phi(T)\right)  = \sum_{j = 0}^{2^\ell - 1} \max_{T \in \mathcal{T}} \left(\gain(F, T^{(j)}) - \Phi(T^{(j)})\right).
\label{eq:diagsum}
\end{equation}   
Critically, the quantity $\gain(F, T^{(j)}) - \Phi(T^{(j)})$ is determined only by $t_j$ and by the blue/red dots in the vertical strip between where $F$ intersects diagonals $D^{(j)}$ and $D^{(j + 1)}$. It follows by Lemma \ref{lem:fact1} that \eqref{eq:diagsum} is a sum of $2^\ell$ independent random variables $X_j := \max_{T \in \mathcal{T}}(\gain(F, T^{(j)}) - \Phi(T^{(j)}))$, each of which by Lemma \ref{lem:gaintransitionpre} satisfies 
$$\Pr\left[X_j \ge \alpha d_\ell / \log^{0.25} m\right] \le 2^{-\Omega(\alpha)}.$$
Applying a Chernoff bound for independent geometric random variables, it follows that
\begin{align*}
 &   \Pr\left[\max_{T \in \mathcal{T}} \left( \gain(F, T) - \Phi(T) \right) \ge \alpha m / \log^{0.25} m\right] \\
 & = \Pr\left[\sum_{j = 0}^{2^\ell - 1} X_j \ge  \alpha m / \log^{0.25} m\right] \\
  & = \Pr\left[\sum_{j = 0}^{2^\ell - 1} X_j \ge  2^\ell \alpha d_\ell / \log^{0.25} m\right] \\
 & \le 2^{-\Omega(\alpha 2^\ell)}.
\end{align*}
\end{proof}

Recall that earlier in the section, we defined two types of gains: refinement gains and transitional gains. So far we have a bound on transitional gains (Lemma \ref{lem:gaintransition}). It is worth taking a moment to get an analogous bound on refinement gains (this bound is much easier to get, and does not require the use of the potential function $\Phi$).

\begin{lemma}
Consider an $\ell$-frame $F$ and let $\mathcal{F}$ denote the set of refined $\ell$-frames $F'$ that are compatible with $F$. For any $\alpha > 0$ that is at least a sufficiently large positive constant, we have
$$\Pr\left[\max_{F' \in \mathcal{F}} \gain(F, F') \ge \alpha m / \log^{0.25} m\right] \le 2^{-\Omega(\alpha 2^\ell)}.$$ 
\label{lem:gainrefinement}
\end{lemma}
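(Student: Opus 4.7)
}
The strategy parallels Lemma \ref{lem:gaintransition}, but is simpler because no potential function is needed: since each coordinate $b_j$ of a compatible refined frame $F'$ must satisfy $\round(b_j,q_\ell)=a_j$, we have $|b_j-a_j|\le q_{\ell+1}$, so $b_j$ takes at most three values. The plan is to decompose the gain across the $2^\ell$ vertical strips between consecutive level-$\ell$ diagonals, bound the contribution of each strip by a maximum over $O(1)$ choices, and then sum using the three core facts.

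Step 1 (strip decomposition). Let $S_j$ be the strip between diagonals $D^{(\ell)}_j$ and $D^{(\ell)}_{j+1}$. Within $S_j$, both $\operatorname{Path}(F)$ and $\operatorname{Path}(F')$ are straight segments determined respectively by $(a_j,a_{j+1})$ and $(b_j,b_{j+1})$, so the strip-$j$ contribution $g_j(b_j,b_{j+1})$ to $\gain(F,F')$ is exactly the signed blue-minus-red count in the (signed) quadrilateral $G_j(b_j,b_{j+1})$ between these two segments. Hence
\[
\gain(F,F')=\sum_{j=0}^{2^\ell-1} g_j(b_j,b_{j+1}),\qquad \max_{F'\in\mathcal F}\gain(F,F')\le \sum_{j=0}^{2^\ell-1} X_j,
\]
where $X_j:=\max_{(b_j,b_{j+1})} g_j(b_j,b_{j+1})$ ranges over the at most $9$ legal pairs. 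By Core Fact~1 (Lemma~\ref{lem:fact1}), the $X_j$'s are independent, since each $X_j$ is a function solely of the blue/red dots inside the vertical strip $S_j$.

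Step 2 (tail bound per strip). Each quadrilateral $G_j(b_j,b_{j+1})$ has area $A=O(q_{\ell+1} d_\ell)=O(d_\ell^2/\sqrt{\log m})$. Using $q_\ell=d_\ell/\sqrt{\log m}$, one checks that $\sqrt{A}=\Theta(d_\ell/\log^{0.25}m)$ and, for $\ell\le\overline{\ell}-1$, $d_\ell/\log^{0.25}m\ge 1$. Applying Core Fact~2 (Lemma~\ref{lem:fact2}) to each of the $O(1)$ quadrilaterals with $\kappa=\beta d_\ell/\log^{0.25}m$ and union bounding over the at most $9$ pairs yields, for every $\beta\ge 1$,
\[
\Pr\!\left[X_j\ge \beta\, d_\ell/\log^{0.25}m\right]\le 2^{-\Omega(\beta^2)}+2^{-\Omega(\beta\sqrt{A})}\le 2^{-\Omega(\beta)}.
\]

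Step 3 (Chernoff for sums of sub-exponentials). The $X_j$'s are independent nonnegative random variables whose upper tails at scale $\sigma:=d_\ell/\log^{0.25}m$ are sub-exponential with parameter $O(1)$. By a standard Bernstein/Chernoff inequality for sums of independent sub-exponential random variables (applied to $X_j/\sigma$), for any $\alpha$ larger than a sufficiently large constant,
\[
\Pr\!\left[\sum_{j=0}^{2^\ell-1} X_j\ge \alpha\cdot 2^\ell\sigma\right]\le 2^{-\Omega(\alpha 2^\ell)}.
\]
Since $2^\ell\sigma=2^\ell d_\ell/\log^{0.25}m=m/\log^{0.25}m$, this is exactly the claimed bound.

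Main obstacle. The crux is Step~2: verifying that the two Chernoff terms in Core Fact~2 both collapse to $2^{-\Omega(\beta)}$ at threshold $\beta d_\ell/\log^{0.25}m$, which relies on the identity $\sqrt{q_{\ell+1}d_\ell}=\Theta(d_\ell/\log^{0.25}m)$ together with $d_\ell/\log^{0.25}m\ge 1$ throughout the relevant range of $\ell$. Once this per-strip tail bound is in hand, the argument reduces to the standard sub-exponential Chernoff bound, and, unlike in Lemma~\ref{lem:gaintransition}, no potential function is required because the wiggle amplitudes $|b_j-a_j|$ are uniformly $O(q_{\ell+1})$ rather than super-constant multiples of $q_{\ell+1}$.
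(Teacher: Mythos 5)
Your proposal takes a genuinely different route from the paper: you decompose the gain across the $2^\ell$ strips, bound each strip's contribution by a maximum over $O(1)$ choices, and sum via independence and a sub-exponential Chernoff bound. The paper instead does something simpler and more global: for a \emph{fixed} $F'$ it applies Core Fact 2 once to the whole symmetric-difference region (which has area $O(m q_\ell)$, giving a tail of $e^{-\Omega(\alpha^2 2^\ell)} + e^{-\Omega(\alpha m/\log^{0.25} m)} \le e^{-\Omega(\alpha 2^\ell)}$), and then takes a union bound over the $2^{O(2^\ell)}$ choices of $F'$. No strip decomposition, no potential, and no independence across strips is needed in the paper's version.

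The weak point in your argument is the independence claim in Step~1. First, the object you call ``the vertical strip $S_j$'' is the \emph{diagonal} strip between $D^{(\ell)}_j$ and $D^{(\ell)}_{j+1}$, not a vertical one, so Core Fact~1 is not directly invoked on it. Second, and more substantively, even if you replace $S_j$ by the natural vertical strip $V_j=[jd_\ell+a_j,(j+1)d_\ell+a_{j+1}]$, the per-strip gain regions are \emph{not} contained in these disjoint strips: the implied paths for $F$ and $F'$ hit $D^{(\ell)}_j$ at different $x$-coordinates ($jd_\ell+a_j$ versus $jd_\ell+b_j$ with $|a_j-b_j|$ up to $q_{\ell+1}$), so the quadrilateral $G_j(b_j,b_{j+1})$ can protrude into $V_{j-1}$ or $V_{j+1}$ by up to $q_{\ell+1}$, and hence $X_j$ and $X_{j+1}$ both read dots from a common boundary slab of width $\Theta(q_{\ell+1})$. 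This is exactly the point where Lemma~\ref{lem:gainrefinement} differs from Lemma~\ref{lem:gaintransition}: in the latter, $F_{\ell+1}$ and $F'_\ell$ agree at every level-$\ell$ diagonal, so the triangular gain regions sit in \emph{exactly} disjoint vertical strips and Core Fact~1 applies cleanly, whereas here they do not. Your approach can likely be repaired (e.g.\ by carving off the narrow boundary slabs $W_j$ of width $O(q_{\ell+1})$ and bounding their total contribution separately via Core Fact~3, noting each contributes area $O(q_{\ell+1}^2)$), but as written the independence assertion does not follow from the stated core facts; and since the paper deliberately restricts itself to those three facts so that the argument ports to Section~\ref{sec:insertionsurplus}, relying implicitly on full two-dimensional Poisson independence would also not be acceptable.
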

\begin{proof}
Consider some refined $\ell$-frame $F' \in \mathcal{F}$. Let $P$ be the implied path for $F$ and $P'$ be the implied path for $F'$. Let $G^{+} \subseteq [m] \times [m]$ be the geometric region consisting of points that contained below $P'$ but not below $P$; and let $G^{-} \subseteq [m] \times [m]$ be the geometric region consisting of points that contained below $P$ but not below $P'$. Finally, let $B^+$ and $R^+$ (resp.~$B^-$ and $R^{-}$) be the number of blue and red points, respectively, in $G^+$ (resp.~$G^-$).

The quantity $\gain(F, F')$ is at most
$$|B^+ - R^+| + |B^- - R^-|.$$
Since the geometric regions $G^+$ and $G^-$ each have area at most $O(m q_\ell) = O\left(\frac{m^2}{2^\ell \sqrt{\log m}}\right)$, we have by Lemma \ref{lem:fact2} (i.e., Core Fact 2) that
\begin{align*}
    & \Pr\left[|B^+ - R^+| + |B^- - R^-| \ge \alpha m / \log^{0.25} m\right]\\
    & \le \exp\left(-\Omega\left(\frac{(\alpha m / \log^{0.25} m)^2}{\frac{m^2}{2^\ell \sqrt{\log m}}}\right)\right) + \exp(-\Omega(\alpha m / \log^{0.25} m))  \\
    & \le \exp\left(-\Omega\left(\alpha^2 2^\ell\right)\right) + \exp(-\Omega(\alpha m / \log^{0.25} m))  \\
    & \le \exp\left(-\Omega\left(\alpha 2^\ell\right)\right).  \tag{since $2^\ell \le O(m / \log^{0.5} m)$}\\ 
\end{align*}

Thus, for any fixed $F' \in  \mathcal{F}$, we have that $\Pr[\gain(F, F') \ge \alpha m / \log^{0.25} m] \le 2^{-\Omega(\alpha 2^\ell)}$. Applying a union bound,
\begin{align*}
    & \Pr\left[\max_{F' \in \mathcal{F}} \gain(F, F') \ge \alpha m / \log^{0.25} m\right] \\
    & \le  |\mathcal{F}| \cdot \exp\left(-\Omega\left(\alpha 2^\ell\right)\right) \\
    & \le 2^{2^\ell} \cdot \exp\left(-\Omega\left(\alpha 2^\ell\right)\right) \\
    & \le \exp\left(-\Omega\left(\alpha 2^\ell\right)\right). \tag{since $\alpha$ is sufficiently large}
\end{align*}
\end{proof}

Combining the previous lemmas, we can get a bound on the combined gains (both refinement and transitional) at a given level. As in Lemma \ref{lem:gaintransition}, the bound is parameterized by the potential function $\Phi$:
\begin{lemma}
Consider an $\ell$-frame $F$, and let $\mathcal{T}$ be the set of pairs $(F', F)$ such that $F'$ is a refined $\ell$-frame compatible with $F_{\ell}$ and $T$ is an $\ell$-transition vector compatible with $F'$. Then, for any $\alpha > 0$ that is at least a sufficiently large positive constant, we have
$$\Pr\left[\max_{(F', T) \in \mathcal{T}} \left( \gain(F, F') + \gain(F', T) - \Phi(T) \right) \ge \alpha m / \log^{0.25} m\right] \le 2^{-\Omega(\alpha 2^\ell)}.$$ 
\label{lem:gainfull}
\end{lemma}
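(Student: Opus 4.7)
The plan is to decompose $\gain(F,F') + \gain(F',T) - \Phi(T)$ into two pieces, bound each using the machinery already developed, and combine via a union bound. Write
\[
\gain(F,F') + \gain(F',T) - \Phi(T) \;=\; \underbrace{\gain(F,F')}_{\text{refinement}} \;+\; \underbrace{\bigl(\gain(F',T) - \Phi(T)\bigr)}_{\text{transitional, with potential}}.
\]
For any threshold $\alpha$, if both summands are at most $(\alpha/2)\, m/\log^{0.25} m$, then the whole quantity is at most $\alpha m / \log^{0.25} m$. So it suffices to argue that each term concentrates below $(\alpha/2)\, m/\log^{0.25} m$ with the required failure probability $2^{-\Omega(\alpha 2^\ell)}$, and then union-bound the two failure events.

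For the refinement term, Lemma~\ref{lem:gainrefinement} applied directly to $F$ gives
\[
\Pr\!\left[\max_{F' \in \mathcal{F}} \gain(F,F') \ge (\alpha/2)\, m/\log^{0.25} m\right] \;\le\; 2^{-\Omega(\alpha 2^\ell)},
\]
provided $\alpha$ is large enough. For the transitional term, we use Lemma~\ref{lem:gaintransition} \emph{once per} refined $\ell$-frame $F'$ compatible with $F$, and then take a union bound over $F'$. A refined $\ell$-frame compatible with $F$ is obtained by independently choosing, at each of the $O(2^\ell)$ diagonal-crossings recorded by $F$, between two possible $q_{\ell+1}$-multiples consistent with the $q_\ell$-level rounding; hence the number of such $F'$ is at most $2^{O(2^\ell)}$. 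For each fixed $F'$, Lemma~\ref{lem:gaintransition} yields
\[
\Pr\!\left[\max_{T \text{ compat.\ with }F'}\bigl(\gain(F',T) - \Phi(T)\bigr) \ge (\alpha/2)\, m/\log^{0.25} m\right] \;\le\; 2^{-\Omega(\alpha 2^\ell)},
\]
and the union bound contributes a multiplicative factor of $2^{O(2^\ell)}$, which is absorbed into the exponent $-\Omega(\alpha 2^\ell)$ as long as $\alpha$ exceeds a suitable constant.

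Summing the two failure probabilities and renaming $\alpha$ by a constant factor (to compensate for the two $\alpha/2$ splittings and for the absorbed $2^{O(2^\ell)}$), we conclude
\[
\Pr\!\left[\max_{(F',T) \in \mathcal{T}}\bigl(\gain(F,F') + \gain(F',T) - \Phi(T)\bigr) \ge \alpha m/\log^{0.25} m\right] \;\le\; 2^{-\Omega(\alpha 2^\ell)},
\]
as required. The main subtlety worth double-checking is that the number of refined $\ell$-frames compatible with a given $\ell$-frame $F$ really is $2^{O(2^\ell)}$ rather than something larger; this follows because rounding at scale $q_\ell$ versus $q_{\ell+1} = q_\ell/2$ leaves only one binary choice per diagonal entry, and there are $2^\ell + 1$ such entries. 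With that counting in hand, the union bound is clearly dominated by the transition-gain tail, so no further ideas are needed beyond packaging the two previous lemmas.
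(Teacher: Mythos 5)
Your proposal matches the paper's proof essentially verbatim: split $\gain(F,F') + \gain(F',T) - \Phi(T)$ into the refinement term and the transitional term, bound the former by Lemma~\ref{lem:gainrefinement}, and bound the latter by a union bound over the at most $2^{O(2^\ell)}$ compatible refined $\ell$-frames $F'$, applying Lemma~\ref{lem:gaintransition} to each; large $\alpha$ then absorbs the union-bound factor. Your explicit check of the count $|\mathcal{F}| \le 2^{O(2^\ell)}$ is a reasonable thing to make precise, but the argument is the same as the paper's.
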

\begin{proof}
We can bound 
\begin{align*}& \Pr\left[\max_{(F', T) \in \mathcal{T}} \left( \gain(F, F') + \gain(F', T) - \Phi(T) \right) \ge \alpha m / \log^{0.25} m\right] \\
&\le \Pr\left[\max_{(F', T) \in \mathcal{T}} \gain(F, F') \ge \frac{1}{2} \alpha m / \log^{0.25} m\right] \\
& \phantom{\text{foobar}} + \Pr\left[\max_{(F', T) \in \mathcal{T}} \left(\gain(F', T) - \Phi(T) \right) \ge \frac{1}{2} \alpha m / \log^{0.25} m\right]. \\
\end{align*}
The first probability is $2^{-\Omega(\alpha 2^\ell)}$ by Lemma \ref{lem:gainrefinement}. Defining $\mathcal{F}$ to be the set of $F'$ compatible with $F$, and $\mathcal{T}(F')$ to be the set of $T$ compatible with $F'$, the second probability can be expanded as
\begin{align*} 
& \Pr\left[\max_{F' \in \mathcal{F}} \max_{T \in \mathcal{T}(F')} \left(\gain(F', T) - \Phi(T) \right) \ge \frac{1}{2} \alpha m / \log^{0.25} m\right] \\
& \le \sum_{F' \in \mathcal{F}} \Pr\left[\max_{T \in \mathcal{T}(F')} \left(\gain(F', T) - \Phi(T) \right) \ge \frac{1}{2} \alpha m / \log^{0.25} m\right] \\
& \le \sum_{F' \in \mathcal{F}} 2^{-\Omega(\alpha 2^\ell)} \tag{by Lemma \ref{lem:gaintransition}}\\
& \le 2^{2^\ell} \cdot 2^{-\Omega(\alpha 2^\ell)}  \\
& \le 2^{-\Omega(\alpha 2^\ell)}. \tag{since $\alpha$ is at least a large constant}
\end{align*}

\end{proof}

For a given $\ell$-frame $F$, let us use $\mathcal{T}(F)$ to denote the set of pairs $(F', T)$, where $F'$ is a refined $\ell$-frame compatible with $F$ and $T$ is an $\ell$-transition vector compatible with $F'$ (and $F$). And let us use $\gain(F, F', T)$ as a shorthand for $\gain(F, F') + \gain(F', T)$.
So far, we have proven a bound of the form
\begin{equation}\Pr\left[\max_{(F', T) \in \mathcal{T}(F)} \left(\gain(F, F', T) - \Phi(T) \right) \ge \alpha m / \log^{0.25} m\right] \le 2^{-\Omega(\alpha 2^\ell)},
\label{eq:FTgain}
\end{equation}
where $F$ is a \emph{fixed} refined $\ell$-frame.

What we would really like, though, is a bound on 
\begin{equation}\max_{F \in \mathcal{F}_\ell} \max_{(F', T) \in \mathcal{T}(F)} \left(\gain(F, F', T) - \Phi(T) \right),\label{eq:dreamunion}\end{equation}
where $\mathcal{F}_\ell$ is the set of all possible $\ell$-frames. It is tempting to simply apply a union bound over $\mathcal{F}_\ell$. And, indeed, if we had $|\mathcal{F}_\ell| \le 2^{O(\ell)}$, then this union bound would be successful, giving the same dependency on $\alpha$ as in \eqref{eq:FTgain}. The problem is that $|\mathcal{F}_\ell|$ is actually $\sqrt{\log m}^{\Theta(2^\ell)} = 2^{\Theta(2^\ell \log \log m)}$. If we try to take a union bound over $F \in \mathcal{F}_\ell$, the bound that we will get is
\begin{equation}\Pr\left[\max_{F \in \mathcal{F}_\ell} \max_{(F', T) \in \mathcal{T}(F)}  \left( \gain(F, F', T) - \Phi(T) \right) \ge \alpha m / \log^{0.25} m\right] \le 2^{O(2^\ell \log \log m) -\Omega(\alpha 2^\ell)},
\label{eq:allFunion}
\end{equation}
which is only useful for $\alpha \ge \Omega(\log \log m)$.

The solution, it turns out, is to introduce a second potential function $\Psi$, this time determined by the $\ell$-frame $F$. Rather than bounding \eqref{eq:dreamunion}, we will actually bound
\begin{equation}\max_{F \in \mathcal{F}_\ell} \max_{(F', T) \in \mathcal{T}(F)} \left(\gain(F, F', T) - \Phi(T) 
 - \gamma \Psi(F)\right) \label{eq:realunion}\end{equation}
for some large positive constant $\gamma$.

To define $\Psi$, we must first define the \defn{compressed form} of a given $\ell$-frame (or refined $\ell$-frame) $F$. For a given $\ell \le \overline{\ell}$, and for a given $\ell$-frame $F = \langle f_0, f_1, \ldots, f_{2^\ell} \rangle$, define the compressed form of $F$ to be
$$\Delta(F) = (\Delta_1, \ldots, \Delta_{2^{\ell} - 1}),$$
where
$$\Delta_i = \frac{(f_{i + 1} - f_i) - (f_i - f_{i - 1})}{q_\ell}.$$
The way to think about the $\Delta_i$s is that they are the (discrete) second derivatives of the $f_i$s (normalized by a factor of $q_\ell$). We can visualize the $\Delta_i$s as in Figure \ref{fig:Deltas}. The reason that we call $\Delta(F)$ the \emph{compressed form} of $F$ is that we can always recover $F$ from $\Delta(F)$.\footnote{Indeed, given values for $(f_{i + 1} - f_i) - (f_i - f_{i - 1})$, with $i$ ranging from $1$ to $2^\ell - 1$, and given the equations $f_0 = 0$ and $f_{2^\ell} = 0$, we have $2^{\ell + 1}$ linearly independent equations for $2^{\ell + 1}$ variables $f_0, \ldots, f_{2^\ell}$.}\todo{Should we spell this out further?}

\begin{figure}
    \centering
    \includegraphics[scale=0.8]{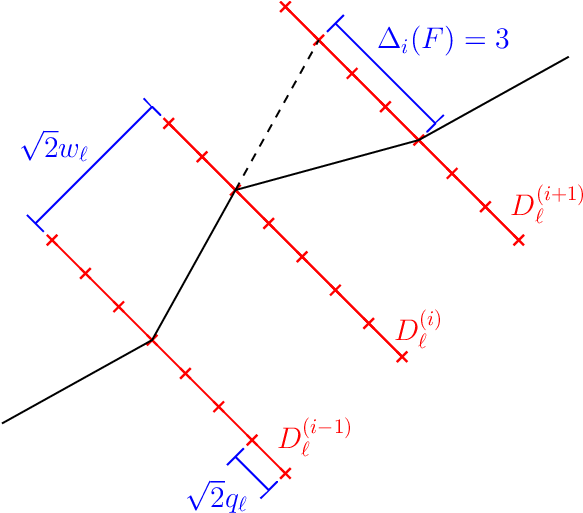}
    \caption{If we look at the implied path $\operatorname{Path}(F)$ for $F$, we take the line segment between diagonals $D^{(i - 1)}_\ell$ and $D^{(i)}_\ell$, and we extend that segment to reach diagonal $D^{(i + 1)}_\ell$, then $\Delta_i$ measures the distance (in multiples of $\sqrt{2} q_\ell$) between where the extended segment hits $D^{(i + 1)}_\ell$ versus where $\operatorname{Path}(F)$ hits $D^{(i + 1)}_\ell$. An example is shown in the figure, where the diagonals $D^{(i - 1)}_\ell, D^{(i)}_\ell, D^{(i + 1)}_\ell$ are in red with ticks every distance $q_\ell$; the path $\operatorname{Path}(F)$ is in black; the extension of the segment between $D^{(i - 1)}_\ell$ and $D^{(i)}_\ell$ is given as a dotted line; and $\Delta_i(F)$ is computed as $3$.}
    \label{fig:Deltas}
\end{figure}

% It is worht noting that $\Delta_i \le O(\sqrt{\log m})$, since by definition $|f_j - f_{j + 1}| / q_\ell \le \sqrt{\log m}$ for all $j$. 

We can now define the \defn{frame potential} $\Psi(F)$ of an $\ell$-frame (or refined $\ell$-frame) $F = \langle f_0, f_1, \ldots, f_{2^\ell}\rangle$ to be
$$\psi(F) = \sum_{0 < i < 2^\ell} |\Delta_i(F)| d_\ell / \log^{0.25} m.$$

The frame potential $\Psi(F)$ has two nice properties. Property 1 is that the sum of the potentials across the $\ell$-frames for a path $P$ is always at most $O(m \log^{0.75} m)$. This property is quite nontrivial and will be shown in Lemma \ref{lem:potential2}. Property 2 is that the number of bits needed to encode the compressed form of a given $\ell$-frame (or refined $\ell$-frame) is at most
$$O\left(2^\ell + \sum_i \log(1 + |\Delta_i(F)|)\right) \le O\left(2^\ell + \|\Delta_i(F)\|_1\right) \le O\left(2^\ell + \Psi(F) \cdot \frac{\log^{0.25} m}{d_\ell}\right).$$
Property 2 is a trivial consequence of the fact that $\Delta(F)$ encodes all information needed to recover $F$.

It is worth taking a moment to understand why these properties are useful. Property 1 tells us that it is okay to bound \eqref{eq:realunion} instead of \eqref{eq:dreamunion}, since for any given path $P$, we have $\sum_\ell \Psi(\frame'_\ell(P)) = O(m \log^{0.75} m)$. Property 2, on the other hand, tells us why we should be hopeful that a good bound on \eqref{eq:realunion} is possible. Since $|\mathcal{F}_\ell| = 2^{\Theta(2^\ell \log \log m)}$, the vast majority of frames $F \in \mathcal{F}_\ell$ must have compressed forms that require $\Omega(2^\ell \log \log m)$ bits to encode. By Property 2, these frames must also have large $\Psi$ potential (if we calculate it out, their potentials must be at least $\Omega(m \log \log m / \log^{0.25} m)$). In other words, for almost all $\ell$-frames $F$ and $(F', T) \in \mathcal{T}(F)$, the quantity $\gain(F, F', T) - \Phi(T) - \gamma \Psi(F)$ is actually much smaller than the quantity $\gain(F, F', T) - \Phi(T)$. This is why we will be able to bound \eqref{eq:realunion} even though we were not able to directly bound \eqref{eq:dreamunion}.

We now restate Property 1 as a lemma and prove it.
\begin{lemma}
Consider a path $P$, and for each $\ell \in [1, \overline{\ell}]$ let $F_\ell = \frame_\ell(P)$. Then,
\begin{equation}\sum_{\ell = 1}^{\overline{\ell}} \psi(F_\ell)  = O(m \log^{0.75} m).
\label{eq:Psisum1}
\end{equation}
\label{lem:potential2}
\end{lemma}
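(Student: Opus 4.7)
The plan is to adapt the rigid-path counting machinery from the proof of Lemma~\ref{lem:potential1} to the potential $\psi$. Let $m'$ be the smallest power of two at least $m$, let $\mathcal{P}$ denote the set of monotone rigid paths in the $m' \times m'$ grid, and for each $\ell$ let $\mathcal{P}_\ell = \{Q \in \mathcal{P} : \frame_\ell(Q) = F_\ell\}$. These sets form a nested decreasing chain with $\log|\mathcal{P}_0| \le \log\binom{2m'}{m'} = O(m)$. The target is the per-level estimate
\[
\log|\mathcal{P}_{\ell-1}| - \log|\mathcal{P}_\ell| \ge \Omega\!\left(\psi(F_\ell)/\log^{0.75} m\right),
\]
which, when summed telescopically, immediately gives $\sum_\ell \psi(F_\ell)/\log^{0.75} m \le O(\log|\mathcal{P}_0|) = O(m)$, yielding $\sum_\ell \psi(F_\ell) = O(m\log^{0.75} m)$.

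To establish the per-level inequality, partition $\mathcal{P}_{\ell-1}$ according to the $(\ell-1)$-coordinate vector $C = \coord_{\ell-1}(Q)$, mirroring the inner argument in Lemma~\ref{lem:potential1}. For fixed $C$, consider a uniformly random $Q \in \mathcal{P}(C)$; conditioned on $C$, the path decomposes into independent rigid sub-paths between consecutive level-$(\ell-1)$ diagonals, so the odd-indexed level-$\ell$ crossings are mutually independent. By a Chernoff bound for negatively associated random variables~\cite{wajc2017negative}, each such crossing concentrates around the midpoint of its two flanking level-$(\ell-1)$ crossings with standard deviation $\Theta(\sqrt{d_\ell})$. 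Since $\Delta_{2i+1}(F_\ell)$ measures, in multiples of $q_\ell$, the deviation of the target value $f_{2i+1}$ from this midpoint, we obtain (using $q_\ell^2/d_\ell = d_\ell/\log m$)
\[
\Pr\!\left[\round(\rho^{(\ell)}_{2i+1}, q_\ell) = f_{2i+1}\right] \le 2^{-\Omega(\Delta_{2i+1}(F_\ell)^{2}\, d_\ell/\log m)}.
\]
Because each $\Delta_i$ is an integer, $\Delta_i^2 \ge |\Delta_i|$, so multiplying the per-coordinate bounds over the $2^{\ell-1}$ independent odd coordinates and taking logarithms yields
\[
-\log\Pr\!\left[Q \in \mathcal{P}_\ell \mid Q \in \mathcal{P}(C)\right] \ge \Omega\!\left(\sum_i |\Delta_i(F_\ell)| \cdot d_\ell/\log m\right) = \Omega(\psi(F_\ell)/\log^{0.75} m),
\]
as required.

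The main obstacle is carefully justifying the Chernoff concentration itself. The odd-indexed crossing $\rho^{(\ell)}_{2i+1}$ is determined by the number of rightward versus upward unit steps in the corresponding rigid sub-path, a quantity that is a sum of $\Theta(d_\ell)$ indicator variables; these are not independent but only negatively associated (due to the fixed total step count within each sub-path). The Chernoff bound for negatively associated variables gives the required sub-Gaussian tail with variance $\Theta(d_\ell)$. A secondary delicate point arises at very coarse levels where $d_\ell > \log m$ (so $\sqrt{d_\ell} > q_\ell$) and the ``Gaussian cell'' interpretation breaks down; however, in this regime the Chernoff exponent $\Delta_i^2 \, d_\ell/\log m$ already exceeds $\Omega(|\Delta_i|)$ for every integer $|\Delta_i| \ge 1$, so the linear-in-$|\Delta_i|$ bound we actually need continues to hold without further adjustment.
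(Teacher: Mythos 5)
The key per-level inequality you aim for, namely
\[
\log|\mathcal{P}_{\ell-1}| - \log|\mathcal{P}_\ell| \ge \Omega\bigl(\psi(F_\ell)/\log^{0.75} m\bigr),
\]
is false, and this is where the argument breaks. Your counting step controls only the \emph{odd}-indexed second differences $\Delta_{2i+1}(F_\ell)$: given $\coord_{\ell-1}(Q)=C$, only the odd level-$\ell$ crossings carry fresh randomness, so the product of per-coordinate tail bounds yields $-\log\Pr[Q\in\mathcal{P}_\ell\mid Q\in\mathcal{P}(C)] \ge \Omega\bigl(\sum_{i\text{ odd}}\Delta_i(F_\ell)^2\, d_\ell/\log m\bigr)$, not a bound over all $i$. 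But $\psi(F_\ell)$ also sums $|\Delta_{2j}(F_\ell)|$ over the even indices, and these live on diagonals that already appeared at level $\ell-1$ (and coarser), where the crossing is fixed by the conditioning on $C$. Concretely, take the path that is a straight segment from $(0,0)$ to the middle diagonal at height $aq_1$ (with integer $a\le\sqrt{\log m}$, so the slope constraint holds) and another straight segment down to the end. Then every odd $\Delta_i(F_\ell)$ is exactly $0$ at every level, while $\Delta_{2^{\ell-1}}(F_\ell)\approx 2a$ at every level $\ell\ge 1$, so $\psi(F_\ell)\approx 2a\,d_\ell/\log^{0.25}m$ is large while the left-hand side is essentially $0$. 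The difficulty is that $\psi$ re-charges the same kink once per level; the information-theoretic cost of that kink was paid only once, at the coarsest level where it was introduced.

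The paper's proof handles exactly this re-charging by deriving the recursion $\Psi(F_{\ell+1})\le\frac{1}{2}\Psi(F_\ell)+O(m/\log^{0.25}m+\Phi(T_\ell))$ (using the identity $\|\Delta(A)\|_1\le\|\Delta(B)\|_1+O(\|A-B\|_1/q_\ell)$ together with the exact computation $\|\Delta(E_{\ell+1})\|_1=\|\Delta(F'_\ell)\|_1$). The $\frac{1}{2}$ factor, which comes from $d_{\ell+1}=d_\ell/2$, turns the geometric sequence of repeated charges for each kink into an $O(1)$ overall factor; the ``new'' contribution at each level is then exactly $\Phi(T_\ell)$, and the bound follows by plugging in Lemma~\ref{lem:potential1}. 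Your rigid-path counting is essentially the content of Lemma~\ref{lem:potential1}, so the missing piece is precisely the recursive contraction argument that converts a bound on $\sum_\ell\Phi(T_\ell)$ into one on $\sum_\ell\Psi(F_\ell)$.
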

\begin{proof}
% We begin by comparing the quantities $\|\Delta(F_\ell)\|_1$ and $\|\Delta(F_{\ell + 1})\|_1$ for a given $\ell < \overline{\ell}$. 

For $\ell < \overline{\ell}$, let $F'_\ell = \frame'_\ell(P)$ and define $E_{\ell + 1}$ to be the $(\ell + 1)$-frame whose implied path is the same as the implied path for $F'_\ell$. Finally, let $T_\ell = \langle t_0, t_2, \ldots, t_{2^{\ell} - 1}\rangle = \trans_\ell(P)$. We will show that, for $\ell < \overline{\ell}$:
\begin{equation}
    \|\Delta(F'_\ell)\|_1 \le \|\Delta(F_\ell)\|_1 + O(2^\ell),
    \label{eq:psitransform1}
\end{equation}
\begin{equation}
    \|\Delta(E_{\ell + 1})\|_1 = \|\Delta(F'_\ell)\|_1,
    \label{eq:psitransform2}
\end{equation}
\begin{equation}
    \|\Delta(F_{\ell + 1})\|_1 \le \|\Delta(E_{\ell + 1})\|_1 + O\left(2^\ell + \sum_i t_i\right).
    \label{eq:psitransform3}
\end{equation}

Equations \eqref{eq:psitransform1} and \eqref{eq:psitransform3} can be proven together, as they are both special cases of the following identity: for any two $\ell$-frames $A$ and $B$,
\begin{equation}
    \|\Delta(A)\|_1 \le \|\Delta(B)\|_1 + O\left(\frac{\|A - B\|_1}{q_\ell}\right).
    \label{eq:deltadelta}
\end{equation}
Indeed, if we expand out $\|\Delta(A)\|_1 - \|\Delta(B)\|_1$, we get 
\begin{align*}
\|\Delta(A)\|_1 - \|\Delta(B)\|_1 & =  \sum_i \frac{|(A_{i + 1} - A_i) - (A_i - A_{i - 1})| - |(B_{i + 1} - B_i) - (B_i - B_{i - 1})|}{q_\ell} \\
& \le \sum_i \frac{|A_{i + 1} - B_{i + 1}| + |A_i - B_i| + |A_i - B_i| + |A_{i - 1} - B_{i - 1}|}{q_\ell} \\
& \le \frac{4\|A - B\|_1}{q_\ell}.
\end{align*}
This proves \eqref{eq:deltadelta}, which then implies \eqref{eq:psitransform1} and \eqref{eq:psitransform2}.

Now, to prove \eqref{eq:psitransform2}, let $e_i$ denote the $i$-th coordinate of $E_{\ell + 1}$ and let $f
_i$ denote the $i$-th coordinate of $F_\ell'$. Observe that for even coordinates $2i$, we have $e_{2i} = f'_i$, and that for odd coordinates $2i + 1$, we have $e_{2i + 1} = \frac{1}{2} f'_i + \frac{1}{2} f'_{i + 1}$. It follows that
\begin{align*}
    \Delta_{2i}(E_{\ell + 1}) & = \frac{e_{2i + 1} + e_{2i - 1} - 2e_{2i}}{q_{\ell + 1}}\\
         & = \frac{\frac{1}{2} f'_{i} + \frac{1}{2} f'_{i + 1} + \frac{1}{2} f'_{i - 1} + \frac{1}{2} f'_{i} - 2f_{i}}{q_{\ell + 1}}\\
         & = \frac{\frac{1}{2} f'_{i + 1} + \frac{1}{2} f'_{i - 1} - f'_{i}}{q_{\ell + 1}}\\
         & = \frac{f'_{i + 1} + f'_{i - 1} - 2f'_{i}}{q_{\ell}}\\
         & = \Delta_{i}(F'_{\ell}),
\end{align*}
and that
\begin{align*}
    \Delta_{2i + 1}(E_{\ell + 1}) & = \frac{e_{2i + 2} + e_{2i} - 2e_{2i + 1}}{q_{\ell + 1}}\\
         & = \frac{f'_{i + 1} + f'_{i} - 2 \cdot \frac{1}{2}  (f'_i + f'_{i + 1})}{q_{\ell + 1}}\\
         & = 0,
\end{align*}
which together imply \eqref{eq:psitransform2}.

Combining \eqref{eq:psitransform1}, \eqref{eq:psitransform2}, and \eqref{eq:psitransform3}, we have
\begin{align*}
   \Psi(F_{\ell + 1}) & = \| \Delta(F_{\ell + 1})\|_1 d_{\ell + 1} / \log^{0.25}m \\
    & \le \| \Delta(F_{\ell})\|_1 d_{\ell +1}/ \log^{0.25}m + O(2^\ell d_{\ell +1} / \log^{0.25} m) + O\left(\sum_i t_i d_{\ell +1} / \log^{0.25} m \right) \\
        & \le \frac{1}{2} \| \Delta(F_{\ell})\|_1 d_\ell / \log^{0.25}m + O(2^\ell d_\ell / \log^{0.25} m) + O\left(\sum_i t_i d_\ell / \log^{0.25} m \right) \\
    & \le \frac{1}{2} \Psi(F_{\ell}) + O\left(m / \log^{0.25} + \Phi(T_\ell)\right). \\
\end{align*}

Since $\Psi(F_0) = 0$, it follows that
\begin{align*}
    \sum_{\ell = 0}^{\overline{\ell}}  \Psi(F_{\ell}) & \le \sum_{\ell = 1}^{\overline{\ell}} O\left(m / \log^{0.25} m +\Phi(T_\ell)\right) \cdot \sum_{j \ge 0} \frac{1}{2^j} \\
    & = \sum_{\ell = 1}^{\overline{\ell}} O\left(m / \log^{0.25} m+ \Phi(T_\ell)\right) \\
    & \le O(m \log^{0.25} m). \tag{by Lemma \ref{lem:potential1}} \\
\end{align*}
This establishes \eqref{eq:Psisum1}, as desired.
\end{proof}

Next, we use Property 2 to obtain a bound on \eqref{eq:realunion}.
\begin{lemma}
Let $\ell < \overline{\ell}$, and let $\mathcal{F}$ be the set of all triples $(F, F', T)$, where $F$ is an $\ell$-frame, $F'$ is a refined $\ell$-frame compatible with $F$, and $T$ is an $\ell$-transition vector compatible with $F'$. Then, there exists a positive constant $\gamma$ such that for any $\alpha$ that is at least a sufficiently large positive constant, we have
$$\Pr\left[\max_{(F, F', T) \in \mathcal{F}} \left(\gain(F, F', T) - \Phi(T) - \gamma \psi(F)\right) \ge \alpha m / \log^{0.25} m\right] \le 2^{-\Omega(\alpha 2^\ell)}.$$ 
\label{lem:maxgainbound}
\end{lemma}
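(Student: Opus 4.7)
The plan is to carry out a stratified union bound over $\ell$-frames, using Lemma \ref{lem:gainfull} as the per-frame tail bound. A naive union bound fails because $|\mathcal{F}_\ell| = 2^{\Theta(2^\ell \log \log m)}$ dominates the per-frame failure probability of roughly $2^{-\alpha 2^\ell}$; the purpose of the penalty term $-\gamma \psi(F)$ is precisely to pay for the extra entropy needed to encode $\ell$-frames of high potential. This is exactly the use case suggested by Property 2 of the frame potential.

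Concretely, I would stratify $\mathcal{F}_\ell$ by the $\ell_1$-norm of the compressed form. For each integer $k \ge 0$, set
$$\mathcal{F}_\ell^{(k)} = \{F \in \mathcal{F}_\ell : k \cdot 2^\ell \le \|\Delta(F)\|_1 < (k+1) \cdot 2^\ell\},$$
which corresponds (via $\psi(F) = \|\Delta(F)\|_1 \, d_\ell / \log^{0.25} m$ and $d_\ell = m/2^\ell$) to $\psi(F) \in [k \cdot m/\log^{0.25} m, (k+1) \cdot m/\log^{0.25} m)$. By Property 2, any $F \in \mathcal{F}_\ell^{(k)}$ is determined by a bit string of length $O(2^\ell + \|\Delta(F)\|_1) = O((k+1) 2^\ell)$, so $|\mathcal{F}_\ell^{(k)}| \le 2^{c_1 (k+1) 2^\ell}$ for some absolute constant $c_1$.

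For any fixed $F \in \mathcal{F}_\ell^{(k)}$, the event that some $(F', T) \in \mathcal{T}(F)$ achieves $\gain(F, F', T) - \Phi(T) - \gamma \psi(F) \ge \alpha m / \log^{0.25} m$ implies $\gain(F, F', T) - \Phi(T) \ge (\alpha + \gamma k) m / \log^{0.25} m$. Applying Lemma \ref{lem:gainfull} with parameter $\alpha + \gamma k$ (which is at least a sufficiently large constant whenever $\alpha$ is), this event has probability at most $2^{-c_2 (\alpha + \gamma k) 2^\ell}$ for some absolute constant $c_2 > 0$. Union bounding over $\mathcal{F}_\ell^{(k)}$ gives
$$2^{c_1(k+1)2^\ell - c_2(\alpha + \gamma k)2^\ell}.$$
Choosing $\gamma$ to be any constant exceeding $2c_1/c_2$ makes the exponent at most $-\tfrac{c_2}{2}(\alpha + k) 2^\ell + O(2^\ell)$, which (for $\alpha$ large enough) is bounded by $-\Omega((\alpha + k) 2^\ell)$.

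Finally, summing over $k \ge 0$ yields a geometric series in $k$, giving the claimed bound
$$\sum_{k \ge 0} 2^{-\Omega((\alpha + k) 2^\ell)} = 2^{-\Omega(\alpha 2^\ell)}.$$
The only delicate point in the argument is the choice of $\gamma$, which must be large enough relative to the encoding constant in Property 2 to beat the entropy of high-potential frames; since $\gamma$ is a universal constant independent of $\alpha, \ell, m$, this is a routine constant-tuning step and not a real obstacle. All other ingredients (Property 2, Lemma \ref{lem:gainfull}, and the partitioning by $\|\Delta(F)\|_1$) are already in place.
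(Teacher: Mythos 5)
Your proposal is correct and follows essentially the same route as the paper's proof: both stratify $\ell$-frames by $\|\Delta(F)\|_1$, use Property 2 to bound the number of frames in each stratum, apply Lemma \ref{lem:gainfull} with an $\alpha$-parameter boosted by the potential of the stratum, and close with a geometric sum. The only cosmetic difference is that you bin $\|\Delta(F)\|_1$ into intervals of width $2^\ell$ (indexed by $k$), whereas the paper sums over exact values $q = \|\Delta(F)\|_1$; the constant-tuning of $\gamma$ against the encoding constant is identical in spirit.
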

\begin{proof}
Fix $\ell < \overline{\ell}$, and for $q \ge 0$ let $\mathcal{C}_{q}$ be the set of $\ell$-frames $F$ satisfying $\|\Delta(F)\|_1 = q$. Recall Property 2 from earlier: that, since $\Delta(F)$ encodes $F$, and since we can write $\Delta(F)$ in unary using
$$O(|\Delta(F)\|_1 + 2^\ell)$$
bits, it follows that we can encode $F$ itself in $O(\|\Delta(F)\|_1 + 2^\ell)$ bits. We therefore have that
\begin{equation}|\mathcal{C}_q| \le 2^{O(q) + \ell}.
\label{eq:Csize}
\end{equation}

For a given $\ell$-frame, define $\mathcal{T}(F)$ to be the set of pairs $(F', T)$ where $F'$ is a refined $\ell$-frame, $T$ is an $\ell$-transition vector, and $F, F', T$ are compatible with each other. Then we can bound
\begin{align*}
&\Pr\left[\max_{(F, F', T) \in \mathcal{F}} \left(\gain(F, F', T) - \Phi(T) - \gamma \psi(F)\right) \ge \alpha m / \log^{0.25} m\right] \\
& \le \sum_{q \ge 0} \sum_{F \in \mathcal{C}_q} \Pr\left[\max_{(F', T) \in \mathcal{T}(F)} \left(\gain(F, F', T) - \Phi(T) - \gamma \psi(F)\right) \ge \alpha m / \log^{0.25} m \right] \\
& \le \sum_{q \ge 0} \sum_{F \in \mathcal{C}_q} \Pr\left[\max_{(F', T) \in \mathcal{T}(F)} \left(\gain(F, F', T) - \Phi(T) - \gamma \|\Delta(F)\|_1 d_\ell / \log^{0.25} m\right) \ge \alpha m / \log^{0.25} m \right] \\
& \le \sum_{q \ge 0} \sum_{F \in \mathcal{C}_q} \Pr\left[\max_{(F', T) \in \mathcal{T}(F)} \left(\gain(F, F', T) - \Phi(T) - \gamma \frac{\|\Delta(F)\|_1}{2^\ell} m / \log^{0.25} m\right) \ge \alpha m / \log^{0.25} m \right] \\
& =  \sum_{q \ge 0} \sum_{F \in \mathcal{C}_q} \Pr\left[\max_{(F', T) \in \mathcal{T}(F)} \left(\gain(F, F', T) - \Phi(T)\right) \ge (\alpha m + \gamma q / 2^\ell) / \log^{0.25} m \right] \\
& \le \sum_{q \ge 0} \sum_{F \in \mathcal{C}_q} \exp(-\Omega(\alpha 2^\ell + \gamma q)) \tag{by Lemma \ref{lem:gainfull}} \\
& \le \sum_{q \ge 0} 2^{O(q) + \ell} \exp(-\Omega(\alpha 2^\ell + \gamma q)) \tag{by \eqref{eq:Csize}} \\
& \le \sum_{q \ge 0} \exp(-\Omega(\alpha 2^\ell + \gamma q)) \tag{since $\alpha, \gamma$ at least sufficiently large constants} \\
& \le \exp(-\Omega(\alpha 2^\ell)).
\end{align*}
\end{proof}

% Finally, we can obtain a bound on \eqref{eq:realunion2}.
% \begin{lemma}
% Let $\ell < \overline{\ell}$, and let $\mathcal{F}$ be the set of all triples $(F, F', T)$, where $F$ is an $\ell$-frame, $F'$ is a refined $\ell$-frame compatible with $F$, and where $T$ is an $\ell$-transition vector compatible with $F'$. Then, there exists a positive constant $\gamma$ such that
% $$\Pr\left[\max_{(F, F', T) \in \mathcal{F}} \left(\gain(F, F', T) - \Phi(T) - \gamma \psi(F)\right) \ge \alpha m / \log^{0.25} m\right] \le 2^{O(\log \log m)^2} \cdot 2^{-\Omega(\alpha 2^\ell)}.$$ 
% \label{lem:maxgainbound}
% \end{lemma}
% \begin{proof}
% The result follows from Lemma \ref{lem:classbound} and from \eqref{eq:classnum}, which tells us that there are at most $2^{(1 + \log \log m)^2} $ possible classes that an $\ell$-frame can be in.
% \end{proof}

At this point, we are nearly ready to prove the main result of the section, that is, that the expected maximum surplus of any path is $O(m \log^{0.75} m)$. We just need two more lemmas. The first gives us a baseline for the surplus of the trivial straight-line path, and the second bounds the maximum possible difference between $\surplus(\frame_{\overline{\ell}}(P))$ and $\surplus(P)$ for any path. \todo{$(x, y)$ vs $x$}

\begin{lemma}
    Let $P_0$ be the straight-line path from $(0, 0)$ to $(m, m)$. Then, for $\alpha \ge 1$,
    $$\Pr\left[\surplus(P_0) \ge \alpha m \sqrt{\log m} \right] \le e^{-\Omega(\alpha \log m)}.$$
    \label{cor:P0}
\end{lemma}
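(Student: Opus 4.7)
The plan is to apply Core Fact 2 (Lemma \ref{lem:fact2}) directly to the triangular region $G$ beneath $P_0$. First I would note that $\surplus(P_0) = B - R$, where $B$ and $R$ are the numbers of blue and red dots in $G$; since $G$ is a right triangle with legs of length $m$, its area is $A = m^2/2$. This reduces the lemma to bounding $\Pr[B - R \geq \alpha m \sqrt{\log m}]$, which is at most $\Pr[|B - R| \geq \alpha m \sqrt{\log m}]$.

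Next I would set $\kappa = \alpha m \sqrt{\log m}$ and invoke Lemma \ref{lem:fact2} to get
$$\Pr[|B - R| \geq \kappa] \leq e^{-\Omega(\kappa^2 / A)} + e^{-\Omega(\kappa)}.$$
For the first term, $\kappa^2/A = 2\alpha^2 \log m$, so $e^{-\Omega(\kappa^2/A)} = e^{-\Omega(\alpha^2 \log m)} \leq e^{-\Omega(\alpha \log m)}$ since $\alpha \geq 1$. For the second term, since $m$ is large enough that $m \sqrt{\log m} \geq \sqrt{\log m} \geq \log m / \sqrt{\log m}$, in fact $m\sqrt{\log m} \geq \log m$ holds trivially for the relevant regime, so $e^{-\Omega(\kappa)} = e^{-\Omega(\alpha m \sqrt{\log m})} \leq e^{-\Omega(\alpha \log m)}$. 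Combining, we obtain the desired bound $e^{-\Omega(\alpha \log m)}$.

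There is essentially no obstacle here; the lemma is a direct consequence of the Chernoff-type bound for Poisson differences encoded in Core Fact 2, with the triangular region's area $\Theta(m^2)$ providing precisely the right variance scaling to match $\alpha m \sqrt{\log m}$ deviations. The purpose of stating the bound in this form is presumably to provide a baseline starting point in a subsequent telescoping argument (combining with the refinement/transition gains through the levels $\ell = 0, 1, \ldots, \overline{\ell}$) rather than to be deep in its own right.
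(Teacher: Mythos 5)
Your proof is correct and is essentially identical to the paper's: both apply Core Fact 2 (Lemma \ref{lem:fact2}) to the triangular region of area $A = m^2/2$ under $P_0$ with $\kappa = \alpha m\sqrt{\log m}$, and then observe that both resulting exponential terms are at most $e^{-\Omega(\alpha \log m)}$.
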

\begin{proof}
    By Lemma \ref{lem:fact2}, with $A = m^2 / 2$, 
    \begin{align*}
        \Pr\left[\surplus(P_0) \ge \alpha m \sqrt{\log m} \right] & \le 2^{-\Omega(\alpha^2 \log m)} + 2^{-\Omega(\alpha m \sqrt{\log m})} \\
        & \le e^{-\Omega(\alpha \log m)}.
    \end{align*}
\end{proof}

\begin{lemma}
Let $\mathcal{P}$ be the set of all paths. For any $\alpha > 0$ that is at least a sufficiently large positive constant,
$$\Pr\left[\max_{P \in \mathcal{P}} |\surplus(\frame_{\overline{\ell}}(P)) - \surplus(P)| \ge \alpha m \log^{0.75} m\right] \le \exp\left(-\Omega\left(\alpha m \log^{0.75} m\right)\right).$$
\label{lem:finalrefinement}
\end{lemma}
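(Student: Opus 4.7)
The plan is to fix an $\overline{\ell}$-frame $F$, bound $\max_P |\surplus(P) - \surplus(\operatorname{Path}(F))|$ uniformly over all $P$ with $\frame_{\overline{\ell}}(P) = F$, and then take a union bound over $F$. The key observation is that at level $\overline{\ell}$ the granularities satisfy $q_{\overline{\ell}} = 1$ and $d_{\overline{\ell}} = \sqrt{\log m}$, so within each strip between consecutive diagonals both the implied path and any compatible $P$ are confined to a rectangle of area only $O(\log m)$.

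Concretely, for each $j \in \{0, \ldots, 2^{\overline{\ell}} - 1\}$, let $R_j(F)$ denote the axis-aligned bounding box of the union of two diagonal segments of length $O(q_{\overline{\ell}}) = O(1)$, one centered on $\operatorname{Path}(F)$'s crossing of $D^{(\overline{\ell})}_j$ and the other on its crossing of $D^{(\overline{\ell})}_{j+1}$. Since $|F_{j+1} - F_j| \le d_{\overline{\ell}}/q_{\overline{\ell}} = O(\sqrt{\log m})$, each $R_j(F)$ has both dimensions $O(\sqrt{\log m})$ and thus area $O(\log m)$. Any path $P$ compatible with $F$ crosses each diagonal within $q_{\overline{\ell}}$ of the implied-path crossing, so by monotonicity both $\operatorname{Path}(F)$ and $P$, restricted to the strip between $D^{(\overline{\ell})}_j$ and $D^{(\overline{\ell})}_{j+1}$, stay inside $R_j(F)$. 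Consequently, the symmetric difference of the ``below $P$'' and ``below $\operatorname{Path}(F)$'' regions is contained in $\bigcup_j R_j(F)$, yielding
$$|\surplus(P) - \surplus(\operatorname{Path}(F))| \le N(F),$$
where $N(F)$ is the total number of blue and red dots falling in $\bigcup_j R_j(F)$. Because the $R_j(F)$ lie in disjoint strips, $N(F)$ is a Poisson random variable with mean $O(2^{\overline{\ell}} \cdot \log m) = O(m\sqrt{\log m})$, and for $\alpha$ at least a sufficiently large constant the threshold $\alpha m \log^{0.75} m$ exceeds this mean by a factor $\Omega(\alpha \log^{0.25} m)$. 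The Chernoff bound for Poisson variables then delivers
$$\Pr\left[N(F) \ge \alpha m \log^{0.75} m\right] \le \exp\left(-\Omega\left(\alpha m \log^{0.75} m\right)\right).$$

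To finish, the number of valid $\overline{\ell}$-frames is at most $(O(\sqrt{\log m}))^{2^{\overline{\ell}}} = \exp(O(m \log \log m / \sqrt{\log m})) = \exp(o(\alpha m \log^{0.75} m))$, so a union bound over $F$ preserves the tail $\exp(-\Omega(\alpha m \log^{0.75} m))$. The main obstacle is securing this strong tail: a strip-by-strip application of Core Fact 3 combined via Bernstein's inequality would give only $\exp(-\Omega(\alpha m / \log^{0.25} m))$, too weak to absorb the frame-count factor. The resolution is to bound the in-strip discrepancy by the cruder quantity ``total dot count in $R_j(F)$'' instead of ``max path surplus inside $R_j(F)$,'' turning the aggregate into a genuine Poisson random variable whose exponential tails are sharp enough for the union bound to go through.
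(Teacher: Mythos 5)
Your proof is correct and takes essentially the same approach as the paper's: fix an $\overline{\ell}$-frame $F$, observe that any compatible $P$ and the implied path differ only within a region of area $O(m\sqrt{\log m})$ near the implied path, bound the surplus difference by the (Poisson) dot count in that region, and union-bound over the $\exp(o(m))$ frames. The only cosmetic difference is that the paper packages the tube-around-path argument into Core Fact 3 applied once to a single region $G_F$ of Euclidean-width $O(d_{\overline{\ell}})$, while you assemble the tube from per-strip bounding boxes $R_j(F)$ (and your parenthetical that the $R_j(F)$ lie in disjoint strips is not quite right, but unnecessary---the dot count in any fixed region of that area is Poisson regardless).
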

\begin{proof}
Let $F$ be an $\overline{\ell}$-frame, let $Q$ be the implied path for $F$, and let $G_F \subseteq [m] \times [m]$ be the geometric region consisting of all $(x, y)$ that have Euclidean distance at most $4d_{\overline{\ell}} = O(\sqrt{\log m})$ to the nearest point on $Q$. If a path $P$ satisfies $\frame_{\overline{\ell}}(P) = F$, then the path $P$ is necessarily contained in the region $G_F$. Since $G_F$ has total area $O(m d_\ell) = O(m \sqrt{\log m})$, it follows from Lemma \ref{lem:fact3} (i.e., Core Fact 3) that
\begin{equation}
\Pr\left[ \max_{P \mid \frame_{\overline{\ell}}(P) = F} |\surplus(F) - \surplus(P)| \ge \alpha m \log^{0.75} m \right] \le 2^{-\Omega(\alpha m \log^{0.75} m)}.
    \label{eq:chernoff_second}
\end{equation}
Defining $\mathcal{F}$ to be the set of all $\overline{\ell}$-frames, we can take a union bound to get
\begin{align*}
&    \Pr \left[\max_{P \in \mathcal{P}} |\surplus(\frame_{\overline{\ell}}(P)) - \surplus(P)| \ge \alpha m \log^{0.75}  m \right] \\
& \le |\mathcal{F}|  \cdot 2^{-\Omega(\alpha m \log^{0.75}  m)} \\
& \le 2^{o(m) -\Omega(\alpha m \log^{0.75}  m)} \\
& \le 2^{-\Omega(\alpha m \log^{0.75}  m)},
\end{align*}
as desired.
\end{proof}

Finally, we can prove the main result of the section.
\begin{theorem}
Let $\mathcal{P}$ be the set of all paths. Then, for any $\alpha > 0$ that is at least a sufficiently large positive constant,
$$\Pr\left[\max_{P \in \mathcal{P}} \surplus(P) \ge \alpha m \log^{0.75} m \right] \le 2^{-\Omega(\alpha \log m)}.$$
\label{thm:surplusupper}
\end{theorem}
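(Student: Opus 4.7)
The plan is to decompose the surplus of an arbitrary path via the sequence of frames constructed in this section, and then to bound each piece of the decomposition using the machinery already developed. For any path $P$, observe that $\frame_0(P) = \langle 0, 0 \rangle$ has implied path equal to the main-diagonal path $P_0$, so $\surplus(\frame_0(P)) = \surplus(P_0)$. Telescoping through the levels then gives the identity
$$\surplus(P) = \surplus(P_0) + \sum_{\ell = 0}^{\overline{\ell} - 1} \gain(\frame_\ell(P), \frame'_\ell(P), \trans_\ell(P)) + \bigl( \surplus(P) - \surplus(\frame_{\overline{\ell}}(P)) \bigr).$$
I would take the maximum over $P \in \mathcal{P}$ on both sides and bound the three summands separately, then union-bound.

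For the first summand, Lemma \ref{cor:P0} yields $\surplus(P_0) \le \alpha m \sqrt{\log m}$ except with probability $2^{-\Omega(\alpha \log m)}$. For the third summand, Lemma \ref{lem:finalrefinement} gives a uniform bound of $\alpha m \log^{0.75} m$ over all $P$ except with probability $2^{-\Omega(\alpha m \log^{0.75} m)}$. Both contributions to the total failure probability are comfortably at most $2^{-\Omega(\alpha \log m)}$.

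The main work is in bounding the middle summand, and this is where a careful per-level scaling becomes important. I would apply Lemma \ref{lem:maxgainbound} at each level $\ell$ with parameter $\alpha_\ell := \alpha \log m / 2^\ell + C$, where $C$ is a sufficiently large absolute constant. The failure probability at level $\ell$ is then $\exp\bigl(-\Omega(\alpha \log m + C \cdot 2^\ell)\bigr)$, and summing over $\ell \in [0, \overline{\ell})$ yields total failure probability $2^{-\Omega(\alpha \log m)}$. On the complementary good event, for every path $P$ and every level $\ell$,
$$\gain(\frame_\ell(P), \frame'_\ell(P), \trans_\ell(P)) \;\le\; \alpha_\ell m / \log^{0.25} m \;+\; \Phi(\trans_\ell(P)) \;+\; \gamma\,\psi(\frame_\ell(P)).$$
Summing over $\ell$ and invoking Lemmas \ref{lem:potential1} and \ref{lem:potential2} to bound the two potential sums gives
$$\sum_{\ell} \gain(\frame_\ell(P), \frame'_\ell(P), \trans_\ell(P)) \;\le\; \sum_\ell \alpha_\ell m / \log^{0.25} m + O(m \log^{0.75} m) \;=\; O(\alpha m \log^{0.75} m),$$
where the first sum is evaluated via a geometric series together with $\overline{\ell} = O(\log m)$. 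Combining the three bounds completes the argument.

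The main obstacle is conceptually this per-level scaling of $\alpha_\ell$. Lemma \ref{lem:maxgainbound} gives per-level failure probability $2^{-\Omega(\alpha_\ell 2^\ell)}$, and to achieve the target $2^{-\Omega(\alpha \log m)}$ at the coarsest level $\ell = 0$, one must take $\alpha_0$ proportional to $\log m$. A naive uniform choice of $\alpha_\ell$ would either blow up the gain bound by a $\log m$ factor or weaken the failure probability to $2^{-\Omega(\alpha)}$; the geometric decay $\alpha_\ell \propto \alpha \log m / 2^\ell$ is exactly what balances these two requirements. Everything else---vertical independence, the Chernoff-style bounds on refinement and transitional gains, the $\Delta$-encoding of frames used for the union bound over $\ell$-frames, and the two potential-function accounting lemmas---has already been encapsulated in the preceding lemmas, so the final assembly is essentially bookkeeping.
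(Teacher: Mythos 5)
Your proposal is correct and follows essentially the same approach as the paper: the same telescoping decomposition $\surplus(P) = \surplus(P_0) + \sum_\ell \gain(\frame_\ell, \frame'_\ell, \trans_\ell) + (\surplus(P) - \surplus(\frame_{\overline{\ell}}(P)))$, the same three-way split of the failure probability handled by Lemma~\ref{cor:P0}, Lemma~\ref{lem:finalrefinement}, and Lemma~\ref{lem:maxgainbound}, and the same key idea of choosing per-level thresholds $\alpha_\ell$ that decay geometrically in $\ell$ while staying $\Omega(\alpha\log m / 2^\ell)$, then invoking Lemmas~\ref{lem:potential1} and~\ref{lem:potential2} to absorb the potential terms. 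The only difference is cosmetic: you pick $\alpha_\ell = \alpha \log m / 2^\ell + C$ directly, whereas the paper uses $\alpha_\ell = \frac{1}{2}\alpha + \frac{\alpha\log m}{2^{\ell+3}}$ and organizes the bookkeeping around a constraint that the $\alpha_\ell$ average to at most $\alpha$; both yield the same $2^{-\Omega(\alpha\log m)}$ bound after summing the geometric series.
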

\begin{proof}
Let $\alpha > 1$, and let $\alpha_1, \alpha_2, \ldots, \alpha_{\overline{\ell}} \ge 0$ be values that we will select later, and that have average at most $\alpha$. Let $\mathcal{F}_\ell$ be the set of triples $(F, F', T)$ where $F$ is an $\ell$-frame, where $F'$ is a refined $\ell$-frame compatible with $F$, and where $T$ is an $\ell$-transition vector compatible with $F'$. 

Let $c_1$ and $c_2$ be sufficiently large positive constants, where $c_1$ is sufficiently large as a function of $c_2$ (so $c_1 \gg c_2$). By Lemmas \ref{lem:potential1} and \ref{lem:potential2}, we have that
\begin{align}
   & \Pr\left[\max_{P \in \mathcal{P}} \surplus(P) \ge (\alpha + c_1)m \log^{0.75} m \right] \\
   & \le \Pr\left[\max_{P \in \mathcal{P}} \left( \surplus(P) - \sum_{\ell = 0}^{\overline{\ell} - 1} c_2 \Psi(\frame_\ell(P)) + \Phi(\trans_\ell(P))\right)  \ge \alpha m \log^{0.75} m\right].\label{eq:surplusprob}
   \end{align}
   Define $P_0$ to be the straight-line path from $(0, 0)$ to $(m, m)$. Then, we can expand $\surplus(P)$ as
\begin{align*}& \surplus(P_0) + \max_{P \in \mathcal{P}} \sum_{\ell = 0}^{\overline{\ell} - 1}\left(\surplus(\frame_{\ell + 1}(P)) - \surplus(\frame_\ell(P))\right)  + \surplus(P) - \frame_{\overline{\ell}}(P). 
\end{align*}
We can therefore upper bound \eqref{eq:surplusprob} by 
\begin{align*}
   & \Pr\left[\surplus(P_0) \ge  \frac{\alpha}{3} m \log^{0.75 m}\right] + \\
   & + \Pr\Bigg[\max_{P \in \mathcal{P}} \sum_{\ell = 0}^{\overline{\ell} - 1}\left(\surplus(\frame_{\ell + 1}(P)) - \surplus(\frame_\ell(P)) - c_2 \Psi(\frame_\ell(P)) - \Phi(\trans_\ell(P))\right)  \\ & \phantom{foobargoobar} \ge \frac{\alpha}{3} m \log^{0.75} m  \Bigg]  \\
   & + \Pr\left[ \surplus(P) - \frame_{\overline{\ell}}(P) \ge \frac{\alpha}{3} m \log^{0.75 m} \right]. \\
\end{align*}
By Corollary \ref{cor:P0} and Lemma \ref{lem:finalrefinement}, the first and final probabilities are each $e^{-\Omega(\alpha \log m)}$. Thus, we can focus the rest of the proof on the middle probability. We can upper bound the probability by
\begin{align*}
& \Pr\Bigg[\sum_{\ell = 0}^{\overline{\ell} - 1} \max_{(F, F', T) \in \mathcal{F}_\ell} \left(\gain(F, F', T) - c_2 \Psi(F) - \Phi(T)\right) \ge \frac{\alpha}{3} m \log^{0.75} m  \Bigg].  \\
\end{align*}

Using the fact that $\operatorname{avg}(\alpha_i) \le \alpha$, we can further upper bound the probability by
   \begin{align*}
& \le \sum_{\ell = 0}^{\overline{\ell} - 1} \Pr\left[\max_{(F, F', T) \in \mathcal{F}_\ell} \left(\gain(F, F', T) - c_2 \Psi(F) - \Phi(T) \right) > \frac{\alpha_\ell}{3} m / \log^{0.25} m\right].
   \end{align*}
   By Lemma \ref{lem:maxgainbound}, this is at most
   \begin{align*}       
   & \le \sum_{\ell = 0}^{\overline{\ell} - 1}  \exp(-\Omega(\alpha_\ell 2^\ell)). \\
\end{align*}

Now set $\alpha_\ell = \frac{1}{2}\alpha + \frac{\alpha \log m}{2^{\ell + 3}}$. These are valid $\alpha_\ell$s since their average satisfies
\begin{align*}
\frac{1}{\overline{\ell}} \sum_{\ell = 0}^{\overline{\ell}} \alpha_\ell & \le \frac{1}{2} \alpha + \frac{1}{\overline{\ell}} \sum_{\ell = 0}^{\overline{\ell}} \frac{\alpha \log m}{2^{\ell + 3}} \\
& \le  \frac{1}{2} \alpha + \frac{2}{\log m} \sum_{\ell \ge 0} \frac{\alpha \log m}{2^{\ell + 3}} \\
& \le  \frac{1}{2}\alpha + \frac{1}{2} \alpha\\
& \le \alpha. \\
\end{align*}

Plugging the $\alpha_\ell$s in, we get 
\begin{align*}
   & \sum_{\ell = 0}^{\overline{\ell} - 1} \exp(-\Omega(\alpha_\ell 2^\ell)) \\
   & \le  \sum_{\ell = 0}^{\overline{\ell} - 1}  \exp(-\Omega(\alpha 2^\ell + \alpha \log m)) \\
   & \le \sum_{\ell = 0}^{\overline{\ell} - 1} \exp(-\Omega(\alpha 2^\ell + \alpha \log m)) \\
   & \le \exp(-\Omega(\alpha \log m)),
\end{align*}
which completes the proof.
\end{proof}

\begin{corollary}
Let $\mathcal{P}$ be the set of all paths. Then,
    $$\E\left[\max_{P \in \mathcal{P}} \surplus(P)\right] \le O(m \log^{0.75} m).$$
\end{corollary}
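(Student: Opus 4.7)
The plan is a routine tail-to-expectation conversion using Theorem \ref{thm:surplusupper}, so essentially no new ideas are required. Let $S = \max_{P \in \mathcal{P}} \surplus(P)$. The monotonic path that travels from $(0, 0)$ straight up to $(0, m)$ and then right to $(m, m)$ has no dots beneath it and so has surplus $0$; taking the maximum over all paths then forces $S \ge 0$ almost surely. Consequently
$$\E[S] \;=\; \int_0^\infty \Pr[S \ge t]\, dt.$$

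I would then split the integral at $t_0 = C m \log^{0.75} m$, where $C$ is chosen large enough to satisfy the ``sufficiently large positive constant'' hypothesis of Theorem \ref{thm:surplusupper}. The contribution from $[0, t_0]$ is trivially at most $t_0 = O(m \log^{0.75} m)$. On the tail $[t_0, \infty)$, performing the substitution $t = \alpha m \log^{0.75} m$ and invoking Theorem \ref{thm:surplusupper} gives
$$\int_{t_0}^\infty \Pr[S \ge t]\, dt \;\le\; m \log^{0.75} m \int_C^\infty 2^{-\Omega(\alpha \log m)}\, d\alpha \;=\; O\!\left( m / \log^{0.25} m\right),$$
which is $o(m \log^{0.75} m)$. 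Adding the two parts yields $\E[S] = O(m \log^{0.75} m)$, matching the claim.

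The only conceptual care required is the observation that $S$ is pointwise nonnegative, so that the integration really recovers the full expectation from the one-sided tail bound; all the substantive probabilistic work is already packaged inside Theorem \ref{thm:surplusupper}. There is no genuine obstacle at this stage.
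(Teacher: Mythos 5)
Your approach is the standard tail-to-expectation integration and is exactly what the paper implicitly leaves to the reader (the corollary is stated without proof). The argument is correct, with one small slip in the justification of $S \ge 0$: the path $(0,0) \to (0,m) \to (m,m)$ goes up the left edge and across the top, so \emph{every} dot lies beneath it (this is the ``trivial path that covers all points'' referred to in the paper's summary of the lower-bound technique, and its surplus can certainly be negative). The path with no dots beneath it is the opposite one, $(0,0) \to (m,0) \to (m,m)$, which travels along the bottom and then up the right edge; that path has surplus $0$, which is what actually gives $S \ge 0$. With that correction, the split of $\int_0^\infty \Pr[S \ge t]\,dt$ at $t_0 = C m \log^{0.75} m$ and the substitution $t = \alpha m \log^{0.75} m$ go through exactly as you wrote, since $\int_C^\infty 2^{-\Omega(\alpha \log m)}\,d\alpha = O(1/\log m)$ makes the tail contribution $O(m/\log^{0.25} m) = o(m\log^{0.75} m)$.
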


\begin{corollary}
Let $\mathcal{P}$ be the set of all paths. Let $S = \max_{P \in \mathcal{P}} \surplus(P)$. For any positive constant $c_1$, there exists a positive constant $c_2$ such that
    $$\E\left[\max(0, S - c_2 m \log^{0.75} m)\right] \le \frac{1}{m^{c_1}}.$$
\end{corollary}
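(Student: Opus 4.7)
The plan is to derive the corollary directly from Theorem \ref{thm:surplusupper} by integrating the tail bound. Recall that Theorem \ref{thm:surplusupper} provides, for all sufficiently large positive constants $\alpha$, the estimate $\Pr[S \ge \alpha m \log^{0.75} m] \le 2^{-\Omega(\alpha \log m)}$. The key observation is that this tail decays faster than any polynomial in $m$ once $\alpha$ is a sufficiently large constant, so the contribution to the expectation from the ``overflow'' beyond the $c_2 m \log^{0.75} m$ threshold can be made arbitrarily small.

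Concretely, I would use the standard identity $\E[X] = \int_0^\infty \Pr[X \ge t] \, dt$ applied to the nonnegative random variable $X = \max(0, S - c_2 m \log^{0.75} m)$. After the substitution $t = s \cdot m \log^{0.75} m$, this becomes
\begin{equation*}
\E[X] = m \log^{0.75} m \int_0^\infty \Pr\bigl[S \ge (c_2 + s) m \log^{0.75} m\bigr] \, ds.
\end{equation*}
Assuming $c_2$ is at least the large constant required in Theorem \ref{thm:surplusupper}, I would apply the theorem with $\alpha = c_2 + s$, yielding an integrand bounded by $2^{-\Omega((c_2 + s)\log m)}$. The resulting integral factors as $2^{-\Omega(c_2 \log m)} \cdot \int_0^\infty 2^{-\Omega(s \log m)} ds$, and the second factor evaluates to $O(1/\log m)$.

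Putting the pieces together gives
\begin{equation*}
\E[X] \le m \log^{0.75} m \cdot 2^{-\Omega(c_2 \log m)} \cdot O(1/\log m) \le m^{1 - \Omega(c_2)},
\end{equation*}
which is at most $m^{-c_1}$ provided $c_2$ is chosen large enough (depending on $c_1$ and on the hidden constant in the $\Omega(\cdot)$ from Theorem \ref{thm:surplusupper}). There is no real obstacle here: the proof is a routine tail-to-expectation conversion, and the only point requiring attention is making sure to pick $c_2$ large enough both to exceed the ``sufficiently large constant'' threshold required by Theorem \ref{thm:surplusupper} and to drive the final exponent below $-c_1$.
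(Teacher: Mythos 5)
Your proof is correct and is the standard tail-integration argument that the paper implicitly relies on when stating this as an immediate corollary of Theorem \ref{thm:surplusupper}. The only points requiring care are the ones you already flagged: choosing $c_2$ both above the "sufficiently large constant" threshold of the theorem and large enough relative to $c_1$ and the hidden constant in the $\Omega(\cdot)$, and both are satisfiable.
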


\section{From Path Surplus to Insertion Surplus}\label{sec:insertionsurplus}

In this section, we extend our analysis of the Path Surplus Problem (Section \ref{sec:pathsurplus}) in order to get tight bounds for the so-called Insertion Surplus Problem~\cite{bender2022linearfull}. Then, in Section \ref{sec:linprobing}, we will show how to use these bounds in order to perform a tight analysis of the amortized expected complexity of ordered linear-probing hash tables.

Let $U$ be a universe, let $n$ be a parameter, and let $R = O(n)$ be even. Consider a sequence of $R$ operations $o_1, o_2, \ldots, o_{2R}$, alternating between deletions and insertions. The only restriction on the deletions and insertions is that each time an element $x \in U$ is inserted (resp.~deleted), it must subsequently be deleted (resp.~inserted) before it can again be inserted (resp.~deleted).

Define the \defn{Insertion-Surplus Problem} as follows. Let $h$ be a fully random hash function $U \rightarrow[n]$, and let $t \le n$ (one should think of $t = n^{o(1)}$). Define $\mu = \frac{t}{n} \cdot R$. Plot blue/red dots as follows:
\begin{itemize}
\item a blue dot at $(h(x_i), i)$ for each insertion $o_i$ of an element $x_i$;
\item a red dot at $(h(x_i), i)$ for each deletion $o_i$ of an element $x_i$.
\end{itemize}
Now consider the $t \times R$ grid containing dots $(a, b)$ satisfying $a \in [0, t]$ and $b \in [1, R]$. Consider the monotonic paths through the grid, going from $(0, 1)$ to $(t, R + 1)$, and define the surplus of each path, as in previous sections, to be the number of blue dots minus the number of red dots (strictly) beneath the path. Note the only dots that can be beneath the path are those corresponding to insertions/deletions with hashes in $[1, t]$.

The total expected number of blue (or red) dots in the grid is given by $\mu$, so one should think of $\mu$ as being a proxy for $m^2$ in the original Path Surplus Problem (Section \ref{sec:pathsurplus}). We will prove that, for any choice of insertions/deletions $R$, the expected maximum surplus of any path is $O(\sqrt{\mu} \log^{0.75} \mu + t^2 / n)$ (Section \ref{sec:insertionsurplusupper}); and that, in the case where every operation $o_i \in R$ is on a different key, there is also a lower bound of $\Omega(\sqrt{\mu} \log^{0.75 m} - t^2 / n)$ (Section \ref{sec:insertionsurpluslower}).

\subsection{Upper Bound on Insertion Surplus}\label{sec:insertionsurplusupper}

In this subsection, we prove an upper bound of $O(\sqrt{\mu} \log^{0.75} \mu + t^2/n)$ on the expected maximum surplus of any path. To simplify discussion, we assume throughout the subsection that the first operation $o_1$ is a deletion, and the final operation $o_{2R}$ is an insertion. So odd-indexed operations are deletions and even-indexed operations are insertions. (This will matter for parity edge cases in some of our arguments, but will not matter for the final result.)

To make this problem more closely resemble the Path Surplus problem from Section \ref{sec:pathsurplus}, we can make a sequence of three modifications to the problem that have provably negligible (or decreasing) effect on the maximum path surplus:

\paragraph{Modification 1: Poissonization.} The first step is to (slightly) modify the process for generating dots. Rather than using a single hash function $h$, we will now use an infinite sequence $h_1, h_2, \ldots$, and rather than placing a single dot for operation $o_i$, we will generate dots as follows: For each $x \in U$, generate a Poisson random variable $Q_x \sim \Pois(1)$; and for each operation $o_i$ that inserts/deletes $x$ and for each $j \in \{1, 2, \ldots, Q_x\}$, place a blue/red dot (depending if $o_i$ is an insertion/deletion, respectively) at $(h_j(x_i), i)$. In other words, rather than placing $1$ dot per key $x$, we have associated $x$ with a Poisson-random-variable $Q_x$ number of dots, each with different hashes.

This type of Poissonization is a standard trick (see, e.g., discussion in Section 5.3 of \cite{mitzenmacherbook}) for creating independence between bins in balls-to-bins settings. In this case, if we define bins $1, 2, \ldots$ so that bin $i$ consists of the dots in the $i$-th column (i.e., of the form $(i, \cdot)$), then Poissonization guarantees that the bins are independent random variables.

A bit of care is needed to bound the impact of this modification on the maximum path surplus. Here, again, we can use a standard approach for how to think about Poissonization \cite{mitzenmacherbook}. It is possible to create a coupling between the un-Poissonized and the Poissonized versions of the process such that the blue/red dots that land in bins $1, 2, \ldots, t$ are precisely the same in both processes, except for an $O(t / n)$-expected fraction of the dots.\footnote{We can construct the coupling as follows, where for convenience we call a dot \defn{corrupt} if it appears one version of the process but not the other. Call a dot in the Poissonized process \defn{duplicate corrupted} if it is one of at least two dots in $[1, t] \times [1, R + 1]$ that are created by the same operation $o_i$ as each other. The expected fraction of dots that are duplicate corrupted is at most  $O(t / n)$. On the other hand, the probability of placing exactly one dot for $o_i$ in $[1, t] \times [1, R + 1]$ is $\frac{t}{n}$ for the non-Poissonized process and $\frac{t}{n} - O(t^2 / n^2)$ for the Poissonized process (call these \defn{singleton dots}). Thus, by labeling an $O(t / n)$-expected-fraction of the singleton dots in the non-Poissonized process as \defn{singleton corrupt}, we can couple the remaining singleton dots to be the same in both processes. In total, we have labeled at most an $O(t/n)$-expected fraction of the dots that land in $[1, t] \times [1, R + 1]$ as corrupt in each process, and coupled the remaining dots to be the same in both processes, as desired.} It follows that the expected difference in maximum path surplus is at most $O(\mu \cdot t / n) = O(t^2 / n)$.

\paragraph{Modification 2: Smoothing the dot placement.} Let us modify the hash functions $h_1, h_2, \ldots$ to output random real numbers in $(0, n]$ (so if previously $h_i$ produced some integer $\ell$, now it produces a random real number in $(\ell - 1, \ell]$). Moreover, rather than placing the $j$-th blue/red dot for $o_i$ directly at $(h_j(x_i), i)$, we will place it as follows:
\begin{itemize}
    \item If $j$ is odd (i.e., $o_j$ is a deletion), place a red dot at $(h_j(x_i), b)$ where $b$ is uniformly random in $(j, j + 2]$;
    \item If $j$ is even (i.e., $o_j$ is an insertion), place a blue dot at  $(h_j(x_i), b)$ where $b$ is uniformly random in $(j - 1, j + 1]$;
\end{itemize}

\begin{lemma}
    Smoothing the dot placement does not decrease the maximum path surplus.
    \label{lem:smooth1}
\end{lemma}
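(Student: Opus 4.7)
The plan is to exhibit an explicit coupling of the two processes and, inside that coupling, construct a monotonic path $P_s$ in the smoothed process whose surplus equals the surplus of an optimal path $P_u$ in the un-smoothed process. Couple the two processes so that every dot has the same $x$-coordinate in both; in the smoothed process, additionally sample each dot's $y$-coordinate independently and uniformly on its prescribed interval. Both marginals are then correct, so whatever we prove pointwise in this coupling implies the corresponding statement about distributions of the maximum path surplus.

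Let $P_u$ be a max-surplus path in the un-smoothed process, and for each $k$ let $\alpha_k := P_u^{-1}(2k-1)$ denote the $x$-coordinate where $P_u$ crosses the strip boundary $y = 2k-1$, breaking ties by taking the rightmost such $x$. Define $P_s$ so that it agrees with $P_u$ at every strip boundary and, within each strip $y \in (2k-1, 2k+1]$, follows the ``hug-the-top'' $L$-shape: straight up from $(\alpha_k, 2k-1)$ to $(\alpha_k, 2k+1)$, then straight right to $(\alpha_{k+1}, 2k+1)$. This $P_s$ is a valid monotonic path from the lower-left corner of the grid to the upper-right corner.

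The comparison is carried out strip-by-strip. A dot in strip $k$ with $x$-coordinate $h < \alpha_k$ has both $P_u(h), P_s(h) \le 2k-1$, so the dot is not strictly beneath either path; a dot with $h > \alpha_{k+1}$ has both $P_u(h), P_s(h) \ge 2k+1$, so the dot is beneath both paths almost surely. For a dot at $h \in (\alpha_k, \alpha_{k+1}]$, the hug-the-top construction sets $P_s(h) = 2k+1$, so almost surely the dot (whose $y$-coordinate is strictly below $2k+1$) lies beneath $P_s$. On the un-smoothed side, we may assume without loss of generality that $P_u^{-1}(2k) = \alpha_k$ for every $k$: if not, pushing the crossing of $y=2k$ leftward to $\alpha_k$ only adds blue dots beneath the path while preserving the set of red dots beneath, so among all max-surplus paths we may pick one with this property. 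Under this assumption every red dot (at $y = 2k-1$) and blue dot (at $y = 2k$) in $(\alpha_k, \alpha_{k+1}]$ is also beneath $P_u$.

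Hence the per-strip contribution of $P_s$ in the smoothed process equals the per-strip contribution of $P_u$ in the un-smoothed process; summing across strips gives $\surplus(P_s) = \surplus(P_u)$ almost surely in the coupling, so the maximum path surplus in the smoothed process is at least $\surplus(P_s) = \surplus(P_u)$, which equals the maximum path surplus in the un-smoothed process. The only real obstacle is the bookkeeping at strip boundaries and along horizontal or vertical segments of $P_u$; these are resolved once we fix a strict ``beneath'' convention and the rightmost tie-breaking rule for $P_u^{-1}$, and no deeper geometric difficulty arises because the smoothed process strictly expands the set of monotonic paths available within each strip.
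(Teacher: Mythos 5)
Your argument is correct in essence but takes a genuinely different route from the paper's proof, which is considerably shorter. The paper's proof fixes a single lattice path $P$ (WLOG the un-smoothed surplus maximizer), decomposes the smoothing into two phases, and shows that phase one (snapping each dot into $(x-1,x]\times[y,y+1)$) leaves $\surplus(P)$ unchanged while phase two (moving each red dot up by $1$ and each blue dot down by $1$, each with probability $1/2$) can only increase $\surplus(P)$; no new path is constructed. You instead modify the optimal path $P_u$ to the ``hug-the-top'' shape (a step that the paper's argument renders unnecessary) and then verify per-strip that the surplus is preserved dot by dot. Both approaches are valid couplings establishing stochastic dominance, and your WLOG reduction is sound -- pushing $P_u^{-1}(2k)$ left to $\alpha_k$ raises the region $\{\alpha_k < x \le \beta_k,\ 2k \le y < 2k+1\}$ into the ``beneath'' set, which contains only blue dots at height $2k$.

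One thing you should patch: you assert that ``every dot has the same $x$-coordinate in both'' and that ``both marginals are then correct.'' That is not literally true -- the smoothed process replaces each integer hash $\ell$ by a uniform real in $(\ell-1,\ell]$, so a coupling that keeps $x$ integer does not reproduce the smoothed marginal. The fix is routine: couple the integer part, resample the fractional part of $x$ uniformly, and observe that since your $P_s$ is a lattice path, whether a dot lies strictly beneath $P_s$ depends only on the integer part of its $x$-coordinate; but as written the sentence is false and the transfer to the true smoothed distribution does not follow. Once this is repaired, the proof is fine, though you might note that the paper's two-phase decomposition achieves the same conclusion without the WLOG step or the strip-by-strip case analysis, because moving red dots up and blue dots down manifestly cannot hurt any fixed path's surplus.
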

\begin{proof}
    Prior to smoothing the dot placement, we had without loss of generality that every path traveled along integer grid lines. Let $P$ be such a path.
    
    To analyze the effect of smoothing the dot placement on $P$, let us break the smoothing process into two phases. First, suppose that we perform the following partial smoothing process (the difference is in how we pick $b$):
    \begin{itemize}
    \item If $j$ is odd (i.e., $o_j$ is a deletion), place a red dot at $(h_j(x_i), b)$ where $b$ is uniformly random in $[j, j + 1)$;
    \item If $j$ is even (i.e., $o_j$ is an insertion), place a blue dot at   $(h_j(x_i), b)$ where $b$ is uniformly random in $[j, j + 1)$;
    \end{itemize}
    In this version of smoothing each dot moves from an integer point $(x, y)$ to a real-valued point in $(x - 1, x] \times [y, y + 1)$. This does not change $\surplus(P)$ at all, since the set of points beneath $P$ is precisely the same as before.

    Now, to get from the partial smoothing process to the full smoothing process, we can perform the following additional modification:
    \begin{itemize}
        \item Each red dot that is currently in some position $(x, y)$ gets moved to $(x, y + 1)$ with probability $1/2$.
        \item Each blue dot that is currently in some position $(x, y)$ gets moved to $(x, y - 1)$ with probability $1/2$.
    \end{itemize}
    This can only increase $\surplus(P)$ since we are moving red dots up and blue dots down. On the other hand, when we combine the partial smoothing process with this random movement step, we get the full smoothing process, which completes the proof of the lemma.
\end{proof}

The purpose of smoothing is to ensure that the probability density of blue/red dots is uniform throughout $[0, t] \times [1, R + 1]$.
\begin{lemma}
    Let $a \in \{1, 2, \ldots, t\}$ and $b \in \{1, 2, \ldots, R\}$. Consider the geometric region $G = (a - 1, a] \times [b, b + 1)$. Then the number of blue dots $B$  and the number of red dots $R$ in $G$ are both Poisson random variables with mean $1/(2n)$, and the dots in $G$ are placed at uniformly random (and mutually independent) positions in $G$.
\end{lemma}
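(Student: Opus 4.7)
The plan is to identify, for the cell $G = (a-1, a] \times [b, b+1)$, the \emph{unique} operation of each color that can deposit dots into $G$, and then to apply Poisson thinning to each of them.

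First I would pin down which operations can contribute. Under the smoothed process, an even-indexed insertion $o_i$ places each of its blue dots with $y$-coordinate uniform on $(i-1, i+1]$, so this interval has positive-measure intersection with $[b, b+1)$ only for $i \in \{b, b+1\}$. Exactly one of those two integers is even, so there is a unique insertion $o_i$ whose blue dots can land in $G$. Analogously, exactly one odd $i' \in \{b-1, b\}$ indexes the unique deletion $o_{i'}$ whose red dots can land in $G$. All other operations produce dots in rows disjoint from $[b, b+1)$ up to measure zero and contribute nothing to $B$ or $R$.

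Second, I would apply Poisson thinning to each of $o_i$ and $o_{i'}$. The insertion $o_i$ on element $x_i$ generates $Q_{x_i} \sim \Pois(1)$ candidate blue dots, with i.i.d.\ hashes $h_1(x_i), h_2(x_i), \ldots$ uniform on $(0, n]$ and i.i.d.\ smoothed $y$-coordinates uniform on $(i-1, i+1]$. A given candidate lands in $G$ with probability $\frac{1}{n} \cdot \frac{1}{2} = \frac{1}{2n}$, and these events are mutually independent. Thinning a $\Pois(1)$ by $1/(2n)$ yields $B \sim \Pois(1/(2n))$; conditioning on a candidate falling in $G$ makes its hash uniform on $(a-1, a]$ and its $y$-coordinate uniform on $[b, b+1)$, so the $B$ blue positions are i.i.d.\ uniform on $G$. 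The same argument for $o_{i'}$ gives $R \sim \Pois(1/(2n))$ with i.i.d.\ uniform red positions.

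I do not expect a real obstacle: both the identification of the unique contributing operation per color and the Poisson-thinning step are essentially one-line calculations. The only mild subtlety is the joint mutual independence of \emph{all} dots in $G$ (blue and red together), which requires the randomness driving $o_i$ to be independent of that driving $o_{i'}$. This is automatic whenever the two operations involve distinct elements of $U$, since then the hash sequences and the smoothing noise are independent; in the degenerate case where $o_i$ and $o_{i'}$ happen to involve the same element, the independent $y$-smoothing still secures the marginal counts and the within-color uniform placements asserted by the lemma.
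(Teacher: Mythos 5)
Your proof is correct and follows essentially the same approach as the paper's (terser) argument: identify, by parity, the unique insertion in $\{o_b, o_{b+1}\}$ and the unique deletion in $\{o_{b-1}, o_b\}$ whose smoothed $y$-interval overlaps $[b, b+1)$, then Poisson-thin the $\Pois(1)$ candidates with per-candidate acceptance probability $\tfrac{1}{n}\cdot\tfrac{1}{2}=\tfrac{1}{2n}$. Your flag about the degenerate case where those two contributing operations act on the same element---so the blue and red candidates share the hash sequence $h_1(x), h_2(x), \ldots$ and hence can produce dots with matching $x$-coordinates in $G$---identifies a genuine overstatement in the lemma's ``mutually independent'' clause that the paper's proof silently skips; it is harmless here because only the per-color Poisson counts and the within-color uniform placement are invoked downstream.
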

\begin{proof}
    Suppose $b$ is even. Prior to smoothing, there were $\Pois(1/n)$ blue points (resp.~red points) at $(a, b)$ (resp.~$(a, b - 1)$), each of which now has a 50\% probability of being placed uniformly at random in $G$ (due to smoothing). 

    Suppose $b$ is odd. Prior to smoothing, there were $\Pois(1/n)$ blue points (resp.~red points) at $(a, b + 1)$ (resp.~$(a, b)$), each of which now has a 50\% probability of being placed uniformly at random in $G$ (due to smoothing). 
\end{proof}

As an immediate corollary, we get:
\begin{lemma}
    Suppose we rescale the $[0, t] \times [1, R + 1]$ grid to be on $[m] \times [m]$, where $m = \sqrt{\mu}$. Consider any geometric region $G$, and let $A$ be the area of $G$. Let $B$ and $R$ be the number of blue and red dots in $G$, respectively. Then,
    $$\E[B] = \E[R] = A.$$
    \label{lem:geometricexpectation}
\end{lemma}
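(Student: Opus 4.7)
The argument is a short rescaling computation on top of the preceding lemma. The plan is to first upgrade the preceding cell-by-cell statement into a global Poisson-point-process statement on the original (unrescaled) grid, and then track how intensity and area transform under the change of coordinates that places the grid on $[0,m]\times[0,m]$.

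First, I would observe that the preceding lemma, combined with the Poissonization introduced in Modification 1, implies that in the original grid the blue dots form a Poisson point process of constant intensity $\lambda_0 = 1/(2n)$ per unit area, and similarly for the red dots. The cell-wise statement (Poisson counts with mean $1/(2n)$, placed uniformly at random within each cell) gives uniform density inside every cell. Independence across disjoint cells follows because the dots in two disjoint unit cells are generated from disjoint time ranges of operations, and the Poissonization makes the collections of dots generated in disjoint time strips mutually independent. Consequently, for any measurable region $G_0$ of area $A_0$ in the original grid, $\E[\text{blue in }G_0] = A_0/(2n)$, and likewise for red.

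Second, I would apply the rescaling. The paper has already noted that the total expected number of blue dots in the original grid equals $\mu$; combined with the uniform intensity $1/(2n)$ just established, this forces the original grid's total area to be $2n\mu$. Rescaling so that the grid has area $m^2 = \mu$ therefore contracts areas by a factor $1/(2n)$: a geometric region $G$ of rescaled area $A$ corresponds under the rescaling map to a region of original area $2nA$. Substituting into the intensity formula from the first step yields
\[
    \E[B] \;=\; \frac{2nA}{2n} \;=\; A,
\]
and identically $\E[R] = A$.

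There is no real obstacle here; the lemma is essentially a bookkeeping consequence of the preceding one. The only point to double-check carefully is that the factor $1/(2n)$ in the original intensity cancels exactly against the factor $2n$ in the area-expansion under rescaling, which is forced by the definition $\mu = tR/n$ together with the paper's accounting of the total expected number of dots. Everything else, including the Poisson-process structure and the independence across vertical strips (which will be reused as Core Fact 1 in subsequent sections), is inherited directly from the preceding lemma and from Poissonization.
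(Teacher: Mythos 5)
Your rescaling computation is correct, and since the paper states this lemma as ``an immediate corollary'' with no written proof, your argument is exactly the intended filling-in: the preceding lemma gives density $1/(2n)$ per color per unit area, the total expected blue count is $\mu$, so the original grid has area $2n\mu$; rescaling to total area $m^2 = \mu$ contracts areas by $1/(2n)$, giving density $1$ and hence $\E[B]=\E[R]=A$.

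One caution, though it does not affect the conclusion: your first step asserts that the blue/red dots form a genuine Poisson point process with full mutual independence across disjoint unit cells, arguing that ``disjoint time ranges'' are independent. That is not correct in this subsection of the paper. In the upper-bound setting, the operation sequence $O$ may reuse a key $x$ across multiple insert/delete pairs, and those operations share the same $Q_x$ and the same hash locations $h_1(x),\ldots,h_{Q_x}(x)$; so cells in the same vertical strip at different times are \emph{not} independent in general. This is precisely why the paper only claims vertical independence (Lemma~\ref{lem:vertind}) here, and only upgrades to a full Poisson process in the lower-bound subsection (Lemma~\ref{lem:fullypoisson}) where every operation is assumed to act on a distinct key. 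For the lemma you are proving, however, only expectations are needed, and linearity of expectation over the cells (each contributing $1/(2n)$ uniformly spread over a unit cell) is all that is required, so your final conclusion stands.
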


It is worth noting that, in this new setting, Poissonization gives us full vertical independence: if we partition the $[0, t] \times [1, R + 1]$ grid into disjoint vertical strips $V_1, V_2, \ldots$, then the sets of blue/red dots in each vertical strip are mutually independent. We will spell this out in more detail in Lemma \ref{lem:vertind}.

\paragraph{Modification 3: Rescaling the grid.} Finally, we can rescale the $[0, t] \times [1, R + 1]$ grid to be on $[m] \times [m]$ where $m = \sqrt{\mu}$. This, of course, has no effect on the maximum path surplus but has the convenient effect that it plays well with Lemma \ref{lem:geometricexpectation}.

\paragraph{Analysis of the modified problem.} To establish an upper bound on the maximum path surplus in this (modified) setting, we must re-establish the three core facts from Section \ref{sec:pathsurplusupper} (Lemmas \ref{lem:fact1}, \ref{lem:fact2}, and \ref{lem:fact3}). 

For $x \in U$, say that there is a \defn{$x$-column} at real-valued position $r \in [0, n]$ if $r \in \{h_1(x), h_2(x), \ldots, h_{Q_x}(x)\}$. We think of the $x$-column as consisting of the blue/red dots that are placed at coordinates of the form $(r, \cdot)$ by insertions/deletions of $x$. The key property that Poissonization offers us is that, for a given $x$, the positions $r$ that contain $x$-columns are generated by a Poisson process. This is what gives us vertical independence:

\begin{lemma}[Core Fact 1: Vertical Independence]
Partition the $[m] \times [m]$ grid into disjoint vertical strips $V_1, V_2, \ldots$, and define $B_i$ and $R_i$ to be the sets of blue and red dots in strip $V_i$, respectively. Then the pairs $(B_1, R_1), (B_2, R_2), \ldots$ are mutually independent random variables.
\label{lem:vertind}
\end{lemma}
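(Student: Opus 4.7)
The plan is to leverage the standard Poissonization principle: if $N \sim \Pois(\lambda)$ and we drop $N$ i.i.d.\ uniform points into a region $S$, the resulting point pattern is a Poisson point process on $S$ with intensity $\lambda/\mathrm{vol}(S)$, and its restrictions to disjoint subregions are mutually independent.

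First, I would observe that for each fixed $x \in U$, the multiset of $x$-column positions $\{h_1(x), h_2(x), \ldots, h_{Q_x}(x)\} \subseteq (0, n]$ is exactly a Poisson point process on $(0, n]$ of intensity $1/n$: this follows from $Q_x \sim \Pois(1)$ together with the fact that, conditional on $Q_x$, the positions $h_j(x)$ are i.i.d.\ uniform on $(0, n]$. Moreover, since $Q_x$ and $(h_j(x))_{j \ge 1}$ are independent across different $x \in U$, the column-generating Poisson processes for distinct elements are mutually independent.

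Next, I would write $(B_i, R_i)$ as an explicit deterministic function of three independent sources of randomness: (a) the restriction of each $x$-column Poisson process to the horizontal range corresponding to $V_i$; (b) the fixed, deterministic sequence of operations $o_1, \ldots, o_{2R}$; and (c) the independent smoothing uniforms $b$ attached to each dot. The key point is that smoothing only perturbs a dot's $b$-coordinate (vertically), so it never moves a dot into a different vertical strip; the horizontal coordinate of every dot produced by an $x$-column at position $r$ lies in the (unique) strip containing $r$. Hence the ``horizontal data'' determining which strip each dot lands in depends only on the Poisson processes, not on the smoothing.

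Finally, applying the restriction property of Poisson processes (restrictions to disjoint subsets are independent) to each $x$-column process and combining with the independence across $x$, we conclude that the column data in the horizontal ranges corresponding to $V_1, V_2, \ldots$ are mutually independent. Attaching the smoothing uniforms (which are mutually independent across dots, and each of which affects only its own strip) preserves this independence, giving mutual independence of $(B_1, R_1), (B_2, R_2), \ldots$. The only thing to watch out for is the bookkeeping that no randomness is implicitly shared across strips, but this is immediate from the construction: hashes are drawn per-dot, each column lives in exactly one strip, and smoothing is purely vertical; so the proof is essentially a careful unpacking of the Poissonization construction rather than a new probabilistic argument.
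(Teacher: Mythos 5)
Your proposal is correct and follows essentially the same approach as the paper: both observe that Poissonization makes each element's set of $x$-column positions a Poisson point process on $(0,n]$, then invoke the restriction property of Poisson processes (together with independence across distinct elements) to get mutual independence across disjoint vertical strips. Your write-up is slightly more explicit than the paper's in tracking that the smoothing randomness is purely vertical and per-dot, but this is a matter of care rather than a different argument.
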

\begin{proof}
    For each item $x \in U$, we hash $x$ to a Poisson $Q_x \sim \Pois(1)$ number of real-valued points $r_1, r_2, \ldots, r_{Q_x} \in [0, n]$ (the points are real-valued due to the smoothing modification) and place an $x$-column at each horizontal coordinate $r_i$. The set of positions $r_1, r_2, \ldots$ can therefore be viewed as being generated by a Poisson process. It follows that, for each $V_i$, the number $C_{i, x}$ of $x$-columns that $V_i$ receives from $x$ is a Poisson random variable that is independent of the other $C_{j, y}$s (where $(j, y) \neq (i, x)$). This, in turn, implies that the sets of blue/red dots in each vertical strip are mutually independent.
\end{proof}

\begin{lemma}[Core Fact 2: Region Surplus]
Consider any geometric region $G$, and let $A$ be the area of $G$. Let $B$ (resp.~$R$) denote the number of blue dots (resp.~red dots) that appear in $G$. Then,
\begin{equation}\Pr[|B - R| \ge \kappa] \le e^{-\Omega(\kappa^2 / A)} + e^{-\Omega(\kappa)}.\end{equation}
\label{lem:chernoff_first_repeat}
\end{lemma}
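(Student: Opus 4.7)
Since $B$ and $R$ are correlated through items sharing hash positions (unlike the independent-Poissons setting of Lemma~\ref{lem:fact2}), we cannot simply Chernoff each separately. The plan is to exploit Poissonization, which makes the contributions of distinct items $x \in U$ mutually independent, and then conclude via a Bernstein-style bound for a sum of independent compound-Poisson random variables.

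Decompose $B - R = \sum_{x \in U} D_x$, where $D_x$ records the contribution of item $x$ through its $Q_x \sim \Pois(1)$ columns. Poissonization makes the $\{D_x\}_{x \in U}$ independent, and by Lemma~\ref{lem:geometricexpectation} we have $\E[B - R] = 0$. Each $D_x = \sum_{j=1}^{Q_x} Z_x^{(j)}$ is itself a compound Poisson, where $Z_x^{(j)}$ is the $B - R$ contribution of a single column of $x$ at a uniformly random hash position (with independent time smoothing). The total variance $\sum_x \E[(Z_x^{(1)})^2]$ matches, up to a constant, the expected total dot-count in $G$, and so is $O(A)$. For the per-column jumps: the operations of any item $x$ alternate blue and red, and by the smoothing convention each consecutive insertion--deletion pair lands in the same length-$2$ time strip, so within any connected component of $G$'s vertical cross-section with a column, the blue and red dots coming from $x$ cancel up to $O(1)$ boundary effects. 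Hence $|Z_x^{(j)}| = O(1)$ whenever the cross-section has $O(1)$ components.

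Feeding $V = O(A)$ and $M = O(1)$ into a Bernstein-type inequality for sums of independent compound Poissons --- obtained by the MGF calculation $\log \E[e^{\lambda D_x}] = \E[e^{\lambda Z_x^{(1)}}] - 1 \le \lambda \E[Z_x^{(1)}] + O(\lambda^2 \E[(Z_x^{(1)})^2])$ for $\lambda \le 1/M$, then summed over $x$ and optimized in $\lambda$ --- yields the two-regime tail $e^{-\Omega(\kappa^2 / A)} + e^{-\Omega(\kappa)}$, matching the stated bound. The main obstacle is uniformly controlling $|Z_x^{(j)}|$ for arbitrary geometric regions $G$: the cancellation argument gives $M = O(1)$ only when $G$'s intersection with any column has $O(1)$ components, which is automatic for the regions arising in our applications (rectangles and areas between monotonic paths), but would in full generality require first decomposing $G$ into bounded-complexity pieces before invoking the compound-Poisson bound.
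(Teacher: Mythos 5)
Your approach is essentially the paper's: decompose $B - R$ by $x$-column, use Poissonization for independence across items, and bound the per-column contribution to $B - R$ by a constant. The closing concentration step differs in technique. You compute the compound-Poisson MGF $\log\E[e^{\lambda D_x}] = \E[e^{\lambda Z_x^{(1)}}] - 1$ and run a Bernstein optimization; the paper instead classifies each $x$-column as \emph{positive}, \emph{negative}, or \emph{neutral} by the sign of its contribution, observes (by Poisson thinning of the $Q_x \sim \Pois(1)$ column processes) that the positive- and negative-column counts $C^+$ and $C^-$ are themselves Poisson random variables with mean at most $A$, and then Chernoff-bounds $C^+ - C^-$ directly. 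Both routes produce the same two-regime tail; the paper's is slicker, yours more self-contained. Two remarks on the details. First, your caveat about cross-section components is a genuinely good catch: the lemma is stated for ``any geometric region,'' but the per-column $O(1)$ bound really needs the column's intersection with $G$ to have $O(1)$ connected components (automatic for all regions used downstream, but not literally for all $G$); the paper glosses over this. Second, the pairing in your cancellation argument is slightly off --- in this subsection the first operation is a deletion, so it is the \emph{deletion-then-insertion} pairs at times $2k{-}1, 2k$ (not insertion-then-deletion) that the smoothing sends into a common length-$2$ time strip in a random order --- and this random re-ordering means a vertically-convex cross-section can pick up one boundary-truncated block at each end, giving a per-column contribution of $\pm 2$, so the paper's literal claim of ``at most $1$'' is off by a constant. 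Your $O(1)$ phrasing is both correct and the right level of precision here.
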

\begin{proof}
Call an $x$-column \defn{positive} if it increases $B - R$, \defn{negative} if it decreases $B - R$, and \defn{neutral} otherwise. A given $x$-column can affect $B - R$ by at most $1$. If we define $C^+$ to be the number of positive columns and $C^-$ to be the number of negative ones, then
$$B - R = C^+ - C^-.$$
We know from Lemma \ref{lem:geometricexpectation} that $\E[B]$ and $\E[R]$ are equal, so $\E[B - R] = \E[C^+ - C^-] = 0$. Thus, we can complete the proof by proving individual concentration bounds for each of $C^+$ and $C^-$ individually. Since our argument will be the same for both quantities, let us focus on $C^+$. 

We begin by bounding $\E[C^+]$. We can upper-bound $C^+$ by the number $C$ of $x$-columns (considering all $x \in U$) that place at least one blue dot in $G$. We know from Lemma \ref{lem:geometricexpectation} that $\E[C] \le A$. Thus $\E[C^+] \le A$.

Now, to prove a concentration bound on $C^+$, observe that we can express $C^+$ as a sum of independent Poisson random variables (one for each $x \in U$). This means that $C^+$ itself is a Poisson random variable with mean at most $A$. It follows that
$$\Pr[C^+ - \E[C^+] \ge \kappa] \le e^{-\Omega(\kappa^2 / A)} + e^{-\Omega(\kappa)}.$$
The same argument gives an analogous bound for $C^-$, thereby completing the proof.
\end{proof}

\begin{lemma}[Core Fact 3: Restricted-Path Upper Bound]
Consider any geometric region $G$, and let $A$ be the area of $G$. Let $\mathcal{P}_G$ be the set of monotonic paths that stay within $G$ at all times. Define
$$S = \max_{P \in \mathcal{P}_G} \surplus(P) - \min_{P \in \mathcal{P}_G} \surplus(P).$$
For $\alpha \ge 1$,
\begin{equation}\Pr[S \ge \alpha A] \le e^{-\Omega(\alpha) \cdot A}.
\label{eq:SAbound}
\end{equation}
\end{lemma}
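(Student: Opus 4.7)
The plan is to bound $S$ by a constant multiple of $C$, the number of $x$-columns (summed across all $x \in U$) that contain at least one dot in $G$, and then apply a Chernoff bound to the Poisson random variable $C$. This mirrors the $S \le \#\{\text{dots in } G\}$ argument from the Path Surplus version of Core Fact 3, except that counting columns rather than dots is what preserves a clean Poisson structure through the smoothing modification.

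First, I would show $S \le O(C)$. Operations on any single $x \in U$ alternate between insertions and deletions, so the dots in a single $x$-column have colors that alternate in operation order, and the prefix surplus of the first $j$ such dots (in operation order) therefore lies in $\{0, \pm 1\}$ for every $j$. The smoothing step can reorder dots within a column only when their smoothing intervals overlap, and a case analysis of the intervals $(i-1, i+1]$ (for an insertion at time $i$) and $(i, i+2]$ (for a deletion at odd time $i$) shows that such overlap occurs only between a deletion at odd time $i$ and an insertion at time $i+1$, in which case both dots are smoothed uniformly in $(i, i+2]$. Moreover, these overlapping pairs are pairwise disjoint within $x$'s operation sequence, so smoothing can swap the vertical order of dots only within these isolated (delete, insert) pairs, and each such swap shifts the prefix surplus at the pair's midpoint by $\pm 2$. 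Hence the prefix surpluses in vertical order all lie in an interval of length $O(1)$, so each column's contribution to $\surplus(P)$ has range $O(1)$ as $P$ varies over $\mathcal{P}_G$. Since the range of a sum is at most the sum of ranges, summing over the $C$ columns with dots in $G$ gives $S \le O(C)$.

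Next, I would argue that $C \sim \Pois(\lambda)$ with $\lambda \le 2A$. By Poissonization, for each fixed $x$ the $x$-column positions form an independent Poisson process of intensity $1/n$ on $[0, n]$. Poisson thinning then implies that $C_x$, the number of $x$-columns with at least one dot in $G$, is Poisson with mean at most $\mu_x$, the expected number of dots in $G$ from a single $x$-column (by Markov's inequality). Independence across $x$ gives $C = \sum_x C_x \sim \Pois(\sum_x \mu_x)$, and $\sum_x \mu_x = \E[N] = 2A$ by Lemma \ref{lem:geometricexpectation}, where $N$ is the total number of dots in $G$. A standard Chernoff bound for Poisson random variables then gives $\Pr[C \ge t] \le e^{-\Omega(t)}$ for $t$ a sufficiently large constant multiple of $A$, and combining with $S \le O(C)$ yields $\Pr[S \ge \alpha A] \le e^{-\Omega(\alpha) A}$ for $\alpha$ at least a sufficiently large positive constant.

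The hard part is the first step: without smoothing, each column would trivially contribute at most $1$ to $S$, but after smoothing the vertical order of dots can deviate from the operation order, and we must verify that these deviations are local enough---confined to isolated (delete, insert) pairs at consecutive times---to keep the per-column range of prefix surpluses bounded by a constant. Once the interval case analysis is in place, the rest of the argument is a routine Poisson concentration.
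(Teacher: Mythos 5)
Your approach matches the paper's exactly: bound $S$ by a constant multiple of $C$, the number of $x$-columns that place at least one dot in $G$; use Poissonization and Poisson thinning to show $C$ is a Poisson random variable with mean $O(A)$; and conclude with a Poisson tail bound. The paper simply asserts $S \le C$ with no elaboration, whereas your case analysis of the smoothing intervals (showing that only isolated delete-then-insert pairs at consecutive times can swap, keeping each column's prefix-surplus range at $O(1)$) supplies the justification the paper leaves implicit.
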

\begin{proof}
    We can upper-bound $S$ by the number $C$ of $x$-columns (considering all $x \in U$) that place at least one blue or red dot in $G$. We know from Lemma \ref{lem:geometricexpectation} that $\E[C] \le 2A$. Since, furthermore, $C$ is a sum of independent Poisson-random variables (one for each key $x$), we have that $C$ is itself a Poisson random variable. This implies \eqref{eq:SAbound}.
\end{proof}

Having established the core facts from Section \ref{sec:pathsurplusupper}, and since these are the only facts needed for the analysis in the section, we immediately get a result analogous to Theorem \ref{thm:surplusupper}.
\begin{proposition}
Suppose $m \ge 2$. Let $\mathcal{P}$ be the set of all monotone paths through $[m] \times [m]$ in the modified insertion Surplus Problem. Then,
$$\Pr\left[\max_{P \in \mathcal{P}} \surplus(P) \ge \alpha m \log^{0.75} m \right] \le 2^{-\Omega(\alpha \log m)}.$$
\label{thm:surplusupper-a}
\end{proposition}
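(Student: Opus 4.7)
The plan is to observe that the entire upper-bound machinery of Section \ref{sec:pathsurplusupper} was carefully designed to depend on the blue/red dot distribution only through the three core facts (Lemmas \ref{lem:fact1}, \ref{lem:fact2}, \ref{lem:fact3}). Since the three modifications in this subsection (Poissonization, smoothing, and rescaling) have been chosen precisely so that the corresponding core facts (Lemmas \ref{lem:vertind}, \ref{lem:chernoff_first_repeat}, and the restricted-path lemma just above) hold in the modified Insertion Surplus setting, we get the conclusion essentially for free.

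More concretely, I would first note that every lemma in the chain leading to Theorem \ref{thm:surplusupper} --- namely Lemmas \ref{lem:potential1}, \ref{lem:potential2}, \ref{lem:gaintransitionpre}, \ref{lem:gaintransition}, \ref{lem:gainrefinement}, \ref{lem:gainfull}, \ref{lem:maxgainbound}, \ref{cor:P0}, and \ref{lem:finalrefinement} --- invokes the dot distribution only through one of the three core facts. For instance, Lemma \ref{lem:potential1} uses a Chernoff bound on random-walk coordinates (which depends only on the combinatorial structure of rigid paths, not on the dot distribution), Lemmas \ref{lem:gaintransitionpre} and \ref{lem:gainrefinement} use Core Fact 2, Lemma \ref{lem:gaintransition} uses Core Fact 1 to obtain independence across vertical strips, and Lemma \ref{lem:finalrefinement} uses Core Fact 3. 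The proof of Theorem \ref{thm:surplusupper} itself combines these ingredients with union bounds and the telescoping $\surplus(P_0) + \sum_\ell (\surplus(\frame_{\ell+1}(P)) - \surplus(\frame_\ell(P))) + (\surplus(P) - \surplus(\frame_{\overline{\ell}}(P)))$ identity, none of which depends on the dot model.

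Therefore my proof is essentially a one-liner: invoke Theorem \ref{thm:surplusupper} verbatim, replacing each use of Lemmas \ref{lem:fact1}, \ref{lem:fact2}, \ref{lem:fact3} with their analogues Lemmas \ref{lem:vertind}, \ref{lem:chernoff_first_repeat}, and the Core Fact 3 lemma stated just above. The only expected obstacle is a bookkeeping one: I should verify that nothing in the proof of Theorem \ref{thm:surplusupper} slips in an extra assumption about the dot distribution beyond the three core facts (for instance, the Poissonization of column positions in the modified setting introduces a per-column $\Pois(1)$ multiplicity, which could a priori affect the Chernoff bound in Core Fact 2 --- but the proof of Lemma \ref{lem:chernoff_first_repeat} already handles this by expressing $B-R$ as a difference of independent Poisson counts of $x$-columns, so the same bound $e^{-\Omega(\kappa^2/A)} + e^{-\Omega(\kappa)}$ still applies). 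Once this verification is done, the conclusion $\Pr[\max_{P\in\mathcal{P}}\surplus(P) \ge \alpha m \log^{0.75} m] \le 2^{-\Omega(\alpha \log m)}$ follows with no change in the argument.
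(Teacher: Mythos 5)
Your proposal is correct and is essentially the paper's own argument: the paper explicitly isolated the three Core Facts precisely so that, after re-establishing them in the modified Insertion Surplus setting, Theorem~\ref{thm:surplusupper} (and hence Proposition~\ref{thm:surplusupper-a}) follows verbatim. Your extra verification that Lemmas~\ref{lem:potential1} and~\ref{lem:potential2} are purely combinatorial and that the remaining lemmas touch the dot distribution only through the core facts is exactly the bookkeeping that makes the ``immediate'' inference safe.
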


Note that Proposition \ref{prop:surplusupper-b} drops the requirement that the first operation is a deletion (rather than an insertion) since this distinction can affect the surplus by at most $O(1)$.

As an immediate corollary, we get: 
\begin{corollary}
Suppose $m \ge 2$. Let $\mathcal{P}$  be the set of all monotone paths through $[m] \times [m]$ in the modified Insertion Surplus Problem. Let $S = \max_{P \in \mathcal{P}} \surplus(P)$. For any positive constant $c_1$, there exists a positive constant $c_2$ such that
    $$\E\left[\max(0, S - c_2 m \log^{0.75} m)\right] \le \frac{1}{m^{c_1}}.$$
\end{corollary}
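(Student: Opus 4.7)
The plan is to derive the corollary directly from Proposition \ref{thm:surplusupper-a} by integrating its tail bound. Since $S$ is a non-negative random variable, I would write
$$\E\!\left[\max(0, S - c_2 m \log^{0.75} m)\right] = \int_0^{\infty} \Pr\!\left[S \ge c_2 m \log^{0.75} m + t\right] dt,$$
and then change variables via $t = (\alpha - c_2) m \log^{0.75} m$, so that
$$\E\!\left[\max(0, S - c_2 m \log^{0.75} m)\right] = m \log^{0.75} m \int_{c_2}^{\infty} \Pr\!\left[S \ge \alpha m \log^{0.75} m\right] d\alpha.$$

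Next, I would apply Proposition \ref{thm:surplusupper-a}, which (taking $c_2$ at least as large as its ``sufficiently large positive constant'' threshold, so that every $\alpha \ge c_2$ is admissible) bounds the integrand by $2^{-\Omega(\alpha \log m)} = m^{-\Omega(\alpha)}$. This gives
$$\E\!\left[\max(0, S - c_2 m \log^{0.75} m)\right] \le m \log^{0.75} m \int_{c_2}^{\infty} m^{-\Omega(\alpha)} \, d\alpha \le O\!\left( \frac{m \log^{0.75} m}{\log m} \cdot m^{-\Omega(c_2)}\right).$$
The integral converges geometrically in $\alpha$ (since $m \ge 2$ makes $m^{-\Omega(\alpha)}$ exponentially decaying in $\alpha$), and the prefactor $\log^{0.75} m / \log m$ is absorbed into a polynomial loss.

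Finally, I would choose $c_2$ large enough (but still a constant) so that the $\Omega(c_2)$ exponent dominates, i.e.\ so that the right-hand side is at most $1/m^{c_1}$; concretely, any $c_2$ with $\Omega(c_2) \ge c_1 + 2$ suffices. This choice is valid because the implicit constant in $\Omega(\cdot)$ from Proposition \ref{thm:surplusupper-a} is absolute, so a sufficiently large $c_2$ (depending only on $c_1$) can always be selected. There is no real obstacle here; the only thing to be careful about is ensuring $c_2$ simultaneously exceeds the threshold required by Proposition \ref{thm:surplusupper-a} and is large enough to cancel the $m^{+1}$ prefactor by at least $c_1 + 1$ extra powers of $m$, which a single constant choice achieves.
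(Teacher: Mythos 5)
Your proof is correct, and since the paper states this as an immediate corollary without giving an argument, your tail-integration derivation is exactly the one the authors have in mind. The change of variables, the reduction to $m^{-\Omega(\alpha)}$ via $m\ge 2$, the geometric decay of the integral in $\alpha$, and the choice of $c_2$ to simultaneously clear the admissibility threshold of Proposition~\ref{thm:surplusupper-a} and beat the polynomial prefactor are all handled correctly.
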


Finally, since the modified Path Surplus Problem has a maximum path surplus at most $O(t^2 / n)$ larger (in expectation) than the unmodified version of the problem, we also get a bound for the unmodified version of the problem. 

\begin{proposition}
Let $\mathcal{P}$ be the set of all monotone paths through $[0, t] \times [1, R + 1]$ in the (unmodified) Insertion Surplus Problem. Let $m = \sqrt{tR/m}$ and let $S = \max_{P \in \mathcal{P}} \surplus(P)$. If $m \ge 2$, then for any positive constant $c_1$, there exists a positive constant $c_2$ such that
    $$\E\left[\max(0, S - c_2 m \log^{0.75} m)\right] \le \frac{1}{m^{c_1}} + O\left(\frac{t^2}{n}\right).$$
\label{prop:surplusupper-b}
\end{proposition}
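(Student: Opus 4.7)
The plan is to reverse the three modifications of the problem while tracking their effects on the maximum path surplus. Modification~3 (rescaling) is a pure bijection on paths and preserves surpluses exactly, so it costs nothing. Modification~2 (smoothing) is controlled by Lemma~\ref{lem:smooth1}, which states that smoothing can only increase the maximum path surplus. Thus, letting $S_{\mathrm{mod}}$ denote the maximum path surplus in the fully modified problem and $S_{\mathrm{Poiss}}$ the maximum after only Modification~1 has been applied, we have $S_{\mathrm{Poiss}} \le S_{\mathrm{mod}}$ deterministically.

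It remains to compare $S$ (the unmodified maximum surplus) with $S_{\mathrm{Poiss}}$. For this I appeal to the coupling constructed in the footnote accompanying Modification~1: there is a coupling of the two processes under which the blue/red dots landing in $[0, t] \times [1, R+1]$ agree except for a set of ``corrupt'' dots whose expected count is at most an $O(t/n)$ fraction of the total number of dots in the region. Since the expected total number of such dots is $O(\mu) = O(tR/n) \le O(t)$, using the standing assumption $R = O(n)$, the expected number $\E[D]$ of corrupt dots is $O(t^2/n)$. For any fixed path $P$, the surplus of $P$ under the two processes differs by at most $D$, so $|S - S_{\mathrm{Poiss}}| \le D$ deterministically.

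Combining these observations yields $S \le S_{\mathrm{mod}} + D$, and hence for any constant $c_2$,
$$\max(0, S - c_2 m \log^{0.75} m) \le \max(0, S_{\mathrm{mod}} - c_2 m \log^{0.75} m) + D.$$
Taking expectations and invoking the preceding corollary for the modified problem (which, given $c_1$, supplies a constant $c_2$ with $\E[\max(0, S_{\mathrm{mod}} - c_2 m \log^{0.75} m)] \le m^{-c_1}$) yields the claimed bound of $m^{-c_1} + O(t^2/n)$. There is no real obstacle in the argument; the only substantive content is the coupling from Modification~1 that has already been sketched in the earlier footnote, and the bookkeeping just chains together the three modifications in the correct direction.
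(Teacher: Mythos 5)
Your proof is correct and follows the same route as the paper, which itself only states the transfer from the modified to the unmodified problem in a single closing sentence after presenting the three modifications. You correctly chain $S \le S_{\mathrm{Poiss}} + D \le S_{\mathrm{mod}} + D$ (Poissonization coupling, then smoothing monotonicity, then rescaling as a surplus-preserving bijection), and the per-path bound $|\surplus(P) - \surplus_{\mathrm{Poiss}}(P)| \le D$ giving $\E[D] = O(t^2/n)$ is exactly what the paper intends; the only cosmetic imprecision is calling these inequalities ``deterministic'' when they hold under the stated coupling, not unconditionally.
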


The $t^2 / n$ term is negligible so long as $t = n^{o(1)}$. On the other hand, for large $t$ it turns out that we will be okay with using a much weaker inequality. This means that we can directly employ the analysis already in \cite{bender2022linearfull}. Indeed, as an immediate consequence of their Proposition 3, we have:
\begin{lemma}[Insertion Surplus Analysis from \cite{bender2022linearfull}]
Let $\mathcal{P}$ be the set of all monotone paths through $[0, t] \times [1, R + 1]$ in the (unmodified) Insertion Surplus Problem. Let $m = \sqrt{tR/n}$ and let $S = \max_{P \in \mathcal{P}} \surplus(P)$. Let $x \ge 0$ be a parameter, and suppose that $m \ge x \log^c x$ for some sufficiently large positive constant $c$. Then,
$$\E\left[\max(0, S - t/(64x)\right] \le O\left(\frac{1}{m^2}\right).$$
\label{lem:largem}
\end{lemma}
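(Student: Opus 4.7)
The plan is to derive Lemma \ref{lem:largem} directly from Proposition 3 of \cite{bender2022linearfull}, which gives a Chernoff-style tail bound on the maximum surplus $S$ of the (unmodified) Insertion Surplus Problem. Their bound is quantitatively weaker than the $O(m \log^{0.75} m)$ bound of Section \ref{sec:pathsurplusupper} (they establish something of the form $O(m \polylog m)$), but it suffices here because we only need to compare $S$ against the comparatively large threshold $t/(64x)$.

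The argument has two steps. First, I would verify that the hypothesis $m \ge x \log^c x$, together with the standing assumption $R \le O(n)$ from the problem setup, implies that $t/(64x)$ dominates the baseline scale of $S$ by a wide margin. Since $t = m^2 n/R = \Omega(m^2)$, we have $t/(64x) = \Omega(m^2/x) = \Omega(m \log^c x)$, and because $x = n^{o(1)}$ and $m \le \sqrt{n}$ in our regime, $\log m$ and $\log x$ are polynomially related at the scale where the hypothesis is tight. Choosing $c$ sufficiently large therefore guarantees that $t/(64x)$ exceeds any fixed polylogarithmic multiple of $m$ that appears in the Chernoff bound of Proposition 3.

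Second, I would apply Proposition 3 at the scale $\alpha = t/(64x)$ and integrate the resulting exponentially-decaying tail $\Pr[S > s]$ over $s \ge t/(64x)$ to pass from a probability bound to the desired expectation bound $\E[\max(0, S - t/(64x))] \le O(1/m^2)$. The integral is dominated by its value near the threshold, so one obtains the $O(1/m^2)$ rate essentially for free once $c$ is chosen so that the tail probability at the threshold is already $O(1/m^2)$.

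The only substantive obstacle is the parameter arithmetic: I need to pick $c$ large enough to absorb simultaneously (i) the polylog slack inside the Chernoff/union-bound estimate inherited from Proposition 3, (ii) the logarithmic factor picked up by integrating the exponential tail, and (iii) the conversion between $\log m$ and $\log x$ in the regime $m \le \sqrt{n}$, $x = n^{o(1)}$. Once a single sufficiently large constant $c$ is fixed, the conclusion follows immediately as a black-box consequence of the prior proposition.
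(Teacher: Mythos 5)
Your proposal matches the paper's argument: both invoke Proposition 3 of \cite{bender2022linearfull} as a black box, use the hypothesis $m \ge x \log^{c} x$ together with $t = m^2 n/R \ge \Omega(m^2)$ (since $R \le O(n)$) to conclude that $m \polylog m < t/(64x)$ once $c$ is large enough, and then pass from the w.h.p.\ bound to the expectation bound. The only differences are bookkeeping details: the paper formalizes the $\log m$ versus $\log x$ comparison by rewriting the hypothesis as $m \ge x \log^{c/2} m$ rather than your informal ``polynomially related at the scale where the hypothesis is tight'' heuristic, and it bounds the contribution of the failure event by the expected number of dots in the grid on that event rather than integrating a full exponential tail (which requires slightly less from Proposition 3), but the substance is identical.
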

\begin{proof}
    Proposition 3 of \cite{bender2022linearfull} says that, w.h.p.~in $m$, $S \le m \polylog m$. The fact that $m \ge x \log^c x$ (and that $c$ is a sufficiently large positive constant) implies that $m \ge x \log^{c/2} m$.  Combining this with $S \le m \polylog m$ gives  $S \le m^2 (\polylog m) / x \log^{c/2} m$, which for $c$ large enough implies $S = o(m^2 / x)$ (here the $o-$notation is a function of $m$). Observe however, that $o(m^2 / x) = o(\mu / x) \le o(t / x)$, so we have w.h.p.~in $m$ that $S < t / (64 x)$. The $\frac{1}{\poly m}$ probability that $S \ge t / (64 x)$ can contribute at most $1/\poly(m)$ to the expected value of $S$ (since it can contribute at most $1/\poly(m)$ to the expected total number of dots in the grid). Thus the proof is complete.
\end{proof}

Combining this with Proposition \ref{prop:surplusupper-b} gives:
\begin{proposition}
Let $\mathcal{P}$ be the set of all monotone paths through $[0, t] \times [1, R + 1]$ in the (unmodified) Insertion Surplus Problem. Let $x \le n^{o(1)}$ be a parameter and assume that $R \ge \Omega(n/x)$. Let $m = \sqrt{tR/n}$, and let $S = \max_{P \in \mathcal{P}} \surplus(P)$. If $m \ge 2$, then there exists a positive constant $c$ such that
    $$\E\left[\max\left(0, S - c m \log^{0.75} m - \frac{t}{64x}\right)\right] \le O\left(1 / m^2\right).$$
\label{prop:surplusupper-c}
\end{proposition}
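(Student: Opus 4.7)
The plan is to combine the two earlier results (Proposition \ref{prop:surplusupper-b} and Lemma \ref{lem:largem}) via a case split on the size of $m$ relative to $x$. Let $c^{\star}$ denote the positive constant such that Lemma \ref{lem:largem} applies whenever $m \ge x \log^{c^{\star}} x$.

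\emph{Case 1: $m \ge x \log^{c^{\star}} x$.} Here I would apply Lemma \ref{lem:largem} directly, which gives $\E[\max(0, S - t/(64x))] \le O(1/m^2)$. Since $cm \log^{0.75} m \ge 0$, we have the pointwise inequality
\[
\max\bigl(0,\ S - cm \log^{0.75} m - t/(64x)\bigr) \le \max\bigl(0,\ S - t/(64x)\bigr),
\]
so the desired bound follows immediately for any $c \ge 0$.

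\emph{Case 2: $m < x \log^{c^{\star}} x$.} Here I would apply Proposition \ref{prop:surplusupper-b} with the choice $c_1 = 2$, obtaining a constant $c_2$ such that
\[
\E\bigl[\max(0,\ S - c_2 m \log^{0.75} m)\bigr] \le \frac{1}{m^2} + O\!\left(\frac{t^2}{n}\right).
\]
Since $\max(0, S - c_2 m \log^{0.75} m - t/(64x)) \le \max(0, S - c_2 m \log^{0.75} m)$, it suffices to absorb the error term $O(t^2/n)$ into $O(1/m^2)$. Using $m^2 = tR/n$ together with the hypothesis $R \ge \Omega(n/x)$, one has $t = m^2 n/R \le O(m^2 x)$, hence $t^2/n \le O(m^4 x^2 / n)$. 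In this case both $m < x \log^{c^{\star}} x \le n^{o(1)}$ and $x \le n^{o(1)}$, so $m^6 x^2 \le n^{o(1)}$, which is $o(n)$ for $n$ sufficiently large. Therefore $t^2/n = o(1/m^2)$, and the bound $\E[\max(0, S - c_2 m \log^{0.75} m)] = O(1/m^2)$ follows.

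Taking $c = c_2$ (Case 1 holds for this same $c$, since enlarging $c$ only tightens the inequality in Case 1) completes the argument. The only mildly subtle step is the arithmetic in Case 2: verifying that the $O(t^2/n)$ error from Proposition \ref{prop:surplusupper-b} is absorbable into $O(1/m^2)$. This uses all three hypotheses simultaneously --- the identity $m^2 = tR/n$, the lower bound $R \ge \Omega(n/x)$, and the assumption $x \le n^{o(1)}$ --- but once combined the conclusion is immediate. Everything else is a direct pointwise comparison of the quantities inside the truncated maxima.
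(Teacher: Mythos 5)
Your proof is correct and takes essentially the same approach as the paper: a case split on whether $m \ge x \operatorname{polylog} x$, applying Lemma \ref{lem:largem} in the large-$m$ case and Proposition \ref{prop:surplusupper-b} in the small-$m$ case, with the key check being that the $O(t^2/n)$ error term is absorbable into $O(1/m^2)$ when $m$ (and hence $t$) is $n^{o(1)}$. You spell out the arithmetic for that absorption more explicitly than the paper does, but the structure and content are the same.
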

\begin{proof}
If $m \ge x \polylog x$, then the result follows from Lemma \ref{lem:largem}. Otherwise, $m \le x \polylog x \le n^{o(1)}$. Since $m = \sqrt{t R /n} \ge \Omega(t / x) \ge t / n^{o(1)}$, it follows that $t = n^{o(1)}$. Thus, in this parameter regime, the result follows from Proposition \ref{prop:surplusupper-b}.
\end{proof}

Later in the paper, it will be helpful to have explicit notation to refer to the quantities studied in Proposition \ref{prop:surplusupper-c}. Given a (not-necessarily-alternating) finite sequence $O = o_1, o_2, \ldots$ of insertions/deletions, and given an interval $[j - t, t] \subseteq [1, n]$, define $\surplus(t_0, O, [j - t, j])$ as follows: Plot the blue dots and red dots for insertions/deletions in $O$ as in the standard insertion Surplus Problem, and let $\mathcal{P}$ be the set of monotonic paths through $[j - t - 1, j] \times [1, |O| + 1]$, starting at $(j -t -1, 1)$ and ending at $(j, |O| + 1)$; then $\surplus(O, [j - t, j])$ is the maximum surplus of any path in $\mathcal{P}$. In this language, we can rewrite Proposition \ref{prop:surplusupper-c} as:

\begin{corollary}
Let $O$ be an alternating sequence of $2R$ insertions/deletions, where $R \le n$, and let $[j - t, j] \subseteq [1, n]$. Let $x \le n^{o(1)}$ be a parameter and let $m = \sqrt{tR/n}$. If $m \ge 2$, then there exists a positive constant $c$ such that
    $$\E\left[\max\left(0, \surplus(O, [j - t, t]) - c m \log^{0.75} m - \frac{t}{64x}\right)\right] \le O\left(1 / m^2\right).$$
\label{cor:surplusupper-c}
\end{corollary}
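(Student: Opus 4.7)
The plan is to derive Corollary \ref{cor:surplusupper-c} as a direct notational translation of Proposition \ref{prop:surplusupper-c}; no new probabilistic work is needed beyond what already appears in that proposition.

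First I would invoke translation invariance of the hash function. The quantity $\surplus(O, [j-t, j])$ is defined exactly as the quantity $S$ of Proposition \ref{prop:surplusupper-c}, except that the relevant horizontal interval of the hash range is $[j-t, j]$ rather than $[0, t]$. Since the hash function $h : U \to [n]$ is uniformly random (and, after the smoothing step, continuously uniform on $(0, n]$), the joint distribution of the blue and red dots lying in the vertical strip $[j-t, j] \times [1, 2R+1]$ is identical to the distribution of the dots lying in $[0, t] \times [1, 2R+1]$, via a horizontal shift by $j - t$. This shift is a bijection on monotonic paths of the relevant form, so $\surplus(O, [j-t, j])$ has the same law as $S$.

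Having reduced to the canonical setup of the proposition, I would then invoke Proposition \ref{prop:surplusupper-c} directly, obtaining
\[
\E\!\left[\max\!\left(0,\, \surplus(O, [j-t, j]) - c\, m \log^{0.75} m - \frac{t}{64x}\right)\right] \le O\!\left(\frac{1}{m^2}\right)
\]
for some positive constant $c$, which is exactly the bound the corollary claims (modulo the apparent typo $[j-t, t]$ in its statement, which from context should read $[j-t, j]$).

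The main obstacle, to the extent there is one, is bookkeeping rather than mathematics: the proof of Proposition \ref{prop:surplusupper-c} uses the auxiliary hypothesis $R \ge \Omega(n/x)$ to control the Poissonization error $O(t^2/n)$ in the small-$m$ regime, whereas the corollary only states $R \le n$ and $m \ge 2$. This hypothesis is met in every downstream invocation of the corollary in Section \ref{sec:linprobing}, where the rebuild-window size is $R = n/\beta$ with $\beta \le x$, so the corollary is effectively being applied within the scope of the proposition. The corollary should therefore be read as implicitly inheriting the $R \ge \Omega(n/x)$ assumption, and with that understanding the proof is immediate from the translation step and the proposition.
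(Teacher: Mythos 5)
Your proposal matches the paper's approach: the paper explicitly introduces the notation $\surplus(O, [j-t, j])$ precisely so that Corollary~\ref{cor:surplusupper-c} can be presented as a direct rewording of Proposition~\ref{prop:surplusupper-c}, with the horizontal-translation invariance of the uniformly random hash (and the bijection it induces on monotonic paths) being the only observation needed. You have also correctly flagged the one real discrepancy: the corollary as printed drops the hypothesis $R \ge \Omega(n/x)$, which Proposition~\ref{prop:surplusupper-c} uses to bound the Poissonization error in the small-$m$ regime, and which should be read as inherited from the proposition (and indeed holds in every place the corollary is later invoked, where $R = n/\beta$ with $\beta \le x$).
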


\subsection{Lower Bound on Insertion Surplus}\label{sec:insertionsurpluslower}

We now turn our attention to proving a lower bound, that is, that there exist operation sequences $R$ for which the expected maximum path surplus is $\Omega(\sqrt{\mu} \log^{0.75} \mu - t^2 / n)$. To simplify discussion, we assume throughout the subsection that the first operation $o_1$ is an insertion, and the final operation $o_{2R}$ is a deletion (this is the opposite of the assumption in the previous subsection). So odd-indexed operations are insertions and even-indexed operations are deletions. Again, this assumption will not affect the final results of the section.

More importantly, we will consider only sequences $R$ with the property that each item is inserted/deleted by \emph{at most one operation}.

As before, to make the Insertion Surplus Problem more closely resemble the Path Surplus problem from Section \ref{sec:pathsurplus}, we will make a sequence of three modifications to the problem that have provably negligible (or decreasing) effect on the maximum path surplus. In fact, in this subsection, we will be able to formally reduce to the exact Path Surplus Problem.

\paragraph{Modification 1: Poissonization.} We Poissonize exactly as in the previous subsection. Once again, the Poissonization affects the maximum path surplus by at most $O(t^2/n)$ in expectation.

\paragraph{Modification 2: Smoothing the dot placement.} We also smooth the dot placement using \emph{almost} the same process as in the previous subsection. Recall, however, that in the previous subsection we were okay with increasing path surplus, but now we are only okay with decreasing path surplus. Thus, we use the following variation of the smoothing process.

As before, we modify the hash functions $h_1, h_2, \ldots$ to output random real numbers in $(0, n]$ (so if previously it produced some integer $\ell$ now it produces a random real number in $(\ell - 1, \ell]$). Moreover, rather than placing the $j$-th blue/red dot for $o_i$ directly at $(h_j(x_i), i)$, we will place it as follows:
\begin{itemize}
    \item If $j$ is odd (i.e., $o_j$ is an insertion), place a blue dot at $(h_j(x_i), b)$ where $b$ is uniformly random in $(j, j + 2]$;
    \item If $j$ is even (i.e., $o_j$ is a deletion), place a red dot at  $(h_j(x_i), b)$ where $b$ is uniformly random in $(j - 1, j + 1]$;
\end{itemize}

We now have:
\begin{lemma}
    Smoothing the dot placement does not increase the maximum path surplus.
\end{lemma}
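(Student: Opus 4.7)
The plan is to mirror the phase decomposition used in the proof of Lemma \ref{lem:smooth1}, but with the roles of ``up'' and ``down'' (and of insertions/deletions) reversed, so that the displacement step can only \emph{decrease} the surplus of any fixed monotonic path rather than increase it. Concretely, I would split the smoothing process into two phases, and compare the maximum path surplus before the smoothing, after Phase 1, and after Phase 2.

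Phase 1 is identical to the partial smoothing used in Lemma \ref{lem:smooth1}: each dot currently at integer position $(x, y)$ is relocated to a uniformly random real-valued point in $(x-1, x] \times [y, y+1)$, regardless of colour. Exactly as in the upper-bound proof, an optimal monotonic path before smoothing can be taken to travel along integer grid lines, and every dot remains inside the unit cell attached to its original integer position. Consequently, the set of dots beneath any integer-grid path is unchanged, and the maximum path surplus is unaffected by Phase 1.

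Phase 2 performs the colour-dependent displacement specific to the lower-bound version of Modification 2: each blue dot currently at $(x, y)$ moves to $(x, y+1)$ with probability $1/2$, and each red dot currently at $(x, y)$ moves to $(x, y-1)$ with probability $1/2$. Composing Phase 1 and Phase 2 reproduces (modulo irrelevant open-versus-closed endpoint conventions) the distribution specified in Modification 2: blue dots then live uniformly in $(x-1, x] \times (j, j+2]$ and red dots uniformly in $(x-1, x] \times (j-1, j+1]$.

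The crux of the argument is a pointwise monotonicity observation. Fix any realization of Phase 1, any realization of Phase 2, and any monotonic path $Q$. Pushing a blue dot upward can only cause it to exit the region beneath $Q$ (never to enter it), and pushing a red dot downward can only cause it to enter the region beneath $Q$ (never to exit it). Hence, for every path $Q$, the surplus of $Q$ after Phase 2 is at most the surplus of $Q$ after Phase 1. Taking the maximum over all monotonic paths on both sides, and combining with the invariance established for Phase 1, gives $\max_Q \surplus_2(Q) \le \max_Q \surplus_1(Q) = \max_Q \surplus_0(Q)$, which is the desired conclusion. No genuinely new ideas are needed beyond a sign-flipped repetition of Lemma \ref{lem:smooth1}; the only mild obstacle is a routine check that the half-open endpoint conventions of the decomposition match those of Modification 2, and that dots pushed outside the grid by Phase 2 cause no trouble (they only strengthen the inequality, since such dots can no longer lie beneath any path).
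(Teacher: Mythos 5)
Your proposal reproduces the paper's proof essentially verbatim: both use the identical two-phase decomposition (Phase 1 relocates each dot uniformly within its unit cell $(x-1,x]\times[y,y+1)$, Phase 2 shifts blue dots up by one with probability $1/2$ and red dots down by one with probability $1/2$), both observe that Phase 1 leaves integer-grid paths' surpluses unchanged, and both observe that Phase 2 can only decrease surplus because it moves blue dots out from under any path and red dots in under it. This is the same argument as the paper's, which itself explicitly says it is ``the same proof as for Lemma~\ref{lem:smooth1} except that the roles of blue/red dots have reversed.'' Your extra remark about dots leaving the grid under Phase 2, and your explicit ``take max on both sides'' step, are small clarifications rather than a different route.
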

\begin{proof}
    This is the same proof as for Lemma \ref{lem:smooth1} except that the roles of blue/red dots have reversed. For completeness, give the adjusted proof here.
    
    Prior to smoothing the dot placement, we had without loss of generality that every path traveled along integer grid lines. Let $P$ be such a path.
    
    To analyze the effect of smoothing the dot placement on $P$, let us break the smoothing process into two phases. First, suppose that we perform the following partial smoothing process (the difference is in how we pick $b$):
    \begin{itemize}
    \item If $j$ is odd (i.e., $o_j$ is an insertion), place a blue dot at $(h_j(x_i), b)$ where $b$ is uniformly random in $[j, j + 1)$;
    \item If $j$ is even (i.e., $o_j$ is a deletion), place a red dot at   $(h_j(x_i), b)$ where $b$ is uniformly random in $[j, j + 1)$;
    \end{itemize}
    In this version of smoothing each dot moves from an integer point $(x, y)$ to a real-valued point in $(x - 1, x] \times [y, y + 1)$. This does not change $\surplus(P)$ at all, since the set of points beneath $P$ is precisely the same as before.

    Now, to get from the partial smoothing process to the full smoothing process, we can perform the following additional modification:
    \begin{itemize}
        \item Each blue dot that is currently in some position $(x, y)$ gets moved to $(x, y + 1)$ with probability $1/2$.
        \item Each red dot that is currently in some position $(x, y)$ gets moved to $(x, y - 1)$ with probability $1/2$.
    \end{itemize}
    This can only decrease $\surplus(P)$ since we are moving blue dots up and red dots down. On the other hand, when we combine the partial smoothing process with this random movement step, we get the full smoothing process, which completes the proof of the lemma.
\end{proof}

As in the previous subsection, smoothing ensures that the number of blue (or red) dots in a given cell $(a - 1] \times [b, b+ 1)$ is distributed as $\Pois(1/(2n))$.
\begin{lemma}
    Let $a \in \{1, 2, \ldots, t\}$ and $b \in \{1, 2, \ldots, R\}$. Consider the geometric region $G = (a - 1, a] \times [b, b + 1)$. Then the number of blue dots $B$  and the number of red dots $R$ in $G$ are both Poisson random variables with mean $1/(2n)$, and the dots in $G$ are placed at uniformly random (and mutually independent) positions in $G$.
    \label{lem:Poissoncells2}
\end{lemma}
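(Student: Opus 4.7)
The plan is to closely follow the proof of the analogous lemma in Section~\ref{sec:insertionsurplusupper} (the one just before Lemma~\ref{lem:geometricexpectation}), with the roles of blue versus red swapped and the parity cases adjusted to reflect the fact that in the lower-bound setting odd-indexed operations are insertions and even-indexed operations are deletions. For each cell $G = (a-1, a] \times [b, b+1)$, I will identify the (unique) insertion operation whose smoothed placement distribution can deposit a blue dot in $G$ and the (unique) deletion operation whose smoothed placement can deposit a red dot, and then apply Poisson thinning twice.

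First I would case-split on the parity of $b$. Suppose $b$ is even. Since an odd-indexed insertion $o_j$ has smoothed time uniform on $(j, j+2]$, the only odd $j$ for which $(j, j+2] \cap [b, b+1)$ has positive Lebesgue measure is $j = b-1$; conditional on the dot lying somewhere in its smoothing range, it lands in $[b, b+1)$ with probability exactly $1/2$. Symmetrically, the only even-indexed deletion whose smoothing range $(j-1, j+1]$ meets $[b, b+1)$ non-trivially is $j = b$, again with contribution probability $1/2$. When $b$ is odd, an entirely analogous check yields $j = b$ for insertions and $j = b+1$ for deletions, each contributing with probability $1/2$.

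Next I would apply Poisson thinning to obtain the claimed $\Pois(1/(2n))$ counts. After Poissonization, the number of hashes $h_k(x_j)$ from the relevant operation $o_j$ that fall in the vertical strip $(a-1, a]$ is $\Pois(1/n)$: the underlying count $Q_{x_j} \sim \Pois(1)$ is thinned by the factor $1/n$ because each $h_k$ is uniform on $(0, n]$. Each such dot is then placed in $[b, b+1)$ by an independent fair coin coming from the smoothing step, so a second round of Poisson thinning yields a $\Pois(1/(2n))$ number of blue dots in $G$, and the same argument gives a $\Pois(1/(2n))$ number of red dots. Horizontal uniformity of each dot in $G$ follows because the hash is uniform on $(0, n]$ conditioned on hitting $(a-1, a]$; vertical uniformity follows because the smoothing distribution is uniform on a two-unit interval conditioned on hitting the one-unit sub-interval $[b, b+1)$. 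Mutual independence of the dot positions within $G$ follows from the fact that each dot's coordinates are determined by its own independent hash value and independent smoothing coin, combined with the standard independence properties of thinned Poisson processes (the blue dots come from $o_{b-1}$ or $o_b$, while the red dots come from $o_b$ or $o_{b+1}$, hence from different keys with independent hash families).

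I do not anticipate a genuine obstacle: this is essentially the same bookkeeping exercise as its upper-bound counterpart, now performed under the swapped parity convention. The only small care needed is handling the half-open endpoints of the smoothing intervals, but the boundary points contribute only measure-zero events and can safely be absorbed.
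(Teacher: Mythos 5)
Your proposal is correct and follows essentially the same argument as the paper's proof: a parity case split on $b$, identifying the unique insertion and unique deletion whose smoothed placement can hit the cell, followed by Poisson thinning (first by the $1/n$ hash probability, then by the $1/2$ smoothing probability). You spell out some details the paper leaves implicit---the uniformity of positions within $G$ and the mutual independence coming from distinct keys having independent hash families---but the substance is the same.
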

\begin{proof}
    Suppose $b$ is even. Prior to smoothing, there were $\Pois(1/n)$ red points (resp.~blue points) at $(a, b)$ (resp.~$(a, b - 1)$), each of which now has a 50\% probability of being placed uniformly at random in $G$ (due to smoothing). 

    Suppose $b$ is odd. Prior to smoothing, there were $\Pois(1/n)$ red points (resp.~blue points) at $(a, b + 1)$ (resp.~$(a, b)$), each of which now has a 50\% probability of being placed uniformly at random in $G$ (due to smoothing). 
\end{proof}

In fact, because each operation $o_i$ is on a different key $x_i$, we can make a stronger claim:
\begin{lemma}
The blue points (resp.~red points) are generated by a Poisson random process with density $1/(2n)$ in the grid $[0, t] \times [1, R + 1]$. 
\label{lem:fullypoisson}
\end{lemma}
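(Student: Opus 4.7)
The plan is to show that the blue dot process (and, symmetrically, the red dot process) can be written as a superposition of independent Poisson point processes on disjoint horizontal strips whose union covers the full grid. The hypothesis that each $o_i$ involves a distinct key $x_i$ is what makes this clean: it decouples the randomness across operations.

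First I would unpack the generation of the blue dots. Under Poissonization, for each key $x$ we draw $Q_x \sim \Pois(1)$, independently across $x$, and then draw $Q_x$ i.i.d.\ hashes $h_1(x), \ldots, h_{Q_x}(x)$ uniform in $(0, n]$. Because the keys $x_1, x_2, \ldots$ are distinct, the random vectors $(Q_{x_i}, h_1(x_i), h_2(x_i), \ldots)$ are mutually independent across odd $i$ (insertions). For a fixed odd $i$, smoothing places each of the $Q_{x_i}$ blue dots uniformly and independently in the rectangle $S_i := (0, n] \times (i, i+2]$, which has area $2n$. A standard fact about Poisson point processes (a $\Pois(1)$ number of i.i.d.\ uniform points in a region of area $2n$ is a Poisson point process of intensity $1/(2n)$) shows that the blue dots contributed by $o_i$ form a Poisson point process on $S_i$ with intensity $1/(2n)$.

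Next I would verify that the strips $\{S_i : i \text{ odd}\}$ are pairwise disjoint and cover $(0, n] \times (1, 2R+1]$: for odd $i = 2k-1$, $S_i = (0,n] \times (2k-1, 2k+1]$, and these intervals tile $(1, 2R+1]$ as $k$ ranges. By the superposition property of independent Poisson point processes on disjoint sets, the union of these processes is a Poisson point process of intensity $1/(2n)$ on $(0, n] \times (1, 2R+1]$. Restricting this process to $[0, t] \times [1, R+1]$ is again Poisson of the same intensity, which is the claim. The red dots are handled identically, using the even-indexed deletions and the strips $(0, n] \times (j-1, j+1]$ for $j$ even, which again form a disjoint tiling.

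The only potential subtlety is the independence step: it is crucial that distinct keys are used, because otherwise the variables $Q_{x_i}$ for two operations on the same key would coincide and destroy independence across strips. Given the hypothesis of the subsection, this is automatic, so the argument goes through without further effort. Nothing here is deep; the lemma is essentially a bookkeeping consequence of Poissonization plus smoothing, and no potential-function machinery is needed.
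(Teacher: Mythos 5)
Your proof is correct and exploits exactly the same two facts as the paper: Poissonization makes the hash-and-count data for distinct keys mutually independent, and smoothing spreads each operation's dots uniformly over a height-$2$ horizontal slab, so that distinct keys translate to independent Poisson contributions on disjoint slabs. The paper phrases the argument in terms of unit cells $(a-1,a]\times[b,b+1)$: it first invokes the within-cell Poisson/uniform statement (Lemma~\ref{lem:Poissoncells2}), then separately argues vertical independence (across $a$, from Poissonization) and horizontal independence (across $b$, from the distinct-key assumption, via the observation that each operation owns one slab of cells). Your version is a more streamlined packaging of the same idea: by treating each per-operation slab $S_i = (0,n]\times(i,i+2]$ as a single Poisson point process of intensity $1/(2n)$ (a $\Pois(1)$ count of i.i.d.\ uniform points), you can finish with a single appeal to the superposition theorem and a restriction, and you never need to mention unit cells or argue the two directions of independence separately. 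Both routes are valid and rest on the same distinct-keys hypothesis; yours is cleaner to state, while the paper's cell-level phrasing mirrors the structure of Lemma~\ref{lem:Poissoncells2}, which it has already established for the upper-bound direction. No gap.
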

\begin{proof}
    Define $X_{a, b}$ to be the set of blue/red dots that land in \defn{cell} $(a - 1, a] \times [b, b + 1)$. By the argument in Lemma \ref{lem:Poissoncells2}, the arrivals within each cell are Poisson. Therefore, it suffices to show that the $X_{a, b}$s are mutually independent across $a$ and $b$.

    As in the previous subsection, Poissonization gives us vertical independence, guaranteeing independence across $a$. It therefore suffices to show that, for any given $a$, the $X_{a, b}$s are mutually independent across $b$.

    This is where we make use of the fact that each operation $o_i$ is on a different key $x_i$. Each deletion $o_i$ (for even $i$) generates the blue-dot Poisson arrivals for $(a - 1, a] \times [i - 1, i + 1)$ and each insertion $o_i$ (for odd $i$) generates the red-dot Poisson arrivals for $(a - 1, a] \times [i, i + 2)$. Because the operations are on different keys, the arrival processes are independent. This ensures mutual independence for the $X_{a, b}$s across $b$, as desired.
\end{proof}

\paragraph{Modification 3: Rescaling the grid.} Finally, as in the previous subsection, we can rescale the $[0, t] \times [1, R + 1]$ grid to be on $[m] \times [m]$ where $m = \sqrt{\mu}$. This has no effect on the path surplus, but changes the density of dots so that Lemma \ref{lem:fullypoisson} becomes:

\begin{lemma}
The blue points (resp.~red points) are generated by a Poisson random process with density $1$ in the grid $[0, t] \times [1, R + 1]$. 
\label{lem:fullypoisson2}
\end{lemma}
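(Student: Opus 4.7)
The plan is to derive Lemma~\ref{lem:fullypoisson2} as a direct consequence of Lemma~\ref{lem:fullypoisson} together with the invariance of Poisson point processes under bijective affine rescalings; the statement is essentially Lemma~\ref{lem:fullypoisson} re-expressed in the coordinate system introduced by Modification~3, so no new probabilistic content will be needed.

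First I would invoke Lemma~\ref{lem:fullypoisson}: prior to Modification~3, each color's dots are generated by a homogeneous Poisson point process on $[0,t]\times[1,R+1]$ with the per-color density computed there. That lemma already carries all of the substantive probabilistic content: combining vertical independence (from Poissonization) with the hypothesis specific to this subsection that each operation acts on a \emph{distinct} key, the per-cell Poisson structure of Lemma~\ref{lem:Poissoncells2} is promoted to a mutually independent family of Poisson arrivals across \emph{all} cells, which is exactly the property that characterizes a homogeneous Poisson process on the whole grid.

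Second, Modification~3 is itself just a deterministic bijective affine map of $[0,t]\times[1,R+1]$ onto the rescaled grid (horizontal axis stretched from length $t$ to $m$ and vertical axis from length $R$ to $m$, where $m=\sqrt{\mu}$). Under any such map $\phi$, the Mapping Theorem for Poisson processes pushes a homogeneous Poisson process of density $\lambda$ forward to a homogeneous Poisson process of density $\lambda/|\det D\phi|$ on the image; this can also be verified elementarily cell-by-cell, without citing Kingman. Applying this to the two processes given by Lemma~\ref{lem:fullypoisson}, with the explicit (constant) Jacobian coming from the product scaling, recovers the constant density stated in the lemma. The argument runs color-by-color, so the red dots are handled identically.

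No genuine obstacle arises here: the entire content reduces to bookkeeping once Lemma~\ref{lem:fullypoisson} is in hand. The only care required is to keep the scaling factors straight so that the quoted per-color density drops out of the Jacobian calculation, and to be consistent about the fact that the statement refers to the grid in the coordinates it takes after Modification~3 is applied.
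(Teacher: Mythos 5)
Your method --- invoke Lemma~\ref{lem:fullypoisson} and push the Poisson process forward through the affine rescaling of Modification~3 via the Mapping Theorem --- is exactly right and is what the paper intends (the paper gives no separate proof, treating the rescaling as immediate). The color-by-color reduction is also the intended reading.

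However, your Jacobian bookkeeping as stated does not yield density~$1$. You describe the map as stretching a grid of dimensions $t \times R$ onto $[0,m]\times[0,m]$, which gives $|\det D\phi| = m^2/(tR) = \mu/(tR) = 1/n$, and hence a post-rescaling density of $(1/(2n))/(1/n) = 1/2$, not $1$. The resolution is that the vertical axis of the pre-image grid spans $2R$ units, not $R$: the dots have $b$-coordinates given by the operation index $i$, and $i$ ranges over $1,\dots,2R$ because the alternating sequence contains $R$ insertions interleaved with $R$ deletions. (The paper's own phrase ``$[0,t]\times[1,R+1]$'' is a typo; the consistency check is that the total expected number of blue dots should equal $\mu$, and only with height $2R$ does $\text{area}\times\text{density} = 2tR \cdot \tfrac{1}{2n} = \tfrac{tR}{n} = \mu$.) With height $2R$, $|\det D\phi| = m^2/(2tR) = \tfrac{tR/n}{2tR} = \tfrac{1}{2n}$, so the per-color density $1/(2n)$ from Lemma~\ref{lem:fullypoisson} pushes forward to $(1/(2n))/(1/(2n)) = 1$. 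Alternatively, one can bypass the Jacobian entirely: the rescaled grid has area $m^2 = \mu$ and the expected number of blue dots is $\mu$, so the density must be $1$. You should fix the stated dimensions before the arithmetic ``drops out of the Jacobian calculation'' as you claim.
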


\paragraph{Analysis.} Lemma \ref{lem:fullypoisson2} tells us that the modified problem is actually \emph{exactly} the Path Surplus Problem studied in Section \ref{sec:pathsurplus}. We therefore have that:

\begin{lemma}
   The expected maximum path surplus in the modified problem is $\Omega(m \log^{0.75} m)$. 
\end{lemma}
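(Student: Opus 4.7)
The plan is to invoke the lower bound on path surplus from Section \ref{sec:pathsurpluslower} essentially as a black box, once the three modifications (Poissonization, smoothing, rescaling) have been shown to reduce the Insertion Surplus Problem to the original Path Surplus Problem. The key observation is that Lemma \ref{lem:fullypoisson2} already states that, after the modifications, the blue and red dots on $[m] \times [m]$ are produced by two independent Poisson processes, each of density $1$. But this is \emph{verbatim} the distribution considered in Section \ref{sec:pathsurpluslower}. Moreover, the set of monotonic paths and the definition of surplus are identical.

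Therefore, the theorem from Section \ref{sec:pathsurpluslower} (stating $\E[\surplus(P)] \ge \Omega(m \log^{0.75} m)$ for the path $P$ produced by $\texttt{Path}(L_0, 1)$) applies without any modification. The expected maximum surplus over all monotonic paths is at least the expected surplus of this particular path, which is $\Omega(m \log^{0.75} m)$.

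The only very minor subtlety worth noting is that Section \ref{sec:pathsurpluslower} works in the rotated coordinate system in which paths travel from $(0,0)$ to $(\sqrt{2} m, 0)$ with slope in $[-1, 1]$, whereas here we work with axis-aligned monotonic paths from $(0,0)$ to $(m, m)$. The change of coordinates is a rigid rotation of the plane, which preserves both the Poisson distributions of the dots (since they are rotation-invariant) and the definition of surplus. So no step of the prior argument needs to be rerun.

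I do not anticipate a meaningful obstacle here: all of the delicate work (construction of the recursive path, the sensitivity lemma, the random-walk bound on the recursion depth, and the per-rectangle surplus estimate) was done in Section \ref{sec:pathsurpluslower} and is reused unchanged. The contribution of this subsection is really contained in Modifications 1--3 and in Lemma \ref{lem:fullypoisson2}; once these are in place, the lemma above is immediate.
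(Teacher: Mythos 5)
Your proposal is correct and is essentially identical to the paper's argument: the paper likewise observes that Lemma \ref{lem:fullypoisson2} makes the modified Insertion Surplus Problem literally the Path Surplus Problem of Section \ref{sec:pathsurplus}, and then cites the lower-bound construction from Section \ref{sec:pathsurpluslower} as a black box. The remark about the rotated coordinate system is a fine (if unnecessary) sanity check, since that rotation is internal to the Section \ref{sec:pathsurpluslower} analysis and the conclusion $\E[\surplus(P)] \ge \Omega(m\log^{0.75}m)$ is already stated for a valid monotonic path.
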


Since the transformations increase the expected surplus by $O(t^2/n)$, it follows that:

\begin{proposition}
Suppose that every operation is on a distinct key and that $t = n^{o(1)}$. Let $\mathcal{P}$ be the set of all monotone paths through $[0, t] \times [1, R + 1]$ in the (unmodified) insertion Surplus Problem. Then, setting $m = \sqrt{tR/n}$, we have
$$\E\left[\max_{P \in \mathcal{P}} \surplus(P)\right] \ge \Omega\left(m \log^{0.75} m\right).$$
\label{prop:surpluslower-b}
\end{proposition}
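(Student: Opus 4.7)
The plan is to reduce the (unmodified) Insertion Surplus Problem to the pure Path Surplus Problem of Section \ref{sec:pathsurplus} via the three modifications already introduced in this subsection, and then invoke the lower bound from Section \ref{sec:pathsurpluslower}. The key observation is that Lemma \ref{lem:fullypoisson2} tells us that, after all three modifications, the blue and red dots form independent Poisson processes of density $1$ on the $[m]\times[m]$ grid, which is \emph{exactly} the distribution studied in Section \ref{sec:pathsurplus}. So the lower bound $\Omega(m\log^{0.75}m)$ from Section \ref{sec:pathsurpluslower} applies verbatim to the modified problem.

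The argument then consists of chaining inequalities between the three intermediate problems. Let $S_{\text{orig}}$ denote the expected maximum path surplus in the original (unmodified) problem, and let $S_{\text{P}}$, $S_{\text{S}}$, $S_{\text{R}}$ denote the analogous quantities after Poissonization, after smoothing, and after rescaling, respectively. By the coupling argument described at the end of Modification~1 (where Poissonization corrupts at most an $O(t/n)$-expected fraction of dots in $[0,t]\times[1,R+1]$), we have
\[
\E[S_{\text{orig}}] \;\ge\; \E[S_{\text{P}}] - O(t^2/n).
\]
By the smoothing lemma in this subsection, smoothing \emph{decreases} maximum surplus, so $\E[S_{\text{P}}] \ge \E[S_{\text{S}}]$. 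Rescaling the grid is a pure relabeling and does not change surplus, so $\E[S_{\text{S}}] = \E[S_{\text{R}}]$. Finally, by Lemma \ref{lem:fullypoisson2}, $S_{\text{R}}$ is distributed as the maximum surplus in the Path Surplus Problem on $[m]\times[m]$, so the lower-bound theorem in Section \ref{sec:pathsurpluslower} gives $\E[S_{\text{R}}] \ge \Omega(m\log^{0.75}m)$.

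Chaining these together and using $t = n^{o(1)}$, we obtain
\[
\E[S_{\text{orig}}] \;\ge\; \Omega(m\log^{0.75}m) - O(t^2/n) \;=\; \Omega(m\log^{0.75}m),
\]
since $t^2/n = n^{-1+o(1)}$ is negligible relative to $m\log^{0.75}m$ whenever $m$ is at least a sufficiently large constant (and the bound is vacuous otherwise).

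The only non-routine step in this reduction is the proof of Lemma \ref{lem:fullypoisson2}, which is where the distinct-keys hypothesis is essential: without it, the red dot generated by a deletion and the blue dot generated by the matching prior insertion would share their horizontal coordinate, breaking the independence of arrivals \emph{across rows} within a single column. The distinct-keys assumption ensures that each operation uses a fresh independent hash, so that the Poissonization not only gives vertical (column-wise) independence but full mutual independence of the arrivals over all cells $(a-1,a]\times[b,b+1)$. This is the main place where the lower-bound argument differs from the upper-bound argument of Section \ref{sec:insertionsurplusupper}, and it is the only place where distinct keys actually matter.
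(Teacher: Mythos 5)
Your proof follows the paper's approach essentially verbatim: reduce via the three modifications (Poissonization, smoothing, rescaling) to the pure Path Surplus Problem of Section \ref{sec:pathsurplus}, invoke the $\Omega(m\log^{0.75}m)$ lower bound, and absorb the $O(t^2/n)$ Poissonization loss using $t = n^{o(1)}$. Your remark on why distinct keys are essential for Lemma \ref{lem:fullypoisson2} correctly identifies the one non-routine ingredient, so there is no substantive difference from the paper's argument.
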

Note that, as in the previous section, our final result drops the requirement about whether the first operation is an insertion/deletion since once again this distinction can affect the maximum path surplus by at most $O(1)$.

Rewriting Proposition \ref{prop:surpluslower-b} in terms of the $\surplus(O, [j - t, t])$ notation (introduced at the end of Subsection \ref{sec:insertionsurplusupper}), we get the following corollary:
\begin{corollary}
Let $O$ be an alternating sequence of $2R$ insertions/deletions, where each operation is on a distinct key, and suppose that $t \le n^{o(1)}$. Let $[j - t, t] \subseteq [1, n]$, and let $m = \sqrt{tR/n}$. Then 
$$\E\left[\surplus(O, [j -t, t])\right] \ge \Omega\left(m\log^{0.75} m\right).$$
\label{cor:surpluslower-c}
\end{corollary}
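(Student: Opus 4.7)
The plan is to derive Corollary \ref{cor:surpluslower-c} as essentially a restatement of Proposition \ref{prop:surpluslower-b} in the $\surplus(O, [j-t, j])$ notation. Two small gaps must be bridged: first, Proposition \ref{prop:surpluslower-b} considers monotone paths through $[0, t] \times [1, R + 1]$, whereas the corollary places the horizontal interval at an arbitrary $[j - t, j] \subseteq [1, n]$; second, the notation $\surplus(O, [j - t, j])$ must be unpacked to match the maximum-surplus quantity bounded in the proposition.

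First, I would observe that $\surplus(O, [j - t, j])$ is by definition exactly $\max_{P \in \mathcal{P}} \surplus(P)$, where $\mathcal{P}$ is the collection of monotonic paths from $(j - t - 1, 1)$ to $(j, |O| + 1)$ through the grid $[j - t - 1, j] \times [1, |O| + 1]$, with blue/red dots placed as in the Insertion Surplus Problem. Up to the horizontal translation of the grid, this is precisely the quantity appearing in Proposition \ref{prop:surpluslower-b}.

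Second, I would appeal to translation invariance in the horizontal direction. Because $h$ is a fully random hash function $U \to [n]$ and the positions of the blue/red dots in the Insertion Surplus Problem depend only on $h$-values of operated keys, the joint distribution of the dots whose first coordinate lies in any subinterval $[j - t, j] \subseteq [1, n]$ is identical to the joint distribution of the dots with first coordinate in $[0, t]$. (Since $[j - t, j] \subseteq [1, n]$, no wraparound is required, so this is a clean bijection on hashes via $r \mapsto r - (j - t)$.) It follows that $\surplus(O, [j - t, j])$ and $\surplus(O, [0, t])$ are identically distributed; in particular, they have the same expectation.

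Combining these two observations with Proposition \ref{prop:surpluslower-b} immediately yields
\[
\E\bigl[\surplus(O, [j - t, j])\bigr] = \E\bigl[\surplus(O, [0, t])\bigr] \ge \Omega\bigl(m \log^{0.75} m\bigr),
\]
which is exactly the corollary. No step here is genuinely hard; the only point worth verifying carefully is that the hypothesis ``every operation is on a distinct key'' and $t \le n^{o(1)}$ carry over unchanged (they do, since they are properties of $O$ and $t$, not of the location of the interval), so that Proposition \ref{prop:surpluslower-b} is directly applicable.
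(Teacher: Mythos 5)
Your proposal is correct and takes essentially the same route as the paper. The paper states the corollary simply as a ``rewriting'' of Proposition~\ref{prop:surpluslower-b} in the $\surplus(O, [j-t, j])$ notation with no further argument; you have supplied the two justifications that the rewriting implicitly rests on: that $\surplus(O, [j-t, j])$ is by definition a max of path surpluses over a translated copy of the grid, and that horizontal translation invariance of the fully-random hash $h$ (with $[j-t, j] \subseteq [1,n]$ ruling out wraparound) makes the translated problem equidistributed with the base case handled in the proposition. Both observations are accurate and the hypotheses carry over unchanged, so the derivation is sound.
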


\section{Analysis of Linear Probing}\label{sec:linprobing}

We are now prepared to derive tight bounds on the amortized expected cost of ordered linear probing with tombstones. Consider an ordered linear probing hash table on $n$ slots that implements deletions with tombstones, and performs rebuilds every $R$ insertions. Consider two consecutive rebuilds taking place at times $ t_0 $ and $ t_1 $, and suppose that the load factor never exceeds $1-1/x$ during $[t_0, t_1]$ (recall that the load factor does not count tombstones towards the load). There are $R$ insertions that take place during $[t_0, t_1]$.

Define the \defn{crossing numbers} $ c_1, c_2,\ldots, c_n $ so that $ c_i $ is the number of times that an insertion with a hash smaller than $ i $ either (a) uses a tombstone left by a key that had hash at least $ i $; or (b) uses a free slot in a position greater than or equal to $ i $.

Bender et al.~\cite{bender2022linearfull} establish a tight relationship between insertion/query time and the average crossing number. Roughly speaking, this comes from the fact that, if an insertion of an item $u$ uses a tombstone/free-slot with hash/position $h(u) + k$, then the insertion is guaranteed to have incremented $k$ crossing numbers. This means that, rather than directly analyzing insertion times, one can instead analyze crossing numbers, and then extract the insertion (and query) times from that.

\begin{lemma}[Upper Bound in Terms of Crossing Numbers]
Let $i \in [n]$ and $u \in I$ be uniformly random. Then, the expected amount of time spent on insertion $u$ is 
$$O\left(x + (1 + n/R) \E[c_i]\right).$$
Moreover, the expected amount of time spent on any query or deletion is 
$$O\left(x + \E[c_i]\right).$$
\label{lem:timetocrossingnumber}
\end{lemma}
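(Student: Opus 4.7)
The plan is to relate each operation's cost to the crossing numbers, formalizing the observation made in the paragraph preceding the lemma. The key relationship is that an insertion of $u$ using a free slot at position $h(u) + k$, or a tombstone with originating hash $h(u) + k$, increments exactly the $k$ crossing numbers $c_{h(u)+1}, \ldots, c_{h(u)+k}$. This gives a clean correspondence between ``distance traveled'' and ``crossings contributed,'' which I would exploit to convert a sum over operations into a sum over positions.

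For insertions, I would first show that each insertion's cost is at most an $O(x)$ baseline plus $k$ (the number of crossings it increments), plus a possible ``tombstone gap'' term (addressed below). The $O(x)$ baseline arises because, at load factor $\le 1 - 1/x$, free slots have density $\ge 1/x$, so the expected cost paid by an insertion to reach the next \emph{new} resource it uses is $O(x)$. Summing across the $R$ insertions in the rebuild window yields $\sum_u T(u) \le O(Rx) + O(\sum_i c_i)$. Using $\sum_i \E[c_i] = n\,\E[c_i]$ for $\E$ over uniformly random $i \in [n]$ and dividing by $R$ gives per-insertion expected cost $O(x + (n/R)\E[c_i])$, which matches $O(x + (1 + n/R)\E[c_i])$.

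For a query or deletion of an item $u$ with $h(u) = i_0$, the scan from position $i_0$ proceeds until hitting $u$, a free slot, or an element/tombstone with larger hash. This scan length is bounded by the run length at $i_0$, which decomposes into an $O(x)$ baseline (again from free-slot density) plus a contribution equal to the crossings $c_{i_0}$ corresponding to insertions whose probes extended past position $i_0$. By symmetry of the uniformly random hash function, $\E[c_{i_0}] = \E[c_i]$ for any fixed $i_0$ and uniformly random $i$, yielding the bound $O(x + \E[c_i])$ for any specific query or deletion (note the absence of the $n/R$ factor, since we do not amortize a single query's cost over the whole window).

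The main obstacle is the tombstone gap. When an insertion uses a tombstone at position $j'$ whose originating key had hash $h_T < j'$, the insertion's cost $j' - h(u)$ exceeds the crossings it increments by $j' - h_T$. I would bound this excess by the run length in which the deleted key was originally placed: because rebuilds clear all tombstones at the start of each window, every tombstone used within a window was created within the same window, so the insertion that originally placed the deleted key at $j'$ is itself one of the $R$ insertions being counted and itself contributed $\Theta(j' - h_T)$ crossings. This charging scheme thus double-counts each crossing at most a constant number of times, which is absorbed into the $O(\cdot)$ notation and leaves the final bound unchanged.
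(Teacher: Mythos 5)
The paper's proof of this lemma is a single-line citation to Lemmas 13 and 14 of Bender et al., so there is nothing to compare against directly; the question is whether your from-scratch argument is sound. It is not: the ``tombstone gap'' step contains a genuine error.

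You write that when an insertion uses a tombstone at position $j'$ whose originating key had hash $h_T$, the excess cost $j' - h_T$ can be charged to ``the insertion that originally placed the deleted key at $j'$,'' noting that since tombstones are cleared at rebuilds, this insertion ``is itself one of the $R$ insertions being counted.'' This is false for two independent reasons. First, the rebuild clears \emph{tombstones}, not elements: the deleted key $v$ could easily have been inserted several windows ago and only deleted in the current one, so the insertion of $v$ is generally not among the $R$ insertions in this window. Second, even when $v$ was inserted in the current window, $v$'s own insertion need not have contributed $\Theta(j' - h_T)$ crossings. The crossings charged to $v$'s insertion are determined by the resource (free slot or tombstone) that $v$'s insertion consumed, and $v$ may well have been placed at a position close to $h(v)$ and then migrated rightward to $j'$ through a long sequence of later pushes. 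Those pushes distribute their cost across many \emph{other} insertions, and you have no mechanism to aggregate them back to $v$. Consequently the claimed ``at most a constant number of double-counts'' does not hold, and the bound $\sum_u T(u) \le O(Rx) + O(\sum_i c_i)$ does not follow.

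The query/deletion half is also under-justified: you assert that the scan decomposes into an ``$O(x)$ baseline'' from free-slot density plus $c_{i_0}$, but at load factor $1-1/x$ the distance to the next free slot is $\Theta(x^2)$ in expectation (this is exactly primary clustering), so ``density $\ge 1/x$'' does not by itself give $O(x)$; the actual $O(x)$ comes from the classical displacement analysis, not from slot density, and the connection between the additional displacement accrued over the window and the quantity $\E[c_i]$ needs a real argument rather than an appeal to the word ``crossings.'' The standard route here, and the likely reason the lemma is stated with both bounds together, is to prove the query/deletion bound \emph{first} and then observe that the tombstone gap $j' - h_T$ is precisely the displacement of $v$ at the moment it was deleted, i.e.\ the time of the deletion operation, so the total gap over the window is at most the total time spent on deletions, which the already-established second bound controls. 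This recovers the $(1 + n/R)\E[c_i]$ factor and cleanly explains why the insertion bound has an extra $\E[c_i]$ term beyond the $(n/R)\E[c_i]$ coming from $\sum_i c_i / R$. Your proposal omits this dependency and attempts a direct charging scheme that does not work.
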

\begin{proof}
See Lemmas 13 and 14 of \cite{bender2022linearfull}.
\end{proof}

\begin{lemma}[Lower Bound in Terms of Crossing Numbers]
Suppose $R \le n$. Let $i \in [n]$ and $u \in I$ be uniformly random. Then, the expected amount of time spent on insertion $u$ is 
\begin{equation}\Omega\left(\frac{n}{R} \E[c_i].\right)
\label{eq:crossa}
\end{equation}
Moreover, the expected amount of time spent on any negative query performed at time $t_1$ is
\begin{equation} \Omega\left(\E[c_i]\right).
\label{eq:crossb}
\end{equation}
\label{lem:timetocrossingnumber2}
\end{lemma}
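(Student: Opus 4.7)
The plan splits along the two inequalities. For \eqref{eq:crossa}, I observe that each insertion contributes to $\sum_i c_i$ at most as much as its own cost. An insertion of $u$ with hash $h$ lands at some slot $j'$---the first free slot or tombstone encountered while probing from $h$---at cost $C_u = \Theta(j' - h + 1)$. Its contribution to $\sum_i c_i$ is $j'-h$ when $j'$ is free (case (b), for $i \in (h, j']$), and $h_T - h \le j'-h$ when $j'$ is a tombstone of original hash $h_T$ (case (a), for $i \in (h, h_T]$). Hence $\sum_i c_i \le O(\sum_u C_u)$; since $i$ is uniform in $[n]$ and $u$ is uniform over the $R$ insertions, this rearranges to $\E[C_u] \ge \Omega\!\left(\frac{n}{R}\,\E[c_i]\right)$, establishing \eqref{eq:crossa}.

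For \eqref{eq:crossb}, I will introduce the displacement potential
\[
  W(t) \;=\; \sum_{z}\bigl(\mathrm{pos}(z)-\mathrm{hash}(z)\bigr) \;=\; \sum_{i=1}^n N_i(t),
\]
where $z$ ranges over all items and tombstones present at time $t$, and $N_i(t)$ counts items/tombstones at position $\ge i$ with hash $<i$. The heart of the argument is the identity $W(t_1) - W(t_0^+) = \sum_i c_i$, which I verify operation by operation: a deletion relabels an item as a tombstone with the same position and hash, so $\Delta W = 0$; an insertion landing at a free slot at $j'$ shifts the block $[j, j'-1]$ right by one, giving $\Delta W = (j-h)+(j'-j) = j'-h$, exactly the contribution $|\{i : h<i\le j'\}|$ to $\sum_i c_i$; and an insertion landing on a tombstone at $j'$ with hash $h_T$ additionally destroys a tombstone of displacement $j'-h_T$, so $\Delta W = h_T - h$, exactly the contribution $|\{i : h<i\le h_T\}|$. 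Since $W(t_0^+) \ge 0$, this yields $W(t_1) \ge \sum_i c_i$.

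The remaining step is the geometric inequality $\sum_i Q_i \ge W(t_1)$. Within any run $[a, b]$ at time $t_1$, items and tombstones are sorted by hash with $a = h_a \le h_{a+1} \le \cdots \le h_b$. Setting $b_i := \max\{j \in [a,b] : h_j \le i\}$, the negative query at hash $i \in [a, b]$ probes exactly the positions $i, i+1, \ldots, b_i + 1$ (stopping at either the free slot just past the run or the first item/tombstone of hash $>i$), so $Q_i = b_i - i + 2$; meanwhile the run's contribution to $N_i(t_1)$ is $\max(0, b_{i-1} - i + 1) \le Q_i$. Because $N_i(t_1)$ receives contributions only from the run that contains position $i$, summing over $i$ and over runs gives $\sum_i Q_i \ge \sum_i N_i(t_1) = W(t_1) \ge \sum_i c_i$. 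Taking expectations, dividing by $n$, and using the symmetry of the random hash function, I conclude $\E[Q_{i^*}] \ge \E[c_i]$ for every fixed hash $i^*$, establishing \eqref{eq:crossb}.

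The main obstacle will be the displacement identity in the tombstone-insertion case: one must simultaneously account for the inserted item, the rightward slide of the block $[j, j'-1]$, and the destroyed tombstone at $j'$, and it is the cancellation of these three contributions that produces the clean value $\Delta W = h_T - h$ matching the $\sum_i c_i$ contribution. Once the identity is in place, the run-based lower bound on $\sum_i Q_i$ follows essentially mechanically from the sorted-hash structure of runs in ordered linear probing.
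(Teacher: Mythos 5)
The paper's own proof of this lemma is a one-line citation to the full version of Bender et al., so the relevant question is whether your self-contained reconstruction is sound---and it is. Your charging argument for \eqref{eq:crossa} is correct: an insertion with hash $h$ that lands at slot $j'$ contributes $j'-h$ (free-slot case) or $h_T-h\le j'-h$ (tombstone case, since $h_T\le j'$) to $\sum_i c_i$, while incurring cost $\Theta(j'-h+1)$; hence $\sum_i c_i\le\sum_u C_u$, and rearranging with $i$ uniform in $[n]$ and $u$ uniform over the $R$ insertions gives \eqref{eq:crossa}. For \eqref{eq:crossb}, the displacement potential $W(t)=\sum_z(\mathrm{pos}(z)-\mathrm{hash}(z))=\sum_i N_i(t)$ together with the per-operation bookkeeping (deletion: $\Delta W=0$; free-slot insertion: $\Delta W=(j-h)+(j'-j)=j'-h$; tombstone insertion: $\Delta W=(j-h)+(j'-j)-(j'-h_T)=h_T-h$) establishes $W(t_1)-W(t_0^+)=\sum_i c_i$ exactly, and $W(t_0^+)\ge 0$ since displacements are nonnegative. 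The run-by-run inequality $\sum_i Q_i\ge\sum_i N_i(t_1)$ is also correct; the observation that $N_i$ receives contributions only from the run containing $i$ holds because a slot that is free at $t_1$ was free throughout the rebuild window (free slots can only become occupied, never re-freed, between rebuilds), so no item's probe path from hash to position can cross one. Together with circular symmetry of the fully random hash function, this yields $\E[Q_{i^*}]\ge\E[c_i]$ with constant $1$, even stronger than the stated $\Omega(\cdot)$. This is almost certainly the same mechanism underlying Lemma~14 of \cite{bender2022linearfull}, so you have reconstructed rather than replaced the cited argument, but the reconstruction is complete and correct.
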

\begin{proof}
Both Equations \eqref{eq:crossa} and \eqref{eq:crossb} are established in \cite{bender2022linearfull} in the proof of Theorem 1 (using their Lemma 14).
\end{proof}

Thus, to analyze linear probing, the key technical result that we must prove is that:
\begin{proposition}
Let $x$ and $n$ be parameters satisfying $x \le n^{o(1)}$, and let $R \le n$ be the rebuild-window size. Suppose $\beta = n/R$ satisfies $\beta \le x$. Supposing that the load factor never exceeds $1 - 1/x$ during time interval $[t_0, t_1]$, we have that
$$\E\left[\sum_s c_s\right] \le O(Rx \log^{1.5} x + nx).$$
Moreover, there exists a workload for which this is tight. 
\label{prop:crossing}
\end{proposition}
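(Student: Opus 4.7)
The proposition has two parts, upper and lower, which I would treat separately using the geometric identity
\[
c_i \;=\; \max_{j<i}\bigl(\surplus([j, i-1]) - \free([j, i-1])\bigr)
\]
recalled from Section~\ref{sec:prelims}, where $\free([j,i-1])$ counts the empty positions in $[j,i-1]$ at time $t_0$. The surplus term is controlled by our tight Insertion-Surplus bounds (Corollaries~\ref{cor:surplusupper-c} and~\ref{cor:surpluslower-c}); the free-slot term, by a concentration estimate that uses the load-factor cap.

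\textbf{Upper bound.} Fix an index $i$. I would partition the candidate intervals $[j, i-1]$ by length $t = i - j$ and, for each $t$, set $m = \sqrt{tR/n}$ and apply Corollary~\ref{cor:surplusupper-c} to obtain
\[
\E\bigl[\max(0,\ \surplus([i-t, i-1]) - c\,m\log^{0.75}m - t/(64x))\bigr] \le O(1/m^2).
\]
Separately, since the load at $t_0$ is at most $1-1/x$, the table has at least $n/x$ empty slots there; because the hash function is fully random, a Chernoff-plus-union-bound argument shows that every interval of length $t \ge \Omega(x\log n)$ has $\free \ge t/(2x)$ with probability $1-1/\poly(n)$. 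Using the decomposition $(S-\free)^+ \le (S - cm\log^{0.75}m - t/(64x))^+ + (cm\log^{0.75}m + t/(64x) - \free)^+$ and taking expectations, both summands become small once $cm\log^{0.75}m + t/(64x) \ll t/x$, while the ``balanced'' regime contributes at most the maximum of $m\log^{0.75}m - t/(3x)$ over $t$. Setting $u = \sqrt{tR/n}$ and optimizing shows that this maximum is $\Theta(Rx\log^{1.5}x/n)$, attained at $t^\star \asymp Rx^2\log^{1.5}x/n$. Summing over $i \in [n]$ and absorbing the tail contributions (small-$t$ regime plus Poisson tails) into the $O(nx)$ slack yields $\E[\sum_s c_s] \le O(Rx\log^{1.5}x + nx)$.

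\textbf{Lower bound.} For the matching lower bound I would use the workload that first fills the table to $1-1/x$ full and then alternates between deleting a random present element and inserting a brand-new (never-before-used) key---exactly the regime in which Corollary~\ref{cor:surpluslower-c} applies since all operations hit distinct keys. Fix $t^\star = \Theta(Rx^2\log^{1.5}x/n)$ so that $m^\star := \sqrt{t^\star R/n}$ satisfies $m^\star\log^{0.75}m^\star \asymp Rx\log^{1.5}x/n$. For each $i$, Corollary~\ref{cor:surpluslower-c} applied to the specific interval $[i-t^\star, i-1]$ gives $\E[\surplus([i-t^\star, i-1])] \ge \Omega(m^\star\log^{0.75}m^\star)$, while a concentration argument controls $\free([i-t^\star, i-1])$ to be at most $\tfrac{1}{2}\,m^\star\log^{0.75}m^\star$ with high probability (using $t^\star/x \ll m^\star\log^{0.75}m^\star$ by the choice of $t^\star$). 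Hence $\E[c_i] \ge \Omega(Rx\log^{1.5}x/n)$, and summing over $i$ yields the matching lower bound $\Omega(Rx\log^{1.5}x)$.

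\textbf{Main obstacle.} The most delicate step is controlling $\free$ \emph{uniformly} over all intervals. The positions of empty slots at $t_0$ depend on the full history through linear probing and are far from independent; nevertheless, using the load cap and a Chernoff bound on the count of items whose hashes land in a slightly enlarged neighborhood (which, via linear probing, upper-bounds the number of occupied positions inside $[j,i-1]$), one can establish $\free \ge t/(2x)$ simultaneously for all $i$ and all $t = \Omega(x\log n)$ with probability $1-1/\poly(n)$. A secondary subtlety is that Corollaries~\ref{cor:surplusupper-c}/\ref{cor:surpluslower-c} are stated for alternating sequences, whereas a general rebuild window need not alternate; for the upper bound one can pad $O$ with virtual operations (which only increases surpluses and thus preserves the bound), while the lower-bound workload is alternating by construction. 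Finally, the max-over-$t$ step in the upper bound needs a more careful argument than a naive union bound: one must show that the positive parts $\sum_t \E[Y_t^+]$ are concentrated in a narrow band of $t$ around $t^\star$, rather than spreading uniformly over all $t \in [1,n]$.
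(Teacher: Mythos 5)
Your overall decomposition is the same as the paper's: write $c_s = \max_{j<s}(\surplus([j,s-1]) - \free([j,s-1]))$, then use Corollary~\ref{cor:surplusupper-c} to control the surplus terms, control the free-slot terms by a concentration argument that uses the load cap, and for the lower bound, use the alternating fresh-key workload together with Corollary~\ref{cor:surpluslower-c}. But there are two genuine gaps.

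First, your lower bound only establishes the $\Omega(Rx\log^{1.5}x)$ piece, not the $\Omega(nx)$ piece. When $\beta$ is close to $x$ (i.e.\ $R \approx n/x$), the term $Rx\log^{1.5}x \approx n\log^{1.5}x$ is much smaller than $nx$, so tightness in this regime requires a separate argument. The paper handles this with a different workload: start the rebuild window at load factor $1 - 2/x$ and issue $n/x \le R$ insertions in a row; each costs $\Theta(x^2)$ expected time (and $\Theta(x^2)$ expected contribution to $\sum_s c_s$) by the classical analysis, giving $\Omega(nx)$. Your alternating workload does not obviously give $\Omega(nx)$ because anti-clustering kicks in after a small number of operations. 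You need both workloads, one per regime.

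Second, your reduction to alternating sequences via ``padding $O$ with virtual operations'' is not justified, and I don't think it works as stated. The crossing numbers $c_s$ are defined in terms of the actual operations, not just in terms of surplus quantities; inserting virtual deletions or insertions changes the evolution of the table and can change both the $\surplus$ and $\free$ terms in ways that are not monotone. The paper instead invokes a careful lemma (Lemma~\ref{lem:wlogworkload}, from Proposition 7 of Bender et al.'s full paper) showing that there exists a hovering workload $O'$ and initial set $A'$ for which $\E[\sum_i c_i] \le \E[\sum_i c'_i] + O(nx)$; the $O(nx)$ slack absorbs the error of the reduction. You'd need to either prove something of this strength or find a different way around the non-alternation issue.

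Two smaller points: your uniform high-probability control $\free \ge t/(2x)$ for all $t \gtrsim x\log n$ is in the right spirit, but the paper's version (Lemma~\ref{lem:freeslots}) works directly with $\E[\max_i(i/(4x)-F_i)] \le O(x)$ and avoids any $\poly(n)$ union bound; and the ``concentration in a narrow band of $t$'' claim you defer in the last paragraph is exactly what the paper's Lemma~\ref{lem:surplusupper-d} does (reduce to power-of-two $t$, identify a threshold $\overline{t}$, sum geometric tails for $t \ge \overline{t}$), so that step does need to be carried out.
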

% Let $t$ be optimal interval size
% $\sqrt{t/\beta} \log^{0.75} x = t/x$
% $\frac{t}{\beta} \log^{1.5} x = t^2/x^2$
% $t = \frac{x^2}{\beta} \log^{1.5} x$
% So avg $c_s$ is $t/x = \frac{x}{\beta} \log^{1.5} x$.
% Summing over $s$ gives $Rx \log^{1.5} x = Rx \log^{1.5} x$.

Assuming Proposition \ref{prop:crossing}, we can complete a tight analysis of linear probing. Namely, we can prove Theorem \ref{thm:main}, restated here:
\mainthm*
\begin{proof}
It suffices to analyze the operations between two consecutive rebuilds. Note that the amortized expected cost of the rebuilds themselves is $O(n/R) = O(\beta)$ time per insertion/deletion, which is negligible compared to the other terms that we will be considering.

By Lemma \ref{lem:timetocrossingnumber} and Proposition \ref{prop:crossing}, the amortized expected insertion time is 
$$O\left(x + \frac{n}{R} \cdot \frac{Rx \log^{1.5} x + nx}{n}\right) = O\left(x \log^{1.5} x  + \beta x\right).$$
By Lemma \ref{lem:timetocrossingnumber2} and Proposition \ref{prop:crossing}, there exists a workload for which the amortized expected insertion time is
$$\Omega\left( \frac{n}{R} \cdot  \frac{Rx \log^{1.5} x + nx}{n}\right) = \Omega\left(x \log^{1.5}x  + \beta x\right).$$
By Lemma \ref{lem:timetocrossingnumber} and Proposition \ref{prop:crossing}, the expected query time is 
$$O\left(x + \frac{Rx \log^{1.5} x + nx}{n}\right) = O\left(x + \frac{x \log^{1.5} x}{\beta}\right).$$
By Lemma \ref{lem:timetocrossingnumber2} and Proposition \ref{prop:crossing}, there exists a workload and a query for which the expected query time is 
$$\Omega\left(\frac{Rx \log^{1.5} x + nx}{n}\right) = \Omega\left(x + \frac{x \log^{1.5} x}{\beta}\right).$$
\end{proof}

\begin{corollary}
The optimal rebuild-window size is $R = \Theta(n / \log^{1.5} x)$, resulting in wort-case amortized expected insertion cost of $\Theta(x \log^{1.5} x)$ and an expected query time of $\Theta(x)$.
\end{corollary}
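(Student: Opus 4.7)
The plan is to derive this corollary as a direct optimization of the trade-off curve established in Theorem \ref{thm:main}. Recall that Theorem \ref{thm:main} parametrizes the hash table's performance by $\beta = n/R \in [1, x]$, giving amortized expected insertion/deletion cost $\Theta(x\log^{1.5} x + \beta x)$ and expected query cost $\Theta(x + x\log^{1.5} x / \beta)$. The plan is to identify the value of $\beta$ (equivalently, of $R$) at which both bounds are simultaneously minimized.

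First I would observe that any $\beta$ which is $o(\log^{1.5} x)$ makes the query bound blow up to $\omega(x)$, since $x\log^{1.5} x /\beta = \omega(x)$; and any $\beta$ which is $\omega(\log^{1.5} x)$ makes the insertion bound blow up to $\omega(x\log^{1.5} x)$, since $\beta x = \omega(x \log^{1.5} x)$. So neither regime can achieve the target simultaneous bounds $O(x \log^{1.5} x)$ insertion and $O(x)$ query. This establishes that the optimum can only be attained when $\beta = \Theta(\log^{1.5} x)$, which (plugging into $R = n/\beta$) is exactly $R = \Theta(n/\log^{1.5} x)$.

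Next I would verify that at $\beta = \Theta(\log^{1.5} x)$ both targets are in fact achieved. Plugging in, the insertion bound becomes $\Theta(x \log^{1.5} x + \Theta(\log^{1.5} x) \cdot x) = \Theta(x \log^{1.5} x)$, and the query bound becomes $\Theta(x + x \log^{1.5} x / \Theta(\log^{1.5} x)) = \Theta(x)$. Since Theorem \ref{thm:main} gives matching lower bounds for these quantities (for an adversarial workload and query), these are tight, and $R = \Theta(n/\log^{1.5} x)$ is genuinely optimal rather than an artifact of the upper-bound analysis.

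There is no substantive obstacle here: the corollary is a one-line optimization once Theorem \ref{thm:main} is in hand. The only care needed is to note that the stated range $1 \le \beta \le x$ in Theorem \ref{thm:main} does include $\beta = \Theta(\log^{1.5} x)$ (since we are in the regime $x \le n^{o(1)}$, so $\log^{1.5} x \ll x$ and also $\log^{1.5} x \ge 1$ for $x$ sufficiently large), so the optimal choice is within the parameter range where Theorem \ref{thm:main} applies.
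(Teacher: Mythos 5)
Your proposal is correct and is exactly the optimization the paper intends (the paper states this corollary without proof, treating it as immediate from Theorem~\ref{thm:main}, and the introduction sketches the same trade-off argument informally). Both directions are handled properly: you show $\beta = \Theta(\log^{1.5} x)$ achieves the stated bounds, and that any other asymptotic regime for $\beta$ strictly worsens one side, so the optimum is unique up to constants.
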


\subsection{Proof of Proposition \ref{prop:crossing}}\label{sec:finalproposition}

Define the \defn{surplus} $\surplus(i, j)$ of an interval $[i, j] \subseteq [1, n]$ to be (using the notation established at the end of Subsection \ref{sec:insertionsurplusupper}) $\surplus(O, [i, j])$ where $O$ is the sequence of insertions/deletions performed during the rebuild window $[t_0, t_1]$. And define the \defn{free-slot contribution} $\free(i, j)$ to be the number of free slots in $[i, j]$ at time $t_0$. Throughout this section, we shall assume for context that $n, \beta, R$ are as defined in Proposition \ref{prop:crossing}.

The following lemma tells us how to interpret crossing numbers in terms of surpluses and free-slot contributions. At a high level, what is says is that the way a large crossing number $c_s$ can occur is that there is some interval $[i, s - 1]$ such that $\surplus(i, s - 1)$ overpowers $\free(i, s - 1)$ by $c_s$.
\begin{lemma}[From Crossing Numbers to Surplus]
For any given $s \in [n]$, we have that 
$$c_s = \max_{i < s} \left( \surplus(i, s - 1) - \free(i, s - 1)\right).$$
\label{lem:crossingnumbertosurplus}
\end{lemma}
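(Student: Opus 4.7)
The plan is to prove the equality via a reduction to a Lindley-style reformulation of path surplus, combined with a careful accounting of how insertions propagate past position $s$ in ordered linear probing.

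First, I would re-express $\surplus(i, s-1)$ in temporal form. A monotonic path through the grid for interval $[i, s-1]$ corresponds to a choice of column-advance times $t_0 \le \tau_i \le \tau_{i+1} \le \cdots \le \tau_{s-1} \le t_1$, so that
\[
\surplus(i, s-1) \;=\; \max_{\tau_i \le \cdots \le \tau_{s-1}}\; \sum_{j=i}^{s-1}\bigl(I_j(\tau_j) - D_j(\tau_j)\bigr),
\]
where $I_j(\tau)$ and $D_j(\tau)$ count insertions and deletions with hash $j$ in $[t_0, \tau]$. This ``Lindley form'' is the bridge between the geometric path surplus and the dynamics of the hash table.

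For the direction $c_s \ge \max_{i<s}(\surplus(i, s-1) - \free(i, s-1))$, I would fix the optimal $i$ and a witness sequence $(\tau_j)$, then replay the operations chronologically while tracking the state of the positions in $[i, s-1]$. Because ordered linear probing stores items in sorted order by hash within each run, whenever the net load of items and tombstones with hash in $[i, s-1]$ exceeds the interval's capacity, each further insertion with hash in $[i, s-1]$ must either place its key past $s$ or consume a tombstone/free slot whose effective hash is $\ge s$---in either case incrementing $c_s$. A telescoping argument over the chosen cutoffs $\tau_j$ turns the witness sum into exactly $\free(i, s-1) + c_s$, giving the bound.

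For the reverse direction $c_s \le \max_{i<s}(\surplus(i, s-1) - \free(i, s-1))$, I would build a single witness monotonic path. For each insertion that causes $c_s$ to increment at some time $\tau$, trace leftward from $s$ to find the maximal interval $[i^\star, s-1]$ whose slots are fully packed at that moment by items and tombstones with hash in $[i^\star, s-1]$ (this is the segment of the run that is ``responsible'' for the overflow). Aggregating these events across time produces a monotone schedule of column-advance times, and the resulting path in the $[i^\star, s-1]$ grid has surplus at least $c_s + \free(i^\star, s-1)$; monotonicity of the schedule follows from the sorted-by-hash invariant of ordered linear probing.

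The main technical obstacle is handling the asymmetry between the two kinds of crossing: free slots are identified by position, while tombstones are identified by the deleted key's original hash. Establishing that an overflow of the ledger for $[i, s-1]$ coincides exactly with an insertion crossing $s$ (and in no other circumstance) requires leveraging the ordered-linear-probing invariant that items and tombstones in a run are sorted by hash, so that a full ``block'' of hash-$\ge i$ content stretching from position $i$ past $s-1$ is precisely what is needed to force a crossing. Once this invariant is used to equate ledger overflows with $c_s$-increments, both inequalities follow from matched accounting against the Lindley form of surplus.
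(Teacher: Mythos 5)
The paper does not actually prove Lemma~\ref{lem:crossingnumbertosurplus}: its entire ``proof'' is the one-line citation ``This follows from Lemmas 9 and 10 of \cite{bender2022linearfull}.'' So your proposal is not an alternative to the paper's argument---it is an attempt to supply a self-contained proof of a result the paper outsources. The general plan (re-express the monotone-path surplus as a Lindley maximum over nondecreasing column cutoffs $\tau_i\le\cdots\le\tau_{s-1}$, then match the two sides by accounting for insertions that cross $s$) is the right shape, and the Lindley reformulation you write down is a correct description of the geometric surplus.

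The direction $c_s\ge\max_i(\surplus(i,s-1)-\free(i,s-1))$ is essentially right, though the key step is left implicit. For a fixed $i$ and fixed cutoffs, each insertion below the path (hash $j\in[i,s-1]$, time $<\tau_j$) consumes exactly one free slot or tombstone at some position $p\ge j\ge i$. If it does not increment $c_s$, then either $p<s$ and $p$ was free at $t_0$ (free slots only shrink within a rebuild window, so it is one of the $\free(i,s-1)$ counted slots), or the slot $p<s$ held a tombstone with hash $k$. The ordered invariant forces $k\ge j$, and $k<s$ because the crossing was avoided; hence $k\in[i,s-1]$, and the creating deletion occurred before the insertion, hence before $\tau_j\le\tau_k$, so that deletion is also below the path. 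Since the insertion-to-(slot/tombstone) map is injective, (blue below) $\le c_s+\free+$(red below), i.e.\ $c_s\ge\surplus-\free$. You gesture at this via ``telescoping'' but never state the ordered-invariant step that pins the tombstone's hash into $[j,s)$ and places its deletion below the path; that is the crux of this direction.

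The direction $c_s\le\max_i(\surplus(i,s-1)-\free(i,s-1))$, however, has a genuine gap. You define, for each crossing event at time $\tau$, a maximal fully-packed interval $[i^\star,s-1]$, and then assert that ``aggregating these events across time produces a monotone schedule of column-advance times'' and a single interval $[i^\star,s-1]$. But the $i^\star$ values at different crossing events need not be nested or monotone in time: deletions can create tombstones inside $[i^\star,s-1]$ between two crossings, and an element with hash $<i^\star_k$ can be inserted and slide into $[i^\star_k,s-1]$, so the maximal packed interval can both grow and shrink. You never specify which single $i$ is chosen, how the per-event intervals are merged into one monotone sequence $\{\tau_j\}$, or why the merged path's surplus comes out to exactly $c_s+\free(i,s-1)$ rather than something smaller. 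The sentence ``monotonicity of the schedule follows from the sorted-by-hash invariant'' conflates a spatial invariant (items within a run are sorted at each fixed instant) with the temporal one you actually need (cutoffs nondecreasing across columns); the bridge between them is precisely the construction that is missing. As written, the proposal establishes only one inequality, whereas the lemma---and both Lemma~\ref{lem:crossinga} and Lemma~\ref{lem:crossingupper} downstream---need the exact equality.
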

\begin{proof}
This follows from Lemmas 9 and 10 of \cite{bender2022linearfull}.
\end{proof}

\begin{lemma}[Lower-Bound Side of Proposition \ref{prop:crossing}]
There exists a workload such that
$$\E\left[\sum_s c_s\right] \ge \Omega(Rx \log^{1.5} x + nx).$$
\label{lem:crossinga}
\end{lemma}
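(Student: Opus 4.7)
Consider the workload: first insert $\lfloor(1-1/x)n\rfloor$ distinct fresh keys (the ``fill phase''), perform a rebuild, then do $R$ operations alternating between deleting a random fill key and inserting a fresh (never-before-used) key. By construction, every operation in this alternating ``rebuild window'' acts on a distinct key, so Corollary \ref{cor:surpluslower-c} applies. Note that $\sum_s c_s$ equals the total insertion scan distance $\sum_u (j'(u) - h(u))$ in the window. The plan is to establish the two lower bounds $\Omega(Rx\log^{1.5}x)$ and $\Omega(nx)$ by separate arguments and add them.

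For the $\Omega(Rx\log^{1.5}x)$ contribution (which dominates the target precisely when $\beta \le \log^{1.5}x$), I would apply Corollary \ref{cor:surpluslower-c} with interval length $t = x^2 \log^{1.5 - K/\log\log x} x / \beta$, where $K$ is a large constant chosen in terms of the hidden constant $c_1$ in the corollary's $\Omega(m \log^{0.75} m)$ bound. Under $\beta \le \log^{1.5}x$, we have $m = \Theta(\sqrt{t/\beta}) = \Theta((x/\beta)\log^{0.75 - K/(2\log\log x)} x)$ and $\log m = \Theta(\log x)$, so the corollary yields $\E[\surplus(s-t, s-1)] \ge c_1 \cdot \Theta((x/\beta)\log^{1.5 - K/(2\log\log x)}x)$. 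Meanwhile, since the table is $(1-1/x)$-full at time $t_0$, $\E[\free(s-t, s-1)] = t/x = (x/\beta)\log^{1.5 - K/\log\log x}x$ by linearity. Using the identity $\log^{1/\log\log x}x = e$, the gap factors as
\begin{equation*}
(x/\beta)\log^{1.5 - K/\log\log x} x \cdot (c_1 e^{K/2} - 1),
\end{equation*}
which is $\Omega((x/\beta)\log^{1.5}x)$ once $K$ is large enough that $c_1 e^{K/2} \ge 2$ (since $\log^{1.5-K/\log\log x}x = e^{-K}\log^{1.5}x$). Lemma \ref{lem:crossingnumbertosurplus} then gives $\E[c_s] \ge \Omega((x/\beta)\log^{1.5}x)$, and summing over the $n$ positions yields $\sum_s \E[c_s] \ge \Omega(Rx\log^{1.5}x)$.

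For the $\Omega(nx)$ contribution, I would use a classical Knuth-style argument applied to an early sub-window. Since $\beta \le x$, the window contains at least $R \ge n/x$ operations, so the first $n/(Cx)$ operations fit inside it for any sufficiently large constant $C$. Throughout these early operations, the tombstone count never exceeds $n/(2Cx) \ll n/x$; hence the hash table remains approximately classical (load $\approx 1 - 1/x$, negligible tombstone help), and by Knuth's classical analysis each insertion has expected scan distance $\Omega(x^2)$. Summing over the $\Theta(n/x)$ insertions in this sub-window contributes $\Omega((n/x)\cdot x^2) = \Omega(nx)$ to $\sum_s c_s$.

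Adding the two bounds gives $\sum_s \E[c_s] \ge \Omega(Rx\log^{1.5}x + nx)$. (For $\beta > \log^{1.5}x$ the first bound may degrade, but in that regime $\Omega(nx)$ already dominates the target $\Omega(Rx\log^{1.5}x + nx)$.) The main obstacle is the parameter tuning in the first step: at the ``natural'' choice $t = x^2 \log^{1.5}x/\beta$, both $\E[\surplus]$ and $\E[\free]$ scale as $(x/\beta)\log^{1.5}x$ with constants one cannot a priori compare (since $c_1$ may be smaller than $1$). Shrinking the exponent by $K/\log\log x$ converts the ambiguous constant factor into a concrete multiplicative $e^{K/2}$ gap, at the cost of only a constant $e^{-K}$ factor (for fixed $K$), which suffices to recover a $\Theta((x/\beta)\log^{1.5}x)$ gap.
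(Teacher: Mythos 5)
Your argument for the $\Omega(Rx\log^{1.5}x)$ term is correct and, once unwound, is the same as the paper's: since $\log^{-K/\log\log x}x = e^{-K}$, your interval length $t = x^2\log^{1.5-K/\log\log x}x/\beta$ is exactly the paper's $t = x^2\log^{1.5}x/(q\beta)$ with $q = e^K$. Both exploit the same asymmetry, that shrinking $t$ by a constant factor $q$ shrinks $\E[\free]$ by $q$ but $\E[\surplus]$ only by $\sqrt{q}$, so a sufficiently large constant $q$ lets the surplus win regardless of the hidden constant in Corollary~\ref{cor:surpluslower-c}.

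The $\Omega(nx)$ argument has a genuine gap. You keep the same alternating hovering workload and assert that, because the tombstone count during the first $n/(Cx)$ operations stays below $n/(2Cx) \ll n/x$, tombstones provide ``negligible help'' and Knuth's $\Theta(x^2)$-per-insertion bound still applies. But tombstones are far more potent per capita than free slots: free slots are anti-correlated with long runs, whereas tombstones (deletions of random elements) land in a run in proportion to its length. A random insertion lands in a run of size-biased expected length $\Theta(x^2)$, and once there are $n/(2Cx)$ tombstones in the table such a run already contains $\Theta(x/C)$ of them, cutting the expected scan distance to $\Theta(Cx)$. In fact the tombstone density in a typical run reaches $\Omega(1)$ after only $\Theta(n/x^2)$ delete/insert pairs, so most of your sub-window is \emph{not} operating in an approximately-classical table; the cumulative cost over the first $n/(Cx)$ operations works out to roughly $\Theta(n\log x)$, not $\Theta(nx)$. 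This is precisely the anti-clustering effect the whole paper is about, and it cannot be waved away as negligible. Indeed, the upper-bound side (Lemma~\ref{lem:crossingupper}) gives $\E[c_s] = O((x/\beta)\log^{1.5}x + x)$ for a hovering workload, while the $\surplus$-minus-$\free$ lower-bound machinery tops out at $\Theta((x/\beta)\log^{1.5}x)$; for $\beta \gg \log^{1.5}x$ one therefore cannot extract $\Omega(x)$ per position from a hovering workload at all.

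The paper's fix is to use a \emph{different} workload for the $\Omega(nx)$ term: have the rebuild window start with the table at load factor $1 - 2/x$ (freshly rebuilt, so zero tombstones) and begin with $n/x \le R$ consecutive insertions. During that burst there are genuinely no tombstones, Knuth's classical analysis applies verbatim, each insertion costs $\Theta(x^2)$ and contributes $\Theta(x^2)$ to $\sum_s c_s$, and $\Omega(nx)$ follows. Since the lemma only requires existence of \emph{some} workload for each parameter regime, it is fine to use the hovering workload when $\beta \le O(\log^{1.5}x)$ and the insertion-burst workload otherwise.
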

\begin{proof}
We begin by designing a workload with 
\begin{equation}\E\left[\sum_s c_s\right] \ge \Omega(nx).
\label{eq:lbtriv}
\end{equation}
In order for a workload to guarantee this, it turns out that all we need is for the workload to start at a load factor of $1 - 2/x$, and to begin with $n/x \le R$ insertions in a row. Each of these insertions takes expected time $\Theta(x^2)$ and contributes $\Theta(x^2)$ in expectation to $\sum_s c_s$ (by the classical tombstone-free analysis of linear probing). Summing over the $n/x$ insertions gives \eqref{eq:lbtriv}.

The more interesting challenge is to achieve a lower bound of the form
$$\E\left[\sum_s c_s\right] \ge \Omega(Rx \log^{1.5} x).$$
Note that we only need to show this for cases where $Rx \log^{1.5} x \ge nx$, so we can assume that $\beta \le O(\log^{1.5}x) \le x^{1 - \Omega(1)}$ (recall that $\beta$ is defined as $n / R$).
To prove the lower bound, we use the following workload: start at a load factor of $1 - 1/x$, and then perform an alternating sequence of $2R$ insertions/deletions, where each operation is on a different key. Critically, this is the setting where we can apply Corollary \ref{cor:surpluslower-c} to analyze insertion surplus.

Set $t = \frac{x^2}{q \beta} \log^{1.5} x$ for some parameter $q \le x^{o(1)}$ (that we will later set to a large constant). For each $s \in [n]$, we have that
\begin{align*}
\E[c_s] & = \E\left[\max_{i < s} \left( \surplus(i, s - 1) - \free(i, s - 1)\right)\right] \\
& \ge  \E\left[\surplus(s - t, s - 1) - \free(s - t, s - 1)\right] \tag{by Lemma \ref{lem:crossingnumbertosurplus}}\\
& =  \E\left[\surplus(s - t, s - 1)\right] - t/x \tag{since load factor is $1-1/x$, and by symmetry across $s$} \\
& \ge  \Omega\left(\sqrt{t/\beta} \log^{0.75} \sqrt{t/\beta}\right) - t/x \tag{by Corollary \ref{cor:surpluslower-c}} \\
& \ge  \Omega\left(\sqrt{t/\beta} \log^{0.75} x\right) - t/x \tag{since $\beta \le x^{1 - \Omega(1)}$ and $t \ge \Omega(x)$} \\
& \ge  \Omega\left(\frac{x}{\sqrt{q} \beta} \log^{1.5} x - \frac{x}{q \beta} \log^{1.5} x\right) \tag{by expanding $t$}. \\
\end{align*}
Setting $q$ to be a sufficiently large positive constant gives 
$$\E[c_s] \ge \Omega\left(\frac{x}{\beta} \log^{1.5} x\right),$$
implying that
$$\E\left[\sum_s c_s\right] \ge \Omega\left(\frac{n x}{\beta} \log^{1.5} x\right) = \Omega(Rx \log^{1.5} x),$$
as desired.
\end{proof}

% Let $t$ be optimal interval size
% $\sqrt{t/\beta} \log^{0.75} x = t/x$
% $\frac{t}{\beta} \log^{1.5} x = t^2/x^2$
% $t = \frac{x^2}{\beta} \log^{1.5} x$
% So avg $c_s$ is $t/x = \frac{x}{\beta} \log^{1.5} x$.
% Summing over $s$ gives $Rx \log^{1.5} x = Rx \log^{1.5} x$.

To prove the upper-bound side of Proposition \ref{prop:crossing}, we will first need a few more lemmas.

\begin{lemma}[Free-Slot Lower Bound]
For $i \ge 0$, let $ F_i $ be the number of free slots in positions $ [n - i, n] $ at time $t_0$. Then,
$$\E\left[\max_{i \ge 0} \left(\frac{i}{4x} - F_i\right)\right] \le O(x).$$
\label{lem:freeslots}
\end{lemma}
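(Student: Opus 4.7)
The plan is to exploit the fact that $\E[F_i] \ge (i+1)/x$ exceeds $i/(4x)$ by a constant factor, so the bad event $\{F_i < i/(4x)\}$ requires a substantial downward deviation of $F_i$ from its mean. Writing $F_i = \sum_{j=n-i}^{n} X_j$ where $X_j$ is the indicator that slot $j$ is free at $t_0$, the symmetry of a fully random hash function (using the wraparound convention) gives $\E[X_j] = 1 - N/n \ge 1/x$, and hence $\E[F_i] \ge (i+1)/x$.

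The key technical ingredient is a Chernoff-type concentration for $F_i$. The free-slot indicators $(X_j)$ are negatively associated: this follows from the parking-function analysis, since the bin-occupancy counts $Y_j$ (number of items hashed to slot $j$) are negatively associated by the standard balls-into-bins argument, and the free-slot indicators are determined by $(Y_j)$ through the cluster structure in a monotone way. Combined with $\operatorname{Var}(X_j) \le \E[X_j]$, this yields
\[
\Pr[F_i \le \E[F_i] - u] \;\le\; \exp\bigl(-\Omega(u^{2}/\E[F_i])\bigr) \qquad \text{for } 0 \le u \le \E[F_i].
\]

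Finally, I bound the expected max by a tail integral. The event $(i/(4x) - F_i) > t$ forces $F_i < i/(4x) - t$, which requires $i \ge 4xt$ (since $F_i \ge 0$); for such $i$, the required deviation $u := \E[F_i] - (i/(4x) - t) \ge 3i/(4x) + t$, so the Chernoff bound gives $\Pr \le e^{-\Omega(i/x)}$. Summing the resulting geometric tail over $i \ge 4xt$ produces $\Pr[\max_i(i/(4x) - F_i) \ge t] \le O(x) \cdot e^{-\Omega(t)}$, and integrating gives $\E[\max^{+}] = O(\log x) \le O(x)$. Since trivially $\max_i(i/(4x) - F_i) \ge 0 - F_0 \ge -1$, the signed expectation is also $O(x)$. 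The main obstacle is justifying negative association in the precise form needed; if this proves inconvenient, a cleaner fallback is the decomposition $F_i = (i+1) - a_i - M_i$ with $a_i$ a Binomial and $M_i$ the overflow past position $n-i-1$, viewed as a reflecting random walk with drift $-1/x$ whose tail $\Pr[M_i \ge v] \le e^{-\Omega(v/x)}$ follows from a Doob inequality on the exponential martingale $e^{\lambda(N_k - k)}$; the same tail-integral then yields the desired $O(x)$ bound.
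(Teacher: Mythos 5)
The proposal's main line has a genuine flaw: the free-slot indicators $X_j$ in a linear-probing table are \emph{not} negatively associated, and the claimed sub-Gaussian lower tail $\Pr[F_i \le \E[F_i]-u] \le \exp(-\Omega(u^2/\E[F_i]))$ is false. The cleanest way to see this is via clustering: the event $F_i=0$ requires the interval $[n-i,n]$ to be covered by a single run, and since run lengths at load $1-1/x$ have a sub-exponential tail of the form $\exp(-\Theta(\ell/x^2))$, one has $\Pr[F_i=0]\gtrsim \exp(-O(i/x^2))$. Negative association (or any sub-Gaussian lower tail for $F_i$) would instead force $\Pr[F_i=0]\le \exp(-\Omega(\E[F_i])) = \exp(-\Omega(i/x))$, which is exponentially smaller. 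Concretely, $\Pr[F_i \le i/(4x)-t]$ cannot be bounded by $e^{-\Omega(i/x)}$ as you claim; it can only be bounded by something like $e^{-\Omega(i/x^2)}$. Your stated justification for NA also doesn't invoke any valid preservation theorem: NA is preserved under monotone functions of \emph{disjoint} subsets of an NA family, but each $X_j$ here is a function of \emph{all} the bin-occupancy counts, so the closure property does not apply.

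Your fallback decomposition $F_i \ge (i+1) - a_i - M_i$, with $a_i$ a binomial and $M_i$ the overflow into the interval having tail $\Pr[M_i\ge v]\le e^{-\Omega(v/x)}$, is exactly the paper's approach (the paper calls these $A_i$ and $B_i$ and cites Corollary~2 of the prior work for the overflow tail). However, ``the same tail-integral'' does not go through unmodified with this weaker, sub-exponential tail: summing $\Pr[F_i\le i/(4x)-t]\lesssim e^{-\Omega(i/x^2)}$ over \emph{all} integers $i\ge 4xt$ picks up an extra factor of roughly $x^2$ (the number of indices per decay scale), which after integrating gives $O(x\log x)$ rather than $O(x)$. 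The paper avoids this by first restricting to powers-of-two indices $i=2^t$ (and to $i\ge x^2$, paying $O(x)$ for the small-$i$ regime), and then summing $\E[\max(0,\, i/(2x)-F_i)]\le (i/x)\cdot 2^{-\Omega(\sqrt{i}/x)}$ over those $O(\log n)$ indices, which telescopes to $O(x)$ because the tail exponent grows with $i$. So the key missing ingredient in your write-up is this geometric sparsification of the index set before integrating; with that added, your fallback matches the paper's proof.
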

\begin{proof}
We can reduce to power-of-two values of $ i $ with the observation that
$$\max_{i \ge 0} \left(\frac{i}{4x} - F_i\right) \le \max_{i = 2^t}  \left(\frac{i}{2x} - F_i\right).$$
We can further reduce to $i \ge x^2$ with the observation that
$$\max_{i = 2^t}  \left(\frac{i}{2x} - F_i\right) \le \max_{i = 2^t \ge x^2}  \left(\frac{i}{2x} - F_i\right) + O(x).$$
This, in turn, is at most
$$O(x) + \sum_{i = 2^t \ge x^2}  \max\left(0, \frac{i}{2x} - F_i\right).$$
Therefore to complete the proof, it suffices to show that for each individual $i = 2^t \ge x^2$, we have
\begin{equation}
\E\left[\max\left(0, \frac{i}{2x} - F_i\right)\right] \le \frac{i}{x} \cdot 2^{-\Omega(\sqrt{i} / x)}.
\label{eq:freeslotsbyi}
\end{equation}

Let $A_i$ be the number of elements $u$ in the hash table at time $ t_0 $ with hashes $h(u) \in [n - i, n]$, and let $B_i$ be the number of elements $u$ in slots $[n - i, n]$ of the hash table at time $t_0$ with hashes $h(u) < n - i$. Then 
$$F_i \ge i - A_i - B_i.$$
Since $A_i$ is a sum of independent indicator random variables with mean $\Theta(i)$, we have by a Chernoff bound that
$$\Pr[|A_i - \E[A_i]| \ge k \sqrt{i}] \le 2^{-\Omega(k)}.$$
We can also bound $B_i$, using Corollary 2 of \cite{bender2022linearfull}, to get
$$\Pr[B_i  \ge k x] \le 2^{-\Omega(k)}.$$
It follows that
$$\Pr[F_i \le \E[i - A_i - B_i] - k(\sqrt{i} + x)] \le 2^{-\Omega(k)}.$$
Since $\E[i - A_i - B_i] \ge i - (1 - 1/x)i - O(x) \ge i/x - O(x),$ it follows that
$$\Pr[F_i \le i/x - k(\sqrt{i} + x)] \le 2^{-\Omega(k)}.$$
Recalling that $i \ge x^2$, we have that
$$\Pr[F_i \le i/x - k\sqrt{i}] \le 2^{-\Omega(k)},$$
and thus that
$$\Pr[F_i \le i/(2x)] \le 2^{-\Omega(\sqrt{i} / x)}.$$
This means that 
$$\E\left[\max\left(0, \frac{i}{2x} - F_i\right)\right] \le \frac{i}{x} \cdot 2^{-\Omega(\sqrt{i} / x)},$$
as desired.

\end{proof}

Call a sequence of insertions/deletions \defn{hovering} if the sequence alternates between insertions and deletions, and starts with a hash table out load factor $1 - 1/x$. It turns out that, to prove the upper-bound side of Proposition \ref{prop:crossing}, it suffices to focus on hovering workloads exclusively.

\begin{lemma}[Hovering Workloads are WLOG]
Let $A$ be the set of elements present at time $t_0$, and let $ O $ be the sequence of insertions/deletions that take place in $[t_0, t_1]$ (never exceeding a load factor of $1 - 1/x$). 

Then there exists a set $A'$ of size at most $(1 - 1/x)n$, and a hovering sequence of insertions/deletions $S'$ such that if $A'$ and $O'$ were used in place of $A$ and $O$, then the crossing numbers $c_i'$ would satisfy 
$$\E\left[\sum_i c_i\right] \le \E\left[\sum_i c'_i\right] + O(nx).$$ 
\label{lem:wlogworkload}
\end{lemma}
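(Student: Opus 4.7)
My plan is to construct $(A', S')$ from $(A, O)$ in two stages: first, pad the initial state $A$ up to exact load $(1-1/x)n$ by adding fresh dummy elements; second, convert $O$ into an alternating hovering sequence $S'$ by splicing in dummy operations between the operations of $O$.

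For the first stage, let $\delta = (1 - 1/x)n - |A|$ and set $A^{(1)} = A \cup D_0$ where $D_0$ is a set of $\delta$ fresh elements (not appearing anywhere in $O$) with i.i.d.\ uniform random hashes. The crossing-number identity of Lemma \ref{lem:crossingnumbertosurplus}, $c_s = \max_{i<s}(\surplus(i, s-1) - \free(i, s-1))$, depends on the initial state only through the free-slot counts. Since $A \subseteq A^{(1)}$, we have $\free_{A^{(1)}}(i, s-1) \le \free_A(i, s-1)$ for every interval, so each $c_s$ only increases pointwise; Stage 1 is therefore free, and I may henceforth assume $|A| = (1-1/x)n$.

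The second stage walks through $O$ and, whenever two consecutive operations are of the same type, splices in a single opposite-type dummy between them. Dummy deletions consume elements of $D_0$ (if $|D_0|$ is too small to cover all needed deletions, I first prepend a short batch of dummy insertions of fresh elements to manufacture the needed victims), and dummy insertions use fresh elements with independent random hashes. A final dummy at the front fixes the starting parity. The resulting $S'$ is strictly alternating, has length at most $O(R)$, and runs on an initial state $A'$ with $|A'| = (1-1/x)n$, so $S'$ is a valid hovering sequence.

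The main work is controlling the effect of Stage 2 on the sum of crossing numbers. For each fixed sub-interval $[i, s-1]$, the difference $\surplus_O(i, s-1) - \surplus_{S'}(i, s-1)$ is exactly the signed count of dummy dots landing in the interval. Because the dummies come in matched insertion/deletion pairs with independent uniform random hashes, this expected difference is zero, and Core Fact 2 (Lemma \ref{lem:chernoff_first_repeat}) gives concentration at the natural scale of roughly $O(\sqrt{R(s-i)/n} \cdot \sqrt{\log n})$. Converting this via Lemma \ref{lem:crossingnumbertosurplus} into a bound on $c_s - c'_s$, taking a union bound over all $O(n)$ choices of inner endpoint $i$, and summing over $s$ should yield the claimed $O(nx)$ aggregate slack. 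The main obstacle is precisely this last step: the dummy red dots could in principle reduce surpluses and hence $c'_s$ at individual intervals, so the argument must show that the one-sided loss $\sum_s (c_s - c'_s)^+$ is at most $O(nx)$ in expectation; this is where the pairing of dummies (ensuring zero mean) together with the union-bound-over-sub-intervals machinery analogous to Section \ref{sec:insertionsurplusupper} do the work.
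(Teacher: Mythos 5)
Your high-level plan (pad the initial set, then splice in dummy operations to force alternation) is a reasonable reconstruction of what Proposition~7 of \cite{bender2022linearfull} must be doing, and Stage~1 is fine as you argue: adding elements to $A$ can only shrink the free-slot counts and hence only increase each $c_s$. However, Stage~2 as written has genuine gaps.

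First, the claim that ``the difference $\surplus_O(i, s-1) - \surplus_{S'}(i, s-1)$ is exactly the signed count of dummy dots landing in the interval'' is simply false: $\surplus$ is a \emph{maximum over monotone paths}, not a total count. Adding a red dot at $(a_0, y_0)$ changes the surplus of a given path by $0$ or $-1$ depending on whether the path passes above that dot, and the maximizing path will shift to avoid new red dots and absorb new blue dots; there is no identity relating the two maxima to a net dot count. The best one can say is a one-sided inequality, e.g.\ $\surplus_O - \surplus_{S'}$ is at most the number of red dummies below the $O$-optimal path minus the number of blue dummies below it, which is a very different (and weaker) statement.

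Second, your construction does not produce ``matched insertion/deletion pairs.'' Splicing only between consecutive same-type operations means the dummy multiset is biased by exactly the imbalance of $O$. If $O$ has $R$ insertions and $D < R$ deletions (which is allowed---$O$ can even be all insertions if $|A|$ is small enough), then you must splice in roughly $R - D$ more dummy deletions than dummy insertions just to make the resulting sequence balanced and alternating. Those unmatched red dummy dots systematically reduce the surplus, so ``this expected difference is zero'' fails at the outset. Any correct version of this step would have to account for the imbalance---plausibly by offsetting the systematic surplus loss from the extra dummy deletions against the $\free$-count decrease already created by Stage~1 padding (note that $\delta = (1-1/x)n - |A| \ge R - D$ precisely because $O$ never exceeds load $1 - 1/x$), but you do not make that comparison and it is not automatic.

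Third, even granting the mean-zero claim, the concentration-plus-union-bound sketch is too weak as stated. A bound of $\tilde{O}(\sqrt{R t / n})$ on the per-interval loss does not directly sum to $O(nx)$ over all $s$ and $i$ once $t$ ranges up to $n$; you would need to exploit the $\free(i,s-1) \ge \Omega(t/x)$ cushion (Lemma~\ref{lem:freeslots}) to truncate the relevant interval widths, and you would need the concentration to hold for the path-maximized quantity, not just a fixed-region count---essentially redoing the machinery of Section~\ref{sec:insertionsurplusupper} with a different dot distribution. You flag this as the ``main obstacle'' but do not resolve it, and it is exactly where a correct proof must do nontrivial work.

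For reference, the paper does not prove this lemma internally---it is cited to the proof of Proposition~7 of \cite{bender2022linearfull}---so your construction is a genuinely independent attempt. The outline is sensible, but as presented it replaces the hard part with an incorrect identity and an unjustified balancing assumption.
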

\begin{proof}
This is established in the proof of Proposition 7 of \cite{bender2022linearfull}.
\end{proof}

The next two lemmas are consequences of \ref{cor:surplusupper-c}.
\begin{lemma}[Consequence 1 of Corollary \ref{cor:surplusupper-c}]
Suppose $O$ is an alternating sequence of $2R$ insertions/deletions, where $n/x \le R \le n$. Let $j \in [n]$ and let $x \le n^{o(1)}$ be a parameter. Set $m_t = \sqrt{tR/n}$ for $t \ge 1$. Then there exists a positive constant $c$ such that
    $$\E\left[\max_{t \ge 0} \left(\surplus(j - t, j) - c m_t \log^{0.75} m_t - \frac{t}{32x}\right)\right] \le O\left(x\right).$$
\label{lem:surplusupper-d}
\end{lemma}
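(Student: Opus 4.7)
The plan is to reduce the continuous maximum over $t$ to a sum over dyadic scales $t_k := 2^k$ and apply Corollary~\ref{cor:surplusupper-c} at each scale. The slack between the $-t/(32x)$ penalty in this statement and the $-t/(64x)$ penalty in the corollary is exactly the factor-of-two savings needed to absorb the dyadic rescaling.

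First I would verify the key monotonicity property: $\surplus(j-t, j)$ is non-decreasing in $t$, because any path in the narrower strip extends to a path in the wider strip by prepending a horizontal segment along the bottom, which picks up no additional dots. Combined with $m_{t_{k+1}} = \sqrt{2}\, m_{t_k}$ and $t_k = t_{k+1}/2$, choosing the constant $c$ in the lemma to be a sufficiently large multiple of the constant $c_0$ from Corollary~\ref{cor:surplusupper-c} then guarantees that, for every $t \in [t_k, t_{k+1}]$ with $m_{t_k} \ge 2$,
\[
\surplus(j-t, j) - c\, m_t \log^{0.75} m_t - \tfrac{t}{32x} \;\le\; \surplus(j - t_{k+1}, j) - c_0\, m_{t_{k+1}} \log^{0.75} m_{t_{k+1}} - \tfrac{t_{k+1}}{64x}.
\]

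Next I would split the outer maximum at the threshold $t^* := 4n/R$, where $m_{t^*} = 2$. For $t < t^*$ the corollary does not apply, but one can crudely bound $\max_{t < t^*} \surplus(j - t, j)$ by the number of blue dots in the strip $[j - t^*, j] \times [1, 2R+1]$, whose expectation is $R \cdot t^*/n = O(1)$. For $t \ge t^*$, the dyadic inequality above together with the bookkeeping $\max_k y_k \le \sum_k (y_k)^+$ and an application of Corollary~\ref{cor:surplusupper-c} at each scale $k$ yield
\[
\E\!\left[\max_{t \ge t^*}\!\Bigl(\surplus(j-t, j) - c\, m_t \log^{0.75} m_t - \tfrac{t}{32x}\Bigr)\right] \;\le\; \sum_{k:\, t_{k+1} \ge t^*} O\!\left(\tfrac{n}{t_{k+1} R}\right),
\]
a geometric series dominated by its largest term $O(n/(t^* R)) = O(1)$.

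Combining the two ranges gives an $O(1)$ bound on the desired expectation, comfortably within the claimed $O(x)$. The main obstacle I anticipate is verifying that the constant $c$ can be chosen uniformly across all dyadic levels down to $m_{t_k} = 2$: the ratio $(\log m_{t_{k+1}} / \log m_{t_k})^{0.75}$ that appears when comparing $m \log^{0.75} m$ at two adjacent scales is largest at the smallest scale, so one has to check that some $c = O(c_0)$ suffices even there (a short numerical check shows $c = 2 c_0$ is enough). The rest is routine geometric summation together with a Poisson tail estimate for the small-$t$ range.
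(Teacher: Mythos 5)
Your proposal is correct and follows essentially the same route as the paper's proof: reduce to dyadic scales via the monotonicity of $\surplus(j-t,j)$ (absorbing the factor-$\sqrt{2}$ shift in $m_t$ and the factor-$2$ shift in $t$ into the constants), invoke Corollary~\ref{cor:surplusupper-c} at each scale, and sum the resulting $O(1/m_t^2)$ tail as a geometric series. The one (minor) divergence is the small-$t$ cutoff: the paper sets $\overline{t}$ so that the penalty $\tfrac{c}{2}m_t\log^{0.75}m_t+\tfrac{t}{64x}$ first reaches $x$ (giving $m_{\overline t}=\Theta(x/\log^{0.75}x)$ and an $O(x)$ additive term), whereas you cut at the minimal threshold $t^*=4n/R$ where $m_{t^*}=2$ and bound the below-threshold surplus directly by the $O(1)$ expected blue-dot count in the strip, which is equally valid and in fact slightly sharper.
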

\begin{proof}

We begin by observing that we can reduce to power-of-two values of $t$, since
\begin{align*}
& \E\left[\max_{t \ge 0} \left(\surplus(j - t, j) - c m_t \log^{0.75} m_t - \frac{t}{32x}\right)\right] \\
& \le \E\left[\max_{t = 2^i} \left(\surplus(j - t, j) - c m_{t/2} \log^{0.75} m_{t/2} - \frac{t}{64x}\right)\right] \tag{since $\surplus(j - t, j)$ is non-decreasing in $t$}\\
& \le \E\left[\max_{t = 2^i} \left(\surplus(j - t, j) - \frac{c}{2} m_{t} \log^{0.75} m_{t} - \frac{t}{64x}\right)\right]. \\
\end{align*}

Set $\overline{t}$ to be the smallest $t$ such that $\frac{c}{2} m_{t} \log^{0.75} m_{t} + \frac{t}{64x} \ge x$. Note that this inequality is bottlenecked by the term involving $m_t$ since
$$\frac{c}{2} m_{t} \log^{0.75} m_{t} + \frac{t}{64x} = \Theta(\sqrt{tR/n} \log^{0.75} (tR/n) + t/x),$$
and since $n/R \le x$. Therefore $m_{\overline{t}} = \Theta(x / \log^{0.75} x)$. Critically, this means without loss of generality that $m_{\overline{t}} \ge 2$ (since WLOG $x$ is at least a large positive constant), which will allow us to apply Corollary \ref{cor:surplusupper-c} later in the proof.

Define $f(t) = \surplus(j - t, j) - \frac{c}{2} m_{t} \log^{0.75} m_{t} - \frac{t}{64x}$. By the definition of $\overline{t}$, we have for $t < \overline{t}$ that $f(t) \le f(\overline{t}) + \frac{c}{2} m_{\overline{t}} \log^{0.75} m_{\overline{t}} - \frac{\overline{t}}{64x} \le f(\overline{t}) + O(x)$. Thus, we have that
\begin{align*}
    & \E\left[\max_{t = 2^i} \left(\surplus(j - t, j) - \frac{c}{2} m_{t} \log^{0.75} m_{t} - \frac{t}{64x}\right)\right] \\
    & = \E\left[\max_{t = 2^i} f(t)\right]\\
    & \le \E\left[\max_{t = 2^i \ge \overline{t}} f(t)\right] + O(x).\\
    & \le \sum_{t = 2^i \ge \overline{t}} \E\left[f(t)\right] + O(x).
\end{align*}
By Corollary \ref{cor:surplusupper-c} (which we can apply since we know that $m_t \ge m_{\overline{t}} \ge 2$), and assuming $c$ is a sufficiently large positive constant, this sum is at most
\begin{align*}
& \sum_{t = 2^i \ge \overline{t}} O(1/m_t^2) + O(x). \\
& = \sum_{i \ge 0} O\left(\frac{1}{\sqrt{2^i \overline{t} R/n}^2}\right) + O(x). \\
& = \sum_{i \ge 0} O\left(\frac{1}{2^i m_{\overline{t}}^2}\right) + O(x). \\
& = O(1/ m_{\overline{t}}^2) + O(x)\\
& = O(x),
\end{align*}
as desired.
\end{proof}

\begin{lemma}[Consequence 2 of Corollary \ref{cor:surplusupper-c}]
    Suppose $O$ is an alternating sequence of $2R$ insertions/deletions, where $n/x \le R \le n$. Let $j \in [n]$ and let $x \le n^{o(1)}$ be a parameter. Then there exists a positive constant $c$ such that
    $$\E\left[\max_{t \ge 0} \left(\surplus(j - t, j) - \frac{t}{16x}\right)\right] \le O\left(\frac{x}{\beta}\log^{1.5} x + x\right).$$
    \label{lem:surplusupper-e}
\end{lemma}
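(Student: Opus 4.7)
The plan is to split the quantity $\surplus(j-t,j) - \tfrac{t}{16x}$ into a stochastic part, already controlled by the previous lemma, plus a purely deterministic residual. Specifically, for every $t \ge 0$, write
\[
\surplus(j-t, j) - \frac{t}{16x}
= \Bigl[\surplus(j-t,j) - c\, m_t \log^{0.75} m_t - \tfrac{t}{32x}\Bigr]
+ \Bigl[c\, m_t \log^{0.75} m_t - \tfrac{t}{32x}\Bigr],
\]
where $m_t = \sqrt{tR/n} = \sqrt{t/\beta}$ and $c$ is the constant supplied by Lemma \ref{lem:surplusupper-d}. Taking the max over $t$ of each side and using $\max(A+B) \le \max A + \max B$ converts the problem into bounding the two pieces separately.

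For the first bracket, I would invoke Lemma \ref{lem:surplusupper-d} directly: its conclusion is that
\[
\E\Bigl[\max_{t \ge 0}\bigl(\surplus(j-t,j) - c\, m_t \log^{0.75} m_t - \tfrac{t}{32x}\bigr)\Bigr] \le O(x),
\]
which already matches the second additive term in the target bound.

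The second bracket is deterministic, and the remaining task is the single-variable optimization
\[
\max_{t \ge 0}\Bigl(c\sqrt{t/\beta}\,\log^{0.75}\sqrt{t/\beta} - \tfrac{t}{32x}\Bigr).
\]
Substituting $u = \sqrt{t/\beta}$ turns this into $\max_{u \ge 0} \bigl(c u \log^{0.75} u - u^2 \beta/(32x)\bigr)$. Setting the derivative to zero gives $u^\star = \Theta(x \log^{0.75} x / \beta)$, at which point the expression evaluates to $\Theta(x \log^{1.5} x/\beta)$. (Since $\beta \le x$ and $x \le n^{o(1)}$, the optimum lies in the regime where $\log u^\star = \Theta(\log x)$, so the logarithmic factor really is $\log^{0.75} x$.) This yields $O(x \log^{1.5} x/\beta)$, matching the first additive term of the target bound.

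Summing the two contributions gives $O(x \log^{1.5} x/\beta + x)$, as required. The only step with any content is the calculus optimization in the second bracket; everything else is a mechanical splitting followed by an application of Lemma \ref{lem:surplusupper-d}. I expect no real obstacle — the only point to be careful about is that the $\log^{0.75} m_t$ factor really does evaluate to $\Theta(\log^{0.75} x)$ at the optimum $t$, which follows because $\beta \le x$ forces $m_{t^\star} = u^\star$ to be polynomial in $x$.
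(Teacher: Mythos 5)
Your proof is correct and takes essentially the same approach as the paper: both decompose $\surplus(j-t,j) - \tfrac{t}{16x}$ into the stochastic piece controlled by Lemma~\ref{lem:surplusupper-d} plus a deterministic residual, and both bound that residual by solving the single-variable optimization where $c\,m_t\log^{0.75}m_t$ balances against $\Theta(t/x)$. (The paper phrases this as the inequality $c_1 m_t\log^{0.75}m_t \le O((x/\beta)\log^{1.5}x) + \tfrac{t}{c_2 x}$ rather than taking the explicit max, but that is the same calculus.) One small imprecision in your write-up: the claim $\log u^\star = \Theta(\log x)$ fails when $\beta$ is close to $x$ (then $u^\star = O(\operatorname{polylog} x)$), but all that is actually needed is the one-sided bound $\log u^\star = O(\log x)$, which does hold since the stationarity condition $u^\star = \Theta((x/\beta)\log^{0.75}u^\star)$ together with $\beta \ge \Omega(1)$ and $u^\star \le \sqrt{n}$, $x \le n^{o(1)}$ forces $u^\star = \operatorname{poly}(x)$.
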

\begin{proof}
Define $m_t = \sqrt{tR/n}$, and let us take a moment to find the positive solution to the equation
\begin{equation}m_t \log^{0.75} m_t = \Theta\left(\frac{t}{x}\right).\label{eq:mtequi}\end{equation}
Since $\sqrt{tR/n} = m_t$, we have $t = m_t^2 n / R$, so the equation reduces to
$$m_t \log^{0.75} m_t = \Theta\left(\frac{m_t^2 n}{R x}\right),$$
which implies
$$\frac{x}{\beta} = \Theta\left(\frac{m_t}{\log^{0.75} m_t}\right).$$
It follows that, when \eqref{eq:mtequi} holds, we have
$$m_t \log^{0.75} m_t = \Theta\left(\frac{x}{\beta} \log^{1.5} \frac{x}{\beta}\right) \le O\left(\frac{x}{\beta}\log^{1.5} x\right).$$
The reason we care about this is that it implies for any positive constants $c_1$ and $c_2$, and for any $m_t$, that
$$c_1 m_t \log^{0.75} m_t \le O\left(\frac{x}{\beta}\log^{1.5} x\right) + \frac{t}{c_2x}.$$

Thus, for any positive constant $c$, we have
\begin{align*}
    & \E\left[\max_{t \ge 0} \left(\surplus(j - t, j) - \frac{t}{32x}\right)\right] \\
    & \le O\left(\frac{x}{\beta}\log^{1.5} x\right) + \E\left[\max_{t \ge 0} \left(\surplus(j - t, j) - c m_t \log^{0.75} m_t - \frac{t}{16x}\right)\right].
\end{align*}
The lemma then follows by Lemma \ref{lem:surplusupper-d}
\end{proof}

We can now complete the proof of the upper-bound side of Proposition \ref{prop:crossing}
\begin{lemma}[Upper-Bound Side of Proposition \ref{prop:crossing}]
Let $\beta = n/R$ satisfy $\beta \le x$. Supposing that the load factor never exceeds $1 - 1/x$ during time interval $[t_0, t_1]$, we have that
$$\E\left[\sum_s c_s\right] \le O(Rx \log^{1.5} x + nx).$$
\label{lem:crossingupper}
\end{lemma}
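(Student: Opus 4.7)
The plan is to combine the surplus-to-crossing relationship (Lemma \ref{lem:crossingnumbertosurplus}) with the two per-interval estimates we have already established: the surplus upper bound of Lemma \ref{lem:surplusupper-e} and the free-slot lower bound of Lemma \ref{lem:freeslots}. First, by Lemma \ref{lem:wlogworkload}, I reduce without loss of generality to a hovering workload, at the cost of an additive $O(nx)$ term in the final bound. This reduction is essential because the surplus upper bound was proved only for alternating insertion/deletion sequences, and because the free-slot lemma was stated at a moment when the load factor equals $1-1/x$.

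For each $s \in [n]$, Lemma \ref{lem:crossingnumbertosurplus} expresses $c_s = \max_{t \ge 0}\bigl(\surplus(s-t, s-1) - \free(s-t, s-1)\bigr)$, where I have set $t = s - i$. I split this by inserting and subtracting the linear term $t/(16x)$:
\begin{equation*}
c_s \;\le\; \max_{t \ge 0}\!\left(\surplus(s-t, s-1) - \tfrac{t}{16x}\right) \;+\; \max_{t \ge 0}\!\left(\tfrac{t}{16x} - \free(s-t, s-1)\right),
\end{equation*}
using the fact that the maximum of a sum is at most the sum of the individual maxima. Lemma \ref{lem:surplusupper-e}, whose hypothesis $R \ge n/x$ is exactly our assumption $\beta \le x$, bounds the expected value of the first term by $O\!\left(\tfrac{x}{\beta}\log^{1.5} x + x\right)$. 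For the second term, the distribution of $\free(s-t, s-1)$ as a function of $t$ matches that of the quantity $F_t$ in Lemma \ref{lem:freeslots}, by translational symmetry of the random hash function together with the wraparound convention (and by the fact that the hovering workload starts at a uniform load factor of $1-1/x$). Since $t/(16x) \le t/(4x)$, Lemma \ref{lem:freeslots} then gives an $O(x)$ bound on the expectation of the second term. Combining, $\E[c_s] \le O\!\left(\tfrac{x}{\beta}\log^{1.5} x + x\right)$.

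Summing over $s \in [n]$ yields
\begin{equation*}
\E\!\left[\sum_s c_s\right] \;\le\; n \cdot O\!\left(\tfrac{x}{\beta}\log^{1.5} x + x\right) \;=\; O\!\left(\tfrac{nx}{\beta}\log^{1.5} x + nx\right) \;=\; O\!\left(Rx\log^{1.5} x + nx\right),
\end{equation*}
since $n/\beta = R$. Folding in the additive $O(nx)$ loss from Lemma \ref{lem:wlogworkload} preserves this bound and completes the proof.

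The main potential obstacle is the symmetry step needed to transport Lemma \ref{lem:freeslots}, which is stated for the specific window $[n-i,n]$, to an arbitrary window $[s-t, s-1]$; I address this by appealing to rotational invariance under wraparound, which holds because the hovering workload begins at a globally uniform load factor of $1-1/x$ and the hash function is fully random. A secondary subtlety is verifying that the $R \ge n/x$ hypothesis required by Lemma \ref{lem:surplusupper-e} matches the $\beta \le x$ condition in Proposition \ref{prop:crossing}: these are equivalent, since $\beta = n/R$. Everything else is bookkeeping via the identity $n/\beta = R$.
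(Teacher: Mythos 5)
Your proof is correct and takes essentially the same approach as the paper: reduce to a hovering workload via Lemma \ref{lem:wlogworkload}, write $c_s$ as a max over intervals via Lemma \ref{lem:crossingnumbertosurplus}, split by inserting a linear term, bound the surplus part with Lemma \ref{lem:surplusupper-e} and the free-slot part with Lemma \ref{lem:freeslots}, and sum over $s$. The only differences are cosmetic (constant $t/(16x)$ vs.\ the paper's $i/(4x)$, and your explicit remark about translational symmetry of the free-slot lemma, which the paper leaves implicit).
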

\begin{proof}
    By Lemma \ref{lem:wlogworkload}, we can assume without loss of generality that the operations during $[t_0, t_1]$ are a hovering workload. 
    
    Let us focus on bounding some fixed $s$. Let $O$ denote the insertions/deletion sequence in $[t_0, t_1]$ (which, again, we can assume is a hovering workload), and let $F_i = \free(s - 1 - i, s - 1)$. By Lemma \ref{lem:crossingnumbertosurplus}, we have that
    \begin{align*}
        \E[c_s] & = \E\left[\max_{i \ge 0} \left( \surplus(s - i, s - 1) - F_i\right)\right] \\
        & \le \E\left[\max_{i \ge 0} \left(\surplus(s - i, s - 1) - \frac{i}{4x}\right) + \max_{i \ge 0} \left(\frac{i}{4x} - F_i\right)\right] \\
        & \le \E\left[\max_{i \ge 0} \left(\surplus(s - i, s - 1) - \frac{i}{4x}\right)\right] + \E\left[\max_{i \ge 0} \left(\frac{i}{4x} - F_i\right)\right] \\
        & \le \E\left[\max_{i \ge 0} \left(\surplus(s - i, s - 1) - \frac{i}{4x}\right)\right] + O(x) \tag{by Lemma \ref{lem:freeslots}} \\
        & \le  O\left(\frac{x}{\beta}\log^{1.5} x + x\right). \tag{by Lemma \ref{lem:surplusupper-e}}
    \end{align*}
    Finally, summing over $s \in \{1, \ldots, n\}$ gives the desired bound on $\E[\sum_s c_s]$.
\end{proof}

\section{Extending to Unordered Linear Probing}\label{sec:unordered}

In this section, we show how to extend Theorem \ref{thm:main} to analyze a `common-case' workload for \emph{unordered} linear probing. What distinguishes this workload from a worst-case workload will be that: each query and each deletion is to a \emph{random} element out of those present; each insertion is to a new element, never before inserted; and the hash table hovers at $1 - 1/x$ full.

Theorem \ref{thm:unordered} says that, under these conditions, the time bounds for unordered linear probing are the same as those for ordered linear probing. Thus, one can view the distinction of whether the hash table is \emph{ordered} as being about whether one wishes for worst-case guarantees (using ordered linear probing) or average-case guarantees (using unordered linear probing).

% \begin{theorem}
%     Let $n, x, \beta$ be parameters such that $x = n^{o(1)}$, such that $\beta \le x$ is at least a sufficiently large positive constant. Consider an \emph{unordered} linear-probing hash table, implemented using tombstones and with rebuild-window size $R = n / \beta$, and subjected to the following \defn{common-case workload}: The hash table is filled to $1 - 1/x$ full, and then alternates between:
%     \begin{itemize}
%         \item Deleting a random element out of those present;
%         \item Inserting a new (never-before-inserted) element.
%     \end{itemize}
%     Then the amortized expected time per insertion/deletion is 
%     $$\Theta(x \log^{1.5} x + \beta^{-1} x).$$ 
%     Moreover, the worst-case expected cost of querying, at any given moment, a \emph{random} element out of those present is
%     $$\Theta(x + \beta^{-1} x \log^{1.5} x).$$
%     \label{thm:unordered}
% \end{theorem}

\thmunordered*

It is worth noting that, in Theorem \ref{thm:unordered}, both the restriction that queries/deletions are to random elements (out of those present) and the restriction that insertions are to new elements are fundamentally necessary for the theorem to hold. If either of these restrictions are relaxed, then one can force $\Theta(x^2)$-time operations by either repeatedly inserting/deleting the same element over and over, or repeatedly querying the first element to be inserted during the rebuild window. 

As an immediate corollary of Theorem \ref{thm:unordered}, we get:
% \begin{corollary}
%     In the context of Theorem \ref{thm:unordered}, the optimal rebuild-window size $R$ is any $R  = \Theta(n / \log^{1.5} x)$, at which point the amortized expected insertion/deletion time is $O(x \log^{1.5} x)$ and the expected query time is $O(x)$.
%     \label{cor:unordered}
% \end{corollary}
\corunordered*

In the rest of the section, we will prove Theorem \ref{thm:unordered}.
At a high level, the analysis will proceed in the same way as it did for analyzing ordered linear probing, although, as we shall see, there will be several significant new complications when it comes to bounding insertion surpluses.

Let us focus on the operations that take place between two rebuilds. We will use $I$ to refer to the set of elements present at the start of the time window, and $\overline{O}$ to refer to the sequence of $2R$ alternating insertions/deletions that occur.

Define the \defn{unordered crossing numbers} $ \overline{c}_1, \overline{c}_2,\ldots, \overline{c}_n $ so that $ \overline{c}_i $ is the number of times that an insertion with a hash smaller than $ i $ uses a tombstone or free slot in position at least $i$. One convenient feature of unordered linear probing (as opposed to ordered linear probing) is that there is a direct relationship between the unordered crossing numbers and the total time spent on insertions, namely that the total time $T_{\text{ins}}$ spent on insertions in $\overline{O}$ is exactly
\begin{equation}
T_{\text{ins}} = \sum_i c_i.
\label{eq:Tins}
\end{equation}

To analyze the unordered crossing numbers, it will be helpful to define what we call the \defn{unordered insertion-surplus} $\overline{\surplus}(I, O, [j - t, j))$ of an interval $[i, j]$. This is calculated by placing blue/red dots as follows:
\begin{itemize}
\item a blue dot at $(h(x), i)$ if the $i$-th insertion in $\overline{O}$ inserts an element $x$;
\item and a red dot at $(k, i)$ if the $i$-th deletion in $\overline{O}$ removes an element that was in \emph{position} $k$. 
\end{itemize}
Now consider the $t \times R$ sub-grid containing dots $(a, b)$ satisfying $a \in [0, t]$ and $b \in [j - t - 1, j)$. The unordered insertion surplus $\overline{\surplus}(I, O, [j - t, j))$ is defined to be the maximum path surplus (as defined in Section \ref{sec:pathsurplus}) or any monotonic path through the grid. 

Conveniently, unordered insertion surpluses have precisely the same relationship to unordered crossing numbers as do the ordered versions of the same quantities. 

\begin{lemma}[From Unordered Crossing Numbers to Unordered Surplus]
Consider some $j \in [n]$, and for each $t \in [n]$ define $F_0(t)$ to be the number of free slots in the interval $[j - i, j)$ at the beginning of the rebuild window.\footnote{Note that $F_0(t)$ is independent of how elements are ordered within each run \cite{Peterson57}.} Then, we have that
$$c_s = \max_{i < j} \left( \overline{\surplus}(I, O, [j - t, j)) - F_0(t)\right).$$
\label{lem:unorderedcrossingnumbertosurplus}
\end{lemma}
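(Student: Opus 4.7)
The plan is to prove the equality via two inequalities, paralleling the structure of Lemma~\ref{lem:crossingnumbertosurplus} for ordered linear probing. The only structural change from the ordered case is that red dots now encode the \emph{position} of the deleted element rather than its hash, and this change is exactly what preserves the absorption argument in the unordered setting, since an insertion that ends at position $p$ via a tombstone must be preceded by a deletion at the same position $p$.

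For the lower-bound direction, $\overline{c}_j \geq \overline{\surplus}(I, \overline{O}, [j-t, j)) - F_0(t)$ for every $t$ and every monotonic path $P$ through the grid $[j-t,j)\times[t_0,t_1]$, I would argue as follows. The $B_P$ blue dots beneath $P$ are precisely the insertions performed during the rebuild window whose hashes $h(u)$ lie in $[j-t, j)$ and whose times $t_u$ satisfy $t_u < \tau_P(h(u))$. Each such insertion either crosses barrier $j$ (contributing $1$ to $\overline{c}_j$) or ends at some position $p\in[j-t,j)$. In the latter case, at the moment of insertion the slot at $p$ is either (i) a slot that has been free since $t_0$, contributing at most $F_0(t)$ absorptions in total (each initial free slot can host at most one ``free-slot use''), or (ii) a tombstone, created by a prior deletion at $p$ at some time $t_{\text{del}} < t_u$. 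For case (ii), because the path is monotonic and $p \ge h(u)$, we have $t_{\text{del}} < t_u < \tau_P(h(u)) \le \tau_P(p)$, so the red dot at $(p,t_{\text{del}})$ lies beneath $P$; moreover each tombstone is consumed by at most one insertion, so these absorptions number at most $R_P$. The number of crossings beneath $P$ is therefore at least $B_P - F_0(t) - R_P$, and taking the maximum over $(t,P)$ gives the desired inequality.

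For the upper-bound direction, I need to exhibit a specific width $t^*$ and path $P^*$ whose surplus attains $\overline{c}_j + F_0(t^*)$. The natural choice of $t^*$ is one large enough to contain the hashes of every insertion that crosses barrier $j$, and $P^*$ is built to follow the temporal order in which those crossings occur. The core tool is flow conservation on the sub-interval just to the left of $j$: at the moment a crossing with hash $h_i$ happens, the sub-interval $[h_i, j-1]$ is fully occupied, which rigidly links the blue-dot count, red-dot count, and initially-free slots in that region over the preceding window. The construction should parallel Lemma~10 of~\cite{bender2022linearfull}, with the position-indexed red-dot convention taking over the absorption role that hash-indexed red dots played in the ordered setting; notably, the earlier argument also required an additional structural appeal to the sortedness of ordered linear probing, which is unavailable here but is likewise unnecessary once red dots are re-indexed by position. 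I expect this direction to be the main obstacle, because the construction must align $\tau^*(\cdot)$ at each column with the evolving state of the hash table so that the accumulated surplus matches $\overline{c}_j + F_0(t^*)$ exactly, with no slack and without violating monotonicity---crossings with small hashes occurring late in the window create nontrivial constraints on how $\tau^*$ must behave as we sweep across columns.
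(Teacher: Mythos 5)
Your proposal takes essentially the same route as the paper: the paper's own proof is a one-paragraph citation to Lemmas 9 and 10 of \cite{bender2022linearfull}, noting only that the argument now runs ``in position space rather than hash space,'' and your re-indexing of red dots by position (so a tombstone at $p$ pairs with the red dot at $(p, t_{\text{del}})$ rather than with a hash-indexed one) is precisely the translation the paper has in mind. Your lower-bound direction is carried out correctly and at a finer level of detail than the paper gives, and the key monotonicity step ($t_{\text{del}} < t_u < \tau_P(h(u)) \le \tau_P(p)$, so the consumed tombstone's red dot is also beneath $P$) is the right observation and is indeed simpler in the unordered/position-indexed setting. Your upper-bound direction is only a sketch, acknowledged as incomplete, but it is consistent with the flow-conservation construction of Lemma 10 in \cite{bender2022linearfull} to which the paper defers, so the overall approach matches; the remaining work is to fully carry out that path construction as in the cited lemma.
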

\begin{proof}
This follows from exactly the same arguments as in the proofs of Lemmas 9 and 10 in \cite{bender2022linearfull}. The only difference is that now the proof takes place in position space rather than hash space---that is, whereas in ordered linear probing each insertion of an element $x$ looks through the elements with hashes $h(x), h(x) + 1, \ldots$ until it finds a free slot or tombstone, in unordered linear probing the insertion looks through the positions $h(x), h(x) + 1, \ldots$. Besides this distinction, the proof is exactly the same as in \cite{bender2022linearfull}.
\end{proof}

The main step in completing the analysis will be to prove the following bound on the quantity from \ref{lem:unorderedcrossingnumbertosurplus}.

\begin{proposition}
  Let $R = n / \beta$, where $\beta$ is at least a sufficiently large positive constant and is at most $O(x)$. Consider an initial set $I$ of $(1 - 1/x)n$ elements, right after a rebuild, and consider an alternating sequence $\overline{O}$ of $2R = 2 n / \beta$ insertions/deletions, where each insertion uses a never-before-inserted element, and each deletion uses a random element out of those present. Then 
      $$\E\left[\max_t \left(\overline{\surplus}(I, \overline{O}, [j - t, j)) - F_0(t)\right)\right] = \Theta\left(\frac{R}{n} x \log^{1.5} x + x\right).$$
      \label{prop:translate}
\end{proposition}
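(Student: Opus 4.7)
The proposition requires matching upper and lower bounds on $\E[\max_t(\overline{\surplus}(I, \overline{O}, [j-t, j)) - F_0(t))]$ in the unordered setting with a hovering workload. My plan is to mirror the ordered-setting analysis of Section \ref{sec:finalproposition}, introducing new ideas to handle the main complication: red dots are now placed at the positions of deleted elements rather than at their hashes.

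For the upper bound $O(\frac{R}{n} x \log^{1.5} x + x)$, there is a surprisingly clean reduction to the ordered case. Linear probing always places an element $x$ at some position $p(x) \ge h(x)$, since one probes forward starting from the hash. Couple the unordered and ordered versions of the same workload using the same randomness (hashes of fresh insertions, plus the random choices of which present element to delete at each step). In the unordered insertion-surplus grid, every red dot is then weakly to the right of where it sits in the ordered grid, while the blue dots are identical in the two grids. Since paths are monotonic and non-decreasing, shifting a red dot rightward can only bring it beneath additional paths, and so $\overline{\surplus}(P) \le \surplus(P)$ for every path $P$. Taking the max yields $\overline{\surplus}(I, \overline{O}, [j-t, j)) \le \surplus(I, \overline{O}, [j-t, j))$ pointwise. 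The upper bound then follows by invoking Lemma \ref{lem:surplusupper-e} and Lemma \ref{lem:freeslots} exactly as in the proof of Lemma \ref{lem:crossingupper}.

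The lower bound $\Omega(\frac{R}{n} x \log^{1.5} x + x)$ is the more delicate direction, since the coupling above goes the wrong way. Here the plan is to directly show that $\overline{\surplus}$ is large by re-establishing analogues of the three core facts (Lemmas \ref{lem:fact1}, \ref{lem:fact2}, \ref{lem:fact3}) in the unordered setting, and then adapting the lower-bound construction of Section \ref{sec:pathsurpluslower} along with its reduction to insertion surplus from Section \ref{sec:insertionsurpluslower}. The key enabling observation is that, because the hash table hovers at load factor exactly $1-1/x$, a random occupied slot is (up to a factor $1 + O(1/x)$) a uniformly random slot in $[n]$; combined with the fact that fresh insertions contribute i.i.d.~uniform hashes, this yields approximate marginal uniformity of red-dot positions. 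For the $\Omega(\frac{R}{n} x \log^{1.5} x)$ contribution, we set $t = \Theta(x^2 \log^{1.5} x / \beta)$ and apply the construction to obtain $\overline{\surplus}(t) - F_0(t) = \Omega(x \log^{1.5} x / \beta)$ after balancing the hidden constants against $F_0(t) = O(t/x)$. The $\Omega(x)$ term picks up either from the $\Theta(\sqrt{t/x})$-scale fluctuations of $F_0(t)$ against the baseline surplus at moderate $t$, or from the trivial fact that a single insertion probe at load factor $1-1/x$ traverses a run of expected length $\Theta(x^2)$, producing an $\Omega(x)$ contribution to the crossing-number-style quantity.

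The main obstacle is controlling the correlations between the positions of different red dots, induced by the cluster structure of the hash table. A naive coupling of the unordered deletion process with an i.i.d.~uniform-positions process incurs $\Theta(R/x)$ mismatches in expectation, and the resulting slack in the max-surplus comparison is too large in some parameter regimes. The new technical idea needed is therefore a more careful coupling or Poissonization argument --- likely exploiting the fact that fresh-element insertions inject enough independent randomness into the hash-table Markov chain to restore approximate vertical independence for red dots, at least on the grid scales relevant to the lower-bound construction. Once such a coupling is in place, the remaining lower-bound analysis closely parallels the proof of Lemma \ref{lem:crossinga}.
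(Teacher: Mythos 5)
Your upper-bound coupling has a genuine gap. The claim that shifting a red dot rightward can only \emph{add} it to the region beneath a monotone path, and hence that $\overline{\surplus}(P) \le \surplus(P)$ pointwise, overlooks the fact that the surplus only counts dots whose first coordinate lies in the window $[j-t, j)$. A red dot at hash $h(y_i) \in [j-t, j)$ whose position $p(y_i)$ satisfies $p(y_i) \ge j$ shifts rightward \emph{out of the grid}, and if it was beneath $P$ in the ordered plot, deleting it from the count \emph{increases} the surplus, giving $\overline{\surplus}(P) > \surplus(P)$. These are exactly the deletions whose run straddles position $j$; the expected number per window is on the order of a run length, so this is not a lower-order effect that can be absorbed, and the pointwise inequality you need is simply false. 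Relatedly, even if the coupling held, you would be invoking the insertion-surplus upper bound for $\surplus(\overline{O}, \cdot)$ with red dots placed at $h(y_i)$ for $y_i$ \emph{chosen adaptively as a function of $h$}, which is not the setup of Proposition~\ref{prop:surplusupper-c}.

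The paper takes a different route that avoids both problems. It generates each deletion by sampling a uniformly random position $k_i$, deleting the occupant if there is one and otherwise falling back to a random present element; it then defines an auxiliary sequence $O$ whose $i$-th deletion is a phantom key $z_i$ with the fresh random hash $h(z_i) = k_i$. This makes $\surplus(O, \cdot)$ directly amenable to the insertion-surplus machinery of Section~\ref{sec:insertionsurplus}, while Lemma~\ref{lem:phantom} ties $\overline{\surplus}(I, \overline{O}, \cdot)$ to $\surplus(O, \cdot)$ up to an additive error counting \emph{failed} deletions with targets in the window. The subtle part is Lemma~\ref{lem:hardgeo}: the number of failed targets landing in $[j-t,j)$ is positively correlated with the number of free slots there, which is the same quantity $F_0(t)$ being subtracted. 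The paper handles this by multiplying through by the indicator $\mathbb{V}_t$ that a given $t$ realizes the crossing-number max, which forces $F_\ell$ to be small along the way and lets the residual be paid for by a geometric random variable. Your proposal contains no analogue of this argument, and it is the crux of why the proposition is delicate.

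For the lower bound you acknowledge the red-dot correlation obstacle but leave the fix as a speculation about a future coupling; the paper's Lemmas~\ref{lem:unorderedlower1} and~\ref{lem:unorderedlower2} instead reuse Lemma~\ref{lem:phantom} and bound $\E[B]$ (failed deletions with replacement targets in the window) directly by $O(Rt_0/(xn))$, with the extra observation for the $\Omega(x)$ term that with constant probability $F_0(t) = 0$ for $t = \Theta(x^2)$, and that $F_0(t)$ is independent of $\surplus(O, \cdot)$ because the latter's generation is oblivious to $I$. Your alternative sketches for the $\Omega(x)$ term (fluctuations of $F_0$, or a single insertion taking $\Theta(x^2)$ expected probes) are not obviously correct and would need to be worked out.
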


As we shall see, Proposition \ref{prop:translate} is actually quite tricky to prove. This is, in part, because of the following reason: in the definition of unordered path surplus, the position of each red dots is determined by the \emph{position} in which a deletion takes place rather than the \emph{hash}. This means that the red dots are not independent. In particular, there are positive correlations: regions of the hash table that have more free slots will systematically accrue red dots slower than regions that have fewer free slots. 

We will prove Proposition \ref{prop:translate} in Subsection \ref{sec:translate}. Before we do that, however, let us show how to use the proposition in order to complete the proof of Theorem \ref{thm:unordered}.

\begin{proof}[Proof of Theorem \ref{thm:unordered}]
By \eqref{eq:Tins}, the total time $T_\text{ins}$ spent on the $R$ insertions is
$$T_{\text{ins}} = \sum_{i \in [n]} \overline{c}_i,$$
which, by Lemma \ref{lem:unorderedcrossingnumbertosurplus} and Proposition \ref{prop:translate}, has expectation
\begin{align*} \E[T_{\text{ins}}] & = \Theta\left(\sum_{i = 1}^n  \left(\frac{R}{n} x \log^{1.5} x + x\right)\right) \\
& = \Theta\left( Rx \log^{1.5} x + nx\right).
\end{align*}
The amortized expected time per insertion is therefore 
\begin{align*} \frac{1}{R} \E[T_{\text{ins}}] & = \Theta(x \log^{1.5} x + nx/R) = \Theta(x \log^{1.5} x + \beta^{-1}x).
\end{align*}

To complete the proof, let us consider a query that occurs immediately after the $i$-th insertion during the rebuild window. There are two cases.

The first case is that the query is to an element $u \in I$. This case occurs with probability $\Theta(1)$. And, if we condition on this case,  then the expected time to perform the query will be $O(x)$, since the expected time to perform a random element in any freshly-constructed linear-probing hash table, with load factor $1 - 1/x$, is $O(x)$ \cite{Knuth63}.

The second case is that the query is to an element $u \not\in I$. This case occurs with probability $\Theta(i/n)$. Moreover, if we condition on this case, and if we define $T_{\text{ins}}(i)$ to be the total time spent on the first $i$ insertions, then the expected query time is 
    $$\Theta(\E[T_{\text{ins}}(i)] / i).$$
The total contribution of this second case to the expected query time is therefore
    \begin{equation} \Theta(\E[T_{\text{ins}}(i)] / n).
    \label{eq:qt}
    \end{equation}

Since \eqref{eq:qt} increases monotonically in $i$, the worst case is $i = n / \beta$. Setting $i = n / \beta$ and combining Cases 1 and 2, we can conclude that the worst-case expected time of querying a random element is 
$$\Theta(x + \E[T_{\text{ins}}] / n),$$
where the first term comes from Case 1 and the second term comes from Case 2.

Applying our bound on $\E[T_{\text{ins}}] = \Theta(R x \log^{1.5} x + nx)$ gives a query time of 
$$\Theta(x + \beta^{-1} x \log^{1.5} x),$$
as desired.
\end{proof}

\subsection{Proof of Proposition \ref{prop:translate}}\label{sec:translate}

In this subsection, we prove Proposition \ref{prop:translate}. To setup the proof, it is helpful to think of $\overline{O}$ as being generated as follows. The $i$-th insertion inserts an element $x_i$, but then the $i$-th deletion picks who to delete with the following process:
\begin{itemize}
\item Sample a random position $k_i$ in the hash table.
\item If there is an element $y$ in position $k$, then set the $i$-th deletion $y_i = y$.
\item Otherwise, select a random element $y$ out of those present and set $y_i = y$. In this case, position $k_i$ is referred to as a \defn{failed deletion target} and the position containing $y$ is called a \defn{second-choice deletion target}.
\end{itemize}
Notice that this construction, despite being slightly round-about, still selects each $y_i$ uniformly at random from the elements present. Thus it is a valid construction for $\overline{O}$.

Now construct an alternative operation sequence $O$ in which the $i$-th insertion still inserts $x_i$, but the $i$-th deletion deletes some $z_i$ with $h(z_i) = k_i$. This may result in us deleting $z_i$s that are not present in either the initial set of elements $I$ or in $\{x_i\}_i$, but that's alright---we will not actually be interested in thinking about $O$ as a sequence of operations, we will just be interested in the quantity $\surplus(O, [j - t, j))$, as defined in Section \ref{sec:insertionsurplus}. 

A critical feature of $O$ is that, since the $k_i$s were drawn randomly, each deletion $z_i$ has a random hash in $h(z_i) = k_i \in [n]$. Thus we will be able to obtain upper and lower bounds on $\surplus(O, [j - t, j))$ via 
Corollaries \ref{cor:surpluslower-c} and \ref{cor:surpluslower-c}.

Our main task will be to develop a formal relationship between $\overline{\surplus}(I, \overline{O}, [j - t, j))$ and $\surplus(O, [j - t, j))$. We begin with the following observation:
\begin{lemma}
  Consider some interval $[j - t, j)$. Define $A$ to be the number of failed deletions in $\overline{O}$ with failed targets in $[j - t, j)$, and define $B$ to be the number of failed deletions in $\overline{O}$ with replacement targets in $[j -t, j)$. Then,
  $$\surplus(O, [j - t, j)) - B \le \overline{\surplus}(I, \overline{O}, [j - t, j)) \le \surplus(O, [j - t, j)) + A.$$
    \label{lem:phantom}
\end{lemma}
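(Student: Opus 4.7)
The plan is to do a direct dot-by-dot comparison of the two surplus quantities. First I would note that insertions are handled identically in $\overline{O}$ and $O$, so the blue-dot patterns coincide exactly; the difference between $\overline{\surplus}$ and $\surplus$ is therefore entirely driven by red dots. For any non-failed deletion, the sampled slot $k_i$ already contains an element, so the tombstone is placed at position $k_i$ in the real execution $\overline{O}$, while by definition the phantom sequence $O$ places its red dot at $h(z_i)=k_i$; hence these deletions contribute identical red dots to both surpluses. The only discrepancy comes from failed deletions $i$: $\overline{O}$ puts a red dot at $(p_i, i)$ where $p_i$ is the second-choice target position, while $O$ puts one at $(k_i, i)$ where $k_i$ is the failed target.

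Next I would fix an arbitrary monotonic path $P$ through $[j-t-1,j] \times [1, |\overline{O}|+1]$ and write
\[
\overline{\surplus}(P) - \surplus(P) = \sum_{\text{failed } i} \Bigl(\mathbb{I}[(k_i,i) \text{ beneath } P] - \mathbb{I}[(p_i,i) \text{ beneath } P]\Bigr),
\]
since the blue contributions and the non-failed red contributions cancel term by term. The right-hand side is at most the number of failed deletions whose failed target $k_i$ lies in $[j-t, j)$, which is exactly $A$ (dropping the ``beneath $P$'' restriction can only make it larger); and it is at least $-B$ by the symmetric argument applied to the replacement targets. Thus for every path $P$ we obtain the deterministic inequality
\[
\surplus(P) - B \;\le\; \overline{\surplus}(P) \;\le\; \surplus(P) + A.
\]

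Finally, since $A$ and $B$ do not depend on $P$, taking the max over paths preserves the additive constants: $\max_P \overline{\surplus}(P) \le \max_P \surplus(P) + A = \surplus(O, [j-t, j)) + A$, and analogously for the lower bound with $-B$. Substituting the definition of $\overline{\surplus}(I, \overline{O}, [j-t, j))$ as $\max_P \overline{\surplus}(P)$ yields the claim. There is no real obstacle here --- the whole proof is careful bookkeeping of which red dot a failed deletion contributes in each of the two sequences, together with the observation that $A$ and $B$ are path-independent so the max commutes with the additive shifts.
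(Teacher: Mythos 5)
Your proof is correct and follows the same dot-by-dot comparison that the paper's (very terse) proof gestures at, while filling in the details. As a minor aside, the paper's proof sentence appears to have the roles of $A$ and $B$ swapped relative to which surplus they enlarge (with the paper's definitions, the $A$ failed-target red dots live in the $\surplus(O,\cdot)$ picture and the $B$ replacement-target red dots live in the $\overline{\surplus}(I,\overline{O},\cdot)$ picture); your version has the bookkeeping the right way around and makes explicit the pointwise inequality over paths and the observation that the additive constants commute with the max.
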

\begin{proof}
   The blue/red dots that determine $S_1 = \overline{\surplus}(I, \overline{O}, [j - t, j)) $ and $S_2 = \surplus(O, [j - t, j)$ are the same except for $A$ extra red dots that appear in the calculation of$S_1$ but not $S_2$ and $B$ extra red dots that appear in the calculation of $S_2$ but not $S_1$. The lemma therefore follows.
\end{proof}

Define $F_0(t)$ to be the number of free slots in the region $[j - t, j)$ before $\overline{O}$ is performed (that is, when the elements $I$ are in the table without any tombstones). Define $\mathbb{V}_t$ to be the indicator random variable for the event that 
$$\overline{c}_j = \overline{\surplus}(I, \overline{O}, [j - t, j)) - F_0(t).$$
One thing that will be critical is that $\mathbb{V}_t = 1$ for the interval $[j - t, j)$ that determines the crossing number $\overline{c}_j$ in Lemma \ref{lem:unorderedcrossingnumbertosurplus}.

With the definition of $\mathbb{V}_t$ in mind, we now state and prove the main technical lemma of the section:

\begin{lemma}
Consider a point in time $\ell$ in operation sequence $\overline{O}$. Let $X_{t, \ell}$ be the number of insertions $x_i$ prior to time $\ell$ that hash to $h(x_i) \in [j - t, j)$, and let $Y_{t, \ell}$ be the number of deletions $y_i$ in $\overline{O}$ that remove a $y_i$ from a position in $[j - t, j)$. 

We have that
$$\overline{\surplus}(I, \overline{O}, [j - t, j)) \cdot \mathbb{V}_t \le 2 \cdot \surplus(O, [j - t, j)) \cdot \mathbb{V}_t + O\left(\max_\ell (Y_{t, \ell} - X_{t, \ell})/\beta\right) + G_t,$$
where $G_t$ is a geometric random variable with mean $O(1)$.
\label{lem:hardgeo}
\end{lemma}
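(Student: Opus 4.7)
The plan is to start from Lemma~\ref{lem:phantom}, which yields $\overline{\surplus}(I,\overline{O},[j-t,j)) \le \surplus(O,[j-t,j)) + A$, where $A$ counts the failed deletions in $\overline{O}$ whose random target $k_i$ lies in $[j-t,j)$. The inequality is trivial on $\{\mathbb{V}_t=0\}$, and on $\{\mathbb{V}_t=1\}$ we have $F_0(t)\le \overline{\surplus}(I,\overline{O},[j-t,j))$ because $\overline{c}_j = \overline{\surplus}-F_0(t)\ge 0$ by Lemma~\ref{lem:unorderedcrossingnumbertosurplus}. The statement thus reduces to showing that, on $\{\mathbb{V}_t=1\}$,
\[
A \le (c/\beta)\,\overline{\surplus}(I,\overline{O},[j-t,j)) + O\!\bigl(\max_\ell(Y_{t,\ell}-X_{t,\ell})/\beta\bigr) + G_t,
\]
for a sufficiently small constant $c$ and a geometric random variable $G_t$ of mean $O(1)$. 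Substituting this into Lemma~\ref{lem:phantom}, multiplying both sides by $\mathbb{V}_t$, and rearranging using $\beta/(\beta-c)\le 2$ (valid once $\beta \ge 2c$, which is guaranteed by the hypothesis) then yields the stated inequality with the factor $2$ in front of $\surplus(O,\cdot)\cdot\mathbb{V}_t$.

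To bound $A$, I would exploit the fact that each $k_i$ is drawn uniformly from $[n]$ and is independent of prior randomness. Conditional on the history up to the $\ell$-th deletion, that deletion contributes to $A$ with probability $E_{t,\ell}/n$, where $E_{t,\ell}$ is the number of empty (free or tombstone) slots in $[j-t,j)$ just before it. A slot-by-slot tracking argument gives $E_{t,\ell}=F_0(t)+Y_{t,\ell}-X'_{t,\ell}$, where $X'_{t,\ell}$ counts prior insertions that actually \emph{placed} an element inside $[j-t,j)$. Writing $E_{t,\ell}\le F_0(t)+(Y_{t,\ell}-X_{t,\ell})+(X_{t,\ell}-X'_{t,\ell})$ and stochastically dominating $A$ by a sum of $R$ i.i.d.~Bernoullis with success probability $\max_\ell E_{t,\ell}/n$, a Chernoff bound gives $A \le O(1) \cdot R \cdot \max_\ell E_{t,\ell}/n + G_t'$ for a geometric $G_t'$ of $O(1)$ mean. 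Since $R/n = 1/\beta$ and $F_0(t)\mathbb{V}_t \le \overline{\surplus}\mathbb{V}_t$, the $F_0(t)/\beta$ piece becomes the self-referential term $(c/\beta)\overline{\surplus}\mathbb{V}_t$ and the $(Y_{t,\ell}-X_{t,\ell})/\beta$ piece matches the lemma's explicit error term.

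The main obstacle is the unordered-probing-specific discrepancy $X_{t,\ell}\neq X'_{t,\ell}$: insertions hashing just before position $j-t$ can slide into the interval, and insertions hashing inside can slide out past $j$. This phenomenon has no counterpart in the ordered analysis of Section~\ref{sec:linprobing}, where the relevant quantities are fully determined by hashes. I would control the boundary-slide contribution by bounding $|X_{t,\ell}-X'_{t,\ell}|$ by the run lengths straddling positions $j-t$ and $j$, which under load factor $1-1/x$ with $\beta\ge c$ admit sub-exponential tails (a standard consequence of Knuth's classical run-length analysis, still applicable because the pre-rebuild table is tombstone-light); after the $R/n=1/\beta$ scaling, this contribution either merges into $G_t$ as an $O(1)$-mean geometric term or is absorbed by the $O(\max_\ell(Y_{t,\ell}-X_{t,\ell})/\beta)$ slack. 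Once these boundary terms are in hand, the remaining steps---the conditional-probability computation for $A$, the Chernoff concentration via stochastic domination, and the algebra isolating $\overline{\surplus}\cdot\mathbb{V}_t$---are routine, with care only needed to verify that the different geometric tails combine into a single $O(1)$-mean variable.
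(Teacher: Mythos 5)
Your high-level plan---isolate $A$ via Lemma~\ref{lem:phantom}, bound it, and rearrange---is a reasonable alternative to the paper's route, and you correctly identify that $F_0(t)\,\mathbb{V}_t \le \overline{\surplus}(I,\overline{O},[j-t,j))\cdot\mathbb{V}_t$. But there is a genuine gap in your treatment of $X_{t,\ell}-X'_{t,\ell}$. That quantity equals (insertions among the first $\ell$ operations with hash in $[j-t,j)$ that were placed at a position $\ge j$) minus (insertions with hash $< j-t$ placed inside $[j-t,j)$). The first term is a \emph{cumulative} overflow count over the whole rebuild window, not a state at one instant; it is bounded by $\overline{c}_j$, which on $\{\mathbb{V}_t=1\}$ equals $\overline{\surplus}-F_0(t)$ and can therefore be of the same order as $\overline{\surplus}$ itself. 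Run-length tails at a fixed moment do not control it, and even if they did, the resulting contribution would be of order $x^2/\beta$ in expectation, far too large to merge into an $O(1)$-mean $G_t$ and unrelated to $\max_\ell(Y_{t,\ell}-X_{t,\ell})/\beta$. The correct move---which the paper makes implicitly---is to never split $F_0$ and $X-X'$ apart: on $\{\mathbb{V}_t=1\}$, $F_0(t)+(X_{t,\ell}-X'_{t,\ell}) \le F_0(t)+\overline{c}_j = \overline{\surplus}$, so the two pieces cancel into a single $\overline{\surplus}$ rather than a $\overline{\surplus}$ plus an uncontrolled remainder. Bounding $F_0$ by $\overline{\surplus}$ and $X-X'$ by run lengths separately both double-counts and relies on an invalid estimate for the second piece.

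A secondary concern: the stochastic-domination step $A \le O(1)\cdot R\cdot\max_\ell E_{t,\ell}/n + G_t'$ is not immediate, since $\max_\ell E_{t,\ell}$ depends on the entire trajectory (including the deletion randomness), so it is not a fixed success probability against which you can Chernoff-dominate $A$ without additional argument. The paper avoids both issues at once by fixing a threshold $s$, noting that the event $\mathbb{S}_s\cdot\mathbb{V}_t$ forces $F_\ell\le s+\beta s/8$ for every $\ell$ (via $F_\ell \le F_0+Y_\ell-X_\ell+\overline{c}_j$ and the substitution $\overline{c}_j=\overline{\surplus}-F_0$ valid on $\mathbb{V}_t=1$), stochastically dominating $\sum_\ell \mathbb{A}_{\ell,s}\mathbb{K}_\ell$ by a binomial of mean at most $s/4$, deriving a contradiction with $\mathbb{K}\ge s/2$ to get $\Pr[\mathbb{S}_s\cdot\mathbb{V}_t]\le 2^{-\Omega(s)}$, and then summing these tail bounds over $s$ to manufacture the geometric $G_t$.
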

\begin{proof}
Since $t$ is fixed for the duration of this lemma, we will use $F_0$, $\mathbb{V}$, $X_\ell$, and $Y_\ell$ as shorthands for $F_0(t)$, $\mathbb{V}_t$, $X_{t, \ell}$, and $Y_{t, \ell}$.

Let $F_\ell$ be the number of free slots in $[j - t, j)$ at time $\ell$ in $\overline{O}$. Then,
$$F_\ell \le F_0 + Y_\ell - X_\ell + \overline{c}_j.$$
By the definition of $\mathbb{V}$, it follows that
\begin{equation}\mathbb{V} \cdot F_\ell \le Y_\ell - X_\ell + \overline{\surplus}(I, \overline{O}, [j - t, j)).
\label{eq:V0}
\end{equation}

Consider some threshold $s > 0$. Define $\mathbb{S}_{s}$ to be the indicator random variable for the event that the following three conditions hold simultaneously:
\begin{itemize}
    \item $\surplus(O, [j - t, j)) \le s / 2$, 
    \item $\overline{\surplus}(I, \overline{O}, [j - t, j)) = s$, 
    \item and $\max_\ell (Y_\ell - X_\ell) \le \beta s / 8$.
\end{itemize}   

Further define $\mathbb{A}_{\ell, s}$ to be the indicator for the event that $F_\ell \le s + \beta s / 8$. Note that, by \eqref{eq:V0}, whenever $\mathbb{V} \cdot \mathbb{S}_{s} = 1$, we also have $\mathbb{A}_{\ell, s} = 1$ for all $\ell$. Finally, define $\mathbb{K}_\ell$ to be the indicator random variable for the event that the $\ell$-th operation in $\overline{O}$ is a deletion with a failed deletion target $k_\ell \in [j - t, j)$, and define $\mathbb{K} = \sum_\ell \mathbb{K}_\ell$. Then,
since $\mathbb{V} \cdot \mathbb{S}_{s} \le \mathbb{A}_{\ell, s}$ for all $\ell$, we have
\begin{equation} \mathbb{S}_{s} \cdot \mathbb{V} \cdot \mathbb{K} \le \sum_\ell \mathbb{A}_{\ell, s} \mathbb{K}_\ell.\label{eq:Kl}\end{equation}
The latter sum is bounded above by a binomial random variable with mean $R \cdot (s + \beta s / 8) / n = s/\beta + s/8$. Using the fact that $\beta \ge 8$, the binomial random variable has mean at most $s/4$. Thus the probability that it exceeds $s/2$ is at most $2^{-\Omega(s)}$.

On the other hand, by Lemma \ref{lem:phantom}, we have deterministically that $$\overline{\surplus}(I, \overline{O}, [j - t, j)) \le \surplus(O, [j - t, j)) + \mathbb{K}.$$
We also have that, if $\mathbb{S}_s$ occurs, then 
$$\overline{\surplus}(I, \overline{O}, [j - t, j)) - \surplus(O, [j - t, j)) \ge s/2.$$
It follows that, if $\mathbb{S}_s$ occurs, then $\mathbb{K} \ge s/2$. Combining this with \eqref{eq:Kl}, we can conclude that, if $\mathbb{S}_s \cdot \mathbb{V}$ is to occur, then we must have
$$\sum_\ell \mathbb{A}_{\ell, s} \mathbb{K}_\ell \ge s / 2,$$
which we have already concluded occurs with probability $2^{-\Omega(s)}$. Critically, this means that
$$\Pr\left[\mathbb{S}_s \cdot \mathbb{V}\right] \le 2^{-\Omega(s)},$$
which further implies by a union bound that
\begin{equation}\Pr\left[\sum_{s' \ge s} \mathbb{S}_{s'} \cdot \mathbb{V}\right] \le 2^{-\Omega(s)},
\label{eq:sgeo}
\end{equation}

To get from here to the end of the proof, observe that, by the definition of $\mathcal{S}_s$, we have
\begin{align*}  \mathbb{I}_{\overline{\surplus}(I, \overline{O}, [j - t, j)) \ge s} \cdot \mathbb{V} \le & \sum_{s' \ge s} \mathbb{S}_{s'} \cdot \mathbb{V} + \mathbb{I}_{\surplus(O, [j - t, j)) \ge s/2} + \mathbb{I}_{\max_\ell |Y_\ell - X_\ell| \ge  \beta s/8}.\end{align*}
Summing over $s$, it follows that
$$\overline{\surplus}(I, \overline{O}, [j - t, j)) \cdot \mathbb{V} \le \sum_{s}\sum_{s' \ge s} \mathbb{S}_{s'} \cdot \mathbb{V} + 2 \cdot \surplus(O, [j - t, j)) + \frac{8}{\beta} \max_\ell |Y_\ell - X_\ell|.$$
The first sum is, by \eqref{eq:sgeo}, bounded above by a geometric random variable $G$ with mean $O(1)$. 

%\todo{Prove}
% Furthermore,
% $$\max_\ell |Y_\ell - X_\ell|$$
% is bounded above by $t / (\beta x) + \Delta$, where $\Delta$ is a geometric random variable with mean $O(\sqrt{t / \beta})$. \todo{Will want to pull this out of proof and deduce better bound with better than geometric dropoff; can use some type of multiplicative azuma to shortcut everything.}
\end{proof}

In order to make use of Lemma \ref{lem:hardgeo}, we will also need a bound on $O(\max_\ell (Y_{t, \ell} - X_{t, \ell})/\beta)$. 

\begin{lemma}
    Let $t = \Omega(x^2 \log^{1.5} t)$. Out of the first $2\ell$ operations of $\overline{O}$, let $X_\ell$ be the number that insert an element with hash in $[j - t, j)$ and let $Y_\ell$ be the number that delete an element from a position in $[j - t, j)$. Then,
    $$\E[\max(0, \max_\ell (Y_\ell - X_\ell) - t/x] \le O(1 / t^2).$$ 
    \label{lem:XlYl}
\end{lemma}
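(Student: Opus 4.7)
I would express $Y_\ell - X_\ell$ as the sum of a martingale and a conditional drift, and control each separately. Let $\mathcal{F}_\ell$ be the $\sigma$-algebra generated by the first $2\ell$ operations, and let $N_\ell$ denote the number of elements of $I_{2\ell}$ sitting in positions $[j-t,j)$. Since each deletion in $\overline O$ removes a uniformly random element among those present (this is the reason the ``first-choice/second-choice'' construction was set up in Subsection \ref{sec:translate}), the conditional drift is
\[
d_\ell := \E\!\left[(Y_{\ell+1}-Y_\ell)-(X_{\ell+1}-X_\ell)\,\Big|\,\mathcal{F}_\ell\right] = \frac{N_\ell + t/n}{|I_{2\ell+1}|} - \frac{t}{n},
\]
and using $N_\ell \le t$ together with $|I_{2\ell+1}| \ge (1-1/x)n$, this is at most $t/(n(x-1))$. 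Hence the total drift is $A_\ell := \sum_{i<\ell} d_i \le Rt/(n(x-1)) = t/(\beta(x-1)) \le t/(2x)$, using that $\beta$ is at least a sufficiently large constant.

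\paragraph{Controlling the martingale.} The martingale $D_\ell := (Y_\ell - X_\ell) - A_\ell$ has increments bounded by $2$ in absolute value, but crucially its per-step conditional variance is only $O(t/n)$: the increment $M_{\ell+1}-M_\ell$ of $Y-X$ takes values in $\{-1,0,+1\}$, and the probability it is nonzero is at most $\Pr[\text{insertion hashes to }[j-t,j)]+\Pr[\text{deletion removes from }[j-t,j)] = O(t/n)$. Summing, the cumulative predictable variance satisfies $V \le O(Rt/n) = O(t/\beta)$. Freedman's inequality (equivalently, Doob's maximal inequality applied to the exponential supermartingale) then yields
\[
\Pr\!\left[\max_{\ell \le R} D_\ell \ge s\right] \le \exp\!\left(-\Omega\!\left(\frac{s^2}{V + s}\right)\right).
\]

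\paragraph{Finishing.} Combining the drift bound with the martingale bound, $\max_\ell(Y_\ell - X_\ell) \le \max_\ell D_\ell + t/(2x)$, and therefore
\[
\E\!\left[\max\bigl(0,\,\max_\ell(Y_\ell - X_\ell) - t/x\bigr)\right] \le \int_0^\infty \Pr\!\left[\max_\ell D_\ell \ge \tfrac{t}{2x} + r\right] dr.
\]
The hypothesis $t \ge C x^2 \log^{1.5} t$ makes $(t/(2x))^2/V \ge \Omega(t\beta/x^2) \ge \Omega(\log^{1.5} t)$, so the integrand is already $\exp(-\Omega(\log^{1.5} t)) = t^{-\omega(1)}$ at $r=0$, and the tail decays exponentially in $r$ afterwards. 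The resulting integral is dominated by the value at $r=0$ times $O(x/\beta) = \poly(t)$, which is easily $O(1/t^2)$.

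\paragraph{Main obstacle.} The essential technical point is getting \emph{variance-based} concentration rather than naive Azuma: if one only used the bounded-differences estimate of $1$ per step, Azuma would give deviations $O(\sqrt{R \log R}) = \widetilde O(\sqrt{n/\beta})$, which is far larger than $t/x$ in the relevant regime. Exploiting the rare-change structure of $M_\ell$ (which can change only on an $O(t/n)$ event per step) brings the cumulative variance down to $O(t/\beta)$, and this is exactly what is matched by the hypothesis $t \ge \Omega(x^2 \log^{1.5} t)$ to produce a super-polynomially small tail.
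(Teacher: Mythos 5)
Your proposal is correct and follows essentially the same approach as the paper: both decompose $Y_\ell - X_\ell$ into a small drift part of order $t/(x\beta)$ plus a martingale-type fluctuation, and both control the fluctuation with a variance-aware martingale tail bound (the paper cites a ``multiplicative Azuma'' inequality and uses a stopping-time trick to get the running maximum; you invoke Freedman/Doob directly, which is the same tool packaged slightly differently). The paper centers $X_\ell$ and $Y_\ell$ separately, making one a martingale and the other a supermartingale, rather than centering the difference by the conditional drift as you do, but this is cosmetic and the final estimate is identical.
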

\begin{proof}
    Each insertion in $\overline{O}$ has at probability exactly $\frac{t}{n}$ of hashing into $[j - t, j)$ and each deletion has probability at most $\frac{t}{(1 - x^{-1})n}$ of selecting an item whose position is in $[j - t, j)$ (since the deletion is of a random element out of $(1 - x^{-1}) n$ options). It follows that
    $$\left\{X'_\ell = X_\ell - \frac{t\ell}{n}\right\}_{\ell}$$
    is a martingale and that
    $$\left\{Y'_\ell = Y_\ell - \frac{t\ell}{(1 - x^{-1}) n}\right\}_{\ell}$$
    is a supermartingale. 

    Now, consider a threshold $\tau$ (to be determined later) and further define $\hat{\ell}$ to be the smallest $\ell$ such that
    $$Y'_{\ell} - X'_{\ell} \ge \tau,$$
    or to be $\hat{\ell} = |O|/2$ if no such $\ell$ exists. Then define sequences $X''_\ell$ and $Y''_\ell$ by
    $$(X''_\ell, Y''_\ell) = \begin{cases}(X'_\ell, Y'_\ell) & \text{ if }\ell \le \hat{\ell} \\ (X'_{\hat{\ell}}, Y'_{\hat{\ell}}) & \text{ otherwise.} \end{cases}$$
    The sequence $\{X''_\ell\}_{\ell = 1}^{\ell = R}$ is a martingale and the sequence $\{Y''_\ell\}_{\ell = 1}^{\ell = R}$ is a supermartingale, so we can apply the multiplicative version of Azuma's inequality \cite{kuszmaul2021multiplicative} to conclude that, for $\mu_1 := \frac{Rt}{n} = t / (2\beta)$, we have
    $$\Pr\left[X''_{R} < - \delta \right] < e^{- \delta^2 / (2 \mu_1)}$$
    and that, for $\mu_2 := \frac{R t}{(1 - x^{-1}) n} = t / (2 \cdot (1 - x^{-1}) \cdot \beta)$, we have
    $$\Pr\left[Y''_{R} > \delta \right] < e^{-\delta^2 / (2\mu + \delta)}.$$
     Since $\mu_1 = \Theta(\mu_2) = \Theta(t / \beta)$, it follows that
     $$\Pr[Y''_{R} - X''_{R} \ge \tau ] \le e^{-\Omega(\tau^2 / (t/\beta + \tau))}.$$
     On the other hand, if 
     $$\max_\ell (X_\ell - Y_\ell) \ge \tau + \frac{t R}{(1 - x^{-1})n} - \frac{t R}{n},$$
     then we will necessarily also have $Y''_\ell - X''_\ell \ge \tau$. It follows that
    $$\Pr\left[\max_\ell (X_\ell - Y_\ell) \ge \tau + \frac{tR}{(1 - x^{-1} )n} - \frac{t R}{n}\right] \le e^{-\Omega(\tau^2 / (t/\beta + \tau))}.$$
    Since 
    $$\frac{tR}{(1 - x^{-1}) n} - \frac{tR}{n} \le \frac{2 tR}{xn} = \frac{2t}{x\beta},$$
    it follows that
    $$\Pr\left[\max_\ell (X_\ell - Y_\ell) \ge \tau + \frac{2t}{x\beta} \right] \le e^{-\Omega(\tau^2 / (t/\beta + \tau))}.$$
    
    This, in turn, implies that, for a sufficiently large positive constant $\gamma$, we have
    $$\E\left[\max\left(0, \max_\ell (X_\ell - Y_\ell) - \frac{2t}{x\beta} - \gamma \sqrt{t/\beta} \sqrt{\log (t/\beta)}\right)\right ] \le 1 / \poly(t/ \beta) \le O(1 / t^2).$$
    Since, by assumption, $\frac{t}{x\beta} + \gamma \sqrt{t/\beta} \sqrt{\log (t/\beta)}) \le O(t / x)$, this completes the proof.
\end{proof}

% We can next conclude that

% $$\sum_{t' \in [t/2, 2t]} \max(0, \overline{\surplus}(I, \overline{O}, [j - t, j)) \cdot \mathbb{V} - \free(..) \cdot \mathbb{V}) \le
% \max(0, 2\surplus(O, [j - t, j)) - \free(..) + t / (\beta x) + G),$$
% where $G$ is a geometric random variable with mean $O(\sqrt{t/\beta})$. This latter quantity has expected value at most
% $$1 / \poly(t / \beta) + \max(0, 2\surplus(O, [j - t, j)) - \free(..) + t / (\beta x) + O(\sqrt{t/\beta} ),$$

% Min window size we can handle trivially is when
% $t /(Bx) = \sqrt{(t/B)}\log^{0.75} (t/B)$
% $\sqrt{t} = x \sqrt{B} \log^{0.75} (t/B)$
% $t = x^2 B \log^{1.5} (t/B)$
% At this size, $\log (t/B) = \Theta(\log t)$, so 
% $t = \Theta(x^2 B \log^{1.5} (t))$.
% Okay, so how concentrated can we guarantee num initial free slots to be? It should be within geom($x$) + double-geom($\sqrt{t}$) of $t/x$. And how concentratd to can we gauarantee $Y - X$ to be? It is at most double-geom($\sqrt{t/B}$) + $t/(xB)$. 

% We will be dividing both these quatities by $B$, so we actually have total concerning quantities of 
% geom$(x/B)$ and double-geom($\sqrt{t}/B$). 

% Since $t > x^2 \log^{1.5} t$, double-geom($\sqrt{t} /B$) < double-geom$(t / (xB \log^{0.75} t))$, which with exceeds $t / (xB)$ by $o(1 / \poly(t))$ in expectation. 

% On the other hand, geom$(x/B)$ is $t / (xB)$ with probability $2^{-(t/(xB)/(x/B)} > 2^{-t / x^2} < 2^{- \log^{1.5} t}$, so the expected amount that it exceeds $t / (xB)$ is at most $o(1 / \poly(t))$

We can now prove the upper-bound direction of Proposition \ref{prop:translate}:

\begin{lemma}
We have
  $$\E\left[\max_t \left(\overline{\surplus}(I, \overline{O}, [j - t, j)) - F_0(t)\right)\right] \le O\left(\frac{R}{n} x \log^{1.5} x + x\right).$$
  \label{lem:unorderedupper}
\end{lemma}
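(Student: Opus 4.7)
The plan is to reduce the unordered analysis to its ordered counterpart, leveraging the technical machinery built up in this section. By Lemma~\ref{lem:unorderedcrossingnumbertosurplus}, the left-hand side of the claim equals $\E[\overline{c}_j]$, so I fix $j$ and bound this single crossing number. Let $t^*$ denote the (random) value of $t$ at which the max defining $\overline{c}_j$ is achieved, so that $\mathbb{V}_{t^*} = 1$. Applying Lemma~\ref{lem:hardgeo} at $t = t^*$ (and dropping the now-trivial indicator $\mathbb{V}_{t^*}$) gives
$$\overline{c}_j \le 2\surplus(O,[j-t^*,j)) - F_0(t^*) + O\!\left(\max_\ell(Y_{t^*,\ell} - X_{t^*,\ell})/\beta\right) + G_{t^*}.$$
I split the $-F_0(t^*)$ credit into two halves, one to absorb the surplus and one to absorb the drift, and pass each to a maximum over $t$, yielding
$$\overline{c}_j \le \max_t\!\left[2\surplus(O,[j-t,j)) - F_0(t)/2\right] + \max_t\!\left[O(\max_\ell(Y_{t,\ell} - X_{t,\ell})/\beta) - F_0(t)/2\right] + G_{t^*}.$$

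For the first bracket I replay the argument of Lemma~\ref{lem:crossingupper}: split $F_0(t)/2$ further, absorbing $2\surplus - t/(8x)$ via Lemma~\ref{lem:surplusupper-e} and $t/(8x) - F_0(t)/2$ via Lemma~\ref{lem:freeslots}. This gives $O(\tfrac{R}{n} x \log^{1.5} x + x)$ in expectation, matching the ordered case.

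For the second bracket the key input is Lemma~\ref{lem:XlYl}: for $t$ in the relevant range, $\max_\ell(Y_{t,\ell} - X_{t,\ell}) \le t/x$ except on an event whose contribution to $\E$ is $O(1/t^2)$. Dividing by $\beta$ (which is at least a sufficiently large constant by hypothesis), the drift is dominated by $F_0(t)/4 \ge t/(8x)$ obtained from Lemma~\ref{lem:freeslots}, and summing the tail contributions over $t$ yields $O(x)$. The handful of small $t$ for which Lemma~\ref{lem:XlYl} does not directly apply are handled by a direct Chernoff/martingale bound on $Y_{t,\cdot} - X_{t,\cdot}$, whose contribution is negligible.

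The main obstacle is the third term, $G_{t^*}$. Each $G_t$ satisfies $\Pr[G_t \ge k] \le 2^{-\Omega(k)}$, but the $G_t$s are not independent across $t$ and $t^*$ is correlated with them, so a naive union bound over $t \in [1,n]$ yields an unusable $O(\log n)$, which is too large because $x$ may be much smaller than $\log n$. The fix is an \emph{a priori} restriction $t^* \le T := O(x^2 \log^{1.5} x)$, holding except with super-polynomially small probability. This restriction follows from the decomposition above: on the event $t^* > T$, Corollary~\ref{cor:surplusupper-c}, Lemma~\ref{lem:XlYl}, and Lemma~\ref{lem:freeslots} force $2\surplus$ and the drift term each to be at most $t^*/(8x)$ while $F_0(t^*) \ge t^*/(2x)$, so $\overline{c}_j \ge 0$ compels $G_{t^*} \ge \Omega(t^*/x)$, an event of probability $2^{-\Omega(t^*/x)}$, which is summably tiny for $t^* > T$. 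Once $t^* \le T$ is secured, a union bound over $t \in [1,T]$ gives $\E[\max_{t \le T} G_t] = O(\log T) = O(\log x) \le O(x)$; the rare failure contributes $O(1)$ via the deterministic bound $G_t \le O(n)$. Summing the three contributions yields $\E[\overline{c}_j] \le O(\tfrac{R}{n} x \log^{1.5} x + x)$, as required.
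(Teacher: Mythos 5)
Your overall plan---reduce to the ordered analysis via Lemma~\ref{lem:hardgeo}, handle the drift with Lemma~\ref{lem:XlYl}, and absorb with the free-slot credit from Lemma~\ref{lem:freeslots}---matches the paper's strategy. Where you diverge is in how you deal with the geometric error term $G_t$, and there is a genuine gap there.

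You apply Lemma~\ref{lem:hardgeo} at the single random index $t^*$ and then bound $\E[G_{t^*}]$ by (i) arguing $t^* \le T := O(x^2 \log^{1.5} x)$ with ``super-polynomially small'' failure probability, (ii) union bounding $\E[\max_{t\le T} G_t] \le O(\log T)$, and (iii) charging the failure event against a deterministic bound $G_t \le O(n)$. Step (iii) needs $\Pr[t^*>T] \lesssim 1/n$, and step (i) does not deliver this. Your justification is that on $\{t^*>T\}$, Corollary~\ref{cor:surplusupper-c}, Lemma~\ref{lem:XlYl}, and Lemma~\ref{lem:freeslots} ``force'' $2\surplus$ and the drift to be $\le t^*/(8x)$ and $F_0(t^*) \ge t^*/(2x)$. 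But these are not deterministic facts; they are expectation (or moderate-deviation) bounds. In particular, the sharpest pointwise statement one can extract from the proof of Lemma~\ref{lem:freeslots} is $\Pr[F_0(t) \le t/(2x)] \le 2^{-\Omega(\sqrt{t}/x)}$, which at $t = T$ is only $2^{-\Omega(\log^{0.75} x)}$---far larger than $1/n$ when $x = n^{o(1)}$. So the bad event is not rare enough for the crude deterministic bound on $G_{t^*}$ to be affordable, and the claimed $2^{-\Omega(t^*/x)}$ failure probability (which would suffice) is not what the lemmas give; the dominant failure mode is that the free-slot credit or the surplus bound simply fails for a moderate-sized $t^*$.

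The paper avoids this issue by \emph{not} running Lemma~\ref{lem:hardgeo} over all $t$. For $t < \tau = x^2\log^{1.5} x$ it uses Lemma~\ref{lem:phantom} directly (the extra red dots all live in $[j-\tau, j)$, so their expected count is $O(\tau/(x\beta))$, which is exactly the target bound), with no $G_t$ term at all. For $t \ge \tau$ it bounds $G_t \le O(\log t) + (G_t - O(\log t))^+$, sums the error $\E[\sum_{t\ge\tau}(G_t - O(\log t))^+] = \sum_{t\ge\tau}O(1/t^2) = O(1)$, and absorbs $O(\log t)$ into the $F_0(t)$ credit, which for $t\ge\tau$ satisfies $t/(8x) \gg \log t$. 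This never requires a pointwise high-probability statement about $F_0(t)$ or $t^*$. If you want to keep your ``decompose around $t^*$'' framing, you should at minimum reserve a third share of $F_0(t^*)$ to absorb the $O(\log t^*)$ part of $G_{t^*}$, and separately handle small $t$ (where $F_0$ gives no credit) via Lemma~\ref{lem:phantom} rather than Lemma~\ref{lem:hardgeo}, which is essentially what the paper does.
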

\begin{proof}
Define the $\maxx(\mathcal{S})$ operator to be the $\max$ operation modified to return $0$ if it would have otherwise returned a negative number. As a shorthand for this proof, define $\overline{\surplus}(t) = \overline{\surplus}(I, \overline{O}, [j - t, j))$ and $\surplus(t) = \overline{\surplus}(O, [j - t, j))$. Finally, define $\tau = x^2 \log^{1.5} x$. Then, 
\begin{align*}
    & \phantom{=} \E[\max_{t \ge 0} (\overline{\surplus}(t) - F_0(t))] 
    \\ &= \E[\maxx_{t \ge 0} (\overline{\surplus}(t) - F_0(t)) \cdot \mathbb{V}_t] \tag{by Lemma \ref{lem:unorderedcrossingnumbertosurplus}}\\
    & \le  \E[\maxx_{t \ge \tau} (\overline{\surplus}(t) - F_0(t)) \cdot \mathbb{V}_t] + \E[\maxx_{t < \tau} (\overline{\surplus}(t) - F_0(t)) \cdot \mathbb{V}_t]. 
\end{align*}    
By Lemma \ref{lem:phantom}, for $t < \tau$, $\overline{\surplus}(t) - \surplus(t)$ is at most the number of failed deletions with targets in $[s - \tau, s)$. The expected number of such deletions is at most $R \cdot \frac{1}{x} \cdot \frac{\tau}{n} = \tau / \beta$, since each of $R$ deletions has probability $1/x$ of failing and, conditioned on failing, probability $\tau / n$ of having a failed target in $[s - \tau, s)$. We therefore have that
$$\E[\maxx_{t < \tau} (\overline{\surplus}(t) - F_0(t)) \cdot \mathbb{V}_t] \le O\left(\frac{\tau}{x \beta}\right) + \E[\maxx_{t < \tau} (\surplus(t) - F_0(t)) \cdot \mathbb{V}_t], $$
which we know from the analysis in Lemma \ref{lem:crossingupper} is at most
$$ O\left(\frac{\tau}{x \beta}\right) + O\left(\frac{R}{n} x \log^{1.5} x\right).$$

Focusing on $t \ge \tau = x^2 \log^{1.5} x$, we have by Lemma \ref{lem:hardgeo} that
\begin{align*}
    & \phantom{=} \E[\maxx_{t \ge \tau} (\overline{\surplus}(t) - F_0(t)) \cdot \mathbb{V}_t] \\
    & \le  \E\left[\maxx_{t \ge \tau} \left(2\cdot \surplus(t) + O\left(\max_{\ell} (Y_{t, \ell} - X_{t, \ell})/\beta\right) + G_t - F_0(t)\right)\right],
    \end{align*}  
where $G_t$ is bounded above by a geometric random variable with mean $O(1)$. This, in turn, is at most
\begin{align*}
&    \sum_{t \ge \tau} O(1/t^2) +  \E\left[\maxx_{t \ge \tau} \left(2 \cdot \surplus(t) + O\left(\max_{\ell} (Y_{t, \ell} - X_{t, \ell})/\beta\right) + O(\log t) - F_0(t)\right)\right] \\
&  O(1)  + \E\left[\maxx_{t \ge \tau} \left(2 \cdot \surplus(t) + O\left(\max_{\ell} (Y_{t, \ell} - X_{t, \ell})/\beta\right) + O(\log t)-  F_0(t)\right)\right].
\end{align*}
Applying Lemma \ref{lem:XlYl}, this is at most
\begin{align*}
& \phantom{\le} O(1)  + \sum_{t \ge \tau} O(1/t^2) + \E\left[\maxx_{t \ge \tau} \left(2 \cdot \surplus(t) + O\left(\frac{t}{x\beta}\right) + O(\log t)- F_0(t)\right)\right]  \\
& \le O(1) + \E\left[\maxx_{t \ge \tau} \left(2 \cdot \surplus(t) + O\left(\frac{t}{x\beta}\right) + O(\log t)- F_0(t)\right)\right]  \\
& \le O(1) + \E\left[\maxx_{t \ge \tau} \left(2 \cdot \surplus(t) + \frac{t}{8x} + O(\log t)- F_0(t)\right)\right] \tag{since $\beta$ is at least a large constant} \\
    & \le O(1) + \E\left[\maxx_{t \ge \tau} \left(2\cdot \surplus(t) - \frac{t}{4x} + \frac{t}{8 x}\right)\right] + \E\left[\maxx_{t \ge \tau} \left(\frac{t}{4x} - F_0(t)\right)\right]. \\
        & \le O(1) + 2 \cdot \E\left[\maxx_{t \ge \tau}  \surplus(t) - \frac{t}{16x}\right] + \E\left[\maxx_{t \ge \tau} \frac{t}{4x} - F_0(t)\right]. \\
\end{align*}    
    We can now continue with the analysis as in Lemma \ref{lem:crossingupper} to bound this entire quantity by
    $$ O\left(1\right) + O\left(\frac{R}{n} x \log^{1.5} x + x\right).$$
    Combining our bounds for the cases of $t \ge \tau$ and $t < \tau$, we have that
    \begin{align*} \E[\maxx_{t \ge t} (\overline{\surplus}(t) - F_0(t))] & \le O\left(\frac{\tau}{x \beta} + nx\right) + O\left(\frac{R}{n} x \log^{1.5} x + x\right) \\
    & \le O\left(\frac{x^2 \log^{1.5} x}{x \beta}\right) + O\left(\frac{R}{n} x \log^{1.5} x + x\right)  \\
    & \le O\left((x/\beta) \log^{1.5} x \right) + O\left(\frac{R}{n} x \log^{1.5} x + x\right)  \\
    & \le O\left(\frac{R}{n} x \log^{1.5} x + x\right). \\
    \end{align*}
\end{proof}

Finally, we conclude the section by proving lower-bound side of Proposition \ref{prop:translate} in two lemmas. The first establishes a lower bound of $\Omega\left(\frac{R}{n} x \log^{1.5} x \right)$, and is a relatively immediate application of Lemma \ref{lem:phantom}:

%% NOTE: This is where we end up needing $\beta < x^{1 - \Omega(1)}$.
\begin{lemma}
    We have
  $$\E\left[\max_t \left(\overline{\surplus}(I, \overline{O}, [j - t, j)) - F_0(t)\right)\right] \ge \Omega\left(\frac{R}{n} x \log^{1.5} x \right).$$
  \label{lem:unorderedlower1}
\end{lemma}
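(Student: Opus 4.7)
The plan is to invoke Lemma \ref{lem:phantom} in the direction opposite to the upper bound: for every $t$,
$$\overline{\surplus}(I, \overline{O}, [j-t, j)) \;\ge\; \surplus(O, [j-t, j)) \;-\; B(t),$$
where $B(t)$ counts failed deletions in $\overline{O}$ whose second-choice target lies in $[j-t, j)$. Plugging a single well-chosen $t = t_0$ into the max on the left, it will suffice to show
$$\E\bigl[\surplus(O, [j-t_0, j)) - B(t_0) - F_0(t_0)\bigr] \;\ge\; \Omega\bigl((x/\beta)\log^{1.5} x\bigr),$$
which is exactly $\Omega((R/n)\,x\log^{1.5} x)$. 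As in the ordered analogue (Lemma \ref{lem:crossinga}), the claim is only interesting when $(R/n) x \log^{1.5} x \ge x$, so I may assume $\beta \le O(\log^{1.5} x) \le x^{1-\Omega(1)}$.

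The first step is to verify that the phantom sequence $O$ satisfies the hypotheses of Corollary \ref{cor:surpluslower-c}: its insertions $x_i$ are never-before-inserted and hence distinct, and each phantom deletion has hash $h(z_i) = k_i$ drawn independently uniformly from $[n]$; treating each $z_i$ as a fresh symbolic key makes $O$ a valid alternating sequence on $2R$ distinct keys. Following the proof of Lemma \ref{lem:crossinga}, I would then pick $t_0 = \tfrac{x^2}{q\beta}\log^{1.5} x$ for a sufficiently large constant $q$, so that $m := \sqrt{t_0 R/n} = \tfrac{x}{\sqrt{q}\,\beta}\log^{0.75} x$. In our assumed regime $\log m = \Theta(\log x)$, and Corollary \ref{cor:surpluslower-c} yields
$$\E[\surplus(O, [j-t_0, j))] \;\ge\; \Omega(m\log^{0.75} m) \;=\; \Omega\!\left(\tfrac{x}{\sqrt{q}\,\beta}\log^{1.5} x\right).$$

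It remains to argue that the two penalty terms are lower order. Each of the $R$ deletions in $\overline{O}$ fails with probability $1/x$ (the fraction of free slots), and conditioned on a failure the second-choice target lands in $[j-t_0, j)$ with probability $O(t_0/n)$ independently, giving $\E[B(t_0)] = O\bigl(\tfrac{x}{q\beta^2}\log^{1.5} x\bigr)$. Likewise $\E[F_0(t_0)] = t_0/x = O\bigl(\tfrac{x}{q\beta}\log^{1.5} x\bigr)$. Both are smaller than the main term $\Omega\bigl(\tfrac{x}{\sqrt{q}\,\beta}\log^{1.5} x\bigr)$ by a factor of $\sqrt{q}$, so fixing $q$ to be a sufficiently large absolute constant closes the proof. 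The only conceptually new ingredient compared to Lemma \ref{lem:crossinga} is the passage through $O$ via Lemma \ref{lem:phantom}; the bulk of the work (bounding the phantom surplus, controlling $F_0$) is then essentially copied from the ordered case, which is why the statement claims the result ``follows relatively immediately'' from the earlier lemma.
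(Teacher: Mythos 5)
Your proposal is correct and follows essentially the same route as the paper: fix a single $t_0 = \Theta\bigl(\tfrac{x^2}{\beta}\log^{1.5}x\bigr)$ (the paper writes it as $q\cdot\frac{x^2}{\beta}\log^{1.5}x$ with $q$ a small constant, an equivalent reparameterization of your $t_0 = \frac{x^2}{q\beta}\log^{1.5}x$ with $q$ large), apply Corollary~\ref{cor:surpluslower-c} to the phantom sequence $O$ exactly as in Lemma~\ref{lem:crossinga}, subtract the $F_0(t_0)\approx t_0/x$ free-slot term, and then invoke Lemma~\ref{lem:phantom} together with the bound $\E[B]\le O(Rt_0/(xn))$ to absorb the second-choice-target penalty. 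The only cosmetic difference is that you verify the distinct-key hypothesis of Corollary~\ref{cor:surpluslower-c} inline rather than deferring to the setup at the top of Section~\ref{sec:translate}.
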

\begin{proof}
Let $q$ be a sufficiently small positive constant and set $t = q \cdot \frac{x^2}{\beta} \log^{1.5} x$. Then, by the analysis in Lemma \ref{lem:crossinga} (which uses Corollary \ref{cor:surpluslower-c}), we have
\begin{align*}\E[\surplus(O, [j - t_0, j)) - F_0(t_0)] & \ge  \Omega\left(\frac{R}{n} x \log^{1.5} x\right). 
\end{align*}
By Lemma \ref{lem:phantom}, it follows that
\begin{align*}\E[\surplus(I, \overline{O}, [j - t_0, j)) - F_0(t_0)] & \ge  \Omega\left(\frac{R}{n} x \log^{1.5} x \right) - \E[B], 
\end{align*}
where $B$ is the number of failed deletions in $\overline{O}$ with replacement targets in $[j -t, j)$. In expectation there are $O(R/x)$ failed deletions, each of which has probability $O(t_0 / n)$ of having a replacement target in $[j - t, j)$. Therefore, 
$$\E[B] \le O\left(\frac{R t_0}{x n}\right) = O\left(\frac{R}{n} \cdot \frac{x}{\beta} \log^{1.5} x\right).$$
Using the fact that $\beta$ is at least a sufficiently large positive constant, the lemma follows.
\end{proof}

To complete the lower-bound side of Proposition \ref{prop:translate}, we also need to establish a lower bound of $\Omega(x)$. This turns out to be a bit more tricky, since we cannot rely on the workload that was used in Section \ref{sec:insertionsurplus} to prove the analogous lower bound for ordered linear probing. Nonetheless, with an alternative path, we can still get our desired result:
\begin{lemma}
    We have
  $$\E\left[\max_t \left(\overline{\surplus}(I, \overline{O}, [j - t, j)) - F_0(t)\right)\right] \ge \Omega\left(x\right).$$
  \label{lem:unorderedlower2}
\end{lemma}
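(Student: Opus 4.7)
The plan is to first use symmetry to reduce the claim to a bound on the total insertion time, and then to lower-bound the insertion time using classical Knuth-style reasoning.

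By Lemma~\ref{lem:unorderedcrossingnumbertosurplus}, the quantity in the statement is precisely $\E[\overline{c}_j]$, and summing the crossing numbers gives $\sum_{j\in[n]} \overline{c}_j = T_{\text{ins}}$, where $T_{\text{ins}}$ denotes the total cost of the $R$ insertions in $\overline{O}$. The hash function $h$ is invariant in distribution under cyclic shifts of $[n]$, so $\overline{c}_j$ has the same distribution for every $j$; hence $\E[\overline{c}_j] = \E[T_{\text{ins}}]/n$. It therefore suffices to show that $\E[T_{\text{ins}}] \ge \Omega(nx)$.

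To do so, I would lower-bound the expected cost of each insertion during the initial phase of $\overline{O}$. For any hash-table configuration with $E = (1-1/x)n$ elements, writing $L_1, L_2, \ldots$ for the lengths of the maximal runs of elements (treating tombstones and free slots alike as run separators), the expected cost of inserting a new element with a uniformly random hash is at least $\sum_i L_i^2 / (2n)$, by a standard averaging identity (summing the distance from each element slot to the next non-element). Immediately after the rebuild the hash table is a freshly built linear-probing table at load factor $1-1/x$, so Knuth's classical analysis gives $\sum_i L_i^2 = \Theta(nx^2)$. My plan is then to show that throughout the first $\Theta(n/x)$ operations of $\overline{O}$, the quantity $\sum_i L_i^2$ remains $\Omega(nx^2)$ in expectation, so that each of the first $\Theta(n/x)$ insertions costs at least $\Omega(x^2)$. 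Since $\beta \le x$ implies $R \ge n/x$, all of these insertions lie within the rebuild window, and summing yields $\E[T_{\text{ins}}] \ge \Omega(n/x) \cdot \Omega(x^2) = \Omega(nx)$, completing the argument.

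The main obstacle is the claim that $\sum_i L_i^2$ stays $\Omega(nx^2)$ throughout the first $\Theta(n/x)$ operations. A single random deletion can in principle decrease $\sum_i L_i^2$ substantially: it tombstones a size-biased random element and splits its run, and because the Knuth run-length distribution is heavy-tailed the expected per-deletion decrease can initially be as large as $\Theta(x^4)$. However, a subsequent insertion whose probe lands on the newly-created tombstone re-merges the split pieces and essentially reverses the loss, so the amortized net decrease per delete-insert alternation should be much smaller. Making this precise will require a potential-function argument tracking the joint evolution of $\sum_i L_i^2$ together with the partition of the $\Theta(n/x)$ non-element slots into tombstones versus original free slots, and showing that over $\Theta(n/x)$ alternations the cumulative decrease in $\sum_i L_i^2$ is at most a constant factor of its initial value $\Theta(nx^2)$.
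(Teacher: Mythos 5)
Your reduction step is correct and is a nice simplification: by Lemma~\ref{lem:unorderedcrossingnumbertosurplus} the quantity in question is $\E[\overline{c}_j]$, by \eqref{eq:Tins} we have $\sum_j \overline{c}_j = T_{\text{ins}}$, and cyclic symmetry of the hash function gives $\E[\overline{c}_j] = \E[T_{\text{ins}}]/n$ for every $j$, so the lemma is indeed equivalent to $\E[T_{\text{ins}}] \ge \Omega(nx)$. The subsequent run-length accounting (insertion cost $\ge \sum_i L_i^2/(2n)$, and $\sum_i L_i^2 = \Theta(nx^2)$ at $t_0$) is also sound for unordered linear probing.

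However, there is a genuine gap, and you have yourself named it: you never establish that $\sum_i L_i^2$ remains $\Omega(nx^2)$ throughout the first $\Theta(n/x)$ operations. This is the entire content of the lemma in your framing, and the difficulty is not cosmetic. Deletions are to a size-biased random element, so the run containing it has length $L$ with $\E[L] = \Theta(x^2)$ and a heavy tail, and a single tombstone at a uniformly random offset inside that run decreases $\sum_i L_i^2$ by $\Theta(L^2)$ in expectation over the offset. Using the Knuth tail $\Pr[L \ge kx^2] \le e^{-\Omega(k)}$ together with size-biasing, the expected per-deletion drop is on the order of $x^4$, so over $\Theta(n/x)$ deletions the raw deletion-only losses total $\Theta(nx^3) \gg nx^2$. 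You must therefore show that insertions, which fill tombstones and re-merge runs, reclaim almost all of this loss, and that the \emph{net} drain over the window is at most a constant fraction of $nx^2$. This requires a careful coupled analysis of the evolving partition of non-element slots into tombstones versus original free slots, and of the correlations between where tombstones sit (inside formerly long runs) and where insertions land. Your proposal acknowledges this as ``the main obstacle'' and defers it to a potential-function argument you do not construct, so as written this is not a proof.

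For comparison, the paper's proof avoids dynamics entirely. It works directly with a single interval $[j - t, j)$ at the scale $t = \Theta(x^2)$ and establishes three static facts: (1) $F_0(t) = 0$ with probability $\Omega(1)$, using that the run containing a fixed position has length $\Omega(x^2)$ with constant probability; (2) $\E[\surplus(O, [j-t,j))] \ge \Omega(x)$, via Corollary~\ref{cor:surpluslower-c}, and crucially this surplus is independent of $F_0(t)$ because the proxy sequence $O$ is constructed from uniformly random deletion targets $k_i$ and is oblivious to $I$; and (3) the correction term $B$ from Lemma~\ref{lem:phantom} is $o(x)$ even conditioned on $F_0(t) = 0$. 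Combining these via Lemma~\ref{lem:phantom} gives the $\Omega(x)$ bound without ever tracking the evolution of run lengths over time. If you want to salvage your route, you would need to carry out the potential-function argument you sketch; the paper's local, independence-based argument is considerably shorter.
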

\begin{proof}
If $\beta \le \log^{1.5} x$, then the result follows from Lemma \ref{lem:unorderedlower1}. Suppose for the rest of the proof that $\beta \ge \log^{1.5} x$, and consider $t = x^2 / \gamma$ for some sufficiently large positive constant $\gamma$.

We will prove three facts:
\begin{enumerate}
\item that, with probability $\Omega(1)$, $F_0(t) = 0$;
\item that $\E[\surplus(O, [j - t_0, j))] \ge \Omega(x)$;
\item and that $\E[B \mid F_0(t) = 0] = o(x)$.
\end{enumerate}

Before proving these facts, however, let us observe how they would allow us to complete the proof of the lemma.

Combining the first two facts with the observation that $F_0(t)$ is independent of $\surplus(O, [j - t_0, j))$ (because the generation process for the latter is oblivious to the initial set $I$ of elements), we get that
$$\E[\surplus(O, [j - t_0, j)) \cdot \mathbb{I}_{F_0(t) = 0}] \ge \Omega(x),$$
and thus that
$$\E[(\surplus(O, [j - t_0, j)) - F_0(t)) \cdot \mathbb{I}_{F_0(t) = 0}] \ge \Omega(x).$$
Combining this with the third fact, it follow that
$$\E[(\surplus(O, [j - t_0, j)) - B - F_0(t)) \cdot \mathbb{I}_{F_0(t) = 0}] \ge \Omega(x),$$
which, by Lemma \ref{lem:phantom}, implies that
$$\E[(\overline{\surplus}(I, O, [j - t_0, j)) - F_0(t)) \cdot \mathbb{I}_{F_0(t) = 0}] \ge \Omega(x).$$
Thus, if we can prove the three itemized facts, then the proof of the lemma will be complete.

The first fact follows, albeit slightly indirectly, from several other standard facts. By the standard analysis of linear probing \cite{Knuth63}, we know that, at the beginning of the rebuild window, the expected length of the run containing a position $i - t$ is $\Theta(x^2)$. We also know that the probability of the run having length $k \cdot x^2$ drops off at least exponentially fast in $k$ (see, e.g., Proposition 1 of \cite{bender2022linearfull}). The only way that both these statements can be true is if the run starting at position $i - t$ has probability $\Omega(1)$ of having length $\Omega(x^2)$. Since $t = x^2 / \gamma$ for a sufficiently large positive constant $\gamma$, it follows that, with probability $\Omega(1)$, there are no free slots in $[j - t, j)$.

The second fact follows from Corollary \ref{cor:surpluslower-c}), which tells us that
\begin{align*}\E[\surplus(O, [j - t_0, j))] & = \Omega\left(\sqrt{t / \beta} \log^{0.75} x\right) = \Omega\left(\sqrt{x^2 / \beta} \log^{0.75} x\right) = \Omega(x). 
\end{align*}

To prove the third fact, notice that, even if we condition on $F_0(t) = 0$, then we can still conduct the following analysis on $B$. With high probability in $n$, the total number of failed deletions is $O(R / x)$ and each such failed deletion independently (and regardless of the fact that $F_0(t) = 0$) has probability at most $O(t / n) = O(x^2 / n)$ of using a replacement target whose position is in $[j - t, j)$. It follows that, even conditioned on $F_0(t)$, we have $\E[B] \le O((R / x) \cdot (x^2 / n)) = O(Rx / n) = o(x)$.

\end{proof}

Combined, Lemmas \ref{lem:unorderedupper}, \ref{lem:unorderedlower1}, and \ref{lem:unorderedlower2} prove Proposition \ref{sec:finalproposition}.

\section{Acknowledgements}

Mark Braverman is supported in part by the NSF Alan T. Waterman Award, Grant No. 1933331. William Kuszmaul was partially supported by a Harvard Rabin Postdoctoral Fellowship and by a Harvard FODSI fellowship under NSF grant DMS-2023528.

\bibliographystyle{plainurl}
\bibliography{linearprobe}
\end{document}